\newcommand{\varqbm}{VarQBM}
\DeclareFontFamily{U}{mathx}{\hyphenchar\font45}
\DeclareFontShape{U}{mathx}{m}{n}{<-> mathx10}{}
\DeclareSymbolFont{mathx}{U}{mathx}{m}{n}
\DeclareMathAccent{\widebar}{0}{mathx}{"73}
\theoremstyle{plain}
\newtheorem{theorem}{Theorem}[chapter]
\newtheorem{lemma}[theorem]{Lemma}
\newtheorem{proposition}[theorem]{Proposition}
\newtheorem{definition}[theorem]{Definition}
\newcommand{\norm}[1]{\left\lVert#1\right\rVert}
\newcommand{\proj}[1]{|#1\rangle\!\langle #1|}
\newcommand*{\di}{\mathrm{d}} 
\newcommand{\Tr}{\text{Tr}}
\DeclareMathOperator{\EX}{\mathbb{E}}
\DeclareMathOperator{\Var}{Var}
\DeclareMathOperator{\ee}{e}
\begin{document}

\thispagestyle{empty}
	\vspace*{-2cm} 
	
	\hspace{-0.0cm}

%
%
%
%

	\begin{minipage}[h]{\textwidth}
		\begin{center}\bf\Large
			\hspace{-1.0cm} 
			Generative Quantum Machine Learning
		\vspace{0.5cm}

		\end{center}

   \end{minipage}\\[2cm]

	\begin{minipage}[h]{\textwidth}

		\begin{center}\large

			\hspace{-1.0cm}  A thesis submitted to attain the degree of \\[0.5cm]
			
			\hspace{-1.0cm} DOCTOR OF SCIENCES of ETH ZURICH \\ [0.3cm]
			
			\vspace{0.3cm}
			
			\hspace{-1.0cm} (Dr.\ sc.\ ETH Zurich) \\ [1.9cm]

%
%
%


			\hspace{-1.0cm}    presented by\\[0.5cm]

			\hspace{-1.0cm}    \textsc  {Christa Anna Zoufal} \\[0.5cm]
			
			\vspace{1cm}
			\hspace{-1.0cm}    accepted on the recommendation of\\[0.5cm]
			\vspace{0.4cm}

			\hspace{-1.0cm}    Prof.\ Dr.\ Renato Renner, examiner \\
			
			\hspace{-1.0cm}    Dr.\ Stefan Woerner, co-examiner \\
			
			\hspace{-1.0cm}    Prof.\ Dr.\ Patrick Coles, co-examiner \\

			\vspace{8mm}

			\hspace{-1.0cm}   September 2021

		\end{center}

	\end{minipage}


\pagenumbering{gobble}
\thispagestyle{empty}
\newpage
\vspace*{\fill}
\copyright\ September 2021

Christa Zoufal

All Rights Reserved\\






\pagenumbering{roman}

\chapter*{Acknowledgements}

First and foremost, I would like to thank my IBM Research supervisor, Stefan Woerner, for his encouragement and support. Not only did he help me to figure out technical problems but he also acted as a role model with his intelligence, empathy, and management skills. He helped me to grow and always had an open door for me. I am also sincerely grateful that he accepts my odd dislike of the color green when it comes to making plots. Furthermore, I would like to thank my ETH Zurich supervisor, Renato Renner, who has supported my academic journey over many years. He helped me to find my passion in research and inspired me with his way of thinking.
Sincere thanks also go to my IBM Research manager, Walter Riess, who always had an open ear and a wise word for me.

Moreover, I want to thank my collaborators for inspiring discussions and interesting ideas: Amira Abbas, Hedayat Alghassi, Giuseppe Carleo, Amol Deshmukh, Daniel Egger, Alessio Figalli, Julien Gacon, Dmitry Grinko, Noelle Ibrahim, Raban Iten, Aur\'elien Lucchi, Ryan Mishmash, Nico Piatkowski, Nicolas Robles, and David Sutter. 

I was in the lucky position to find friends in many of my brilliant colleagues.
Pauline Ollitraut impressed me with her strength and determination. Julien Gacon was a great conversation partner when it came to technical topics, coding issues or – yes again – color scheme discussions. Guglielmo Mazzola and David Sutter would always share their wisdom and provide me with amusing jokes or help  with my bicycle. Amira Abbas and Elisa B\"aumer shared their endless motivation and energy with me. I am grateful to many other colleagues  for their friendship and for sharing their knowledge: Almudena Carrera V\'azquez, Panagiotis Barkoutsos, Ivano Tavernelli, Igor Sokolov, Raban Iten, Giulia Mazzola, Daniel Egger, Max Rossmanek, Fabio Scafirimuto, Matthias Mergenthaler, Bryce Fuller, Ryan Mishmash and the IBM Quantum team as well as the Quantum Information Theory group at ETH.

Furthermore, I want to express my gratitude to my loving and caring parents, brother and family who have encouraged me to go after my dreams and supported me in every possible way. The same is true for my partner, Julian Riebartsch, who has always been there for me and pushed me to be the best version of myself. Lastly, I would like to thank my friends including (but not limited to) Valerie Mertens, Marlene Rothe, Lisa G\"achter, Lisa Oberosler, Ann Michelle Mondragon, Laura B\'egon-Lours, Paul Vallaster, Thomas Spanninger, Martin Eichenhofer, Niko Peters, and Punit Mehra, who always helped me in so many ways.

I am grateful for the past years and the journey of my doctoral studies. I have met brilliant people, learned a lot, and had insightful as well as impactful experiences.

\vspace{0.4cm}


\hfill{Christa Zoufal}

\hfill{Zurich, September 2021}

\chapter*{Abstract}

The goal of generative machine learning is to model the probability distribution underlying a given data set. This probability distribution helps to characterize the generation process of the data samples.
While classical generative machine learning is solely based on classical resources, generative quantum machine learning can also employ quantum resources -- such as parameterized quantum channels and quantum operators -- to learn and sample from the probability model of interest.

Applications of generative (quantum) models are multifaceted.
The trained model can generate new samples that are compatible with the given data and, thus, extend the data set.
Additionally, learning a model for the generation process of a data set may provide interesting information about the corresponding properties.
With the help of quantum resources, the respective generative models also have access to functions that are difficult to evaluate with a classical computer and may, thus, improve the performance or lead to new insights.
Furthermore, generative quantum machine learning can be applied to efficient, approximate loading of classical data into a quantum state which may help to avoid -- potentially exponentially -- expensive, exact quantum data loading.

The aim of this doctoral thesis is to develop new generative quantum machine learning algorithms, demonstrate their feasibility, and analyze their performance. Additionally, we outline their potential application to efficient, approximate quantum data loading.
More specifically, we introduce a quantum generative adversarial network and a quantum Boltzmann machine implementation, both of which can be realized with parameterized quantum circuits. These algorithms are compatible with first-generation quantum hardware and, thus, enable us to study proof of concept implementations not only with numerical quantum simulations but also real quantum hardware available today.


%
\csname @openrightfalse\endcsname
 \chapter*{Zusammenfassung}

Das Ziel von generativem maschinellen Lernen ist das Modellieren der \linebreak
Wahrscheinlichkeitsverteilung die einem gegebenen Datensatz zugrundeliegt und den Erzeugungsprozess der entsprechenden Daten charaktersiert. 
W\"ahrend klassisches generatives maschinelles Lernen ausschliesslich auf klassische Ressourcen zur\"uckgreifen kann, so stehen generativem Quanten-maschinellen Lernen, sowie dem entsprechendem Sampling, auch Quanten-Ressourcen -- wie beispielsweise parametrisierte Quantenkan\"ale und Quantenoperatoren -- zur Verf\"ugung. 

Anwendungen von generativen (Quanten-)Modellen sind vielseitig. Die \linebreak trainierten Modelle k\"onnen neue Daten generieren, die mit den gegebenen Daten insofern kompatibel sind, dass sie das Datenset effektiv vergr\"ossern. Ausserdem kann das Lernen von Modellen des Datenerzeugunsprozesses wertvolle Informationen \"uber entsprechenden Eigenschaften mit sich bringen.
Mit Hilfe von Quanten-Ressourcen haben die generativen Modelle Zugriff auf Funktionen, die f\"ur klassische Computer schwierig auszuwerten sind. Dementsprechend k\"onnen Leistungen verbessert oder neue Einsichten gewonnen werden.
Des Weiteren kann generatives Quanten-maschinelles Lernen auch f\"ur effizientes, approximatives Laden von klassischen Daten in Quantenzust\"ande verwendet werden und dabei helfen -- potentiell exponentiell -- aufwendiges, exaktes Quanten-Daten-Laden zu vermeiden.

Die Absicht dieser Doktorarbeit ist es neue generative Algorithmen fuer \linebreak Quanten-maschinelles Lernen zu entwickeln, ihre Umsetzbarkeit zu demonstrieren und ihre Leistung zu analysieren. Ausserdem f\"uhren wir deren potentielle Anwendung f\"ur effizientes, approximatives Quanten-Daten-Laden ein
Konkret f\"uhren wir ein Quanten Generative Adversarial Network und eine Quanten Boltzmann Maschine ein, welche beide mittels parametrisierter Quanten Circuits realisiert werden k\"onnen.
Die Algorithmen sind kompatibel mit Quanten Computern der ersten Generation und erm\"oglichen deshalb schon heute die Implementierung von ersten konzeptionellen Tests mit echter Hardware und nicht nur mit numerischen Quanten-Simulationen.



\pagenumbering{arabic}


\setlength{\parskip}{0.51mm}
\begin{spacing}{0.95}
  \tableofcontents
\end{spacing}

\clearpage


\setlength{\parskip}{1.1mm}

\chapter{Introduction}
\label{introduction} 

\textbf{Abstract.} 
Generative quantum machine learning refers to algorithms which employ a combination of classical and quantum resources to solve generative machine learning tasks, i.e., learning a model for the probability distribution underlying a given training data set.
This chapter provides an introduction to the problems and topics touched upon in this thesis. 
Furthermore, we give an overview of the main results and present an outlook.

\index{introduction}

\vspace{8mm}

\noindent

Suppose a black-box that generates data samples without revealing information about the actual generation process. \emph{Generative machine learning} (ML) -- employing classical resources -- or potentially advantageous \emph{generative quantum machine learning} (QML) -- employing classical as well as quantum resources -- can be used to train a model for the black-box process w.r.t.~the given data. 


Generative ML offers a versatile tool that can be applied to image modification and generation \cite{karras2018progressiveGANS, goodfellow, radford2016unsupervised, ZhuImagetoImage17}, drug design \cite{PutinGenerativeDrugDesign18}, cosmology \cite{CosmoGan19Mustafa, CosmologiicalMassMapsLucchi21, Fagioli_2018Cosmo} and many other fields.
The models that are trained allow us to study the structure of the given data set and to generate new data samples that mimic the properties of the original data.
Quantum resources allow us to efficiently draw samples from probability distributions that are believed to be complicated to represent with a classical system \cite{Aaronson11ComputationComplexityOptics, ShepherdTemporallyUnstructuredQC09}. Generative QML has, thus, the potential to provide model improvements or new insights.
Additionally, it extends the applicability of generative models to quantum data.
Moreover, generative QML enables efficient loading of approximate quantum representations of given classical data. This, in turn, may help to circumvent expensive exact quantum data loading which could impair potential advantages of quantum applications.




The work presented in this thesis provides significant advances in the exploration of viable and useful (generative) quantum machine learning applications. 
Specifically, we introduce two algorithms -- quantum generative adversarial networks and quantum Boltzmann machines -- which are compatible with first-generation quantum hardware because they can be implemented with (shallow) parameterized quantum circuits.
We introduce new theoretical results, analyze their performance and provide a broad study of illustrative examples. The examples are evaluated with quantum simulators as well as on real quantum hardware to investigate the models in practice. This demonstrates that near-term quantum computers can be useful for generative quantum machine learning with carefully chosen parameterized quantum models.
Furthermore, we demonstrate that generative quantum machine learning can offer a flexible approach to efficiently load approximate quantum representations of classical data.

The remainder of this chapter presents the main results in Sec.~\ref{sec:main_results} and gives a detailed outline of the content included in this thesis in Sec.~\ref{sec:outline}.

\section{Main Results}
\label{sec:main_results}

The main contributions of this thesis are summarized in the following.
We discuss generative quantum machine learning and its application to efficient, approximate quantum data loading.
The respective content is focused on algorithms which employ parameterized (variational) quantum circuits and are compatible with current, respectively, near-term quantum hardware. 
To that end, we introduce a variety of methods that are not only of interest for generative quantum machine learning applications but also for the training of various other variational quantum algorithms.

More specifically, we present two generative QML algorithms: quantum generative adversarial networks (qGANs) \cite{Zoufal2019qGANs} and quantum Boltzmann machines (QBMs) \cite{VarQBMZoufal20}. These algorithms facilitate the exploration of near-term QML applications and enable the learning of efficient quantum approximations to probability distributions.
Given classical data samples, a parameterized quantum state can be trained to generate a discretized model of the data’s underlying distribution.
These generative QML algorithms are advantageous in the sense that the model for the data's underlying distribution is learned implicitly. In contrast, other schemes would firstly require that an explicit model to the distribution is loaded with possibly expensive operations and in general even exponentially many gates. Thus, it may render certain QML applications infeasible.

Generative adversarial networks employ an adversarial learning protocol which mimics the dynamics of a two-player game where the players correspond to a parameterized generator and a parameterized discriminator \cite{goodfellow, Kurach2018TheGL}.
Here we demonstrate that qGANs can be realized with parameterized quantum circuits which are compatible with first-generation quantum hardware.
Then, we present a thorough study of the feasibility and practicality of qGANs using illustrative examples. For instance, we investigate a real-world example where a qGAN is used to learn a model of the first two principle components of multivariate, constant maturity treasury rates of US government bonds. These experiments are either run with quantum simulations or actual quantum hardware.
Additionally, we demonstrate the exploitation of qGAN-based state preparation for a real-world example, i.e., financial derivative pricing with quantum amplitude estimation \cite{brassardQAE02}.

Boltzmann Machines \cite{HintonBM1985} are energy-based neural network models consisting of units which are either visible – defining the output – or hidden – acting as latent variables. These models feature interesting properties but are often difficult to train in practice. More specifically, the models require  a normalization factor, i.e., the partition function, which is expensive to evaluate.
The quantum counterpart, QBMs, are described by quantum mechanics, an inherently normalized theory. Furthermore, the Hamiltonian of a QBM can be hard to evaluate with classical computers due to the so-called \emph{sign problem} \cite{Troyer05}. Thus, QBMs are potentially more powerful.
Existing approaches to QBM implementations have difficulty with calculating analytic gradients of the loss function for certain parameterized Hamiltonians with hidden units. In contrast, our implementation employs variational quantum imaginary time evolution (VarQITE) to enable approximate Gibbs state preparation and facilitates the calculation of analytic gradients with automatic differentiation for generic Hamiltonians. 
We present results for variational Gibbs state approximation with VarQITE using numerical simulations and experiments run on real quantum hardware and apply this variational QBM approach to generative learning using quantum simulations. 
Due to the compatibility of VarQITE with shallow, parameterized circuits, it displays a promising approach to utilize near-term devices for solving practically relevant tasks.
At the same time, the resource restrictions such as limited number of qubits and gates, will inevitably result in an approximation error.   
We prove the first rigorous, a posteriori, phase-agnostic error bound for variational quantum imaginary time evolution \cite{zoufal2021errorBounds}. It should be noted that being able to practically quantify the approximation quality is crucial to understand the potential and improve the power of VarQITE.
\section{Outline}
\label{sec:outline}

In the following, we give a brief overview of each chapter of this thesis as well as of the Appendices.


\textbf{Chapter} \ref{sec:qml} provides an introduction to QML. 
First, Sec.~\ref{sec:quantum_computing} explains the basic concepts of quantum computing and Sec.~\ref{sec:machineLearning} gives an overview of ML paradigms.
Sec.~\ref{sec:qml_applications} presents various QML applications for fault-tolerant as well as near-term quantum computers. Then, Sec.~\ref{sec:qml_setting} explains the workflow of near-term QML algorithms that can already be realized with current technology. Furthermore, Sec.~\ref{sec:prospects_bottlenecks} discusses potentials and bottlenecks of QML algorithms.

\textbf{Chapter} \ref{methods} introduces a set of methods that are important for generative QML as well as other quantum algorithms based on variational quantum circuits. 
First, Sec.~\ref{sec:tools} introduces the components used in variational QML.
Furthermore, \linebreak Sec.~\ref{sec:data_encoding} discusses the considered classical data and how to load it into a quantum system using quantum data encodings.
Next, Sec.~\ref{sec:model} describes the computational information processing models used in QML and Sec.~\ref{sec:training} discusses how to train the respective algorithms.

Then, Sec.~\ref{sec:gradients} focuses on quantum gradients that are not only important for gradient-based optimization but also to investigate the properties of the training landscape. In that context, we discuss first-order and second-order gradients, the \emph{quantum Fisher information matrix} and \emph{quantum natural gradients} in Sec.~\ref{sec:quantum_grad_types}, as well as the analytic evaluation thereof in Sec.~\ref{sec:analytic_gradients}. 
Lastly, the potential problem of exponentially vanishing quantum gradients and promising mitigation strategies are reviewed in Sec.~\ref{sec:vanishing_grads}.

Then, Sec.~\ref{sec:varqite} discusses a variational implementation of quantum imaginary time evolution. The concept of quantum imaginary time evolution is introduced in Sec.~\ref{sec:qte} and a variational implementation based on McLachlan's variational principle in Sec.~\ref{sec:varqite_ground}. Furthermore, Sec.~\ref{sec:error_qite} introduces an efficient, a posteriori error bound which facilitates the estimation of the approximation accuracy of the variational approach. Applications of variational quantum imaginary time evolution include ground state preparation, see Sec.~\ref{app:ground_state_runtime}, and Gibbs state preparation, see Sec.~\ref{sec:varqite_gibbs}. Furthermore, this method can be used for automatic differentiation in the context of energy-based QML algorithms, see Sec.~\ref{sec:varqite_chainRule}. Following on,  Sec.~\ref{sec:methods} discuses the setting choices for variational quantum imaginary time evolution implementations. Finally, Sec.~\ref{sec:varqite_examples} presents various examples of variational quantum imaginary time evolution applications including ground state and Gibbs state preparation and shows the power of the discussed a posteriori error bounds.

\vspace{5mm}

\textbf{Chapter} \ref{sec:applications} explains generative QML algorithms which encode the trained information either in the parameters of a quantum channel, such as a parameterized quantum circuit, or in the parameters of a quantum operator, such as a Hamiltonian, and their application to approximate data encoding. This chapter focuses particularly on quantum generative adversarial networks, see Sec.~\ref{sec:qgan}, and quantum Boltzmann machines, see Sec.~\ref{sec:QBM}.
The introduction to both methods includes a discussion of prior research, see Sec.~\ref{sec:qgan_intro} and Sec.~\ref{sec:qbm_intro}. The classical machine learning algorithms, i.e., generative adversarial networks and Boltzmann machines, are explained in Sec.~\ref{sec:GANs} and Sec.~\ref{sec:BM}. The quantum counterparts as well as their variational implementations are introduced in Sec.~\ref{sec:quantumGAN} and Sec.~\ref{sec:qbm}. Then, we present a set of illustrative examples for both types of generative QML algorithms which are evaluated with numerical simulations or actual quantum hardware in Sec.~\ref{sec:qganApplications} and Sec.~\ref{sec:qbm_examples}, respectively.
Finally, Sec.~\ref{sec:qganconclusion_Outlook} and Sec.~\ref{sec:qbm_conclusion_outlook} conclude and give an outlook for both methods.

Furthermore, Sec.~\ref{sec:EuropeanCallOptionPricingexample} discusses an explicit example of generative QML for quantum data loading. More specifically, we present the pricing of European call options, see Sec.~\ref{sec:europeancallOpt}, with quantum amplitude estimation, see Sec.~\ref{sec:qae}, where a model for the option's underlying asset is loaded with a generative QML approach. For this example, we use a quantum generative adversarial network that is trained on actual quantum hardware. The respective results are presented in Sec.~\ref{subsec:europeanOptionPricingResults} and a discussion is provided in Sec.~\ref{sec:generativeQMLconclusionapplication}.

\vspace{5mm}

\textbf{Chapter} \ref{conclusion} presents a conclusion in Sec.~\ref{sec:conclusion} and gives an outlook in Sec.~\ref{sec:outlook}.

\vspace{5mm}

\textbf{Appendix} \ref{app:varqrte} explains variational quantum real time evolution which is the real counterpart to the method presented in Sec.~\ref{sec:varqite}. The variational implementation based on McLachlan's variational principle is explained in Sec.~\ref{sec:varqrte_implementation}. Furthermore, Sec.~\ref{subsec:error_qrte} introduces an efficient, a posteriori error bound based on the Bures metric, to estimate the accuracy of the variational implementation compared to the target evolution. Lastly, Sec.~\ref{app:varqrte_experiments} shows numerical experiments to illustrate the efficiency of variational quantum real time evolution and the respective error bounds.

\vspace{5mm}

\textbf{Appendix} \ref{app:qbms} discusses the application of quantum Boltzmann machines to discriminative learning. How to use QBMs for discriminative learning is explained in Sec.~\ref{appendix:discriminativeqbm}. An example is presented in Sec.~\ref{app:discr_qbm_results}, where the variational quantum Boltzmann machine, introduced in Sec.~\ref{sec:qbm}, learns to discriminate between fraudulent and non-fraudulent credit card transaction instances and, thereby, outperforms a set of classical standard classifiers.

\addtocontents{toc}{}
\chapter{Quantum Machine Learning}
\label{sec:qml}

\textbf{Abstract.} 
Quantum machine learning describes the use of quantum computing for machine learning problems. We present a variety of applications that are either compatible with fault-tolerant or with first-generation quantum technology. Since the latter can already be realized with current quantum hardware, it offers an accessible approach to study quantum machine learning today. Moreover, this chapter describes the workflow of quantum machine learning algorithms that employ parameterized quantum circuits and discusses promising as well as potentially problematic properties thereof. 

\index{quantum machine learning}

\vspace{8mm}

\noindent


In 1981, Richard Feynman inspired generations of researchers by pointing out that nature follows the rules of quantum mechanics and that we should rather investigate it using quantum computational systems \cite{FeynmanSimulatingPhysics1982}. Ever since, significant effort went into the development of universal quantum computers and the investigation of potential applications in natural science, e.g., in chemistry \cite{kassal2011simulating, Ganzhorn2011Eigenstates, NonadiaaticOllitrault2020} and material design \cite{BarkoutsosQAAlchemical21, MaQuantumSimMaterials2020, QuantumSimMaerialsBabbush18} but also other fields, e.g., in finance \cite{worQuantumRiskAnalysis19, Stamatopoulos_2020, Alcazar_2020qganFinance, Chakrabarti2021thresholdquantum}, optimization \cite{FarhiQAOA14, Gacon_2020QSBO, Barkoutsos20VQOCVaR} and machine learning \cite{Schuld2018SupervisedLearning, 2014QMLWittek, Zoufal2019qGANs, VarQBMZoufal20}.

Quantum machine learning \cite{2014QMLWittek, Schuld2018SupervisedLearning, Benedetti_2019PQC} generally refers to the exploration of synergies between quantum computing and machine learning. Typically, the QML workflow involves that given data -- the \emph{training data} -- is processed with an interplay of a classical and a quantum computer.
Various research projects reveal that the applications of quantum computing in the context of machine learning are broad. In fact, QML algorithms have been developed for all of the machine learning paradigms and problem classes presented in Fig.~\ref{fig:ml_diagram}. 

In the following, Sec.~\ref{sec:quantum_computing}, first, explains the basics of quantum computing and Sec.~\ref{sec:machineLearning} presents an introduction to ML.
Application details and examples are given in Sec.~\ref{sec:qml_applications}. A description of the components and workflow of QML algorithms that can be executed on current or near-term quantum hardware is given in Sec.\ref{sec:qml_setting}. Lastly, particularly promising as well as precarious aspects of QML are discussed in Sec.~\ref{sec:prospects_bottlenecks}.

\section{Quantum Computing}
\label{sec:quantum_computing}
In the following, we explain the basic concepts of quantum computing. We refer the interested reader to \cite{nielsen10, HayashiQuantumInfo06} for more information about quantum computation and information.

The basic unit of information in classical computing is given by a \emph{bit} which corresponds to a binary digit, i.e., a bit can take either of the values $\set{0, 1}$. The quantum counterpart is given by a \emph{qubit} that corresponds to a normalized, pure quantum state given by
\begin{align}
    \alpha \ket{0} + \beta \ket{1},
\end{align}
with $\alpha, \beta \in \mathbb{C}$ such that $\rvert\alpha\rvert^2 + \rvert\beta\rvert^2 = 1$ and ${\ket{0}, \ket{1}}$ forming a basis for the $2$ dimensional state space -- the Hilbert space $\mathcal{H}$. Furthermore, a pure quantum state that consists of $n$ qubits can be written as
\begin{align}
    \ket{\psi}=\bigotimes\limits_{i=0}^{n-1}\Big(\alpha_i \ket{0} + \beta_i \ket{1}\Big),
\end{align}
and lives in a Hilbert space that has dimension $2^n$.
A Hilbert space corresponds to a \emph{complex inner product space}, i.e., a complex vector space that is equipped with an inner product $\braket{x, y}\in\mathbb{C}$ for $x, y \in \mathcal{H}$.
Qubits can exist in a linear combination of basis states. This phenomenon is called \emph{superposition} and represents an important distinction from classical computation. Another important property that is inherent to quantum mechanics is called \emph{entanglement} and describes a form of correlation between two or more quantum systems that is stronger than any known correlation of classical systems.
A basic example of an entangled quantum state is given by a Bell state
 \begin{equation}
        \ket{\phi^{+}}=\frac{\ket{00} + \ket{11}}{\sqrt{2}}.
    \end{equation}
Besides \emph{pure} quantum states, which can be written as vectors, one can also define \emph{mixed} quantum states, which can only be represented by density matrices and correspond to a weighted sum of pure quantum states, i.e.,
\begin{align}
    \rho = \sum_j \gamma_j \ket{\psi_j}\bra{\psi_j},
\end{align}
where $\Tr\left[\rho\right]=1$.
Furthermore, quantum computers process information using \emph{quantum channels} \cite{nielsen10} which are defined as completely positive, trace preserving, linear transformations that map density matrices from $\mathcal{H}_A$ to a potentially different Hilbert space $\mathcal{H}_B$.
Finally, the measurement of a quantum state with respect to a Hermitian observable $\hat{O}$ corresponds to $\bra{\psi}\hat O\ket{\psi}$
or $\Tr\left[\hat O\rho\right]$.

\section{Machine Learning}
\label{sec:machineLearning}
ML \cite{Goodfellow2016_DeepL, murphy2012machinelearning} can generally be described as an approach that uses parameterized models to learn patterns and relations from given training data.
The variety of ML algorithms is diverse and includes, amongst others, classification, clustering and generative algorithms which employ models such as support vector machines \cite{Vapnik1995StatLearningTheory, BoserSV92}, Boltzmann machines  \cite{Hinton_BM_2011, RBM_Montufar_2018} and neural networks \cite{Rosenblatt58theperceptron, HintonDeepBeliefNets06, BengioConvexNNs05}.
An illustration of the main ML paradigms is depicted in Fig.~\ref{fig:ml_diagram}. In the following, we give a high-level discussion of these paradigms. We would like to refer the interested reader for further details on machine learning to \cite{Goodfellow2016_DeepL, Carbonell1983}.

First, \emph{supervised learning} refers to a class of ML algorithms focused on inference modelling, such as classification and regression. Given an input set $X$ and a corresponding output set $Y$, the goal of a supervised algorithm is to learn a representation of the correlations that exist between $X$ and $Y$ such that it can be used to find the correct $\tilde Y$ for unseen data $\tilde X$. 

Second, \emph{unsupervised} ML algorithms investigate the structural properties of a given data set. Some prominent examples are clustering, where the goal is to group data points according to structural similarities, dimensionality reduction, which aims to construct a data set with a smaller dimension than the original data set whilst preserving as much information as possible, and generative modelling, which focuses on learning the representation of a given data set's underlying probability distribution.

Third, \emph{reinforcement learning} refers to an ML paradigm, where an agent explores possible actions that are viable given an accessible (potentially action-dependent) state space and tries to learn the optimal action strategy -- the policy. The taken actions respectively reached states are associated with a certain score such that actions leading to a better score are rewarded while actions leading to a worse score are penalized.

\begin{figure}[h!]
\captionsetup{singlelinecheck = false, format= hang, justification=raggedright, font=footnotesize, labelsep=space}
\begin{center}
\begin{tikzpicture}
\node at (0, 0){\includegraphics[width=0.8\linewidth]{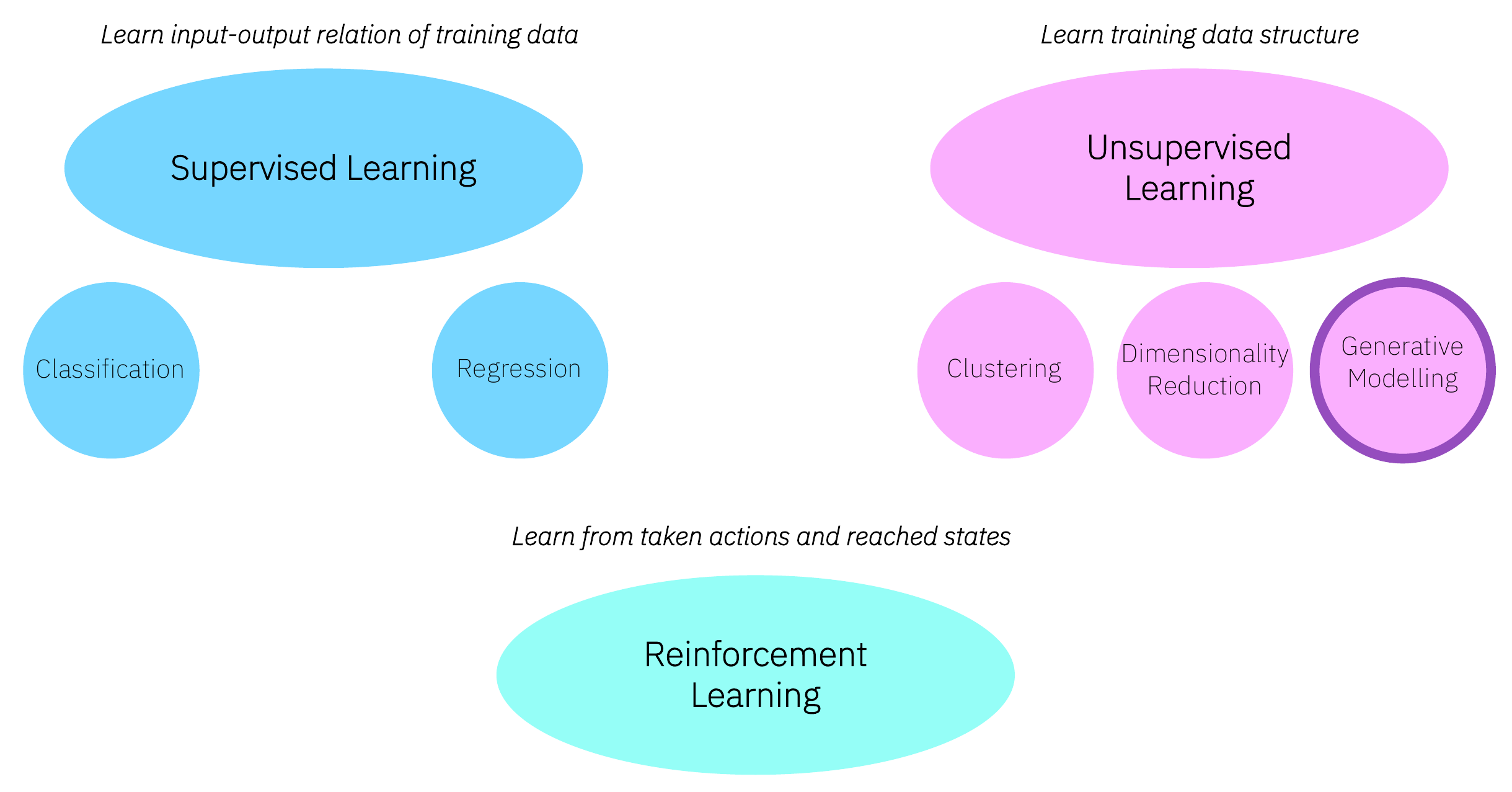}};
\node at (0, 3.8) {\textbf{Machine Learning Paradigms}};
\end{tikzpicture}
\end{center}
\caption{This figure illustrates an overview of the main machine learning paradigms and presents examples of related problem classes.}
\label{fig:ml_diagram}
\end{figure}

\section{Applications}
\label{sec:qml_applications}

This section introduces various examples of QML applications for the problem classes illustrated in Fig.~\ref{fig:ml_diagram}.
The variety of QML applications ranges from heuristic to provably advantageous applications and from near-term compatible algorithms to algorithms that require fault-tolerant quantum computers.
Although fault-tolerant QML is more likely to come with a provable advantage, it cannot be realized as of now.
The key components of those algorithms are either quantum phase estimation \cite{KitaevQPE}, quantum amplitude estimation \cite{brassardQAE02}, Grover's search algorithm \cite{grover_search96}, or methods for quantum linear systems based on the HHL algorithm \cite{hhl}\footnote{A unification for these types of algorithms has been proposed recently in \cite{martyn2021grandUnificationQAs}.}.
Near-term compatible QML, on the other hand, can already be implemented with currently available quantum hardware and, thus, enables us to understand and learn about the practical performance. 

Fault-tolerant, supervised learning algorithms for classification that employ quantum phase estimation to compute a distance measure between data instances is presented in \cite{QSlowFeatureAnalysis20Kerenidis}.
Another type of QML algorithms for classification corresponds to a quantum counterpart of support vector machines. To that end, one employs quantum kernels, see Def.~\ref{sec:quantum_feature}, which represent a specific mapping of classical data into a quantum feature space that is potentially hard to compute with a classical computer. 
The formulation presented in \cite{ Rebentrost_2014QSVM} relies on the HHL algorithm \cite{hhl}.
Moreover, quantum regression algorithms that are based on the HHL algorithm have the potential to enable a speed-up in the evaluation of systems of linear equations \cite{Wiebe_2012QuantumAlgorithmforDataFitting, Schuld_2016QLinRegression, Zhao_2019QReg, Dutta_2020QRegression, chakraborty_QRegression19}. An alternative  quantum regression approach which employs quantum amplitude estimation \cite{brassardQAE02} can lead to a quadratic speed-up in terms of the estimation accuracy \cite{Wang_2017QuantumRegression}.

Also near-term quantum classifiers have been studied in detail.
 The research presented in \cite{farhi2018classification} aims at training a parameterized quantum circuit to represent a classification function. The algorithm aims at matching the labels generated by the quantum circuit measurement to the labels given by the training data set. The basic idea behind this QML algorithm is that the quantum ansatz achieves a mapping which is more powerful than a comparable classical classifier. Similar approaches are studied in \cite{Cong_2019QCNN, 19HavSupervisedLearning, Schuld2018CircuitcentricQC, Schuld_2019QMLHilbertSpace, quantumCircuitLearnMitarai, TacchinoQNeuron19, GrantQuantumClassifiers18, Benedetti_2019PQC}. In fact, \cite{Liu_2021QuantumSpeedUpML} proves that there exist a certain problem class where quantum classifiers outperform their classical counterpart.
Another form of quantum classification employs near-term compatible distance measures which give rise to distance-based quantum classifiers \cite{johri2020nearestcentroidClass, Schuld_2017DistanceClassifierQuantum}.
Quantum kernels do not require fault-tolerant quantum computers but can already be evaluated with near-term quantum hardware. Respective quantum classifier implementations are given in \cite{19HavSupervisedLearning, Schuld_2019QMLHilbertSpace}.
Furthermore, we would like to mention that parameterized quantum circuits as well as quantum  kernels  can also be used to implement near-term quantum regression algorithms \cite{Benedetti_2019PQC}.


Much attention has also been paid to unsupervised QML algorithms.
Quantum
clustering based on Grover's search algorithm that can facilitate a quadratic speed-up for search problems \cite{grover_search96}, is investigated in \cite{QUnsupervisedAimeur13}. Other clustering examples given in \cite{Kerenidisqmeans19} and \cite{lloyd2013quantumClustering} are again based on quantum amplitude estimation and HHL, respectively. 
Under some assumptions on the available data, quantum principle component analysis (PCA) \cite{qPCAlloyd13, li2021resonantQPCA} can be implemented using density matrix exponentiation and quantum phase estimation and, thereby, achieve an exponential improvement compared to classical methods. 


Clustering algorithms that are compatible with near-term quantum hardware can use shallow quantum circuits to compute the distance between two data points \cite{khan2019kmeans}.
A near-term approach for dimensionality reduction relies on the use of variational autoencoders \cite{Kingma_2019VAE} that are implemented using variational quantum circuits \cite{QVAEDing19, Romero_2017QAE}.
Also, state diagonalization can be accomplished with variational quantum circuits \cite{LaRoseVarQDiag19}. Notably, diagonalization is one of the key components of PCA \cite{murphy2012machinelearning, bishop2006patternRecognition}. The theory developed in \cite{LaRoseVarQDiag19} is extended in \cite{cerezo2020vqse} to majorization that facilitates the investigation of the quantum state spectrum and, thus, enables a quantum PCA implementation.
A related but slightly different approach is taken in \cite{Bravo_Prieto_2020QSVD} where two parameterized quantum circuits are trained to find a singular value decomposition of a bipartite, pure quantum state \cite{SchmidtDecompositionEkert95}.
Generative QML algorithms\footnote{Notably, Sec.~\ref{sec:applications} includes further discussion on generative QML and illustrates the application of mapping a probability distribution underlying given training data into a quantum state.}, which train a quantum operator to represent the probability distribution underlying given training data, have mostly been explored for variational implementations.
Applications for the preparation of quantum channels whose output is an estimate to a quantum state \cite{killoran2018qgans, Benedetti_2019PureStateApproxqGANs, huqGANs2019, verdon2019quantumHbasedModels}, to generate classical data samples \cite{SITU2020193qGANs,romero,Zeng_2019LearningInferenceonqGANs} and to approximately load classical data into a quantum state \cite{Zoufal2019qGANs, VarQBMZoufal20, LiuDifferentiableLearning18, BornSupremacyCoyle2020, Hamilton_2019GenerativeModelBenchmarks, BenedettiGenModellin19} have been explored. We would like to point out that this class of QML algorithms is explicitly suitable for implementation on first-generation quantum hardware as is discussed in more detail later on in this thesis.
Intuitively, quantum resources may be useful, in particular in the context of high-dimensional probability distributions \cite{lloyd2qGANs18} because they enable the representation of an exponentially large state space.

Moreover, quantum reinforcement learning has been studied in various settings. An method based on Grover's algorithm is presented in \cite{QuantumReinforcementDong05}.
Near-term reinforcement learning may either be implemented using quantum Boltzmann machines, see Sec.~\ref{sec:QBM}, \cite{QuantumEnhancementsDeepReinforcementLearningJerbi21} or directly using parameterized quantum circuits \cite{lockwood2020qreinforcement, skolik2021quantumAgents}.

\section{Workflow}
\label{sec:qml_setting}

As mentioned before a QML workflow employs the interplay of classical and quantum computing resources to train a parameterized model that depends on the given learning task.
We can categorize QML algorithms into those that require fault-tolerant quantum computers and those that are compatible with first-generation quantum hardware, which we shall refer to as \emph{variational QML} because they typically employ shallow, parameterized quantum circuits as ans\"atze, which are also known as \emph{variational quantum circuits} or \emph{quantum neural networks}.
This thesis discusses the realization of generative QML algorithms, see Sec.~\ref{sec:applications}, with variational implementations. 
The remaining section introduces the respective workflow.


Given classical training data, the respective workflow, illustrated in Fig.~\ref{fig:qml_workflow_near_term}, proceeds as follows. First, the training data is loaded into the classical computer which initializes a set of parameters $\bm{\omega}$ according to the learning problem and sends them to the quantum circuit. The quantum circuit is, then, executed and measured. The measurement results enable the evaluation of a loss function $L\left(\bm{\omega}\right)$. which, in turn, is send to the classical computer that uses an optimization routine, such as gradient descent \cite{CauchyGD1847} or ADAM \cite{Kingmaadam14}, to update the parameters. This cycle is, now, repeated until a stopping criteria is sufficed. Notably, we can also compute the loss function gradient $\nabla_{\bm{\omega}}L\left(\bm{\omega}\right)$ with the help of parameterized quantum circuits to facilitate gradient-based optimization. A more formal definition of the components used in the described QML workflow is given in Sec.~\ref{sec:tools} and further details on quantum gradient computation can be found in Sec.~\ref{sec:analytic_gradients}.

\begin{figure}[h!t]
\captionsetup{singlelinecheck = false, format= hang, justification=raggedright, font=footnotesize, labelsep=space}
\begin{center}
\begin{tikzpicture}
\node at (0, 0){\includegraphics[width=0.9\linewidth]{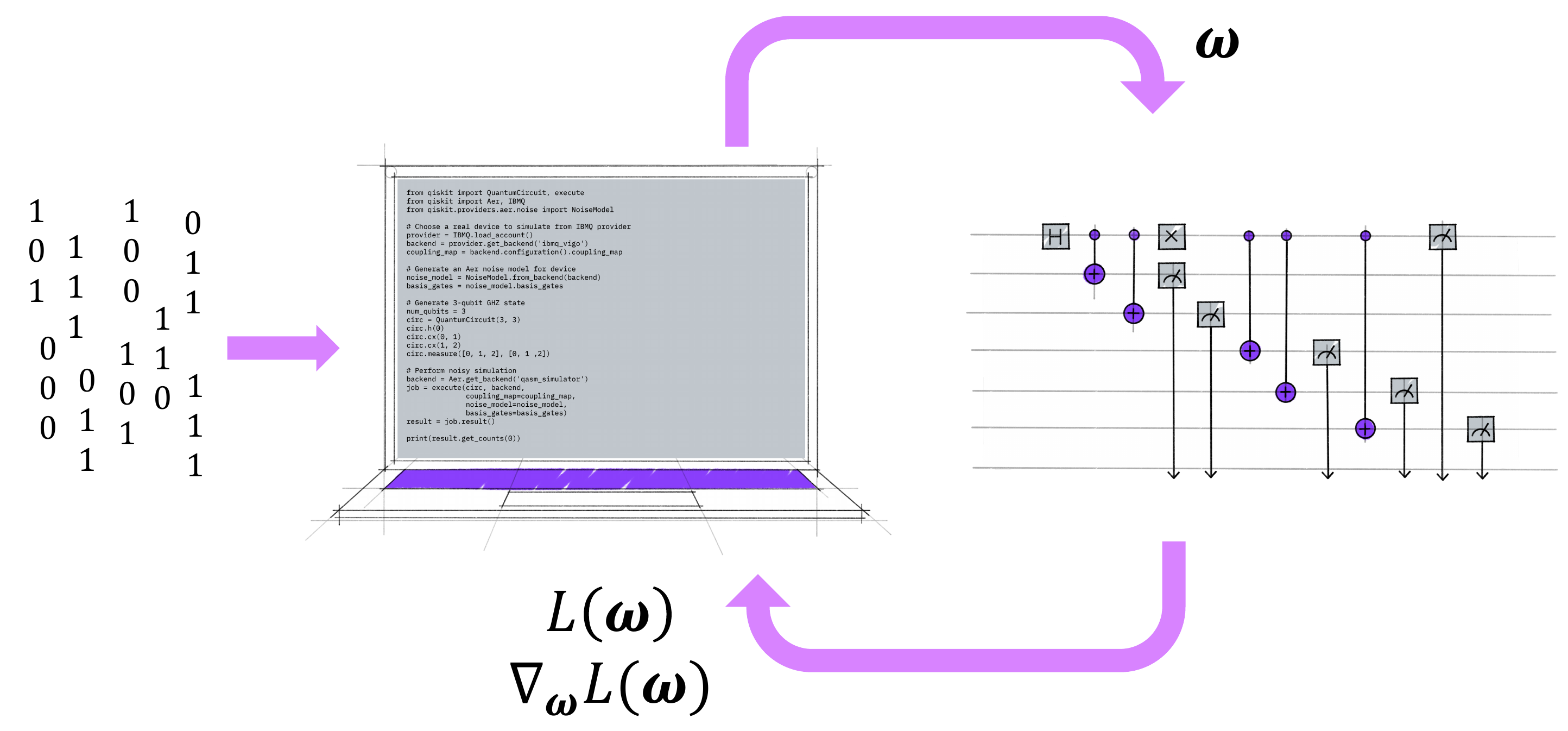}};
\node at (0, 3.8) {\textbf{Variational Quantum Machine Learning Workflow}};
\end{tikzpicture}
\end{center}
\caption{The workflow of variational quantum machine learning algorithms that are compatible with first-generation quantum computers is illustrated. This figure includes elements that are made available by \cite{ibmQX}.}
\label{fig:qml_workflow_near_term}
\end{figure}

\section{Prospects and Bottlenecks}
\label{sec:prospects_bottlenecks}

Quantum (machine learning) algorithms offer potential advantages compared to their classical counterparts for tasks such as prime factorization \cite{shorFactoring97}, unstructured search problems \cite{grover_search96}, solving linear systems of equations \cite{hhl}, Monte-Carlo sampling problems \cite{brassardQAE02}, semi-definite programming \cite{brandao2019quantumsdp}, classification \cite{Liu_2021QuantumSpeedUpML} and many others.
However, the underlying physics also lead to some bottlenecks in the practical realization of the respective algorithms.

A large motivation behind investigating QML is the exponentially large state space that a quantum system can represent as well as the representation of mappings that are classically difficult or even intractable to compute \cite{Bravyi20QAShallowCircuits, Bravyi_2018QuantumAdvantageShallowCircuits}. The goal is to use quantum channels to enable faster or more accurate model training.
Recent research supports the idea that quantum channels can offer certain benefits in an ML context. 
For example, \cite{Abbas_2021PowerofQNNs} investigates the effective dimension, a data-dependent capacity measure. 
The analysis reveals that the effective dimension and, therefore, the capacity, of a class of quantum neural networks is better than for a comparable class of classical neural networks and, therefore, indicates a potential advantage of quantum neural networks. 
Also \cite{DuExpressivePowerPQCs_2020} investigates the expressive power of variational quantum circuits in comparison to some classical ML models in the context of generative modelling. The work proves that the quantum models can outperform the investigated classical models for some generative ML problems.
Furthermore, research on quantum kernels \cite{19HavSupervisedLearning, Schuld_2019QMLHilbertSpace, glick2021covariantQuantumKernels} -- quantum kernels are formally defined in Def.~\ref{sec:quantum_feature} -- suggests that quantum circuits enable the representation of kernel functions \cite{HofmannKernelMethods08} which are classically intractable. In fact, \cite{Liu_2021QuantumSpeedUpML} shows that a quantum kernel method can achieve an exponential improvement compared to classical methods for a classification task that is based on the discrete logarithm problem.
All of these results are given for rather specific examples. One of the important open questions is whether we can formulate a general argument that certain quantum channels are beneficial for QML tasks.

A particularly important problem is that \emph{quantum data loading}, i.e., the mapping of classical data into a quantum representation, and \emph{data readout}, i.e., the mapping of a quantum representation into classical data can become exponentially expensive.
In fact, many potential advantages of quantum algorithms, particularly in quantum machine learning, are conditioned on the assumption that classical data can be efficiently loaded into a quantum state \cite{Aaronson2015_finePrint, DataLoadingTang21, Tang_2019QInspiredRecommendation, Huang_2021PowerDatainQML} but this may only be achieved for specific data structures \footnote{A common assumption is that the respective data is already loaded and accessible through a Quantum Random Access Memory (QRAM) \cite{qram}. While most of the QRAM literature, e.g., \cite{qram2} focuses on the description of data accessing logic, the original loading of the data is subject to the issues described above. Furthermore, we would like to point out that no physically stable QRAM structure is available as of now.}.
The exact preparation of a generic state in $n$ qubits can require $\mathscr{O}\left(2^n\right)$ gates \cite{Grover2000SynthesisSuperpos, Sanders2019StatePrep, Plesch2010StatePrep, Shende:2005:SQL:1120725.1120847} and, thus, render the application infeasible.
Data loading may, thus, easily dominate the complexity of an otherwise advantageous quantum algorithm.
A similar problem exists for the readout of data of a quantum algorithm. Suppose the result of the algorithm is encoded in a quantum state such as in \cite{hhl}. Then, one would have to apply state tomography which requires a number of measurements that is exponential in the number of qubits, see, e.g., \cite{Liu_2005Tomography} and, thus, impairs potential advantages of the application of the quantum algorithm.\footnote{If the algorithm target corresponds to an expectation value, then the readout may be conducted efficiently.} QML algorithms are, thus, more likely to be suitable for the processing of complex structures, rather than \emph{big data}.

In the early days of classical ML algorithms, researchers realised that due to the chain rule, deep neural networks are prone to exhibiting either very small or very big gradients depending on the activation functions \cite{hochreiter1991untersuchungenNN}. It was consequently found that special activation functions as well as certain neural network architectures actually help to avoid those disadvantageous gradients values and facilitate practical training.
Unfortunately, research has also revealed that exponentially small -- vanishing -- gradients can be a problem in the context of training parameterized quantum circuits. We discuss the occurrence of these vanishing gradients and potential mitigation strategies in Sec.~\ref{sec:vanishing_grads}.
Another trainability issue arises from the fact that the loss landscape is usually highly non-convex \cite{du2020learnabilityQNN, Stokes_2020QNG}. Thus, one should choose the optimization scheme carefully, e.g., by employing \emph{quantum natural gradients}, see Sec.~\ref{sec:qng}, which are aware of the problem geometry.

To summarize, there are many promising applications for QML and potential bottlenecks that need to be understood in more detail. 
This thesis investigates the practicality of generative QML, the impact of the discussed impediments, and the potential application of generative QML to circumvent the quantum data loading bottleneck.
\addtocontents{toc}{}

\chapter{Methods}
\label{methods} 

\textbf{Abstract.} 
Quantum machine learning relies on the rules of quantum mechanics. Thus, various methods that are common practice in classical machine learning, such as gradient-based optimization and data loading, need to be adapted accordingly. The following chapter provides an in-depth discussion of the basic building blocks and some of the key methods for the realization of (variational) quantum machine learning. We are going to introduce the basic concepts of variational circuits, loss functions and observables, different forms of data encoding in quantum systems, quantum gradients and variational quantum imaginary time evolution, a method which enables approximate Gibbs state preparation.

\index{methods}

\vspace{8mm}

\noindent

The potential of quantum computers stems from the fact that they are governed by quantum mechanics. Since the respective physical principles differ from those that govern classical computers some established machine learning methods need to be rethought.
These conceptual differences also offer new possibilities, e.g., for Gibbs states -- a crucial component in many probabilistic machine learning models \cite{GemanGibbs84, murphy2012machinelearning, KollerGraphicalModels09}.


In the following, we are going to discuss various methods that are relevant in the context of quantum machine learning and variational implementations thereof. First, we define the building blocks of variational QML in Sec.~\ref{sec:tools}. 
Next, Sec.~\ref{sec:gradients} introduces quantum gradients, outlines their evaluation and discusses settings where exponentially vanishing gradients occur.
Lastly, Sec.~\ref{sec:varqite} presents variational quantum imaginary time evolution, a method to solve optimization problems or to prepare Gibbs states which is compatible with automatic differentiation in QML algorithms.

\section{Setting}
\label{sec:tools}
This section introduces the components that form the basic building blocks of (variational) QML algorithms which employ the interplay of classical and quantum computers.
First, we discuss the type of data that we focus on in Sec.~\ref{sec:applications} and present different forms of quantum data encoding including potential bottlenecks thereof in Sec.~\ref{sec:data_encoding}.
Next, Sec.~\ref{sec:model} explains what models can be used in QML to process information.
Finally, Sec.~\ref{sec:training} discusses the training of QML algorithms.

\subsection{Quantum Data Encoding}
\label{sec:data_encoding}
In order to facilitate the realization of QML, it is particularly important to enable efficient representation of the given training data.
The QML approaches presented in this thesis consider classical training data sets which are denoted by $X$ and either consist of scalar data items $x\in\mathbb{R}$ or data vectors $\bm{x}\in\mathbb{R}^s$. Possibly available data labels $Y$ are given by items $y\in\mathbb{R}$. Notably, we refer to the mapping of classical data into a quantum state or quantum channel, i.e., 
\begin{align}
\mathcal{D}: X \mapsto \mathcal{H},
\end{align}
where $\mathcal{H}$ represents a Hilbert space of dimension $2^n$,  as \emph{quantum data loading}.
We would like to point out that this quantum data loading can become a severe bottleneck. In the general case, it may require exponential resources \cite{aharonov2013quantumpcp, Grover2000SynthesisSuperpos} and, thereby, impair potential quantum advantage. Furthermore, polynomial quantum data loading complexity can already impair potential quadratic advantage, e.g., of Grover-type \cite{grover_search96} quantum algorithms \cite{HerbertNoQuantumSpeedup21}.
Then, loading efficient data approximations can provide a reasonable remedy and offer a sufficient tool to study interesting system properties. 

Data loading may make use of different forms of data representations in a quantum channel or operator, i.e., different \emph{quantum data encoding} types, which can have a big impact on the training performance of a QML model \cite{LaRose_2020DataEncodings, Schuld_2021EffectDataEncodingQML}. 
The following section presents the most prominent quantum data encoding methods \cite{Schuld2018SupervisedLearning, Schuld_2019QMLHilbertSpace}. In those cases, where exact loading is generically inefficient, we also discuss efficient approximations as well as methods to estimate the approximation accuracy.



Basis encoding \cite{farhi2018classification, Schuld2018SupervisedLearning} refers to a binary representation of data points $x$ with quantum basis states.


\begin{definition}
\label{sec:basisEncoding}
Let $x$ be one of $k$ elements in a data set $X$. \textbf{Basis encoding} refers to the representation of the elements $x$ with an $n$-qubit basis state in $\{\ket{0}, \ldots, \ket{2^{n}-1}\}$ of a $2^n$ dimensional Hilbert space $\mathcal{H}$ with $k\leq 2^n$
\begin{align}
    \mathcal{E}\left(x\right) = \ket{x},
\end{align}
or to the representation of up to $2^n$ elements in $X$ as a superposition of the form
\begin{align}
    \mathcal{E}\left(X\right) = \frac{1}{\sqrt{k}}\sum\limits_{x}\ket{x}.
\end{align}
\end{definition}

\noindent
In other words, basis encoding employs binary representations of $n$ bits in $n$ qubits.
Consider, e.g., $x=5$ which corresponds to $x=101$ in binary encoding. Then, $x$ can be mapped onto the pure quantum state $\ket{101}$. 
Loading single data points comes with a computational cost that scales as $\mathscr{O}\left(n\right)$ and loading a superposition of data items of the form given in Def.~\ref{sec:basisEncoding} comes with a cost that scales as $\mathscr{O}\left(kn\right)$ \cite{Schuld2018SupervisedLearning}.


In amplitude encoding \cite{Zoufal2019qGANs, VarQBMZoufal20, Wiebe_2012QuantumAlgorithmforDataFitting, Schuld_2017DistanceClassifierQuantum, CarreraEfficientStatePreparation21}, normalized classical data data sets $X$ are represented by the continuous amplitudes of an $n$-qubit quantum state.

\begin{definition}
\label{sec:amplitudeEncoding}
Given a data set $X=\set{x_0, \ldots, x_{k-1}}$ such that 
\begin{align}
    \sum\limits_{j=0}^{k-1}\rvert x_j\rvert^2=1
\end{align}
and a quantum state consisting of $n$ qubits with $k\leq 2^n$.\textbf{ Amplitude encoding} refers to
\begin{equation}
\mathcal{E}\left(X\right)=\sum\limits_{j=0}^{k-1}\sqrt{x_{j}}\ket{j},
\end{equation}
where the states $\ket{j}$ are basis states of the underlying Hilbert space.
\end{definition}

\noindent
This type of data encoding enables the representation of up to $2^n$ scalar data samples with $n$ qubits.
Furthermore, it is explicitly suitable to load discrete probability distributions as
\begin{equation}
\sum\limits_{j=0}^{k}\sqrt{p^{j}}\ket{j},
\end{equation}
where $p^{j}$ corresponds to the sampling probabilities of $\ket{j}$.
For efficiently integrable probability distributions \cite{groverSuperpositionseffintegrablepdfs02}, such as log-concave distributions, amplitude encoding can be implemented with polynomial complexity.
However, the respective encoding method may have to rely on quantum arithmetic \cite{Vedral_1996QuantumArithmetic, Beckman_1996QuantumFactoring, Van_Meter_2005QuantumExponentiation, SvoreLookaheadAdder06, Ruiz_Perez_2017QuantumArithmetic} and can, thus, lead to long quantum circuits which are not well suited for first-generation quantum hardware.
In the generic case, amplitude encoding for $2^n$ data points requires either $\mathscr{O}\left(2^n\right)$ gates and $n$ qubits \cite{Grover2000SynthesisSuperpos, Sanders2019StatePrep, Plesch2010StatePrep, Shende:2005:SQL:1120725.1120847} or $\mathscr{O}\left(n^2\right)$ gates and $\mathscr{O}\left(2^{2n}\right)$ qubits \cite{zhang2021lowdepth}.

If the exact data encoding is exponentially expensive or too costly one can employ approximate state preparation, such as the generative QML models discussed in Sec.~\ref{sec:applications}, instead. 
Suppose, we employ approximate loading of a probability distribution and we have either access to a functional description of the target distribution or to samples thereof. Then, classical statistics such as the Kolmogorov-Smirnov statistic \cite{kolmogorov, JUSTEL1997_multiKS} may provide valuable information about the approximation accuracy. 
Given two (empirical) probability distributions $P\left(x\right)$ and $Q\left(x\right)$, the Kolmogorov-Smirnov statistic is based on the (empirical) cumulative distribution functions $P\left(X\leq x\right)$ and $Q\left(X\leq x\right)$ and is given by
\begin{equation}
\label{eq:KS}
D_{\text{KS}}\left(P||Q\right)  = \underset{x\in X}{sup}\:\vert P\left(X\leq x\right) - Q\left(X\leq x\right)\vert.
\end{equation}
Suppose, the null-hypothesis that $P\left(x\right)$ and $Q\left(x\right)$ 
are equivalent. If the \linebreak Kolmogorov-Smirnov statistic $D_{\text{KS}}$ is bigger than a certain critical value that depends a chosen significance level and the number of available data samples, then the null-hypothesis is rejected \cite{JUSTEL1997_multiKS}.
Another measure that characterizes the closeness of (empirical) discrete probability distributions $P\left(x\right)$ and $Q\left(x\right)$ is the relative entropy, also called Kullback-Leibler divergence \cite{kullback1951}.
This entropy-related measure is given by
\begin{equation}
\label{eq:relEntr}
D_{\text{RE}}\left(P||Q\right) = \sum\limits_{x\in X}P\left(x\right)\log\left(\frac{P\left(x\right)}{Q\left(x\right)}\right)\: .
\end{equation}
The relative entropy represents a non-negative quantity, i.e., $D_{\text{RE}}\left(P||Q\right)\geq 0$, where \\ $D_{\text{RE}}\left(P||Q\right) = 0$ holds if and only if $P\left(x\right) = Q\left(x\right),$ for all values of $x$.


Furthermore, feature maps are a widely employed concept from classical machine learning. The goal is to map given data to a higher-dimensional feature space which can then facilitate, e.g., classification with support vector machines \cite{Vapnik1995StatLearningTheory, BoserSV92}.
The quantum counterpart -- \emph{quantum feature maps} \cite{19HavSupervisedLearning, glick2021covariantQuantumKernels, Schuld_2021EffectDataEncodingQML} -- refer to a transformation of given classical data into a quantum operation. This in turn can be used to apply a non-linear mapping of the respective data into a quantum state.

\begin{definition}
\label{sec:quantum_feature}
Given elements $x, \tilde x$ of a data set $X$, a \textbf{quantum feature map} corresponds to a parameterized unitary $U$ that acts on a $2^n$ dimensional Hilbert space $\mathcal{H}$
\begin{align}
\mathcal{E}\left(x\right)=U\left(x\right).
\end{align}
The quantum feature map prepares a \textbf{quantum feature vector} as
\begin{align}
    \ket{\phi\left(x\right)}=U\left(x\right)\ket{0},
\end{align}
which in turn facilitates the evaluation of \textbf{quantum kernels}
\begin{align}
    \lvert \braket{\phi\left(\tilde x\right)|\phi\left(x\right)}\rvert^2,
\end{align}
corresponding to the inner product of two quantum feature vectors.
\end{definition}

\noindent
It has been argued in \cite{19HavSupervisedLearning, Schuld_2019QMLHilbertSpace, glick2021covariantQuantumKernels} that quantum kernels are particularly interesting because they can represent transformations which are intractable for classical computers.
Furthermore, this type of quantum data encoding enables a simple implementation of a potentially, non-linear data transformation which can be further processed by quantum circuits \cite{19HavSupervisedLearning, Schuld_2019QMLHilbertSpace}. In fact, it is shown in \cite{Schuld_2021EffectDataEncodingQML} that this form of data encoding has the potential to represent universal function approximators.



Moreover, Hamiltonian encoding \cite{Schuld2018CircuitcentricQC, VarQBMZoufal20} describes the embedding of classical data into a quantum Hamiltonian.
\begin{definition}
\label{sec:gibbs_related_work}
Let $x$ be an element in the data set $X$ such that \textbf{Hamiltonian encoding} refers to 
\begin{align}
     \mathcal{E}\left(x\right) = H\left(x\right) = \sum\limits_{i}f_i\left(x\right)h_i,
\end{align}
where the $n$-qubit Hamiltonian $H$ corresponds to a weighted sum of Pauli operators $h_i=\bigotimes_{j=0}^{n-1}\sigma_i^j$ for $\sigma_i^j\in\set{I, X, Y, Z}$ acting on the $j^{\text{th}}$ qubit whose weights are given by functions $f_i\left(x\right)$.
\end{definition}

\noindent
This Hamiltonian can then be used to transform a quantum state via quantum real time evolution \cite{Schuld2018SupervisedLearning} or Gibbs state preparation \cite{VarQBMZoufal20}.
Quantum real time evolution describes the propagation of an initital state $\ket{\psi_0}$ for time $t\in\mathbb{R}$ with respect to $H\left(x\right)$ according to the Schr\"odinger equation
\begin{align}
    \ket{\psi_t\left(x\right)}=\ee^{-iH\left(x\right)t/\hslash}\ket{\psi_0}.
\end{align}
This evolution can be approximately implemented on a gate-based quantum computer with Hamiltonian simulation techniques, such as Trotterization \cite{Lloyd1073UniversalQuantumSim96}, linear combination of unitaries \cite{LCUHamiltonianSimulaitonWiebe12, low2019wellconditionedHSimulation} and variational \cite{Simon18TheoryVarQSim, barison2021efficientQRTE, AdaptiveVarQTEYao21} approaches. One of the variational implementations is explained in detail in Appendix~\ref{app:varqrte}.

A Gibbs state describes the probability density operator of the configuration space corresponding to the system Hamiltonian in thermal equilibrium \cite{gibbs_2010, Boltzmann1877, gibbs02} at temperature $T$ and its formal definition is provided in the following.
\begin{definition}
\label{def:gibbs_state}
Given the Hamiltonian $H\left(x\right)$ and a system temperature $T$, the corresponding \textbf{Gibbs state} reads
\begin{align}
    \rho^{\text{Gibbs}}\left(x\right) = \frac{e^{-\frac{H\left(x\right)}{k_BT}}}{Z},
\end{align}
where $k_B$ represents the Boltzmann constant and $Z=\Tr\left[e^{-\frac{H\left(x\right)}{k_BT}} \right]$ corresponds to the partition function.
\end{definition}

Originally, Gibbs states were studied in the context of statistical mechanics but, as shown in \cite{Pauli1927}, the density operator also describes interesting statistics which are often employed in machine learning algorithms \cite{GemanGibbs84, murphy2012machinelearning, KollerGraphicalModels09}.
Furthermore, Gibbs states are an important component in quantum Boltzmann machines, see Sec.~\ref{sec:QBM}, and quantum enhanced semi-definite programming algorithms \cite{brandao2017quantumsdp, brandao2019quantumsdp, vanApeldoorn2020quantumsdpsolvers}.
Gibbs states can only be prepared efficiently in some special cases \cite{brandaoFiniteCorrLengthEfficientPrep19, kastoryano2016quantum_gibbs_samplers, PoulinThermalQGibbs09} but not in general.
In fact, it is shown in \cite{aharonov2013quantumpcp} that the preparation of a Gibbs state at low temperatures can be as difficult as finding a Hamiltonian's ground state, i.e., QMA\footnote{QMA is the quantum complexity counterpart to NP \cite{Watrous2012QuantumCompComplexity}.} hard.

Thus, one may have to rely on one of the approximation techniques which are discussed in the following.
The schemes presented in \cite{Temme2011QuantumMS, YungQuantumMetropolis12} employ quantum variants of the well-known Metropolis algorithm to prepare Gibbs states. However, they are based on the use of quantum phase estimation \cite{AbramsQPE99} and, thus, likely to require error-corrected quantum computers. 
Thermalization-based approaches are presented in \cite{PoulinThermalQGibbs09, Anschtz2019RealizingQB}, where the aim is to prepare a quantum Gibbs state by coupling the state register to a heat bath given in the form of an additional working quantum register.
However, it is a priori unknown what environmental system would be suitable. Thus, one may have to conduct a study of potential candidate systems. Furthermore, the method discussed in \cite{PoulinThermalQGibbs09} also requires quantum phase estimation, i.e., fault-tolerant quantum computers.
Variational Gibbs state preparation methods which are based on the fact that Gibbs states minimize the free energy of a system at constant temperature are presented in \cite{VarThermofieldDSJingxiang19, WiebeVariationalGibbs2020, wang2020variationalGibbs, guo2021thermalqState}. 
 The underlying difficulty is the estimation of the von Neumann entropy in every step of the state preparation. To resolve this problem \cite{WiebeVariationalGibbs2020} suggests to use a Fourier series expansion combined with quantum amplitude estimation \cite{brassardQAE02}. Similarly to quantum phase estimation, this algorithm is not well suited for near-term quantum computing applications. A simpler approximation is given in \cite{wang2020variationalGibbs} which uses a second-order Taylor approximation of the free energy. This can be efficiently computed with a SWAP test \cite{EkertSwapTest02, Cincio_2018SwapTest, Suba__2019_swap_test}.
Furthermore, the work presented in \cite{MottaQITE20} employs an efficient implementation of quantum imaginary time evolution, see Sec.~\ref{sec:varqite}, for states with finite correlations and, thereby, enables the estimation of quantum thermal averages, i.e., sampling  with respect to a given observable in accordance with the probability distribution of the respective Gibbs state.
An alternative approach which prepares an approximation to the given Gibbs state with a variational quantum imaginary time evolution is discussed in detail in Sec.~\ref{sec:varqite_gibbs}. This Gibbs state approximation technique is also used in Sec.~\ref{sec:QBM} to realize a variational quantum Boltzmann machine.

One of the biggest challenges in Hamiltonian encoding is the quantification of the approximation accuracy of $\ket{\psi_t\left(x\right)}$ respectively $\rho^{\text{Gibbs}}\left(x\right)$. We do not necessarily have access to an analytic description or even samples of these quantum states\footnote{Notably, there also exist problems where samples of the resulting quantum state are given and the goal is to learn the corresponding Hamiltonian \cite{AnshuSample-efficientQManyBody, verdon2019quantumHbasedModels}.}. In this case, statistical metrics such as the Kullback-Leibler divergence or the Kolomogorov-Smirnov statistic cannot be applied.
Instead, one has to rely on method-specific error bounds, such as the efficient, a posteriori error bounds for variational quantum time evolution presented in Sec.~\ref{sec:error_qite} and Appendix \ref{subsec:error_qrte} which are applicable to the variational Gibbs state preparation presented in Sec.~\ref{sec:varqite_gibbs} respectively the variational Hamiltonian simulation presented in Appendix \ref{app:varqrte}.

\subsection{Information Processing}
\label{sec:model}
In order to train a QML algorithm, we need to process information which can, e.g., be given as training data $X$ or parameters $\bm{\omega}$. 
Information processing can involve data preparation, dimensionality reduction, quantum data loading, classification, Gibbs state preparation, post-processing, measurements, etc. 
Since QML models have access to classical as well as quantum computational resources, data can be processed using classical functions
\begin{align}
    f: \mathbb{C}\mapsto \mathbb{C},
\end{align}
functions that map classical information to quantum resources
\begin{align}
    f: \mathbb{C}\mapsto \mathcal{H},
\end{align}
such as the data encoding maps discussed in Sec.~\ref{sec:data_encoding}, quantum channels
\begin{align}
    f: \mathcal{H}_A \mapsto \mathcal{H}_B,
\end{align}
and measurements of quantum states
\begin{align}
    f:  \mathcal{H} \mapsto \mathbb{C}.
\end{align}
Furthermore, QML algorithms that are suitable for execution with current respectively near-term quantum hardware usually employ quantum channels that are  given as \emph{variational quantum circuits}, which are also known as parameterized quantum circuits or quantum neural networks.
A variational quantum circuit represents a parameterized unitary $U(\bm{\omega})$ with $\bm{\omega}\in\mathbb{R}^k$ acting on  $\mathcal{H}$
\begin{align}
    \tilde\rho\left(\bm{\omega}\right)=U(\bm{\omega})\rho U^{\dagger}(\bm{\omega}).
\end{align}
The respective circuits often comprise of alternating layers of $d$ repeating layers of parameterized single qubit gates, i.e.,
\begin{align}U_r(\bm{\omega}_r)=\bigotimes\limits_{j=0}^{n-1}U^j_r({\omega}^j_r),
\end{align}
with $U^j_r({\omega}^j_r)$ acting on the $j^{\text{th}}$ qubit, ${\omega}^j_r$ corresponding to the $nr+j^{\text{th}}$ element in $\bm{\omega}$, $k=nd$,
and so-called \emph{entangling blocks} $V_r$ which apply multi-qubit gates to generate entanglement between the different qubits such that
\begin{align}
\label{eq:variational_qc}
    U\left(\bm{\omega}\right)=\prod\limits_{r=0}^{d-1} V_rU_r(\bm{\omega}_r).
\end{align}
In the remainder of this chapter, we simplify the respective notation as follows: The parameters ${\omega}^j_r$ are enumerated as $\omega_i$ for $i\in\{0, \ldots, k-1\}$ and the corresponding gates $U^j_r({\omega}^j_r)$ are referred to as $U_i(\omega_i)$. Then, we can write
\begin{align}
\label{eq:simplified_var_qc}
    U\left(\bm{\omega}\right)=\prod\limits_{i=0}^{k-1} V_iU_i(\omega_i),
\end{align}
where $V_i$ either corresponds to $\mathds{1}$ or a multi-qubit gate.

Similar to increasing the number of layers in deep neural networks \cite{Goodfellow2016_DeepL}, increasing the number of layer repetitions increases the number of parameters and, thus, also the \emph{expressivity} of the quantum circuit\footnote{A study of expressivity and entanglement capabilities for different circuit architectures may be found in \cite{SukinExpressibilityqCircuits19}.}. In this context, expressivity refers to the fraction of the $2^n$-dimensional Hilbert space that can be represented by the circuit.
One would expect that quantum circuits with more parameters are preferable over those with less. However, as is discussed in more detail in Sec.~\ref{sec:vanishing_grads}, highly expressive circuits can lead to training difficulties due to so-called \emph{barren plateaus}.

\subsection{Training}
\label{sec:training}
Next, we can define a loss function that depends on the parameters $\bm{\omega}$ of a variational quantum circuit that prepares a parameterized quantum state as \linebreak $\ket{\psi\left(\bm{\omega}\right)} = U(\bm{\omega})\ket{0}^{\otimes n}$.
Usually, we consider cost functions that are either given as expectation values
\begin{align}
\label{eq:exp_val_fun}
    L\left(\ket{\psi\left(\bm{\omega}\right)}\right) = \langle\psi(\bm{\omega})|\hat O|\psi(\bm{\omega})\rangle,
\end{align}
with $\hat O$ representing a Hermitian observable, e.g., given as a Hamiltonian of the form
\begin{equation}
    H\left({\bm{\theta}}\right)=\sum_{c=0}^{p-1}\theta_ch_c,
\end{equation}
with $h_c=\bigotimes_{j=0}^{n-1}\sigma_c^j$ for $\sigma_c^j\in\set{I, X, Y, Z}$ acting on the $j^{\text{th}}$ qubit and $\bm{\theta}\in\mathbb{R}^p$, or alternatively as
\begin{align}
\label{eq:generic_cost_fun}
    L\left(\ket{\psi\left(\bm{\omega}\right)}\right) = \sum\limits_x p_x(\bm{\omega})f\left(x\right),
\end{align}
where $ p_x(\bm{\omega}) = |\langle\psi(\bm{\omega})|x\rangle|^2$ corresponds to the sampling probabilities of the computational basis states $\ket{x}$.
For instance, setting $f\left(x\right) = -\log\left(p_x^{\text{target}}\right)$, where $p_x^{\text{target}}$ represents the target probability for $x$, results in the averaged negative log-likelihood
\begin{equation}
	L\left(\bm{\omega}\right) = -\sum\limits_x p_x(\bm{\omega})\log\left(p_x^{\text{target}}\right).
\end{equation}
The goal of the QML algorithms considered in this thesis can be written as optimization problems of the form
\begin{align}
\label{eq:cost_function_quantum}
    \underset{\bm{\omega}}{\min}\,L\left(\ket{\psi\left(\bm{\omega}\right)}\right).
\end{align}
The loss landscape of this type of optimization problem is typically highly non-convex. Therefore, the optimization method must be chosen carefully. It has been shown that gradient-based optimizers are not only suitable for convex problems but can also offer a promising approach for non-convex loss functions \cite{GradientBasedNon-convexPaquette2018, Non-convexStochasticGhadimi26, NaturalGradientDescentNon-Convex2019Zhang}.

\section{Quantum Gradients}
\label{sec:gradients}
Gradients are a particularly useful tool in the context of classical as well as quantum machine learning as they provide valuable information about the loss landscape.
Furthermore, the learning performance of a classical as well as quantum machine learning algorithms strongly relies on the chosen optimization scheme. Gradient-based optimizers offer a versatile optimization approach that  enable informed parameter update choices. 
The training of a QML loss function often requires the evaluation of gradients with respect to the parameters $\bm{\omega}$ of a quantum channel.


The remaining section is structured as follows. First, Sec.~\ref{sec:quantum_grad_types} presents various types of quantum gradients.
Next, we are going to introduce two approaches which enable the evaluation of analytic quantum gradients in Sec.~\ref{sec:analytic_gradients}. Finally, Sec.~\ref{sec:vanishing_grads} discusses different settings that lead to exponentially vanishing gradients and can, therefore, cause critical difficulties in QML training.

\subsection{Quantum Gradient Types}
\label{sec:quantum_grad_types}

In the remainder of this section, we discuss different types of quantum gradients, i.e., \emph{first-order gradients}, \emph{second-order gradients}, the \emph{quantum Fisher information} and \emph{quantum natural gradients}. Notably, the latter corresponds to gradients that are rescaled by the quantum Fisher information and, thus, consider the information geometry of the loss landscape. This, in turn, can help to improve the training convergence. 

\subsubsection{First-Order Gradients}
First-order gradients of the loss function $L\left(\ket{\psi\left(\bm{\omega}\right)}\right)$ w.r.t.~the circuit parameters, i.e., $
    \nabla_{\bm{\omega}} L\left(\ket{\psi\left(\bm{\omega}\right)}\right)$
reflect the slope of the cost function and, thus, enable informed parameter updates during optimization routines, e.g., using gradient descent \cite{CauchyGD1847}. In fact, it was shown in \cite{HarrowGradient21} that gradient-based optimization can lead to a speed-up in convergence compared to gradient free or finite difference-based optimization for variational quantum algorithms.
\pagebreak 
\begin{definition}
\label{def:first_order_exp_val_grads}
Given a pure quantum state $\ket{\psi\left(\bm{\omega}\right)} = U(\bm{\omega})\ket{0}^{\otimes n}$, where $\bm{\omega}\in\mathbb{R}^k$ and a Hermitian operator $\hat O$, the respective \textbf{first-order quantum gradient} reads
\begin{align}
   \frac{\partial\langle\psi(\bm{\omega})|\hat O|\psi(\bm{\omega})\rangle}{\partial\omega_i} = 2\text{Re}\left(\bra{\psi(\bm{\omega})}\hat O\, \frac{\partial\ket{\psi(\bm{\omega})}}{\partial\omega_i}\right),
\end{align}
for $i\in\set{0, \ldots, k-1}$.
\end{definition}

First-order gradients for cost functions of the form given in Eq.~\eqref{eq:exp_val_fun} can be directly computed with the gradients from  Def.~\ref{def:first_order_exp_val_grads}.
Derivatives of the loss functions given in Eq.~\eqref{eq:generic_cost_fun} require the evaluation of probability gradients, i.e.,
\begin{align}
\label{eq:probability_gradients}
\nabla_{\bm{\omega}}\sum_x  p_x(\bm{\omega})f\left(x\right) = \sum\limits_x \nabla_{\bm{\omega}} p_x(\bm{\omega})f\left(x\right),
\end{align}
where $ p_x(\bm{\omega}) = | \langle\psi(\bm{\omega})|x\rangle|^2$ corresponds to the measurement probability of a basis state $\ket{x}$ for $ x\in\set{0, \ldots, 2^n-1}$ forming an orthonormal basis of the underlying $2^n$ dimensional Hilbert space.
The respective derivatives are based on the probability quantum gradients $\textstyle{\nabla_{\bm{\omega}} p_x(\bm{\omega})}$ which are computed with Def.~\ref{def:first_order_exp_val_grads} where $\hat{O}=\proj{x}$. In practice, we do not have to measure all projectors but can employ a simple sampling scheme where we assume that those basis states that are not sampled have measurement probability $0$.

\subsubsection{Second-Order Gradients}

Furthermore, second-order gradients of the loss function $L\left(\ket{\psi\left(\bm{\omega}\right)}\right)$ w.r.t.~the circuit parameters, i.e., $
    \nabla^2_{\bm{\omega}}L\left(\ket{\psi\left(\bm{\omega}\right)}\right)$ reflect the curvature of the loss landscape \cite{kim2021quantumLandscape, Cerezo_2021higher_order_bps, Huembeli_2021Hessian_Loss} and, thus, provide valuable information about the training abilities of a model.
\begin{definition}
\label{def:second_order_exp_val_grads}
Given a pure quantum state $\ket{\psi\left(\bm{\omega}\right)} = U(\bm{\omega})\ket{0}^{\otimes n}$, where $\bm{\omega}\in\mathbb{R}^k$ and a Hermitian operator $\hat O$, the respective \textbf{second-order quantum gradient} reads
\begin{align}
    &\frac{\partial^2\langle\psi(\bm{\omega})|\hat O|\psi(\bm{\omega})\rangle}{\partial \omega_i\partial \omega_j}= 2\text{Re}\left(\frac{\partial\bra{\psi(\bm{\omega})}}{\partial \omega_i}\hat O\, \frac{\partial\ket{\psi(\bm{\omega})}}{\partial\omega_j} + \bra{\psi(\bm{\omega})}\hat O \, \frac{\partial^2\ket{\psi(\bm{\omega})}}{\partial\omega_i\partial\omega_j}\right),
\end{align}
for $i,j \in\set{0, \ldots, k-1}$.
\end{definition}

Second-order gradients of the cost function form given in Eq.~\eqref{eq:exp_val_fun} are directly related to the second-order quantum gradients from  Def.~\ref{def:second_order_exp_val_grads}, whereas the ones in Eq.~\eqref{eq:generic_cost_fun} use probability gradients
\begin{align}
 \nabla^2_{\bm{\omega}}\sum_x  p_x(\bm{\omega})f\left(x\right) &= \sum\limits_x   \nabla^2_{\bm{\omega}} p_x(\bm{\omega})f\left(x\right),
\end{align}
with $\nabla^2_{\bm{\omega}} p_x(\bm{\omega})$ being computed using Def.~\ref{def:second_order_exp_val_grads} by setting $\hat{O} = \proj{x}$. As in the first-order case, we do not need to evaluate all $2^n$ projectors $\proj{x}$ but can employ a practical sampling scheme that limits the evaluation to the observed $x$.
We would like to point out that other higher-order quantum gradients can be evaluated in a similar fashion \cite{Mari_2021Gradients}.

\subsubsection{Quantum Fisher Information}
The quantum Fisher information matrix (QFIM) \cite{QFIMBraunstein94, meyer2021fisher} is an interesting metric that reflects the information geometry induced by a parameterized quantum state. As such it can be used to define promising optimization routines, such as quantum natural gradients which are introduced in Sec.~\ref{sec:qng}, and to investigate the properties of QML models \cite{haug2021capacity}.

\begin{definition}
\label{def:qfi}
Let, $\ket{\psi\left(\bm{\omega}\right)} = U(\bm{\omega})\ket{0}^{\otimes n}$ be a pure quantum state, where $\bm{\omega}\in\mathbb{R}^k$. The \textbf{quantum Fisher Information matrix} is given by $4\mathcal{F}^Q\left(\bm{\omega}\right)$ with $\mathcal{F}^Q\left(\bm{\omega}\right)$ representing the Fubini-Study metric whose entries read
\begin{align}
\label{eq:qfim_def}
    \mathcal{F}^Q_{ij}\left(\bm{\omega}\right) &= -\frac{1}{2}\frac{\partial^2 |\braket{\psi(\bm{\omega'})|\psi(\bm{\omega})}|^2}{\partial\omega_i\partial\omega_j}
    \Big\rvert_{\bm{\omega'}=\bm{\omega}} \nonumber \\
    &= -\frac{1}{2}\frac{\partial^2 |\braket{\psi(\bm{\omega}')|\psi(\bm{\omega})}|^2}{\partial\omega_i\partial\omega_j}
    \Big\rvert_{\bm{\omega}'=\bm{\omega}} \nonumber \\
    &= \text{Re}\left(\frac{\partial \bra{\psi(\bm{\omega})}}{\partial \omega_i}\frac{\partial \ket{\psi(\bm{\omega})}}{\partial \omega_j} - \frac{\partial \bra{\psi(\bm{\omega})}}{\partial \omega_i}\proj{\psi(\bm{\omega})}\frac{\partial \ket{\psi(\bm{\omega})}}{\partial \omega_j}\right).
\end{align}
To simplify the notation we are going to use $\mathcal{F}^Q_{ij} := \mathcal{F}^Q_{ij}\left(\bm{\omega}\right) $.
\end{definition}

\subsubsection[Quantum Natural Gradient]{Quantum Natural Gradient\footnote{This section is reproduced in part, with permission, from J.~Gacon, C.~Zoufal, G.~Carleo, S.~Woerner, "Simultaneous Perturbation Stochastic Approximation of the Quantum Fisher Information", \textit{Quantum}, 5(567) 2021}}
\label{sec:qng}
In the following, we present the quantum natural gradient (QNG) that has favorable properties compared to standard gradients and can be advantageous for the training of QML models.
First, we introduce the classical natural gradient\cite{NatGradAmari98}. The special feature of this gradient form is that it considers the curvature of the model space by rescaling the gradient with the metric tensor underlying the parameter space.
The key idea is to take the information geometry \cite{Amari2016InfoGeometry} into account and, thus, improve the convergence of gradient-based optimization techniques. 

Standard gradient descent attempts to minimize the loss function $L\left(\bm{\omega}\right)$ by choosing the parameter update step
proportional to the negative gradient 
\begin{equation}
\label{eq:vanilla_gradient_descent_update}
    \bm{\omega}^{(j+1)} = \bm{\omega}^{(j)} -\eta \nabla_{\bm{\omega}}L\left(\bm{\omega}\right),
\end{equation}
with $\eta>0$ the learning rate and $\bm{\omega}^{(j)}$ corresponding to the parameters at the $j^{\text{th}}$ optimization iteration.
From the geometric perspective, this means selecting the direction of the steepest descent of the loss landscape -- w.r.t.~the $\ell_2$ norm -- that induces the smallest possible change in the parameter space. This is also described by the following optimization problem
\begin{equation}\label{eq:vanilla_argmin}
 \bm{\omega}^{(j + 1)} = \underset{\bm{\omega} \in \mathbb{R}^d}{\text{argmin} } \bigg(\big( \bm{\omega} - \bm{\omega}^{(j)}\big)^{T}\nabla_{\bm{\omega}} L\left(\bm{\omega}\right)+ \frac{1}{2\eta} \|\bm{\omega} - \bm{\omega}^{(j)}\|_2^2\bigg),
\end{equation}
where $||\cdot||_2$ corresponds to the $\ell_2$ norm.

A problem of the standard gradient descent is that all parameters $\bm{\omega}$ are rescaled by a constant factor which is directly reflected in the magnitude of the gradient descent step.
If the learning rate is not properly adjusted, multiplying
the parameters by a large constant may lead to overshooting the desired values while multiplying by a small constant can severely slow down the speed of convergence.
This rescaling problem may be resolved by explicit consideration of the information geometry induced by the parameter space. 
More specifically, we want to find the parameters that minimize $L\left(\bm{\omega}\right)$ while keeping the change in the parameter space -- w.r.t.~the Riemannian metric -- as small as possible.
To this end, we replace the norm $||\cdot||_2$ in Eq.~\eqref{eq:vanilla_argmin} by $||\cdot||_{g(\bm{\omega})}$ where $g(\bm{\omega}) \in \mathbb{R}^{d \times d}$ denotes the Riemann metric tensor induced by the paramterized model and get the corresponding optimization problem
\begin{align}
   \bm{\omega}^{(j + 1)} = \underset{\bm{\omega} \in \mathbb{R}^d}{\text{argmin} }\bigg(\big( \bm{\omega} - \bm{\omega}^{(j)}\big)^{T}\nabla_{\bm{\omega}}L\left(\bm{\omega}\right)+ \frac{1}{2\eta} \|\bm{\omega} - \bm{\omega}^{(j)}\|_{g(\bm{\omega}^{(k)})}^2\bigg),
\end{align}
which results in the natural gradient descent \cite{NatGradAmari98}
\begin{equation}\label{eq:natural_gradient}
    \bm{\omega}^{(j+1)} = \bm{\omega}^{(j)} - \eta g^{-1}(\bm{\omega}^{(j)})  \nabla_{\bm{\omega}}L\left(\bm{\omega}\right).
\end{equation}
As proven in \cite{NatGradAmari98}, not the standard gradient but the natural gradient corresponds to the actual steepest descent if the parameter space is described by a Riemannian instead of a Euclidean metric.

Now, suppose the cost function depends on a parameterized quantum state $\ket{\psi\left(\bm{\omega}\right)}$, i.e., $
    L\left(\ket{\psi\left(\bm{\omega}\right)}\right).$
Then, the information geometry underlying the parameterized model corresponds to the Fubini-Study metric $\mathcal{F}^{Q}$ given in  Def.~\ref{def:qfi} and $g(\bm{\omega}) = \mathcal{F}^{Q}$.
The resulting QNG reads
\begin{align}
    \left(\mathcal{F}^{Q}\right)^{-1}  \nabla_{\bm{\omega}} L\left(\ket{\psi\left(\bm{\omega}\right)}\right).
\end{align}
A drawback of this gradient method is that $\mathcal{F}^{Q}$ generally requires the evaluation of $\mathscr{O}(k^2)$ expectation values, where $k$ denotes the number of ansatz parameters. Thus, it could happen that a QNG-based optimizer converges better than gradient descent with respect to the number of optimization iterations but worse in terms of the total computational resources.
It has, therefore, been suggested to use approximations to the Fubini-Study metric to reduce the computational cost. In this context, \cite{Stokes_2020QNG} proposes the use of a diagonal respectively block-diagonal approximation which reduces the complexity to $\mathscr{O}(k)$. Furthermore, \cite{gacon2021simultaneous} introduces a well-performing stochastic approximation whose complexity is constant, i.e., independent of $k$.
Potential advantages of QNG-based optimization are investigated, e.g., in \cite{Stokes_2020QNG, gacon2021simultaneous, yamamoto2019naturalVQE, lopatnikova2021QNG_VarBayes}.

\subsection{Analytic Quantum Gradients}
\label{sec:analytic_gradients}

As discussed before, quantum gradients are vital to facilitate gradient-based optimization for QML and to study the respective loss landscape.
An approximation to the quantum gradients defined in Sec.~\ref{sec:quantum_grad_types} can be computed using, e.g., forward finite differences where
\begin{equation}
    \frac{\partial L\left(\bm{\omega}\right)}{\partial\omega_i} \approx  \frac{L\left(\bm{\omega} + {\epsilon}\bm{e_i} \right) - L\left( \bm{\omega} \right)}{\epsilon},
\end{equation}
for $\bm{e_i}$ representing the unit vector with $0$ everywhere except for a $1$ at the $i^{\text{th}}$ position and a small $\epsilon > 0$. However, finite difference gradients are volatile to the choice of $\epsilon$. It is particularly difficult to find a suitable $\epsilon$ that enables a reasonable quantum gradient approximation given the typical non-convex QML loss landscape and evaluation with noisy quantum hardware. Thus, analytic quantum gradients are a crucial part of QML applications and their evaluation received a significant amount of attention. 

In this section, the \emph{parameter shift} \cite{izmaylov2021analyticqgradients, hubregtsen2021singlecomponent, kyriienko2021generalizedQCDiff, Mari_2021Gradients, SchuldQuantumGradients19, Zoufal2019qGANs, Banchi2021measuringanalytic} as well as the \emph{linear combination} \cite{SchuldQuantumGradients19, VarSITEMcArdle19, VarQBMZoufal20} techniques for analytic quantum gradient computation are presented, illustrative examples are shown and the respective advantages and disadvantages are discussed.
To that end, we consider the following setting.
Suppose an observable $\hat O$ and a pure, parameterized quantum state that is generated with a variational quantum circuit as
\begin{equation}
    \ket{\psi(\bm{\omega})}=U\left(\bm{\omega}\right)\ket{0}^{\otimes n},
\end{equation}
where $\bm{\omega}\in\mathbb{R}^k$ and $U\left(\bm{\omega}\right)$ is given by the $n$-qubit variational ansatz defined in Eq.~\eqref{eq:simplified_var_qc}.

\subsubsection{Parameter Shift Gradients}

First, we consider the simple case of Pauli generators $\sigma_i\in\set{I, X, Y, Z}$ whose eigenvalues are $\pm 1$ and corresponding unitaries $U_i({\omega}_i)=\ee^{-i\frac{\omega_i}{2}\sigma_i}$ acting on qubit $i$ such that
\begin{align}
    U_i\left(\omega_i\right)=\ee^{i\frac{\omega_i}{2}{\sigma_i}}=\cos{\big(\frac{\omega_i}{2}\big)}\mathds{1} + i \sin{\big(\frac{\omega_i}{2}\big)}\sigma_i.
\end{align}
It follows from trigonometric identities \cite{Li_2017QuantumOptimalControl} that 
\begin{align}
        \frac{\partial\langle\psi(\bm{\omega})|\hat O|\psi(\bm{\omega})\rangle}{\partial \omega_i}  = \frac{1}{2}\big(\langle\psi(\bm{\omega}_{+\frac{\pi}{2}\bm{e_i}})|\hat O|\psi(\bm{\omega}_{+\frac{\pi}{2}\bm{e_i}})\rangle - \langle\psi(\bm{\omega}_{-\frac{\pi}{2}\bm{e_i}})|\hat O|\psi(\bm{\omega}_{-\frac{\pi}{2}\bm{e_i}})\rangle \big),
\end{align}
with $\bm{\omega}_{\pm \gamma\bm{e_i}}$ denoting a parameter vector which is equivalent to $\bm{\omega}$ except for the parameter $\omega_i$ which is shifted by $\pm\gamma$.

More generally, suppose Hermitian generators ${M_i}$ with $M_i^\dagger M_i =\mathds{1}$ and two eigenvalues $\pm q_i$. Then, it also holds that the corresponding unitary operator 
\begin{align}
    U_i\left(\omega_i\right)=\ee^{i\omega _i{M_i}}=\cos{(\omega_i)}\mathds{1} + i \sin{(\omega_i)}M_i,
\end{align}
and that \cite{SchuldQuantumGradients19}
\begin{align}
        \frac{\partial\langle\psi(\bm{\omega})|\hat O|\psi(\bm{\omega})\rangle}{\partial \omega_i}   = q_i\big(\langle\psi(\bm{\omega}_{+\frac{\pi}{4q_i}\bm{e_i}})|\hat O|\psi(\bm{\omega}_{+\frac{\pi}{4q_i}\bm{e_i}})\rangle - \langle\psi(\bm{\omega}_{-\frac{\pi}{4q_i}\bm{e_i}})|\hat O|\psi(\bm{\bm{\omega}}_{-\frac{\pi}{4q_i}\bm{e_i}})\rangle \big).
\end{align}
Interestingly, it was derived in \cite{Mari_2021Gradients} that the shift may also be conducted independently of the eigenvalues $\pm q_i$, i.e.,
\begin{align}
    \frac{\partial\langle\psi(\bm{\omega})|\hat O|\psi(\bm{\omega})\rangle}{\partial \omega_i}  = \frac{\langle\psi(\bm{\omega}_{+s\bm{e_i}})|\hat O|\psi(\bm{\omega}_{+s\bm{e_i}})\rangle - \langle\psi(\bm{\omega}_{-s\bm{e_i}})|\hat O|\psi(\bm{\omega}_{-s\bm{e_i}})\rangle}{2\sin{(s)}},
\end{align}
where  $s\in\mathbb{R}$ such that it is not an integer multiple of $\pi$. This offers additional flexibility and may potentially help to reduce the sampling variance.
The gradient evaluations derived above are generally referred to as \emph{parameter shift} gradients.
Furthermore, these parameter shift gradients can also be easily extended to second-order gradients 
by applying two shifts for $\omega_i$ and $\omega_j$
\begin{align}
    &\frac{\partial^2\langle\psi(\bm{\omega})|\hat O|\psi(\bm{\omega})\rangle}{\partial \omega_i\partial\omega_j} \nonumber \\
    &= \frac{\big(\langle\psi(\bm{\omega}_{+s_i\bm{e_i},+s_j\bm{e_j}})|\hat O|\psi(\bm{\omega}_{+s_i\bm{e_i},+s_j\bm{e_j}})\rangle +\langle\psi(\bm{\omega}_{-s_i\bm{e_i},-s_j\bm{e_j}})|\hat O|\psi(\bm{\omega}_{-s_i\bm{e_i},-s_j\bm{e_j}})\rangle \big)}{4\sin(s_i\bm{e_i})\sin(s_j\bm{e_j})} \nonumber \\
    &\hspace{5mm}- \frac{\big( \langle\psi(\bm{\omega}_{+s_i\bm{e_i},-s_j\bm{e_j}})|\hat O|\psi(\bm{\omega}_{+s_i\bm{e_i},-s_j\bm{e_j}})\rangle + \langle\psi(\bm{\omega}_{-s_i\bm{e_i},+s_j\bm{e_j}})|\hat O|\psi(\bm{\omega}_{-s_i\bm{e_i},+s_j\bm{e_j}})\rangle\big)}{4\sin(s_i\bm{e_i})\sin(s_j\bm{e_j})}
\end{align}
and similarly to the evaluation of
\begin{align}
   \mathcal{F}^Q_{ij} = \frac{\partial^2\langle{\tilde\psi(\bm{\omega}, \bm{\omega}')}\proj{0}^{\otimes n}{\tilde\psi(\bm{\omega}, \bm{\omega}'))}\rangle}{\partial \omega_i\partial\omega_j}\Big\rvert_{\bm{\omega}'=\bm{\omega}}
\end{align}
by employing second-order parameter shift gradients with $\hat O = \proj{0}^{\otimes n}$ and $\ket{\tilde\psi(\bm{\omega}, \bm{\omega}')}=U^\dagger(\bm{\omega}')\ket{\psi(\bm{\omega})}$.

\subsubsection{Linear Combination Gradients}

In the following, we use the simplified notation 
\begin{align*}
    \tilde U_{a,b}=\prod\limits_{i=a}^{b} V_iU_i(\omega_i).
\end{align*}
Now, consider a parameterized unitaries that can be written as
\begin{align}
    U_i\left(\omega_i\right) = e^{i\omega_i M_i},
\end{align}
where $M_i=\bigotimes\limits_{j=0}^{n-1}\sigma_i^{j}$ for $\sigma_i^{j}\in\set{I, X, Y, Z}$ \cite{nielsen10} acting on the $j^{\text{th}}$ qubit, denotes a Hermitian matrix. 
It follows that
\begin{equation}
\frac{\partial U_i\left(\omega_i\right)}{\partial\omega_i} = iM_iU_i\left(\omega_i\right),
\end{equation}
and, thus,
\begin{align}
\frac{\partial\ket{\psi(\bm{\omega})}}{\partial\omega_i} =
i\tilde U_{i+1,k-1}M_i\tilde U_{0,i}\ket{0}^{\otimes n} .
\end{align}
This directly implies that first-order gradients can be written as
\begin{align}
\label{eq:first_order_exp_val_grad}
\frac{\partial\langle\psi(\bm{\omega})|\hat O|\psi(\bm{\omega})\rangle}{\partial \omega_i} &= 2\text{Re}\Big(i\bra{\psi(\bm{\omega})}\hat O\, \tilde U_{i+1,k-1}M_i\tilde U_{0,i}\ket{0}^{\otimes n}\Big),
\end{align}
second-order gradients as 

\pagebreak
\begin{align}
\label{eq:second_order_exp_val_grad}
    \frac{\partial^2\langle\psi(\bm{\omega})|\hat O|\psi(\bm{\omega})\rangle}{\partial \omega_i\partial \omega_j}\nonumber &=2\text{Re}\Big(\bra{0}^{\otimes n}\tilde U_{0,j}^{\dagger}M_j \tilde U^{\dagger}_{j+1, k-1}\, \hat O \, \tilde U_{i+1, k-1}M_i\tilde U_{0,i}\ket{0}^{\otimes n}   \nonumber \\
    &\hspace{5mm}- \bra{\psi(\bm{\omega})}\,\hat O \,\tilde U_{j+1,k-1}M_j\tilde U_{i+1,j}M_i\tilde U_{0,i}\ket{0}^{\otimes n} \Big),
\end{align}
and the Fubini-Study metric entries as
\begin{align}
\label{eq:qfi_eval}
    \mathcal{F}^Q_{ij}
&= \text{Re}\Big(\bra{0}^{\otimes n}\tilde U_{0,j-1}^{\dagger}M_j\tilde U_{i+1,j-1}M_i\tilde U_{0,i}\ket{0}^{\otimes n} \nonumber \\
&\hspace{5mm}-\bra{0}^{\otimes n}\tilde U_{0,j-1}^{\dagger}M_j\tilde U_{0,j-1}\ket{0}\bra{0}^{\otimes n}\tilde U_{0,i-1})^{\dagger}M_i\tilde U_{0,i-1}\ket{0}^{\otimes n} \Big),
\end{align}
for $i<j$.

Now, we are ready for the introduction of \emph{linear combination} (LC) quantum gradients.
There are two possibilities to compute the first-order gradients using either the quantum circuit structure shown in Fig.~\ref{fig:laflamme} introduced by \cite{LaflammeSimulatingPhysPhenom02} -- that we refer to as \emph{first-order implicit LC circuit} -- or the quantum circuit structure illustrated in  Fig.~\ref{fig:lin_comb} -- which we denote as \emph{first-order explicit LC circuit}. Both cases enable the evaluation of 
 \begin{equation}
\text{Re}\Big(i\bra{\psi(\bm{\omega})}\hat O\, \tilde U_{i+1,k-1}M_i\tilde U_{0,i}\ket{0}^{\otimes n}\Big),
\end{equation}
by setting $\alpha = \frac{\pi}{2}$, the input state to $\tilde U_{0,i}\ket{0}^{\otimes n}$, $V=M_i$ and $W=\tilde U_{i+1,k-1}$. Furthermore, the former needs to make use of the fact that every complex square matrix can be written as a weighted sum of unitaries \cite{SchuldQuantumGradients19}, leading to
\begin{align}
\label{eq:o_decomposition}
    \hat O = \sum_{c=0}^{p-1} \theta_c O_c,
\end{align}
with $\theta_c\in\mathbb{R}$ and $O_c$ corresponding to unitaries. Then, $p$ separate first-order explicit LC circuits are evaluated where the $O_c$ are implemented with the $U$ gate. Due to linearity the results can be summed as
\begin{align}
    \sum_{c=0}^{p-1} \theta_c\text{Re}\Big(i\bra{\psi(\bm{\omega})}O_c\, \tilde U_{i+1,k-1}M_i\tilde U_{0,i}\ket{0}^{\otimes n}\Big).
\end{align}
On the other hand, the structure of the first-order explicit LC circuit does not require the decomposition of $\hat O$, such that $U=\mathds{1}$, but enables a direct evaluation of an expectation value with respect to the extended observable $\mathscr{B}\otimes\hat O$. Naively, the evaluation of this expectation value also requires to decompose $\hat O$, to measure all $p$ terms $O_c$ independently and, then, to sum the results according to Eq.~\eqref{eq:o_decomposition}. However, as showed in \cite{HamamuraEfficientQuantumObservables2020, Crawford2021efficientquantum, bravyi2017tapering, McClean_2016VariationalQCAlgorithms} efficient measurement strategies can be exploited to reduce the number of independent measurements. A detailed presentation of the terms for both approaches is given in Tbl.~\ref{tbl:lin_comb_grad}.

\begin{figure}[h!]
\captionsetup{singlelinecheck = false, format= hang, justification=raggedright, font=footnotesize, labelsep=space}
\begin{center}
\begin{tikzpicture}
\node at (0,0) {\includegraphics[width=0.61
\linewidth]{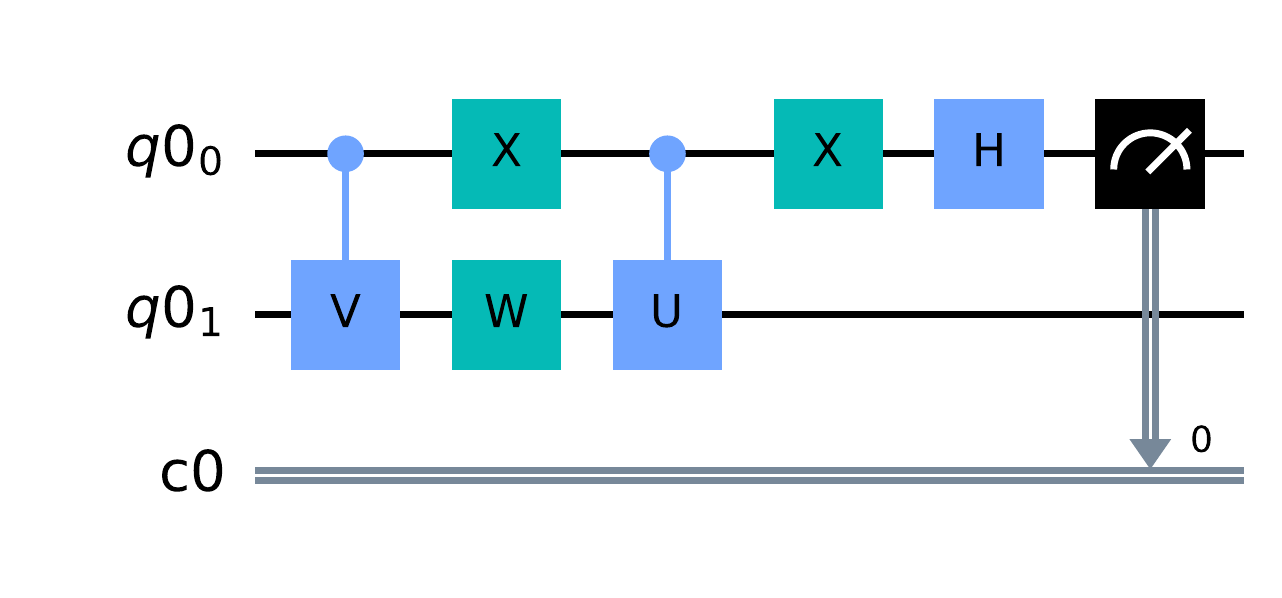}};
\node at (-5.8,1.1) {$\ket{0}+e^{i \alpha} \ket{1}$};
\node at (-5.3,-0.3) {$\ket{\psi_{\textnormal{in}}}$};
\node at (-4.7,-1.45) {$c$};
\node at (3.5, 1.9) {\small{$\mathscr{B}$}};
\end{tikzpicture}
\end{center}
\caption{\textbf{Linear combination gradients.} The auxiliary working qubit is measured with respect to the observable $\mathscr{B}$. The measurement outcome which is stored in the classical register $c$ corresponds to the real respectively imaginary part of $\ee^{i\alpha}\bra{\psi_{\text{in}}}W^{\dagger}\,U^{\dagger}\,W\,V\ket{\psi_{\text{in}}}$ for $\mathscr{B}=Z$ respectively $\mathscr{B}=-Y$.}
\label{fig:laflamme}
\end{figure}

\begin{figure}[h!]
\captionsetup{singlelinecheck = false, format= hang, justification=raggedright, font=footnotesize, labelsep=space}
\begin{center}
\begin{tikzpicture}
\node at (0,0) {\includegraphics[width=0.61
\linewidth]{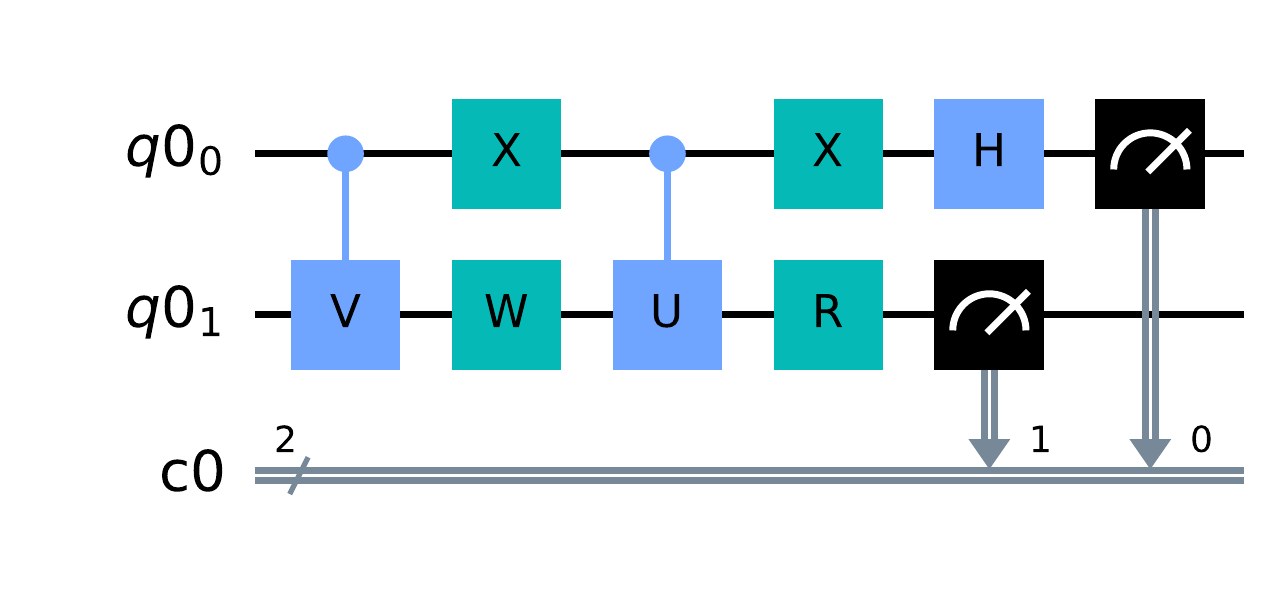}};
\node at (-5.8,1.2) {$\ket{0}+e^{i \alpha} \ket{1}$};
\node at (-5.3,-0.3) {$\ket{\psi_{\textnormal{in}}}$};
\node at (-4.65,-1.55) {$c$};
\node at (3.4,1.85) {\small{$\mathscr{B}$}};
\node at (2.05,0.35) {\small{$\hat{O}$}};
\end{tikzpicture}
\end{center}
\caption{\textbf{Linear combination gradients with explicit measurement of} $\boldsymbol{\hat O}$. The auxiliary working qubit is measured with respect to the observable $\mathscr{B}$. The measurement outcome which is stored in the classical register $c$ corresponds to the real respectively imaginary part of $\ee^{i\alpha}\bra{\psi_{\text{in}}}W^{\dagger}U^{\dagger}R^{\dagger}\,\hat O\,RWV\ket{\psi_{\text{in}}}$ for $\mathscr{B}=Z$ respectively $\mathscr{B}=-Y$.}
\label{fig:lin_comb}
\end{figure}

Furthermore, second-order gradients can either be computed with the first-order explicit LC circuit or with the circuit shown in Fig.~\ref{fig:killoran_second_order_lin_comb}. The latter is a simplified version of a circuit given in \cite{killoran2018qgans} and referred to as \emph{second-order explicit LC circuit}.
In the former case, we evaluate 
\begin{align}
\text{Re}\Big(\bra{0}^{\otimes n}\tilde U_{0,j}^{\dagger}M_j \tilde U^{\dagger}_{j+1, k-1}\, \hat O \, \tilde U_{i+1, k-1}M_i\tilde U_{0,i}\ket{0}^{\otimes n} \Big),
\end{align}
with the presented circuit and 
\begin{align}
\text{Re}\Big(\bra{\psi(\bm{\omega})}\,\hat O \,\tilde U_{j+1,k-1}M_j\tilde U_{i+1,j}M_i\tilde U_{0,i}\ket{0}^{\otimes n} \Big),
\end{align}
with almost the same circuit, i.e., we drop the $X$ gates.
The evaluation of the second-order explicit LC circuit enables the evaluation of both terms with one circuit. The gates are set in a similar fashion to the first-order gradients. The exact settings are listed in Tbl.~\ref{tbl:lin_comb_grad}.

\begin{figure}[h!]
\captionsetup{singlelinecheck = false, format= hang, justification=raggedright, font=footnotesize, labelsep=space}
\begin{center}
\begin{tikzpicture}
\node at (0,0) {\includegraphics[width=0.6
\linewidth]{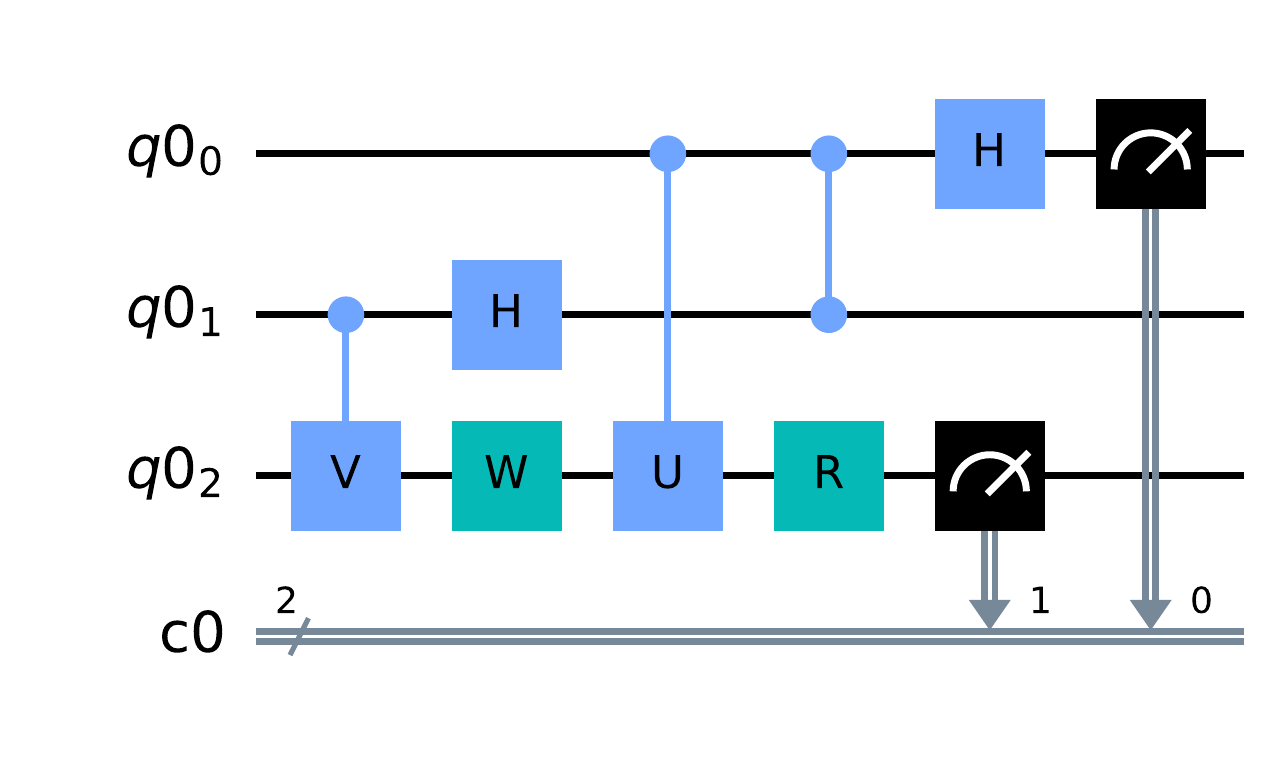}};
\node at (-5.45,  2.) {$\ket{0}+e^{i \beta} \ket{1}$};
\node at (-5.45,0.6) {$\ket{0}+e^{i \alpha} \ket{1}$};
\node at (-4.95,-0.8) {$\ket{\psi_{\textnormal{in}}}$};
\node at (-4.65,-2.22) {$c$};
\node at (3.35, 2.6) {\small{$\mathscr{B}$}};
\node at (2.,-0.13) {\small{$\hat{O}$}};
\end{tikzpicture}
\end{center}
\caption{\textbf{Linear combination second-order gradients with explicit measurement of} $\boldsymbol{\hat O}$. The auxiliary working qubit is measured with respect to the observable $\mathscr{B}$. The measurement outcome which is stored in the classical register $c$ corresponds to the real respectively imaginary part of $\ee^{i\alpha}\bra{\psi_{\text{in}}}\left(\ee^{-i\beta}W^{\dagger}U^{\dagger}R^{\dagger}\,\hat O\,RWV - \ee^{i\beta}W^{\dagger}R^{\dagger}\,\hat O\,RUWV\right)\ket{\psi_{\text{in}}}$ for $\mathscr{B}=Z$ respectively $\mathscr{B}=-Y$.}
\label{fig:killoran_second_order_lin_comb}
\end{figure}

Lastly, the matrix entries $\mathcal{F}^Q_{ij}$ may be evaluated with the first-order implicit LC circuit. The first term
\begin{align}
 \text{Re}\Big(\bra{0}^{\otimes n}\tilde U_{0,j-1}^{\dagger}M_j\tilde U_{i+1,j-1}M_i\tilde U_{0,i}\ket{0}^{\otimes n}\Big),
\end{align}
is evaluated by setting $V=M_i$ and $U=M_j$ and measuring w.r.t.~the observable $\mathscr{B}=Z$.
The second term 
\begin{align}\text{Re}\Big(\bra{0}^{\otimes n}\tilde U_{0,j-1}^{\dagger}M_j\tilde U_{0,j-1}\ket{0}\bra{0}^{\otimes n}\tilde U_{0,i-1}^{\dagger}M_i\tilde U_{0,i-1}\ket{0}^{\otimes n} \Big),
\end{align}
requires the independent evaluation of 
\begin{align}
\bra{0}^{\otimes n}\tilde U_{0,j-1}^{\dagger}M_j\tilde U_{0,j-1}\ket{0}^{\otimes n},
\end{align}
respectively
\begin{align}\bra{0}^{\otimes n}\tilde U_{0,i-1}^{\dagger}M_i\tilde U_{0,i-1}\ket{0}^{\otimes n},
\end{align}
by setting $V=M_j$ respectively $V=M_i$ and measuring w.r.t.~$\mathscr{B}=Z-iY$. Then, one can multiply the terms and evaluate the real part.

Alternatively, one may also evaluate $\mathcal{F}^Q_{ij}$
directly by employing the second-order LC circuit with $V=M_i$, $U=M_j$ and $\hat O = \proj{\psi(\omega')}$.
We refer, the reader to Tbl.~\ref{tbl:lin_comb_grad} for the remaining settings.

\begin{table}[!htb]
\captionsetup{singlelinecheck = false, format=hang, justification=raggedright, font=footnotesize, labelsep=space}
\small{
\begin{center}
\begin{tabular}{c|c|c|c|c|c|c}
 & \textbf{Method} & $\bm{\alpha}$ & $\bm{W}$ &$\bm{U}$ & $\bm{R}$ & $\bm{B}$\\
 \hline
\multirow{2}{*}{$\nabla_{\bm{\omega}}L$} &  implicit LC & \multirow{2}{*}{$\frac{\pi}{2}$} & \multirow{2}{*}{$\tilde U_{i+1, k-1}$} & $O_c$ & - & \multirow{2}{*}{$Z$}\\
 &   explicit LC &   &  & $\mathds{1}$ & $\mathds{1}$ & \\
 \hline
\multirow{2}{*}{$\nabla^2_{\bm{\omega}}L$} & explicit LC  & \multirow{2}{*}{0} &  \multirow{2}{*}{$\tilde U_{i+1, j}$} & \multirow{2}{*}{$M_j$} & \multirow{2}{*}{$\tilde U_{j+1, k-1}$} & \multirow{2}{*}{$Z$}\\
&  explicit $2$-LC & & & & & \\
\hline
\multirow{3}{*}{$\mathcal{F}^Q$} & implicit LC& \multirow{3}{*}{0} & $\tilde U_{i+1, j}$ & $M_j$ & \multirow{2}{*}{-} & $Z$ \\
&  implicit LC &  & $\mathds{1}$ & $\mathds{1}$ &  & $Z-iY$\\
& explicit $2$-LC  &  & $\tilde U_{i+1, j}$ & $M_j$ & $\tilde U_{j+1, k-1}$ & $Z$
\end{tabular}
\end{center}
}
\caption{This table lists the different parameter settings of linear combination quantum gradients for first-order and second-order gradients as well as the QFIM by using a first-order implicit, first-order explicit or second-order explicit circuit structure for the case that $i<j$. Additionally, it always holds that the initial state ${\ket{\mathbf\psi}_{\text{\textbf{in}}}}=\textstyle{\tilde U_{0, i}\ket{0}^{\otimes n}}$, $V=M_i$ and ${\beta}=0$.
Furthermore, the computation of the QFIM with first-order implicit circuits requires the summation of two different terms which are evaluated according to the settings given in the two rows for the 'Implicit LC' method.}
\label{tbl:lin_comb_grad}
\end{table}

\subsubsection{Comparison of Analytic Gradient Methods}

The different approaches for analytic quantum gradient calculation have their advantages and disadvantages.
Firstly, the quantum circuit implementation of the parameter shift evaluation as well as the linear combination method based on the second-order LC circuit for $\mathcal{F}^Q$ is twice as long as the linear combination method based on the the first-order implicit LC circuit.
In contrast to the parameter shift method, the linear combination gradients require additional qubits: one for first-order and two for second-order.
Secondly, the parameter shift gradients have the advantage that only the parameter values change but not the circuit structure. We may, therefore, store a pre-compiled circuit to increase the evaluation efficiency. The linear combination gradients, on the other hand, differ for all gradients with respect to different parameters.
Notably, one particular disadvantage of the first-order implicit gradient evaluation is that the number of expectation values that have to be evaluated scale with the number of terms $p$ in $\hat{O}$. The number of expectation values that have to be computed for the explicit schemes are independent of $p$. 
Tbl.~\ref{tbl:grad_comparison} summarizes the comparison for the evaluation of the gradient $\nabla_{\bm{\omega}}\langle\psi(\bm{\omega})|\hat O|\psi(\bm{\omega})\rangle$ 
 with $\bm{\omega}\in\mathbb{R}^k$, $\textstyle{\hat O = \sum_{c=0}^{p-1} \theta_c O_c}$ and assuming that $\ket{\psi(\bm{\omega}}$ is prepared with a quantum circuit that has depth $d$.

\begin{table}[h!]
\captionsetup{singlelinecheck = false, format=hang, justification=raggedright, font=footnotesize, labelsep=space}
\small{
\begin{center}
\begin{tabular}{c|c|c|c|c|c}
\textbf{Gradient} & \textbf{Method} & \textbf{Depth} & $\#$ \textbf{Qubits} & $\#$ \textbf{Circuits}  & $\#$ \textbf{Expectation Values}\\
 \hline
 \multirow{3}{*}{$\nabla_{\bm{\omega}}L$} & param. shift & $d$ & $n$ & $1$ & $2k$ \\
 &  implicit LC & $d+2$ & $n+1$ & $pk$ & $pk$ \\
 & explicit LC & $d+1$ & $n+1$ & $k$ & $k$\\

\hline
 \multirow{2}{*}{$\nabla^2_{\bm{\omega}}L$} & param. shift & $d$  & $n$ & $1$ & $2k^2$\\
 & explicit LC & $d+2$ & $n+1$ & $k^2$ & $k^2$\\
&  explicit $2$-LC & $d+3$ & $n+2$  &  ${k^2}/{2}$ &${k^2}/{2}$\\

\hline
 \multirow{2}{*}{$\mathcal{F}^Q$} & param. shift & $2d$ & $n$ & $1$ &  $2k^2$ \\
 &  implicit LC & $d+2$  & $n+1$ & ${k^2}/{2}$ &  ${k^2}/{2}+k$\\
& explicit $2$-LC & $2(d+2)$ & $n+2$ & ${k^2}/{2}$&  ${k^2}/{2}$\\

\end{tabular}
\end{center}
}
\caption{The table illustrates the differences between the quantum gradient methods presented in this section: parameter shift, first-order implicit LC, first-order explicit LC and second-order explicit LC. To that end, we compare the quantum circuit depth, the number of qubits, the number of individual circuits and the number of different expectation values that need to be evaluated.}
\label{tbl:grad_comparison}
\end{table}

The variance of the estimated gradients depends on various factors such as the number of shots, the number of expectation values required to compute the estimate as well as the chosen hyper-parameters. It remains an open topic for future research to investigate the variance behavior for different gradient method choices. 

Finally we would like to point out that the linear combination approach offers a lot of flexibility. Unlike the parameter shift gradients it also enables us to directly evaluate imaginary instead of only real parts, by changing the phase $\alpha$ of the top working qubit.

\subsection{Vanishing Gradients}
\label{sec:vanishing_grads}

Quantum states that are represented by $n$ qubits live in a Hilbert space $\mathcal{H}$ of dimension $2^n$. The exponential size of the state space represents a particularly interesting property as well as one of the biggest obstacles for QML. More explicitly, random unitaries acting on the respective quantum states can suffer from a \emph{concentration of measure} \cite{Ledoux2001TheCO} and, thereby, lead to exponentially vanishing gradients also known as \emph{barren plateaus}\cite{Clean_2018_BarrenPlateaus}.
Suppose, you are given an $n$-qubit quantum state 
\begin{align}
    \ket{\psi\left(\bm{\omega}\right)} = U\left(\bm{\omega}\right)\ket{0}^{\otimes n},
\end{align}
where $U\left(\bm{\omega}\right)$ corresponds to a universal ansatz that can represent all elements in the unitary group $\text{U}\left(N\right)$ with $N=2^n$.
More specifically, let the ansatz correspond to an item of $\text{U}\left(N\right)$ that is sampled uniformly at random w.r.t.~the Haar measure $\mu_N$ \cite{HaarMeasure1933, holmes2021AnsatzExpressBarrenPlateaus}. This measure describes the volume of all items in the unitary group $\text{U}\left(N\right)$.
Given an integrable function $g$ acting on elements $u\in\text{U}\left(N\right)$, the respective integral is left and right invariant, i.e.,
\begin{align}
    \int g\left(u\right)d\mu_N\left(u\right) = \int g\left(vu\right)d\mu_N\left(u\right) = \int g\left(uv\right)d\mu_N\left(u\right),
\end{align}
for $v\in \text{U}\left(N\right)$.

The unitary group suffers from a phenomenon called \emph{concentration of measure}. Quantum states that are prepared by elements of the unitary group -- pure quantum states $\ket{\psi}$ -- correspond to points on the $2n-1$ dimensional unit sphere $S^{2n-1}$ in $\mathbb{R}^{2n}$. Given a Lipschitz-continuous function $f\left(\ket{\psi}\right): S^{2n-1} \mapsto \mathbb{R}$ with Lipschitz constant $\eta$, the probability to sample a value
$f\left(\ket{\psi}\right)$ that is is at most $\epsilon$-close to the expectation value $E\left(f\right)$ with $\epsilon > 0$ decreases exponentially in the system size $n$, i.e.,
\begin{align}
    P\left(\rvert E\left(f\right) - f\left(\ket{\psi}\right) \rvert \geq \epsilon \right) \leq 2\ee^{-\frac{n\epsilon^2}{9\pi^2\eta^2}}.
\end{align} This is known as Levy's lemma \cite{Ledoux2001TheCO}.
Although ans\"atze that are representative for the Haar measure require exponential resources and are, thus, unlikely to be used in practice, \emph{$t$-designs}, which correspond to a set of unitaries whose moments are equivalent to the Haar distribution up to order $t$, fulfill the same properties on some functions $f$ \cite{Hayashi_2005tdesigns, Ambainis07t-designs, Oliveira_2007GenericEntanglement06, Dahlsten_2007t-designs}.
In fact, the above problem already occurs if the given ansatz forms a $2$-design.
Furthermore, we would like to point out that $t$-designs \cite{Harrow09_2-designs, harrow2018approximatet_designs} can already be represented by quantum circuits of depth $\text{poly}\left(n, t\right)$.

Suppose, further, that you are given an ansatz $U\left(\bm{\omega}\right)$ corresponding to Eq.~\eqref{eq:simplified_var_qc} such that $ \ket{\psi\left(\bm{\omega}\right)} = U\left(\bm{\omega}\right)\ket{0}^{\otimes n}$. As shown in \cite{Clean_2018_BarrenPlateaus}, $
   \big \langle f_i\big(\ket{\psi\left(\bm{\omega}\right)}\big) \big\rangle_{\bm{\omega}}= 0,\:\forall i $ for
\begin{align}
    f_i\big(\ket{\psi\left(\bm{\omega}\right)}\big)&=\frac{\partial\langle\psi(\bm{\omega})|\hat O|\psi(\bm{\omega})\rangle}{\partial\omega_i} \nonumber \\
    &= 2\text{Re}\Big(i\bra{\psi(\bm{\omega})}\hat O\, \left(\prod\limits_{j=i+1}^{k-1} V_kU_j(\omega_j)\right)^{\dagger}M_i\prod\limits_{j=0}^{i} V_jU_j(\omega_j)\ket{0}^{\otimes n}\Big),
\end{align}
if $\textstyle{\prod\limits_{j=0}^{i} V_jU_j(\omega_j)}$ and $\textstyle{\prod\limits_{j=i+1}^{k-1} V_kU_j(\omega_j)}$ are independent and either of them forms at least a $2$-design. The combination of these results leads to the vanishing gradient phenomenon.

To sum up, a parameterized unitary $U\left(\bm{\omega}\right)$ which is sufficiently expressive in the sense that it enables the occupation of a large fraction of the full Hilbert space can lead to exponentially small gradients which in turn do not provide useful information for QML training tasks. 
The direct connection between ansatz expressibility and barren plateau magnitudes is investigated in \cite{holmes2021AnsatzExpressBarrenPlateaus}.
To scale up QML algorithms, it may, thus, become important to design problem-specific ans\"atze which only enable to access a limited but suitable subspace of the full Hilbert space.
Further literature on mitigation strategies for ansatz induced barren plateaus are, e.g., given by \cite{Volkoff_2021BP_mitigation_correlation, SkolikLayerwiseLearningQNNs2021, Grant_2019BPInitialization, pesah2020absenceQCNNs}.

We would like to point out that barren plateaus manifest themselves also in higher-order gradients \cite{Cerezo_2021higher_order_bps} as well as in the Fisher information \cite{Abbas_2021PowerofQNNs}.
This problem not only occurs in gradient-based optimization. Even finite-difference and gradient-free optimization suffer from the concentration of measure \cite{arrasmith2020effectbps_grad_freeOpt} since also the difference between two loss function values computed for different parameter values becomes exponentially small in the system size. In other words, the loss landscape becomes increasingly flat for bigger $n$.

The chosen ansatz itself is not the only source for potential barren plateaus.
Vanishing gradients can also be due to cost functions depending on a global observable which acts on the full quantum state instead of a $k$-local fraction thereof. This means that QML loss functions must be designed carefully. For a formal definition, we refer the interested reader to \cite{CerezoCostFunctDependentBarrenPlats21}. 
Furthermore, using an ansatz or a model Hamiltonian that contains too much entanglement combined with the use of a partial trace may lead to vanishing gradients \cite{Wiebe2020Barren, sharma2020trainability, Holmes_2021BPs_scramblers}.
Limiting the entanglement in an ansatz can already help to mitigate this issue \cite{Patti_2021EntanglementDevisedBPs}.
Moreover, locally acting hardware noise \cite{wang2021noiseinducedbps} may lead to exponentially vanishing gradients as well. Considering the mitigation of noise induced barren plateaus, we suggest the interested reader the following literature on error mitigation \cite{ErrorMitigationTemme17} respectively error correction \cite{ShorErrorCorrection95} strategies. 

The known sources of barren plateaus, potential mitigation strategies and related open questions are listed in Tbl.~\ref{tbl:vanishing_gradients}.
In summary, the trainability of a QML models strongly depends on the chosen ansatz, cost function and backend type.

\begin{table}[h!]
\captionsetup{singlelinecheck = false, format= hang, justification=raggedright, font=footnotesize, labelsep=space}
\footnotesize{
\begin{tabular}{c|c|c}
\textbf{Cause} & \textbf{Mitigation Strategies }&\textbf{ Open Questions }\\
 \hline
 \makecell{ansatz forming \\ (approximate) $2-$design }& limit expressivity & \makecell{Can we employ \\ problem-specific ans\"atze?} \\
 \hline
 global cost function  & local cost function & \makecell{Can we avoid this with ans\"atze which\\ allow only a small but suitable fraction\\ of the Hilbert space to be occupied?} \\
 \hline
 entanglement & restrict entanglement & \makecell{Do vanishing gradients occur  \\ if hidden units are measured \\ instead of traced out?}\\
  \hline
 noise & error mitigation/correction & \makecell{Can we show that special types of\\  noise exist which are beneficial?} \\

\end{tabular}
}
\caption{This table lists various causes for vanishing gradients and presents potential mitigation strategies as well as related open questions.}
\label{tbl:vanishing_gradients}
\end{table}















\section[Variational Quantum Imaginary Time Evolution]{Variational Quantum Imaginary Time Evolution\footnote{This section is reproduced in part, with permission, from C.~Zoufal, D.~Sutter, S.~Woerner, "Error Bounds for Variational Quantum Time Evolution", Preprint available at  	arXiv:2108.00022, 2021}}
\label{sec:varqite}

Simulating quantum systems was not only the original motivation for building a quantum computer, it still presents one of the most promising applications.
\emph{Quantum time evolution} (QTE) describes the propagation of a quantum state according to a given Hamiltonian and applications thereof are numerous. A few examples which are relevant for quantum machine learning are given by combinatorial optimization problems \cite{gacon2021simultaneous}, the simulation of Ising models \cite{barison2021efficientQRTE} and the approximation of quantum Gibbs states \cite{VarQBMZoufal20, MottaQITE20, Temme2011QuantumMS, YungQuantumMetropolis12, WiebeVariationalGibbs2020}.

In order to implement QTE on a gate-based quantum computer, the respective evolution needs to be translated into quantum gates. This translation can, e.g, be approximated with Trotterization \cite{Lloyd1073UniversalQuantumSim96, MottaQITE20}. Depending on the evolution time and expected accuracy, this approach may lead to deep quantum circuits and is, thus, not well-suited for near-term quantum computers.
\emph{Variational quantum time evolution} (VarQTE)
\cite{VarSITEMcArdle19, Simon18TheoryVarQSim} is a powerful and versatile approach to simulate quantum time dynamics with parameterized quantum circuits. 
In the context of QML, VarQTE enables optimization via ground state preparation and approximate Gibbs state preparation that is compatible with automatic differentiation. Furthermore, it allows us to do these simulations on current and near-term quantum devices -- at least approximately using shallow, parameterized quantum circuits.
The usage of short circuits results in a limited expressivity of the available parameterized unitaries and implies that VarQTE generally comes with an approximation error. It is crucial to quantify this error, to be able to interpret the results, and to possibly rerun simulations with more expressive parameterized circuits in case the error is too large.

In the following, we first introduce quantum time evolution in Sec.~\ref{sec:qte}. Then, Sec.~\ref{sec:varqite_ground} presents a variational quantum imaginary time evolution (VarQITE) implementation based on McLachlan's variational principle \cite{McLachlan64}. \linebreak Sec.~\ref{sec:error_qite} introduces an efficient, a posteriori error bound for the error accumulated by VarQITE.
Next, we discuss two applications of VarQITE, namely ground state search, see Sec.~\ref{app:ground_state_runtime}, and approximate Gibbs state preparation, see Sec.~\ref{sec:varqite_gibbs}. 
Then, Sec.~\ref{sec:varqite_chainRule} discusses how VarQITE can facilitate automatic differentiation for QML algorithms. Furthermore, details which are important to enable a successful VarQITE implementation are described in Sec.~\ref{sec:methods}. Lastly, Sec.~\ref{sec:varqite_examples} presents various examples run with numerical simulations as well as actual quantum hardware to illustrate the power of VarQITE and the respective error bounds.

\subsection{Quantum Time Evolution}
\label{sec:qte}
QTE describes the process of evolving a quantum state over time with respect to a Hamiltonian $H$ \footnote{
Notably, we can also define a generalized time evolution which can be used to do matrix multiplications, solve systems of linear equations and combine real and imaginary time evolution~\cite{VarQSGeneralEndo20}.}. 
Real time evolution enables the study of unitary quantum dynamics, e.g., of fermionic models such as the Hubbard model \cite{BarendsFermionicModels2015}, of a state $\ket{\psi^*_t}$  described by the time-dependent Schr\"odinger equation
\begin{align}
\label{eq:schroedingerRTE}
    i\hslash\ket{\dot\psi^*_t}= H\ket{\psi^*_t},
\end{align}
where the time derivative is denoted as $\ket{\dot\psi^*_t} = \frac{\partial \ket{\psi^*_t}}{\partial t}$.
In the following, the notation is simplified by setting $\hslash=1$
The resulting time-dependence of the state reads
\begin{align}
    \ket{\psi^*_t} = \ee^{-iHt}\ket{\psi^*_0}.
\end{align}

If the time parameter $t$ is replaced by an imaginary time $i t$ the system dynamics change to a non-unitary evolution which is mathematically described by the normalized, Wick-rotated Schr\"odinger equation
\begin{align}
\label{eq:wick_schroedinger}
    \ket{\dot\psi^*_t} = \left( E^*_t\mathds{1} - H\right)\ket{\psi^*_t},
\end{align}
where $E^*_t = \bra{\psi^*_t} H\ket{\psi^*_t}$ corresponds to the system energy.
In the remainder of this work, the notation for $E^*_t\mathds{1} - H$ is simplified to $E^*_t - H$.
The resulting state evolution reads
\begin{align}
    \ket{\psi^*_t} = \frac{\ee^{-Ht}\ket{\psi^*_0}}{\sqrt{\bra{\psi^*_0}\ee^{-2Ht}\ket{\psi^*_0}}}\, .
    \end{align}
Suppose that the initial state has a non-zero overlap with the ground state of $H$, then all components that do not correspond to the ground state are damped exponentially in time during imaginary time evolution. This form of time evolution is, thus, a particularly useful tool to find the ground state of $H$ \cite{VarSITEMcArdle19}, see Sec.~\ref{app:ground_state_runtime}.
Furthermore, imaginary time evolution can be used to solve partial differential equations \cite{gonzalezconde2021pricing, fontanela2021quantumpdes, Kubo_2021StochastidDifferential} or to prepare quantum Gibbs states \cite{VarQBMZoufal20, Simon18TheoryVarQSim}, see Sec.~\ref{sec:varqite_gibbs}. \
For the remainder of this section, we are going to focus on imaginary time evolution as well as a its variational implementations. 
Further details on quantum real time evolution and its the variational implementation are given in Appendix \ref{app:varqrte}.

\subsection{Variational Approach}
\label{sec:varqite_ground}

VarQTE approximates the target state $\ket{\psi^*_t} $ with a state $\ket{\psi^{\omega}_t}$ whose time dependence is projected onto the parameters $\bm{\omega_t}$ of a variational ansatz. To simplify the notation, the time parameter $t$ is dropped from $\bm{\omega}=(\omega_0, \ldots, \omega_{k-1})\in\mathbb{R}^{k}$ in the remainder of this work when referring to the ansatz parameters. 
More specifically, we consider a formulation for pure states based on McLachlan's variational principle \cite{McLachlan64} with a global phase-agnostic evolution \cite{Simon18TheoryVarQSim}.
The state evolution described by this variational principle corresponds to an initial value problem where the underlying \emph{ordinary differential equation} (ODE)~\cite{Tahir-Kheli2018ODEs} is derived from McLachlan's variational principle \cite{McLachlan64}.
We simulate the time evolution by numerically solving the ODE for a set of initial parameter values.

In the following, we discuss the idea of a variational quantum imaginary time evolution implementation which is agnostic to a potential time-dependent global phase \cite{Simon18TheoryVarQSim, VarSITEMcArdle19}.
Consider the imaginary time evolution of a parameterized state with an explicit time-dependent global phase parameter $\nu$, i.e., $\ket{\psi^{\nu}_t} = \ee^{-i\nu}\ket{\psi^{\omega}_t}$ for $\nu = \nu_t \in \mathbb{R}$,
where
\begin{align}
\ket{\dot\psi^\nu} = -i\dot\nu\ee^{-i\nu}\ket{\psi^{\omega}_t}- \ee^{-i\nu}\ket{\dot{\psi}^{\omega}_t}.
\end{align}
The normalized, Wick-rotated Schr\"odinger equation of an evolution of the state $\ket{\psi^{\nu}_t}$ reads
\begin{align}
\label{eq:global_phase_wick}
         \ket{\dot\psi^\nu} = \left( E_t^{\omega} - H\right)\ket{\psi^\nu}.
\end{align}
Thus,
\begin{align}
     \ee^{-i\nu}\ket{\dot\psi^{\omega}_t} = \left(E_t^{\omega} - H+  i\dot\nu\right) \ee^{-i\nu}\ket{\psi^{\omega}_t},
\end{align}
and division by $\ee^{-i\nu}$ leads to
\begin{align}
     \ket{\dot\psi^{\omega}_t} = \left(E_t^{\omega} - H+  i\dot\nu\right)\ket{\psi^{\omega}_t}.
\end{align}
As we employ a variational ansatz $\ket{\psi^{\omega}_t}$ with the time-dependence being encoded in the parameters $\bm{\omega}$, we cannot necessarily find parameter updates  $\bm{\dot\omega}$ such that
\begin{align}
    \ket{\dot\psi^{\omega}_t} = \sum_i\dot\omega_i\frac{\partial\ket{\psi^{\omega}_t}}{\partial\omega_i},
\end{align}
suffices
Eq.~\eqref{eq:global_phase_wick} exactly. The aim of McLachlan's variational principle is to find $\ket{\dot\psi^{\omega}_t}$ that minimizes a potential error in Eq.~\eqref{eq:global_phase_wick} w.r.t.~the  $\ell_2$-norm $\norm{x}_2 = \sqrt{\langle x,x\rangle}$, i.e.,
\begin{align}
     \delta\norm{\ket{\dot\psi^{\omega}_t} - \left(E_t^{\omega}-H+i\dot\nu\right)\ket{\psi^{\omega}_t}}_2&= 0.
\end{align}
Next, we evaluate the variational principle with respect to $\dot\nu$, i.e.,
\begin{align}
     \delta_{\dot\nu}\norm{\ket{\dot\psi^{\omega}_t} - \left(E_t^{\omega}-H+i\dot\nu\right)\ket{\psi^{\omega}_t} }_2 &= 0,
\end{align}
to find $\dot\nu = -\text{Im}(\braket{\dot\psi^{\omega}_t|\psi^{\omega}_t})$.
This facilitates a variational time evolution of $\ket{\psi^{\omega}_t}$ that simulates a global phase degree of freedom $\nu$ without actual implementation or tracking of $\ee^{-i\nu}$, i.e.,
\begin{align}
     \label{eq:MacLachlan_phase_agnostic_imag}
     \ket{\dot\psi^{\dot\nu}_t}= \left(E_t^{\omega}-H\right)\ket{\psi^{\omega}_t},
\end{align}
with the effective state gradient 
\begin{align} \label{eq_forDavid}
\ket{\dot\psi^{\dot\nu}_t} = \ket{\dot\psi^{\omega}_t} + i\text{Im}(\braket{\dot\psi^{\omega}_t|\psi^{\omega}_t})\ket{\psi^{\omega}_t}.
\end{align}
Rewriting the variational principle accordingly gives
\begin{align}
\label{eq:VarQITE_phase_agnostic}
     \delta\norm{\ket{\dot\psi^{\dot\nu}_t} - \Big(E_t^{\omega}-H\Big)\ket{\psi^{\omega}_t} }_2&= 0.
\end{align}

Plugging the chain rule given above into Eq.~\eqref{eq:VarQITE_phase_agnostic} and evaluating the variational principle with respect to $\bm{\dot\omega}$ leads to the following system of linear equations (SLE) \cite{Liesen2015LinAlg}
\begin{align}
\label{eq:McLachlanVarQITE}
\sum\limits_{j=0}^k \mathcal{F}^Q_{ij} \dot\omega_j= - \text{Re}\left(C_i\right),
\end{align}
where $C_i = \frac{\partial \bra{\psi^{\omega}_t}}{\partial \omega_i}H\ket{\psi^{\omega}_t}$ and
$\mathcal{F}_{ij}^Q$ denotes the $(i,j)$-entry of the Fubini-Study metric from Def.~\ref{def:qfi}, i.e.,
\begin{align}
\mathcal{F}^Q_{ij} =
\text{Re}\left(\frac{\partial \bra{\psi^{\omega}_t}}{\partial \omega_i}\frac{\partial \ket{\psi^{\omega}_t}}{\partial \omega_j} - \frac{\partial \bra{\psi^{\omega}_t}}{\partial \omega_i}\proj{\psi^{\omega}_t}\frac{\partial \ket{\psi^{\omega}_t}}{\partial \omega_j}\right). 
\end{align}
As pointed out before, the Fubini-Study metric is proportional to the quantum Fisher Information matrix for pure quantum states.
Further details on the evaluation of the terms in Eq.~\eqref{eq:McLachlanVarQITE} are given in Sec.~\ref{sec:analytic_gradients}.

Solving Eq.~\eqref{eq:McLachlanVarQITE} for $\bm{\dot\omega}$ leads to an ODE
that describes the evolution of the ansatz parameters with respect to Eq.~\eqref{eq:MacLachlan_phase_agnostic_imag}, i.e.,
\begin{align}
\label{eq:standardODE_varqite}
        f_{\text{std}}\left(\bm{\omega}\right) = -\left(\mathcal{F}^Q\right)^{-1}\text{Re}\left( \bm{C}\right),
\end{align}
with $\bm{C} = \left(C_0, \ldots, C_k\right)$.

As mentioned before, this approximate time evolution implementation can lead to inexact state gradients such that the gradient error
\begin{align}
\label{eq:grad_error_varqite}
    \ket{e_t}:=\ket{\dot\psi^{\dot\nu}_t}  - \Big(E_t^{\omega}-H\Big)\ket{\psi^{\omega}_t},
\end{align}
may give $\|\ket{e_t}\|>0$.
Eq.~\eqref{eq:grad_error_varqite} motivates an alternative ODE for VarQITE which is given as the following optimization problem
\begin{align}
\label{eq:argminODE_varqite}
  f_{\text{min}}\left(\bm{\omega}\right)=\underset{\boldsymbol{\dot{\omega}}\in\mathbb{R}^{k}}{\text{argmin}} \, \|\ket{e_{t}}\|_2 ^2,
\end{align}
for
\begin{align}
\label{eq:et_imag}
    \|\ket{e_{t}}\|_2  ^2=   \Var(H)_{\psi^{\omega}_t} 
    &+ \braket{\dot\psi^{\omega}_t|\dot\psi^{\omega}_t} - \braket{\dot\psi^{\omega}_t|\psi^{\omega}_t}\braket{\psi^{\omega}_t|\dot\psi^{\omega}_t} \nonumber  \\
    & +2\mathrm{Re}\big(\!\bra{\dot\psi^{\omega}_t}H\ket{\psi^{\omega}_t}\!\big),
\end{align}
following from
$
    \textstyle{2\text{Re}\left(\braket{\psi^{\omega}_t|\dot\psi^{\omega}_t}\right) = \frac{\partial\braket{\psi^{\omega}_t|\psi^{\omega}_t}}{\partial t} = 0},
$ 
 and 
$
\textstyle{2\text{Im}\left(\braket{\dot\psi^{\omega}_t|\dot\psi^{\omega}_t}\right) = 0}.
$
We would like to point out that solving Eq.~\eqref{eq:standardODE_varqite} with a least square solver is analytically equivalent to solving Eq.~\eqref{eq:argminODE_varqite}. However, as the simulation results in Sec.~\ref{sec:varqite_examples} show, the numerical behavior of the latter is more stable.

Since the time-dependence of $\ket{\psi^{\omega}_t}$ is encoded in the parameters $\bm{\omega}$, we can rewrite Eq.~\eqref{eq:et_imag} as 
\begin{align}
\|\ket{e_{t}}\|_2  ^2 
= \Var(H)_{\psi^{\omega}_t} \!+\! \sum_{i,j}\dot \omega_i \dot \omega_j \mathcal{F}^Q_{ij} \!+\! 2\sum_i\dot\omega_i\text{Re}(C_i), \nonumber
\end{align}
where it is used that
\begin{es}
\text{Re}\big(\!\bra{\dot\psi^{\omega}_t}H\ket{\psi^{\omega}_t}\!\big) = \sum_i\dot\omega_i\text{Re}(C_i),
\end{es}
as well as that
\begin{align} \label{eq_Christa_reform1}
   \braket{\dot\psi^{\omega}_t|\dot\psi^{\omega}_t} - \braket{\dot\psi^{\omega}_t|\psi^{\omega}_t}\braket{\psi^{\omega}_t|\dot\psi^{\omega}_t} = \sum_{i,j}\dot \omega_i \dot \omega_j \mathcal{F}^Q_{ij}.
\end{align}
This facilitates the efficient evaluation of $\|\!\ket{e_t}\!\|_2^2 $.
Examples for VarQITE implementations are presented in Sec.~\ref{sec:varqite_examples}.

\subsection{Error Bound}
\label{sec:error_qite}
In this section, we derive an a posteriori error bound for variational quantum imaginary time evolution.
The obtained error bound allows us to efficiently, quantify the approximation error with respect to the exact quantum imaginary time evolution. The bound is phrased in terms of the Bures metric ~\cite{HayashiQuantumInfo06} and is, hence, agnostic to physically irrelevant global phase mismatches -- a feature that is crucial for practical applications. This represents an improvement compared to existing results for variational quantum real time evolution~\cite{MartinazzoErrorVarQuantumDyn20}.
Thus, it directly defines a lower bound on the fidelity between prepared and target state.
The power of the error bound is demonstrated on various numerical examples, in Sec.~\ref{sec:varqite_examples}, where we also investigate the performance of VarQITE and illustrate the application to concrete settings.

Let $\ket{\psi^{\omega}_t}$ be the state prepared by the variational algorithm at time $t$ and denote the ideal target state by $\ket{\psi^*_t}$. 
To formalize an error bound, we want to use a metric which describes the distance between two quantum states. A popular distance measure is the fidelity \cite{nielsen10} given by $|\braket{\psi^{\omega}_t|\psi^*_t}|^2$. Unlike the $\ell2$ norm, the fidelity is invariant to changes in the global phases.  Since the global phase is physically irrelevant, 
this is a desired property for a meaningful quantum state distance measure. Although the fidelity itself is not a metric, it may be used to define the Bures metric~\cite{HayashiQuantumInfo06}, given by
\begin{align}
\label{eq:Bures}
	&B\left(\proj{\psi^{\omega}_t}, \proj{\psi^*_t} \right)= \nonumber \\ 
	&\hspace{20mm} \sqrt{\braket{\psi^{\omega}_t|\psi^{\omega}_t}+\braket{\psi^*_t|\psi^*_t}-2|\braket{\psi^{\omega}_t|\psi^*_t}|},
 \end{align}
 where the states $\ket{\psi^{\omega}_t}$ and $\ket{\psi^*_t}$ are not necessarily normalized.
 If the states are normalized then the Bures metric simplifies to 
 \begin{align}
  \label{eq_buresPhase}
	B\left(\proj{\psi^{\omega}_t}, \proj{\psi^*_t} \right)&= \sqrt{2-2|\braket{\psi^{\omega}_t|\psi^*_t}|} \nonumber \\
	&=\!\min_{\phi \in [0,2\pi]}\!\norm{\ee^{i\phi}\ket{\psi^{\omega}_t} \!-\!\ket{\psi^*_t}}_2.
\end{align}
Unlike the $\ell2$ norm, the Bures metric is invariant to changes in the global phases.

Our goal is, now, to prove an error bound $\epsilon_t$ of the form
\begin{align}
	B\left(\proj{\psi^{\omega}_t}, \proj{\psi^*_t} \right) \leq \epsilon_t,
\end{align}
that can be evaluated efficiently in practice.
We would again like to point out the aforementioned relation of the Bures metric to the fidelity, which leads to
\begin{align}
    |\braket{\psi^{\omega}_t|\psi^*_t}|^2 \geq 1 - \frac{\epsilon_t^2}{2},
\end{align}
and implies that the relevant range of $\epsilon_t$ is $\left[0, \sqrt{2}\right]$ for normalized $\ket{\psi^{\omega}_t}$ and $\ket{\psi^*_t}$.
If the error bound estimate lies outside of this interval, then the fidelity and error can be clipped to $0$ and $\sqrt{2}$, respectively.

To prove an upper bound to the Bures metric between the target state $\ket{\psi^*_t}$ given by Eq.~\eqref{eq:wick_schroedinger} and $\ket{\psi^{\omega}_t}$ prepared with VarQITE, we need two preparatory lemmas. The first one quantifies the energy difference between $E_t^{\omega}$ and $E^*_t$.
\begin{lemma}[Energy difference] \label{lem_energyDiff}
For $t>0$, let $\ket{\psi^*_t}$ be the exact solution to Eq.~\eqref{eq:MacLachlan_phase_agnostic_imag} and $\ket{\psi^{\omega}_t}$ be the variationally prepared state.
Suppose that 
\begin{align}
    B\left(\proj{\psi^*_t}, \proj{\psi^{\omega}_t}\right) \leq \epsilon_{t},
\end{align} 
then, $|E_t^{\omega} - E^*_t|  \leq \zeta\left(\bm{\omega_t}, \epsilon_t\right)$ for
\begin{eqnarray}
\zeta\left(\bm{\omega_t}, \epsilon_t\right) &=& \epsilon_t^2 \norm{H}_{\infty} \nonumber  \\
&& + 2\!\!\!\!\!\!\max_{\alpha \in [0,\min\{\epsilon_t^2/2,1\}]}\Big\rvert\alpha E_t^{\omega} \!-\!\sqrt{1\!-\!(1\!-\!\alpha)^2} \sqrt{ \Var(H)_{\psi^{\omega}_t}}\Big\rvert. \label{eq_TS_EnergyDiff}
\end{eqnarray}
\end{lemma}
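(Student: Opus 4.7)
The plan is to set up an orthogonal decomposition of $\ket{\psi^*_t}$ relative to $\ket{\psi^{\omega}_t}$ and to expand the energy difference term by term. Using the global-phase invariance of the Bures metric (cf.~\eqref{eq_buresPhase}), I would first pick a phase for $\ket{\psi^*_t}$ such that $\braket{\psi^{\omega}_t|\psi^*_t}$ is real and non-negative; writing $\braket{\psi^{\omega}_t|\psi^*_t} = 1 - \alpha$, the Bures hypothesis $B \leq \epsilon_t$ immediately forces $\alpha \in [0, \min\{\epsilon_t^2/2, 1\}]$. Orthogonally decomposing then yields $\ket{\psi^*_t} = (1-\alpha)\ket{\psi^{\omega}_t} + \sqrt{\beta}\ket{\psi^{\perp}_t}$ with $\beta := 1 - (1-\alpha)^2 = 2\alpha - \alpha^2$ and $\ket{\psi^{\perp}_t}$ a unit vector orthogonal to $\ket{\psi^{\omega}_t}$.

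Substituting this decomposition into $E^*_t = \bra{\psi^*_t}H\ket{\psi^*_t}$, subtracting $E_t^{\omega}$, and using the identity $-\beta E_t^{\omega} = \alpha^2 E_t^{\omega} - 2\alpha E_t^{\omega}$, I obtain
\begin{align}
E^*_t - E_t^{\omega} &= \bigl[\alpha^2 E_t^{\omega} + \beta \bra{\psi^{\perp}_t}H\ket{\psi^{\perp}_t}\bigr] \\
&\phantom{=} + \bigl[-2\alpha E_t^{\omega} + 2(1-\alpha)\sqrt{\beta}\,\text{Re}\bra{\psi^{\omega}_t}H\ket{\psi^{\perp}_t}\bigr].
\end{align}
The first bracket is bounded via $|E_t^{\omega}|, |\bra{\psi^{\perp}_t}H\ket{\psi^{\perp}_t}| \leq \norm{H}_{\infty}$ together with $\alpha^2 + \beta = 2\alpha \leq \epsilon_t^2$, producing the $\epsilon_t^2 \norm{H}_{\infty}$ contribution. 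For the second bracket, I would use $\braket{\psi^{\omega}_t|\psi^{\perp}_t} = 0$ to rewrite $\bra{\psi^{\omega}_t}H\ket{\psi^{\perp}_t} = \bra{\psi^{\omega}_t}(H - E_t^{\omega})\ket{\psi^{\perp}_t}$, and Cauchy--Schwarz then gives $|\bra{\psi^{\omega}_t}H\ket{\psi^{\perp}_t}| \leq \|(H-E_t^{\omega})\ket{\psi^{\omega}_t}\|_2 = \sqrt{\Var(H)_{\psi^{\omega}_t}}$. Combining this with $0 \leq 1-\alpha \leq 1$ and taking the supremum over the admissible interval $\alpha \in [0, \min\{\epsilon_t^2/2, 1\}]$ (since the actual value of $\alpha$ is unknown) yields the remaining $2\max_{\alpha}|\alpha E_t^{\omega} - \sqrt{\beta}\sqrt{\Var(H)_{\psi^{\omega}_t}}|$ contribution to $\zeta$.

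The main obstacle I anticipate is the careful handling of the sign of the cross term $\text{Re}\bra{\psi^{\omega}_t}H\ket{\psi^{\perp}_t}$, which depends on the relative phase of $\ket{\psi^{\perp}_t}$ and the Ritz-type vector $\ket{\chi}$ defined by the Gram--Schmidt splitting $H\ket{\psi^{\omega}_t} = E_t^{\omega}\ket{\psi^{\omega}_t} + \sqrt{\Var(H)_{\psi^{\omega}_t}}\ket{\chi}$. A naive triangle inequality yields only the weaker bound $|\alpha E_t^{\omega}| + \sqrt{\beta}\sqrt{\Var(H)_{\psi^{\omega}_t}}$, so matching the tighter form of $\zeta$ requires exploiting the joint structure of the cross term and the maximization over $\alpha$, effectively parameterizing the worst-case state consistent with the Bures constraint through the single variable $\alpha$.
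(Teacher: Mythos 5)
Your argument follows essentially the same route as the paper's proof: an orthogonal decomposition of $\ket{\psi^*_t}$ relative to the (phase-aligned) prepared state, a second-order term controlled by $\norm{H}_{\infty}$, and a cross term controlled by Cauchy--Schwarz through $\sqrt{\Var(H)_{\psi^{\omega}_t}}$. The paper works with the error vector $\ket{\mathcal{E}_t}=\ket{\psi^*_t}-\ee^{i\phi}\ket{\psi^{\omega}_t}$ and bounds $|\bra{\mathcal{E}_t}H\ket{\mathcal{E}_t}|\leq\norm{\mathcal{E}_t}_2^2\norm{H}_{\infty}=2\alpha\norm{H}_{\infty}\leq\epsilon_t^2\norm{H}_{\infty}$, whereas you group $\alpha^2E_t^{\omega}+\beta\bra{\psi^{\perp}_t}H\ket{\psi^{\perp}_t}$ and use $\alpha^2+\beta=2\alpha$; these are equivalent bookkeeping choices and both give the first term of $\zeta$ correctly. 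Your identification of the constraint $\alpha\in[0,\min\{\epsilon_t^2/2,1\}]$ from the Bures hypothesis, and of $\sqrt{\Var(H)_{\psi^{\omega}_t}}$ as $\|(H-E_t^{\omega})\ket{\psi^{\omega}_t}\|_2$, also matches the paper.

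The obstacle you flag at the end is real, but you should be aware that the paper's own proof does not resolve it either. The paper restricts the worst case to the span of $\ket{\psi^{\omega}_t}$ and $H\ket{\psi^{\omega}_t}$, writing the optimizer as $(1-\alpha)\ket{\psi^{\omega}_t}\pm\sqrt{1-(1-\alpha)^2}\,(H-E_t^{\omega})\ket{\psi^{\omega}_t}/\sqrt{\Var(H)_{\psi^{\omega}_t}}$; substituting gives a cross term $\alpha E_t^{\omega}\mp\sqrt{\beta}\sqrt{\Var(H)_{\psi^{\omega}_t}}$, and the maximum over the two sign branches is exactly your ``naive'' bound $\alpha|E_t^{\omega}|+\sqrt{\beta}\sqrt{\Var(H)_{\psi^{\omega}_t}}$. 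The printed $\zeta$ in Eq.~\eqref{eq_TS_EnergyDiff} records only the minus branch, which is generically \emph{smaller} than the two-branch maximum; since nothing in the hypotheses fixes the relative phase of the orthogonal component of $\ket{\psi^*_t}$, the plus branch cannot be excluded. Consequently, what your argument rigorously establishes is the lemma with the second term of $\zeta$ replaced by $2\max_{\alpha}\big(\alpha|E_t^{\omega}|+\sqrt{1-(1-\alpha)^2}\sqrt{\Var(H)_{\psi^{\omega}_t}}\big)$ -- and that is also what the paper's optimization actually delivers once the $\pm$ is accounted for. So do not spend effort trying to reach the tighter signed form: either prove the (still efficient, still a posteriori) triangle-inequality version, or note explicitly that the signed form requires selecting one branch of the optimizer.
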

\begin{proof}
\label{app_proof_ED}

Let us denote by $\phi \in [0,2\pi]$ the optimal phase induced by $B\left(\ket{\psi^*_t}, \ket{\psi^{\omega}_t}\right)$ according to Eq.~\eqref{eq_buresPhase} and set $\ket{\psi^{\phi}_t} = \ee^{i \phi}\ket{\psi^{\omega}_t}$.
Next, let $\ket{\mathcal{E}_t}:=\ket{\psi^*_t}-\ket{\psi^\phi_t}$,  which by definition satisfies
\begin{align}
    \norm{\ket{\mathcal{E}_t}}_2 = 
    B\left(\ket{\psi^*_t}, \ket{\psi^{\omega}_t}\right) \leq \epsilon_t.
\end{align}
Then,
\begin{align}
|E_t^{\omega} - E^*_t|
&= |\langle \psi^{\omega}_t | H | \psi^{\omega}_t \rangle - \langle \psi^*_t | H | \psi^*_t \rangle| \nonumber \\
&= |\langle \psi^\phi_t | H | \psi^\phi_t \rangle - \langle \psi^*_t | H | \psi^*_t \rangle| \nonumber \\
&= | \langle \psi^\phi_t | H | \mathcal{E}_t \rangle + \langle \mathcal{E}_t | H | \psi^\phi_t \rangle + \langle \mathcal{E}_t | H | \mathcal{E}_t \rangle |\nonumber \\
&\leq |\langle \mathcal{E}_t | H | \mathcal{E}_t \rangle | + 2 |\langle \mathcal{E}_t | H | \psi^\phi_t \rangle|.
\end{align}
From Cauchy-Schwarz, we find
\begin{align}
  |\langle\mathcal{E}_t | H | \mathcal{E}_t \rangle |
  \leq \norm{\mathcal{E}_t}_2 \norm{H \mathcal{E}_t }_2 
  \leq \epsilon_t^2 \norm{H}_{\infty}.
\end{align}
Furthermore,
\begin{align}
 |\langle \mathcal{E}_t | H | \psi^\phi_t \rangle|
&= |\langle \psi^{\phi}_t | H | \psi^\phi_t \rangle-\langle \psi^*_t | H | \psi^\phi_t \rangle|\nonumber \\
&\leq \left \lbrace
\begin{array}{rl}
\max \limits_{\ket{\psi^*_t}}     & |\langle \psi^\phi_t | H | \psi^\phi_t \rangle-\langle \psi^*_t | H | \psi^\phi_t \rangle| \\
\textnormal{s.t.}     & \langle \psi^*_t|\psi^*_t\rangle =1  \\
                    & |\langle \psi^\phi_t|\psi^*_t\rangle| \geq 1-\epsilon_t^2/2 .
\end{array} \right.
\end{align}
Here, the optimizer $\ket{\psi_t^*}$ can absorb the phase $\ee^{i \phi}$, and thus, the optimization problem is equivalent to
\begin{align}
\left \lbrace
\begin{array}{rl}
    \max \limits_{\ket{\psi^*_t}}     & |\langle \psi^{\omega}_t | H | \psi^{\omega}_t \rangle-\langle \psi^*_t | H | \psi^{\omega}_t \rangle| \\
\textnormal{s.t.}     & \langle \psi^*_t|\psi^*_t\rangle =1  \\
                    & |\langle \psi^{\omega}_t|\psi^*_t\rangle| \geq 1-\epsilon_t^2/2,
\end{array}                    \right .
\end{align}
with the optimizer having the form
\begin{align}
    \ket{\psi^*_t} \!=\! (1\!-\!\alpha) \ket{\psi^{\omega}_t} \!\pm\! \sqrt{1\!-\!(1\!-\!\alpha)^2} \frac{(H-E_t^{\omega})\ket{\psi^{\omega}_t}}{\sqrt{\langle \psi^{\omega}_t|(H-E_t^{\omega})^2\ket{\psi^{\omega}_t}}}
\end{align}
for $\alpha \in [0,1]$.
Inserting this above and noting that $|\langle \psi^{\omega}_t|\psi^*_t\rangle| =1-\alpha \geq 1-\epsilon_t^2/2$ proves Eq.~\eqref{eq_TS_EnergyDiff}.
\end{proof}
In practice, $\norm{H}_{\infty}$ may be replaced with an efficient approximation thereof.
The second lemma bounds the distance between the states resulting when applying the approximate, non-unitary dynamics to both the approximate and the exact state.
To shorten the notation, we use  $\langle H^2 \rangle:=\bra{\psi^{\omega}_t}H^2\ket{\psi^{\omega}_t}$ in the following.

\begin{lemma} \label{lem_secondTerm}
For sufficiently small $\delta_t$ such that terms of order $\mathscr{O}(\delta_t^2)$ are negligible, we can upper bound 
\begin{align} \label{eq_optProblem_EB}
 \left \lbrace \begin{array}{rl}
    \max \limits_{\ket{\psi^*_t}}  & B\big((\mathds{1}\!+\!\delta_t (E_t^{\omega} \!-\! H))\ket{\psi^{\omega}_t},(\mathds{1}\!+\!\delta_t (E_t^{\omega} \!-\! H))\ket{\psi^*_t} \big)  \\
     \textnormal{s.t.} &  \braket{\psi^*_t|\psi^*_t}=1 \,\,\,\, \& \,\,\,\, B(\ket{\psi^{\omega}_t},\ket{\psi^*_t}) \leq \varepsilon_t
 \end{array} \right.
\end{align}
with 
\begin{align}
    \sqrt{2 + 2 \delta_t \zeta\left(\bm{\omega_t}, \epsilon_t\right) - 2 \chi\left(\bm{\omega_t}, \epsilon_t\right)},
\end{align}
where $\zeta\left(\bm{\omega_t}, \epsilon_t\right)$ is defined in~\eqref{eq_TS_EnergyDiff} and
\begin{align} \label{eq_optProblem_EB_simple}
 \chi\left(\bm{\omega_t}, \epsilon_t\right) =\left \lbrace \begin{array}{rl}
    &\min \limits_{\alpha \in [-1,1]} \frac{1}{c_\alpha}\Big|(1+2\delta_t E_t^{\omega})(1-|\alpha|+\alpha E_t^{\omega})   \\
    & \hspace{17mm}- 2 \delta_t\big((1-|\alpha|)E_t^{\omega} + \alpha \langle H^2 \rangle \big)\Big|\\
    & \textnormal{s.t.} \: |1-|\alpha| + \alpha E_t^{\omega}| \geq c_{\alpha} (1-\frac{\varepsilon_t^2}{2}),
 \end{array} \right . 
\end{align}
where 
\begin{align}
    c_{\alpha}=\sqrt{(1-|\alpha|)^2 +2\alpha(1-|\alpha|)E_t^{\omega} + \alpha^2 \langle H^2 \rangle }\, .
\end{align}
\end{lemma}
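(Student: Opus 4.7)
The plan is to expand the (squared) Bures metric between the two evolved states to first order in $\delta_t$, use Lemma~\ref{lem_energyDiff} to control the piece involving the energy of $\ket{\psi^*_t}$, and parametrize the unknown state $\ket{\psi^*_t}$ so that the remaining overlap-minimization reduces to a one-dimensional problem over $\alpha$.

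Setting $A = \mathds{1} + \delta_t (E_t^{\omega} - H)$, which is Hermitian, and using $\langle \psi^{\omega}_t | (H - E_t^{\omega}) | \psi^{\omega}_t \rangle = 0$ together with $\braket{\psi^*_t|\psi^*_t} = 1$, I would Taylor-expand
\begin{align*}
\norm{A\ket{\psi^{\omega}_t}}_2^2 &= 1 + \mathscr{O}(\delta_t^2), \\
\norm{A\ket{\psi^*_t}}_2^2 &= 1 + 2\delta_t(E_t^{\omega} - E^*_t) + \mathscr{O}(\delta_t^2), \\
\bra{\psi^{\omega}_t} A^2 \ket{\psi^*_t} &= (1 + 2\delta_t E_t^{\omega}) \braket{\psi^{\omega}_t|\psi^*_t} - 2\delta_t \bra{\psi^{\omega}_t} H \ket{\psi^*_t} + \mathscr{O}(\delta_t^2),
\end{align*}
so that the squared Bures metric equals $2 + 2\delta_t(E_t^{\omega} - E^*_t) - 2|\bra{\psi^{\omega}_t} A^2 \ket{\psi^*_t}| + \mathscr{O}(\delta_t^2)$. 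Invoking Lemma~\ref{lem_energyDiff} bounds $E_t^{\omega} - E^*_t \leq \zeta(\bm{\omega_t}, \epsilon_t)$, so the maximization in~\eqref{eq_optProblem_EB} is controlled by minimizing $|\bra{\psi^{\omega}_t} A^2 \ket{\psi^*_t}|$ over all $\ket{\psi^*_t}$ satisfying $|\braket{\psi^{\omega}_t|\psi^*_t}| \geq 1 - \varepsilon_t^2/2$.

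The key observation is that $\bra{\psi^{\omega}_t} A^2 \ket{\psi^*_t}$ depends on $\ket{\psi^*_t}$ only through the two scalars $\braket{\psi^{\omega}_t|\psi^*_t}$ and $\bra{\psi^{\omega}_t} H \ket{\psi^*_t}$; hence, without loss of generality, the optimizer can be restricted to the two-dimensional subspace spanned by $\ket{\psi^{\omega}_t}$ and $H\ket{\psi^{\omega}_t}$. This motivates the ansatz
\begin{equation*}
\ket{\psi^*_t} = \frac{(1-|\alpha|)\ket{\psi^{\omega}_t} + \alpha H \ket{\psi^{\omega}_t}}{c_\alpha}, \qquad \alpha \in [-1,1],
\end{equation*}
whose normalization factor is exactly the $c_\alpha$ defined in the statement. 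Substituting into the constraint yields $|1-|\alpha|+\alpha E_t^{\omega}| \geq c_\alpha (1 - \varepsilon_t^2/2)$, and substituting into $|\bra{\psi^{\omega}_t} A^2 \ket{\psi^*_t}|$ yields precisely the expression minimized in $\chi(\bm{\omega_t},\epsilon_t)$. Combining the three ingredients and taking a square root completes the bound.

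The main obstacle is the reduction to the two-dimensional subspace: strictly, one has to verify that any component of $\ket{\psi^*_t}$ orthogonal to $\mathrm{span}\{\ket{\psi^{\omega}_t}, H\ket{\psi^{\omega}_t}\}$ can only use up normalization budget without improving the objective, so the restriction is WLOG. A secondary subtlety is that the parameter $\alpha$ is taken real while, a priori, the relative weight could be complex; this is justified because $E_t^{\omega}$ and $\bra{\psi^{\omega}_t} H^2 \ket{\psi^{\omega}_t}$ are real and any residual phase of $\alpha$ can be absorbed into the free global phase that the Bures metric already quotients out, as shown in Eq.~\eqref{eq_buresPhase}.
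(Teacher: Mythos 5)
Your proposal follows essentially the same route as the paper's proof: Taylor-expand the two norms and the cross term $\bra{\psi^{\omega}_t}(\mathds{1}+\delta_t(E_t^{\omega}-H))^2\ket{\psi^*_t}$ to first order in $\delta_t$, invoke Lemma~\ref{lem_energyDiff} for the $\norm{(\mathds{1}+\delta_t(E_t^{\omega}-H))\ket{\psi^*_t}}_2^2$ term, and reduce the remaining overlap minimization to the one-parameter ansatz $\ket{\psi^*_t}=\bigl((1-|\alpha|)\ket{\psi^{\omega}_t}+\alpha H\ket{\psi^{\omega}_t}\bigr)/c_\alpha$, which reproduces $\chi$ exactly. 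The only difference is that you explicitly flag and sketch a justification for the form of the optimizer (the reduction to the span of $\ket{\psi^{\omega}_t}$ and $H\ket{\psi^{\omega}_t}$ with real $\alpha$), a step the paper simply asserts.
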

\begin{proof}
\label{app_proof_second_term}
We start by noting that
\begin{align}
    \norm{(\mathds{1}+\delta_t (E_t^{\omega}-H))\ket{\psi^{\omega}_t}}^2_2 = 1 + \mathscr{O}(\delta_t^2),
\end{align}
and
\begin{align}
    \norm{(\mathds{1}\!+\!\delta_t (E_t^{\omega}\!-\!H))\ket{\psi^*_t}}^2_2 
    &= 1 + 2 \delta_t(E_t^{\omega} - E^*_t) + \mathscr{O}(\delta_t^2)\nonumber \\
    & \leq\! 1 \!+\! 2\delta_t \zeta\!\left(\omega_t, \epsilon_t\right) \!+\! \mathscr{O}(\delta_t^2),
\end{align}
where the final step uses 
Lemma \ref{lem_energyDiff}.
Furthermore, by neglecting terms of order $\mathscr{O}(\delta_t^2)$, we have
\begin{align}
 &\big|\bra{\psi^{\omega}_t}(\mathds{1}\!+\!\delta_t (E_t^{\omega} \!-\! H))(\mathds{1}\!+\!\delta_t (E_t^{\omega} \!-\! H))\ket{\psi^*_t} \big|^2 \nonumber \\
 &\hspace{5mm}=|(1+2\delta_t E_t^{\omega})\braket{\psi^{\omega}_t|\psi^*_t} - 2\delta_t \bra{\psi^{\omega}_t} H \ket{\psi^*_t}|^2\, .
\end{align}

By definition of the Bures metric and again neglecting terms of order $\mathscr{O}(\delta_t^2)$, we find
\begin{align}
    \eqref{eq_optProblem_EB} \leq \sqrt{2 + 2 \delta_t \zeta\left(\bm{\omega_t}, \epsilon_t\right) - 2 \xi}
\end{align}
for 
\begin{align}  \label{eq_problem_simplified}
 \xi = \left \lbrace \begin{array}{rl}
    \min \limits_{\ket{\psi^*_t}}  & |(1+2\delta_t E_t^{\omega})\braket{\psi^{\omega}_t|\psi^*_t} - 2\delta_t \bra{\psi^{\omega}_t} H \ket{\psi^*_t}|  \\
     \textnormal{s.t.} &  \braket{\psi^*_t|\psi^*_t}=1 \,\,\,\, \& \,\,\,\, |\braket{\psi^{\omega}_t|\psi^*_t}| \geq 1-\varepsilon_t^2/2.
 \end{array} \right .
\end{align}
The optimizer for~\eqref{eq_problem_simplified} is of the form
\begin{align} \label{eq_optimizer}
    \ket{\psi_t^*} = \frac{(1-|\alpha|)\ket{\psi^{\omega}_t} + \alpha H\ket{\psi^{\omega}_t}}{c_{\alpha}}, \quad  \alpha \in [-1,1] .
\end{align}
Plugging~\eqref{eq_optimizer} into~\eqref{eq_problem_simplified} shows that $\xi = \chi\left(\bm{\omega_t}, \epsilon_t\right)$ from Eq.~\eqref{eq_optProblem_EB_simple} because
\begin{align}
|\braket{\psi^{\omega}_t|\psi^*_t}|=\frac{1}{c_{\alpha}}(1-|\alpha|+\alpha E_t^{\omega})
\end{align}
and
\begin{align}
|\braket{\psi^{\omega}_t|H|\psi^*_t}|=\frac{1}{c_{\alpha}}\big((1-|\alpha|)E_t^{\omega} + \alpha \langle H^2 \rangle \big).
\end{align}
This then proves the assertion.
\end{proof}
The optimization problems given in Eqs.~\eqref{eq_TS_EnergyDiff} and \eqref{eq_optProblem_EB_simple} can be solved efficiently in practice as they correspond to a one-dimensional search over a closed interval.
Furthermore, we would like to point out that $\chi\left(\bm{\omega_t}, \epsilon_t\right)$ bounds the overlap between the two arguments of the objective function in Eq.~\eqref{eq_optProblem_EB}. 
We are finally ready to state the error bound for VarQITE. 
\begin{theorem} \label{thm_VQITE}
For $T>0$ and $\varepsilon_0=0$, let $\ket{\psi^*_T}$ be the exact solution to Eq.~\eqref{eq:MacLachlan_phase_agnostic_imag} and $\ket{\psi^{\omega}_T}$ be the simulation implemented using VarQITE. Then
\begin{align}
 B\left(\proj{\psi^*_T}, \proj{\psi^{\omega}_T}\right) \leq \epsilon_{T},
\end{align}
for $\epsilon_{T} = \int_{0}^{T}\dot\varepsilon_t \di t$, where
\begin{align}
\label{eq:qite_error}
\varepsilon_{t+\delta_t} &= \delta_t\norm{\ket{e_{t}}}_2 + \delta_t \zeta\left(\bm{\omega_t}, \epsilon_t\right) \nonumber \\
&\hspace{10mm}+\sqrt{2 + 2 \delta_t \zeta\left(\bm{\omega_t}, \epsilon_t\right) - 2\chi\left(\bm{\omega_t}, \epsilon_t\right)},
\end{align}
with $\zeta\left(\bm{\omega_t}, \epsilon_t\right)$ and $\chi\left(\bm{\omega_t}, \epsilon_t\right)$ as given in Eq.~\eqref{eq_TS_EnergyDiff} and Eq.~\eqref{eq_optProblem_EB_simple}, respectively,
allows to define
\begin{align}
\label{eq:qite_error_grad}
    \dot\varepsilon_t &= \lim_{\delta_t\rightarrow 0} \frac{\varepsilon_{t+\delta_t} - \varepsilon_{t}}{\delta_t}\nonumber \\
    &=\norm{\ket{e_{t}}}_2 + \zeta\left(\bm{\omega_t}, \epsilon_t\right) \nonumber \\
    &\hspace{5mm}+\lim_{\delta_t\rightarrow 0} \frac{\sqrt{2 + 2 \delta_t \zeta\left(\bm{\omega_t}, \epsilon_t\right) - 2\chi\left(\bm{\omega_t}, \epsilon_t\right)} -  \varepsilon_{t}}{\delta_t}.
\end{align}
\end{theorem}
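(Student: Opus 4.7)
The plan is to obtain the bound via a differential (Grönwall-type) argument: bound the infinitesimal growth of the Bures metric in a single Euler step of length $\delta_t$, take the limit to extract $\dot\varepsilon_t$, and then integrate from $0$ to $T$. The initial condition $\varepsilon_0=0$ is consistent with the states agreeing at $t=0$, and all approximations are controlled to first order in $\delta_t$ so that neglected $\mathscr{O}(\delta_t^2)$ contributions vanish in the limit.

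The key manipulation is a three-term triangle inequality for the Bures metric, designed so each term is handled by exactly one of the two preparatory lemmas or by the definition of the gradient error $\ket{e_t}$. Specifically, writing the exact evolution infinitesimally as $\ket{\psi^*_{t+\delta_t}} = (\mathds{1}+\delta_t(E^*_t-H))\ket{\psi^*_t} + \mathscr{O}(\delta_t^2)$ and the variational one, using Eq.~\eqref{eq_forDavid} together with the definition of $\ket{e_t}$ in Eq.~\eqref{eq:grad_error_varqite}, as $\ket{\psi^{\omega}_{t+\delta_t}} = (\mathds{1}+\delta_t(E_t^{\omega}-H))\ket{\psi^{\omega}_t} + \delta_t\ket{e_t} + \mathscr{O}(\delta_t^2)$, I would insert two auxiliary states and apply the triangle inequality:
\begin{align}
B\!\left(\ket{\psi^*_{t+\delta_t}},\ket{\psi^{\omega}_{t+\delta_t}}\right)
&\leq B\!\left(\ket{\psi^*_{t+\delta_t}},(\mathds{1}+\delta_t(E^{\omega}_t-H))\ket{\psi^*_t}\right) \nonumber \\
&\hspace{5mm}+ B\!\left((\mathds{1}+\delta_t(E^{\omega}_t-H))\ket{\psi^*_t},(\mathds{1}+\delta_t(E^{\omega}_t-H))\ket{\psi^{\omega}_t}\right) \nonumber \\
&\hspace{5mm}+ B\!\left((\mathds{1}+\delta_t(E^{\omega}_t-H))\ket{\psi^{\omega}_t},\ket{\psi^{\omega}_{t+\delta_t}}\right).
\end{align}
The third term equals $\delta_t\,\|\ket{e_t}\|_2 + \mathscr{O}(\delta_t^2)$ directly from the definition of $\ket{e_t}$, since Bures reduces to the phase-optimized $\ell_2$ norm for pure states and here no additional phase is needed. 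The first term measures the mismatch between propagating $\ket{\psi^*_t}$ with the \emph{exact} energy $E^*_t$ and with $E^{\omega}_t$; the difference vector is $\delta_t(E^{\omega}_t-E^*_t)\ket{\psi^*_t}$, whose norm is $\delta_t|E^{\omega}_t-E^*_t| \leq \delta_t\,\zeta(\bm{\omega_t},\epsilon_t)$ by Lemma~\ref{lem_energyDiff} applied with the inductive hypothesis $B(\ket{\psi^*_t},\ket{\psi^{\omega}_t})\leq\epsilon_t$. The middle term is exactly the objective in Eq.~\eqref{eq_optProblem_EB}, so Lemma~\ref{lem_secondTerm} bounds it by $\sqrt{2+2\delta_t\zeta(\bm{\omega_t},\epsilon_t)-2\chi(\bm{\omega_t},\epsilon_t)}$ (with the maximum over $\ket{\psi^*_t}$ subject to $B(\ket{\psi^{\omega}_t},\ket{\psi^*_t})\leq\epsilon_t$ providing exactly the worst-case obstruction we need under the inductive hypothesis).

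Summing the three pieces reproduces Eq.~\eqref{eq:qite_error}. Taking $\delta_t\to 0$ then yields the infinitesimal growth rate $\dot\varepsilon_t$ of Eq.~\eqref{eq:qite_error_grad}, and a standard integration/comparison argument --- formally, defining $\epsilon_t := \int_0^t \dot\varepsilon_s\,ds$ and checking that the function $t\mapsto B(\ket{\psi^*_t},\ket{\psi^{\omega}_t}) - \epsilon_t$ is non-increasing, starting from $0$ at $t=0$ --- gives $B(\ket{\psi^*_T},\ket{\psi^{\omega}_T})\leq \epsilon_T$. The main obstacle is not the triangle-inequality split itself but the careful bookkeeping of global phases so that the Bures metric (rather than $\ell_2$) appears consistently: the phase freedom absorbed by Lemma~\ref{lem_energyDiff} must match the one absorbed in Lemma~\ref{lem_secondTerm}, and one has to verify that the $\mathscr{O}(\delta_t^2)$ normalization drifts introduced by the non-unitary Euler step do not disturb the Bures triangle inequality in the limit. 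A secondary subtlety is ensuring the induction closes: $\epsilon_t$ as defined must be monotone in $t$ and small enough for Lemma~\ref{lem_secondTerm}'s constraint set to be non-empty, which follows from $\dot\varepsilon_t\geq 0$ (all three contributions are non-negative) and from starting at $\varepsilon_0=0$.
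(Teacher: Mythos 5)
Your proposal is correct and follows essentially the same route as the paper: the same three-term Bures triangle inequality through the two intermediate states $(\mathds{1}+\delta_t(E_t^{\omega}-H))\ket{\psi^*_t}$ and $(\mathds{1}+\delta_t(E_t^{\omega}-H))\ket{\psi^{\omega}_t}$, with the three pieces handled by the definition of $\ket{e_t}$, Lemma~\ref{lem_secondTerm}, and Lemma~\ref{lem_energyDiff}, respectively, followed by the limit $\delta_t\rightarrow 0$ and integration. The subtleties you flag (phase bookkeeping, $\mathscr{O}(\delta_t^2)$ normalization drift, closing the induction) are exactly the points the paper's proof also has to handle.
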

\begin{proof}\label{app:proofThm2}
For $\delta_t > 0$, Eq.~\eqref{eq:MacLachlan_phase_agnostic_imag} gives
    \begin{align} 
  \ket{\psi^{\omega}_{t+\delta_t}  }&= \ket{\psi^{\omega}_t  } + \delta_t\ket{\dot\psi^{\dot\nu}_t} \nonumber   \\
  &=\ket{\psi^{\omega}_t  } + \delta_t\left(\ket{\dot\psi^{\omega}_t} + i\text{Im}(\braket{\dot\psi^{\omega}_t|\psi^{\omega}_t})\ket{\psi^{\omega}_t}\right). 
    \end{align}
Combining this with the triangle inequality gives 
\pagebreak
\begin{align}
    &B\left(\ket{\psi^{\omega}_{t+\delta_t}}, \ket{\psi^*_{t+\delta_t}}\right) \nonumber \\ 
      &\hspace{0mm}\leq B\Big(\!\ket{\psi^{\omega}_t  } \!+\! \delta_t\ket{\dot\psi^{\dot\nu}_{t}}, \big(\mathds{1}\!+\delta_t\left(E_t^{\omega}-H\right)\big)\ket{\psi^{\omega}_t}\! \Big)\nonumber  \\
      &\hspace{2mm}+B\Big(\!\big(\mathds{1}\!+\delta_t\left(E_t^{\omega}-H\right)\big)\ket{\psi^{\omega}_t}, \big(\mathds{1}\!+\delta_t\left(E_t^{\omega}-H\right)\big)\ket{\psi^*_{t}}\!\Big) \nonumber \\
      &\hspace{2mm}+B\Big(\big(\mathds{1}\!+\delta_t\left(E_t^{\omega}-H\right)\big)\ket{\psi^*_{t}}, \ket{\psi^*_{t+\delta_t}}\Big). \label{eq_triangle_start_qite}
\end{align}
We next bound all three terms separately.
Using Eq.~\eqref{eq_buresPhase} and neglecting terms of order $\mathscr{O}(\delta_t^2)$ gives
\begin{align}
     &B\Big(\ket{\psi^{\omega}_t  } \!+\! \delta_t\ket{\dot\psi^{\dot\nu}_{t}}, \big(\mathds{1}\!+\delta_t\left(E_t^{\omega}-H\right)\big)\ket{\psi^{\omega}_t} \Big) \nonumber \\
     &=\min_{\phi \in [0,2\pi]}\norm{\ee^{i\phi}(\ket{\psi^{\omega}_t  } \!+\! \delta_t\ket{\dot\psi^{\dot\nu}_{t}}) - \big(\mathds{1}\!+\delta_t\left(E_t^{\omega}-H\right)\big)\ket{\psi^{\omega}_t}}_2 \nonumber\\
     &\leq\norm{\ket{\psi^{\omega}_t} + \delta_t\ket{\dot\psi^{\dot\nu}_{t}}  - \left(\mathds{1}+\delta_t\left(E_t^{\omega}-H\right)\right)\ket{\psi^{\omega}_t}}_2  \nonumber\\
     &= \delta_t \norm{\ket{\dot\psi^{\dot\nu}_{t}}  - \left(E_t^{\omega}-H\right)\ket{\psi^{\omega}_t}}_2  \nonumber\\
     &=\delta_t \norm{\ket{\dot\psi^{\omega}_t}  - \left(E_t^{\omega}-H-i\text{Im}\left(\braket{\dot\psi^{\omega}_t|\psi^{\omega}_t}\right)\right)\ket{\psi^{\omega}_t}
    }_2 \nonumber \\
     &= \delta_t\norm{\ket{e_t}}_2 , \label{eq_triangle_part1_qite}
\end{align}
where the penultimate step uses Eq.~\eqref{eq_forDavid}.
The second term in Eq.~\eqref{eq_triangle_start_qite} is bounded from above by
Lemma \ref{lem_secondTerm}.
It, thus, remains to bound the third term in Eq.~\eqref{eq_triangle_start_qite}.
With the help of Eq.~\eqref{eq_buresPhase} we find
\begin{align}
     &B\Big(\big(\mathds{1}\!+\delta_t\left(E_t^{\omega}-H\right)\big)\ket{\psi^*_{t}}, \ket{\psi^*_{t+\delta_t}}\Big) \nonumber \\
     &= \min_{\phi \in [0,2\pi]}\norm{\ee^{i\phi}\big(\big(\mathds{1}\!+\delta_t\left(E_t^{\omega}-H\right)\big)\ket{\psi^*_{t}} \big)-\ket{\psi^*_{t+\delta_t}}}_2 \nonumber \\
     &\leq \|\big(\mathds{1}\!+\delta_t\left(E_t^{\omega}-H\right)\big)\ket{\psi^*_{t}}-\big(\mathds{1}\!+\delta_t\left(E^*_t-H\right)\big)\ket{\psi^*_{t}}\|_2\nonumber \\
     &=\delta_t\|\big(E_t^{\omega} - E^*_t \big) \ket{\psi^*_{t}}\|_2 \nonumber \\
     &\leq \delta_t |E_t^{\omega} - E^*_t |\|\ket{\psi^*_{t}}\|_2  \nonumber \\
     &=\delta_t |E_t^{\omega} - E^*_t | \nonumber \\
     &\leq \delta_t \zeta,
\end{align}
where $\zeta\left(\bm{\omega_t}, \epsilon_t\right)$ is defined in Eq.~\eqref{eq_TS_EnergyDiff}.
\end{proof}
This error bound is not only independent of a potential physically irrelevant global phase mismatch between prepared and target state but also compatible with  implementations which use numerical techniques to solve the SLE given in Eq.~\eqref{eq:McLachlanVarQITE}.
It should be noted that the rather complex form of the VarQITE error bound is due to the non-unitary nature of imaginary time evolution.
Lastly, we would like to point out that a similar -- in fact even easier -- bound can be derived for variational quantum real time evolution. The details are presented in Appendix \ref{app:varqrte}.
\subsection{Ground State Preparation}
\label{app:ground_state_runtime}

Searching for the ground state $\ket{\psi^{(0)}}$ of a given Hamiltonian $H$ is one possible application of VarQITE.
Suppose the initial state $\ket{\psi_0}$ has a non-zero overlap with the ground state of $H$. Then, the propagation under imaginary time evolution for $t\rightarrow \infty$ will lead to an exponential damping of all components in $\ket{\psi_0}$ which do not correspond to $\ket{\psi^{(0)}}$.
If we can approximate the quantum imaginary time evolution process sufficiently well, we can exploit VarQITE to search for the ground state.
In fact, it is shown in \cite{VarSITEMcArdle19} that also with the variational implementation the average energy monotonically decreases for sufficiently small time steps if $ \mathcal{F}^Q$ is invertible.
Notably, a ground state preparation with VarQITE corresponds to a special case of QNG optimization, see Sec.~\ref{sec:qng}, where
\begin{align}
   L\left(\ket{\psi\left(\bm{\omega}\right)}\right) = \frac{1}{2} \braket{\psi(\bm{\omega})|\hat{O}|\psi(\bm{\omega})}.
\end{align}

In practice, we cannot propagate the system for $t \rightarrow \infty$. Thus, we want to understand what finite evolution time is sufficient to reach the ground state. 
Suppose the initial state for $t=0$ has a non-zero overlap with the ground state and that we evolve the state according to QITE.
Then, the required evolution time for the ground state to dominate scales as $t = \mathcal{O}\left(n/G\right)$, where $G$ is the spectral gap of $H$ and $n$ denotes the number of qubits, i.e., linearly in the number of qubits, which is formally proven in the following proposition. 


\begin{proposition}
\label{proposition:ground_stateruntime} \footnote{This proposition is reproduced, with permission, from \cite{NeuweilerCOVarQITE}.}
Suppose a Hamiltonian $H$ on $n$ qubits with a non-degenerate ground state such that the first and second eigenvalues are given by $E_0$ and $E_1$. Furthermore, denote the respective spectral gap by $G = E_1 - E_0 > 0$, and the ground state by $\ket{\psi^{(0)}}$, respectively. 
Now, assume a state $\ket{\psi_t}$ where the time dependence is described by quantum imaginary time evolution and where the initial state $\ket{\psi_0}$ has an overlap with the ground state that may be at least exponentially small, i.e., $\rvert\braket{\psi^{(0)}|\psi_0}\rvert^2 \geq c 2^{-n}$, for some constant $c > 0$.
Then, the evolution time necessary for the probability of sampling the ground state, i.e., $\rvert\braket{\psi^{(0)}|\psi_t}\rvert^2$, to reach a target value $p^{(0)}$, can be upper bounded by $\frac{\log{(2)}n - \log\left(1-p^{(0)}\right) - \log(c) + \log\left(p^{(0)}\right)}{2G}$.
\end{proposition}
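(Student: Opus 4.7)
The plan is to diagonalize the problem by expanding the initial state in the eigenbasis of $H$ and to exploit the fact that QITE damps each eigencomponent by the factor $e^{-E_j t}$, so that the ground-state contribution dominates at a rate controlled by the spectral gap $G$.

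First, I would write $\ket{\psi_0} = \sum_{j} a_j \ket{\psi^{(j)}}$ for an orthonormal eigenbasis $\{\ket{\psi^{(j)}}\}$ of $H$ with eigenvalues $E_0 < E_1 \leq E_2 \leq \ldots$. Substituting into the QITE formula given in Sec.~\ref{sec:qte} yields
\begin{align}
\ket{\psi_t} \;=\; \frac{\sum_j a_j\, e^{-E_j t}\,\ket{\psi^{(j)}}}{\sqrt{\sum_j |a_j|^2\, e^{-2 E_j t}}},
\end{align}
so that, with the shorthand $p_0 := |a_0|^2 = |\braket{\psi^{(0)}|\psi_0}|^2$,
\begin{align}
|\braket{\psi^{(0)}|\psi_t}|^2 \;=\; \frac{p_0\, e^{-2 E_0 t}}{p_0\, e^{-2 E_0 t} + \sum_{j\geq 1} |a_j|^2\, e^{-2 E_j t}}.
\end{align}

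Second, I would use the spectral gap: every $j \geq 1$ satisfies $E_j \geq E_0 + G$, hence $\sum_{j\geq 1}|a_j|^2 e^{-2E_j t} \leq (1-p_0)\, e^{-2(E_0 + G) t}$. After cancelling the common factor $e^{-2 E_0 t}$ from the numerator and denominator, this gives the clean lower bound
\begin{align}
|\braket{\psi^{(0)}|\psi_t}|^2 \;\geq\; \frac{p_0}{p_0 + (1-p_0)\, e^{-2 G t}}.
\end{align}

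Third, I would require the right-hand side to be at least $p^{(0)}$ and solve for $t$. Rearranging gives $e^{-2Gt} \leq \tfrac{p_0(1-p^{(0)})}{p^{(0)}(1-p_0)}$, so it suffices to take
\begin{align}
t \;\geq\; \frac{1}{2G}\log\!\left(\frac{p^{(0)}(1-p_0)}{p_0(1-p^{(0)})}\right).
\end{align}
Since this threshold is monotonically decreasing in $p_0$, inserting the worst-case assumption $p_0 \geq c\, 2^{-n}$ (and the trivial bound $1 - p_0 \leq 1$ in the numerator) upper-bounds the required time by
\begin{align}
\frac{\log(2)\, n - \log(1-p^{(0)}) - \log(c) + \log(p^{(0)})}{2G},
\end{align}
which is exactly the stated bound.

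The argument is essentially routine, so there is no serious obstacle; the only delicate point is verifying that the intermediate bound is monotone in $p_0$ (so that replacing $p_0$ by the lower bound $c\, 2^{-n}$ does yield an upper bound on the sufficient $t$) and checking that the assumption $c\, 2^{-n} \leq p_0 < p^{(0)} < 1$ keeps the logarithm well-defined and positive, which is the regime of interest.
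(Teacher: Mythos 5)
Your proposal is correct and follows essentially the same route as the paper's proof: expand $\ket{\psi_0}$ in the eigenbasis, write the normalized QITE state, bound the excited-state contribution by $e^{-2Gt}$ times its total weight, and invert the resulting inequality for $t$ using $p_0 \geq c\,2^{-n}$. The only cosmetic difference is that you retain the factor $1-p_0$ one step longer before discarding it (the paper bounds $\sum_{j>0}|\alpha_j|^2 \leq 1$ immediately), which changes nothing in the final bound.
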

\begin{proof}
    Given the Hamiltonian $H$ with eigenstates $\ket{\psi^{(j)}}$ and corresponding eigenvalues $E_j$, for $j = 0, \ldots, 2^n-1$, we can write the initial state as
    \begin{align}
         \ket{\psi_0} = \sum_{j} \alpha_j \ket{\psi^{(j)}}, 
    \end{align}
    where $\alpha_j = \braket{\psi^{(j)}|\psi_0}$ and $\sum_j \|\alpha_j\|_2 = 1$. This directly implies that
    \begin{align}
         \bra{\psi_0} \ee^{-2Ht}\ket{\psi_0} = \sum\limits_j \|\alpha_j\|_2^2\ee^{-2E_jt},
    \end{align}
    which leads to
        \begin{align}
        \ket{\psi_t} = \frac{\ee^{-Ht}\ket{\psi_0}}{\sqrt{\bra{\psi_0} \ee^{-2Ht}\ket{\psi_0}}} = \frac{\sum_j \alpha_j \ee^{-E_jt}\ket{\psi^{(j)}}}{\sqrt{\sum_j \|\alpha_j\|_2^2\ee^{-2E_jt}}}.
    \end{align}
    Now, the probability to sample the ground state at time $t$ is given by
    \begin{align}
         \rvert\braket{\psi^{(0)}|\psi_t}\rvert^2 &=  \frac{\|\alpha_0\|_2^2\ee^{-2E_0t}}{\sum_j \|\alpha_j\|_2^2e^{-2E_jt}}  \\
         &= \frac{\|\alpha_0\|_2^2\ee^{-2E_0t}}{\|\alpha_0\|_2^2\ee^{-2E_0t} + \sum_{j>0} \|\alpha_j\|_2^2 \ee^{-2E_jt}} \\
          &= \frac{\|\alpha_0\|_2^2}{\|\alpha_0\|_2^2 + \sum_{j>0} \|\alpha_j\|_2^2\ee^{-2(E_j-E_0)t}}  \\
            &\geq \frac{\|\alpha_0\|_2^2}{\|\alpha_0\|_2^2 + \ee^{-2Gt} \sum_{j>0} \|\alpha_j\|_2^2}  \\
            &\geq \frac{\|\alpha_0\|_2^2}{\|\alpha_0\|_2^2 + \ee^{-2Gt}} \\
            &= \frac{\rvert\braket{\psi^{(0)}|\psi_0}\rvert^2}{\rvert\braket{\psi^{(0)}|\psi_0}\rvert^2 + \ee^{-2Gt}}.
    \end{align}
This can be reformulated to
\begin{align}
     \ee^{-2Gt} \geq \frac{\rvert\braket{\psi^{(0)}|\psi_0}\rvert^2}{\rvert\braket{\psi^{(0)}|\psi_t}\rvert^2} - \rvert\braket{\psi^{(0)}|\psi_0}\rvert^2,
\end{align}
leading to 
\begin{eqnarray}
     {-2Gt} 
     &\geq& 
     \log{\left(
        \rvert\braket{\psi^{(0)}|\psi_0}\rvert^2
        \left(
        \frac{1}{\rvert\braket{\psi^{(0)}|\psi_t}\rvert^2}-1
        \right)
     \right)} \\
     &=& \log{\left(c 2^{-n}\right)} + \log \left(\frac{1}{p^{(0)}}-1\right) \\
     &=& \log(c) - \log{(2)}n + \log\left(1-p^{(0)}\right) - \log\left(p^{(0)}\right),
\end{eqnarray}
where it is used that $\rvert\braket{\psi^{(0)}|\psi_t}\rvert^2 \geq c 2^{-n}$ and that we are looking for $t$ such that $\rvert\braket{\psi^{(0)}|\psi_t}\rvert^2 = p^{(0)}$.
Now, this results in
\begin{align}
     t \leq \frac{\log{(2)}n - \log\left(1-p^{(0)}\right) - \log(c) + \log\left(p^{(0)}\right)}{2G}.
\end{align}
\end{proof}
Note that for $c=1$, as achieved by the equal superposition state, and for a target of $p^{(0)} = 1/2$, the bound simplifies to $log(2)n/2G$.

Furthermore, we can investigate the error bound and the variance to find out whether the ground state has been reached. More explicitly, if $t$ is sufficiently large, $\epsilon_t \leq 2\sin(\pi/8)$ such that the prepared state is reasonably accurate and $ \Var(H)_{\psi_t}\approx 0$, then the prepared state can be assumed to be close to the ground state because $ \Var(H)_{\psi_t}=0$ for all eigenstates of $H$.
Sec.~\ref{sec:varqite_examples} includes an illustrative VarQITE example for ground state preparation and presents the respective error bounds and variance.

\subsection{Gibbs State Preparation}
\label{sec:varqite_gibbs}

We, now, introduce a Gibbs state approximation technique that is based on VarQITE. The method is applicable to generic Hamiltonians, compatible with near-term quantum computers and also suitable for states with long-range interactions. 
Suppose the Gibbs state, see Def.~\ref{def:gibbs_state},
\begin{align}
        \rho = \frac{e^{-H/\left(\text{k}_{\text{B}}\text{T}\right)}}{Z},
\end{align}
for an $n$-qubit Hamiltonian with the Boltzmann constant $\text{k}_{\text{B}}$, the system temperature $\text{T}$ and the partition function
\begin{equation}
    Z=\text{Tr}\left[e^{-H/\left(\text{k}_{\text{B}}\text{T}\right)}\right].
\end{equation}
The first step is the preparation of a purification of the maximally mixed state. To that end, we choose an ansatz acting on $2n$ qubits given in the form of a variational quantum circuit $U\left(\bm{\omega}_t\right)$ for $\bm{\omega}_t \in \mathbb{R}^{k}$ with two $n$-qubit registers $a$ and $b$ that prepares
\begin{equation}
    \ket{\psi\left(\bm{\omega}_t\right)} = U\left(\bm{\omega}_t\right)\ket{0}^{\otimes 2n}.
\end{equation}
Now, the initial parameters $\bm{\omega}_0$ are fixed such that the initial state is
\begin{equation}
\label{eq:state_init}
     \ket{\psi\left(\bm{\omega}_{0}\right)} = \ket{\phi^{+}}^{\otimes n},
\end{equation}
where $\ket{\phi^{+}} = \textstyle{\frac{\ket{00}+\ket{11}}{\sqrt{2}}}^{\otimes n}$ corresponds to a Bell state and the first respectively second qubit of each Bell state are in register $a$ and $b$, respectively. This already corresponds to the desired purification. As we can see, tracing out sub-system $b$ results in an $n$-dimensional maximally mixed state
\begin{equation}
		  \text{Tr}_{b}\left[\ket{\phi^{+}}^{\otimes n} \right] = \frac{\mathds{1}}{2^n}.
\end{equation}
Secondly, we define an effective Hamiltonian on $2n$ qubits, i.e., $H^{ab} = H^a\otimes \mathds{1}^b,$
where $H$ and $\mathds{1}$ act on system $a$ and $b$, respectively.
\begin{algorithm}[t!]
 \caption{VarQITE for Gibbs State Preparation} \label{algo:VarQITE_Gibbs}
\begin{algorithmic}[0]
\\
    \State  \textbf{input}
    \State $H^{ab} = H^a \otimes I^b$
	\State $\ket{\psi\left( \bm{\omega}_0\right)} = U\big( \bm{\omega}_0\big)\ket{0}^{\otimes 2n} = \ket{\phi^{+}}^{\otimes 2n}$
	\State$\tau = \textstyle{\frac{1}{2\text{k}_{\text{B}}\text{T}}}$
	\\
    \State \textbf{procedure}
	\For{$t\in\left[0,  \ldots, \tau\right)$}
		\State Use ODE solver to propagate for time step $\delta t$: $\bm{\omega}_{t+\delta t}$
	\EndFor
	\\
	\State \textbf{return} 	$\rho\left(\bm{\omega}\right)=\text{Tr}_{b}\Big[\proj{ \psi\big({\bm{\omega}_{\tau}}\big)} \Big]$
\end{algorithmic}
\end{algorithm}
Now, we are ready to generate a purification of the desired Gibbs state by using VarQITE and an arbitrary ODE solver to propagate the trial state $\ket{\psi\left(\bm{\omega}_t\right)}$ with respect to $H^{ab}$ for time $t=\textstyle{\frac{1}{2\text{k}_{\text{B}}\text{T}}}$. 
Notably, the action on subsystem $a$ reads
\begin{align}
    \frac{\ee^{-H^at}\mathds{1}\ee^{-H^at}}{\Tr\left[\ee^{-2H^at}  \right]}= \frac{\ee^{-2H^at}}{\Tr\left[\ee^{-2H^at}  \right]}.
\end{align}
Thus, the Gibbs state approximation is given by 
\begin{equation}	
	\rho\left(\bm{\omega}\right)=\text{Tr}_{b}\Big[\proj{ \psi\big({\bm{\omega}_{\frac{1}{2\text{k}_{\text{B}}\text{T}}}}\big)} \Big] \approx  \rho.
\end{equation}
An outline of the Gibbs state preparation with VarQITE is presented in Alg.~\ref{algo:VarQITE_Gibbs} and application examples are given in Sec.~\ref{sec:varqite_examples}.

Notably, the register $b$ can be fully omitted if $H$ is diagonal \cite{shingu2020boltzmann}. More specifically, we can use $n$ instead of $2n$ qubits to generate a state whose diagonal approximates the desired Gibbs state.
The respective algorithm is described in Alg.~\ref{algo:VarQITE_Gibbs_diag}.

\begin{algorithm}[!htb]
 \caption{VarQITE for diagonal Gibbs State Preparation} \label{algo:VarQITE_Gibbs_diag}
\begin{algorithmic}[0]
\\
       \State  \textbf{input}
    \State $H^a$
	\State $\ket{\psi\left( \bm{\omega}_0\right)} = U\big( \bm{\omega}_0\big)\ket{0}^{\otimes n} = \frac{\ket{0}+\ket{1}}{\sqrt{2}}^{\otimes n}$
	\State$\tau = \textstyle{\frac{1}{2\text{k}_{\text{B}}\text{T}}}$
	\\
    \State \textbf{procedure}
	\For{$t\in\left[0,  \ldots, \tau\right]$}
		\State Use ODE solver to propagate for time step $\delta t$: $\bm{\omega}_{t+\delta t}$
	\EndFor
	\\
	\State \textbf{return} 	$\rho\left(\bm{\omega}\right)=\text{diag}\big(\proj{ \psi\big({\bm{\omega}_{\tau}}\big)}\big)$
\end{algorithmic}
\end{algorithm}

\subsection{Automatic Differentiation}
\label{sec:varqite_chainRule}

Several generative QML algorithms -- such as quantum Boltzmann machines which are presented in Sec.~\ref{sec:QBM} -- train the parameters $\bm{\theta}\in\mathbb{R}^p$ of a Hamiltonian of the form
\begin{equation}
    H\left({\bm{\theta}}\right)=\sum_{i=0}^{p-1}\theta_ih_i,
\end{equation}
with $h_i=\bigotimes_{j=0}^{n-1}\sigma_i^j$ for $\sigma_i^j\in\set{I, X, Y, Z}$ acting on the $j^{\text{th}}$ qubit, such that the statistics generated by the respective Gibbs state $\rho(\bm{\theta})$ approximate a target distribution.
Suppose training data samples $\set{x_0, \ldots, x_{2^n-1}}$ which are distributed according to the probability distribution $p^{\text{data}}$ and a Gibbs state approximation prepared with VarQITE $\rho(\bm{\omega}\left(\bm{\theta}\right))$ whose sampling statistics are given by 
\begin{equation}
    p_x\left(\bm{\omega}\left(\bm{\theta}\right)\right) = \text{Tr}\textstyle{\left[\proj{x}\rho(\bm{\omega}\left(\bm{\theta}\right)) \right]},
\end{equation}
where we assume an affine mapping from $\{x_0, \ldots, x_{2^n-1}\}$ to $\ket{x}$ for $x\in{{0}, \ldots, {2^n-1}}$.
To achieve the training goal, one can use, e.g., the cross entropy as loss function
\begin{equation}
	L\left(\bm{\omega}\left(\bm{\theta}\right)\right) = -\sum\limits_x p_x\left(\bm{\omega}\left(\bm{\theta}\right)\right)\log\left(p_x^{\text{data}}\right),
\end{equation}
 with
\begin{equation}
\label{eq:auto_diff_loss}
	\nabla_{\bm{\theta}}L\left(\bm{\omega}\left(\bm{\theta}\right)\right) = - \sum\limits_{x}\nabla_{\bm{\theta}}p_x\left(\bm{\omega}\left(\bm{\theta}\right)\right)\log\left(p_x^{\text{data}}\right).
\end{equation}
The critical step in the calculation of the above gradient is the 
evaluation of $\nabla_{\bm{\theta}}p_x\left(\bm{\omega}\left(\bm{\theta}\right)\right)$ which can be achieved using VarQITE-based automatic differentiation with
\begin{align}
    \nabla_{\bm{\theta}}p_x\left(\bm{\omega}\left(\bm{\theta}\right)\right) =  \nabla_{\bm{\omega}}p_x\left(\bm{\omega}\left(\bm{\theta}\right)\right) \nabla_{\bm{\theta}}\bm{\omega}.
\end{align}
Firstly, the probability quantum gradient $\nabla_{\bm{\omega}}p_x $ may be calculated with the quantum gradient methods introduced in Sec.~\ref{sec:analytic_gradients}.
Secondly, the calculation of $\nabla_{\bm{\theta}}\bm{\omega}$ can be evaluated with the following SLE \footnote{The SLE stems from applying the gradient $\nabla_{\bm{\theta}}$ to both sides of Eq.~\eqref{eq:standardODE_varqite}.}
\begin{align}
\label{eq:sle_chain_rule}
\sum\limits_{j=0}^k \left(\nabla_{\bm{\theta}}\mathcal{F}^Q_{ij} \dot\omega_j + \mathcal{F}^Q_{ij}\nabla_{\bm{\theta}} \dot\omega_j \right)= - \nabla_{\bm{\theta}}\text{Re}\left(C_i\right),
\end{align}
where
\begin{align}
    \frac{\partial\mathcal{F}^Q_{ij}}{\partial\theta_m} &= \sum\limits_{l=0}^k\frac{\partial\omega_l}{\partial\theta_m}  \frac{\partial\mathcal{F}^Q_{ij}}{\partial\omega_l} = \frac{1}{2}\frac{\partial^3 |\braket{\psi(\bm{\omega'})|\psi(\bm{\omega})}|^2}{\partial\omega_i\partial\omega_j\partial\omega_l}
    \Big\rvert_{\bm{\omega'}=\bm{\omega}},
\end{align}
and
\begin{align}
    \frac{\partial\text{Re}\left(C_i\right)}{\partial\theta_m} &= \sum\limits_{l=0}^k\frac{\partial\omega_l}{\partial\theta_m}  \frac{\partial\text{Re}\left(C_i\right)}{\partial\omega_l}=\frac{1}{2}\sum\limits_{l=0}^k\frac{\partial\omega_l}{\partial\theta_m}\frac{\partial^2 \bra{\psi^{\omega}_t}H\ket{\psi^{\omega}_t}}{\partial \omega_i\partial \omega_l},
\end{align}
can be computed with the quantum gradient methods introduced in Sec.~\ref{sec:analytic_gradients}.
Solving the SLE given in Eq.~\eqref{eq:sle_chain_rule} w.r.t.~$\nabla_{\bm{\theta}}\bm{\dot\omega}$ defines the following ODE
\begin{align}
\label{eq:ode_chain}
     f_{\text{chain}}\left(\bm{\omega}, \bm{\theta}\right) = -{\mathcal{F}^Q}^{-1} \Big(\nabla_{\bm{\theta}}\mathcal{F}^Q \bm{\dot\omega} +\nabla_{\bm{\theta}}\text{Re}\left( \bm{C}\right)\Big).
\end{align}
Now, using numerical integration to solve this ODE -- where by definition all items in $ \nabla_{\bm{\theta}}\bm{\dot\omega_0}$ are $0$ -- gives an estimate for $\nabla_{\bm{\theta}}\bm{\omega}$.

Notably, the gradient of the loss function  $ \nabla_{\bm{\theta}}L\left(\bm{\theta}\right)$ could also be approximated with a finite difference method \cite{Kardestuncer1975}. If the number of Hamiltonian parameters $p$ is smaller than the number of trial state parameters $k$ this requires less evaluation circuits. However, given a trial state that has less parameters than the respective Hamiltonian, the automatic differentiation scheme presented in this section is favorable in terms of the number of evaluation circuits.
Suppose that the Gibbs state preparation with VarQITE employs $t$ time steps, then the number of circuits that need to be evaluated for Gibbs state preparation with VarQITE are $\mathscr{O}\left(tk(k+p)\right)$.
Now, computing the gradient with forward finite differences reads
\begin{equation*}
    \frac{\partial L\left(\bm{\theta}\right)}{\partial\theta_i} \approx  \frac{L\left(\bm{\theta} + {\epsilon}\bm{e_i} \right) - L\left( \bm{\theta} \right)}{\epsilon},
\end{equation*}
for $i\in\set{0, \ldots, p-1}$, $\bm{e_i}$ representing the unit vector as given before. 
For this purpose, VarQITE must be run once with $\bm{\theta}$ and $p$ times with an $\epsilon$-shift which leads to a total number of $\mathscr{O}\left(tpk(k+p)\right)$ circuits.
For the automatic differentiation gradient, given in Eq.~\eqref{eq:auto_diff_loss},
VarQITE needs to be run once to prepare $\rho\left(\bm{\omega}\left(\bm{\theta}\right)\right) \approx \rho\left(\bm{\theta}\right)$. Furthermore, the evaluation of $\nabla_{\bm{\theta}}\bm{\omega}$ requires $\mathscr{O}\left(tk^2(k+p)\right)$ circuits.
The resulting overall complexity of the number of circuits is $\mathscr{O}\left(tk^2(k+p)\right)$.
These results are summarized in Tbl.~\ref{tbl:complexity}.

\begin{table}[h!]
\begin{center}
\captionsetup{singlelinecheck = false, format= hang, justification=raggedright, font=footnotesize, labelsep=space}
{\renewcommand{\arraystretch}{1.2}
\begin{tabular}{ c | c }
\textbf{Method} & \textbf{Number Circuits} \\ 
\hline
finite diff. &  $\mathscr{O}\left(tkp(k+p)\right)$\\
automatic diff.   & $\mathscr{O}\left(tk^2(k+p)\right)$\\
\end{tabular}
}
\end{center}
\caption{Comparing the number of circuits needed to compute the loss function gradient $\nabla_{\theta}L$ using either finite differences or automatic differentiation. The number of Hamiltonian parameters is $p$, the number of ansatz parameters is $k$ and the number of time steps during the Gibbs state preparation is $t$.}
\label{tbl:complexity}
\end{table}

Automatic differentiation is more efficient than finite differences if $k < p$. For $k > p$, on the other hand, the computational complexity of finite differences is better. 
Considering, e.g., a $k$-local Ising model that corresponds to a Hamiltonian with $\mathscr{O}\left(n^k\right)$ parameters. Suppose that we can find a reasonable variational $n$-qubit trial state with $\mathscr{O}\left(n\right)$ layers of parameterized and entangling gates, which results in $k = \mathscr{O}\left(n^2\right)$ parameters. Then, automatic differentiation would outperform finite differences for $k>2$. Furthermore, as discussed in Sec.~\ref{sec:gradients}, finite differences may result in numerical instabilities. Automatic differentiation might, thus, be generally preferable and achieve higher quality results when $k > p$.

\subsection{Implementation}
\label{sec:methods}
To ensure a stable VarQITE implementation, it is vital to choose the correct settings. The possible choices with their advantages and disadvantages are explained next.

\subsubsection{ODE Solvers}
\label{sec:odesolvers}

The ODE underlying VarQITE is solved using numerical integration. This can lead to an additional error term. 
Let $\ket{\psi_t^*}$ denote the target state, $\ket{\psi^{\omega}_t}$ the prepared state, and $\ket{\psi_t'}$ the state that we would prepare if we could take infinitesimally small time steps and, thus, integrate the ODE exactly.
Then, the error bound derived in Sec.~\ref{sec:error_qite} captures the error induced by the variational method, i.e.,
\begin{align}
	B\left(\proj{\psi_t^*}, \proj{\psi_t'}\right) \leq \epsilon_t.
\end{align}
The triangle inequality gives
\begin{align}
	B\left(\proj{\psi^*_t}, \proj{\psi^{\omega}_t}\right) &\leq B\left(\proj{\psi^*_t}, \proj{\psi'_t}\right)\nonumber \\
	&+B\left(\proj{\psi'_t}, \proj{\psi^{\omega}_t}\right).
\end{align}
The term $B(\proj{\psi'_t}, \proj{\psi^{\omega}_t})$ is generally unknown and the error bound from Sec.~\ref{sec:error_qite} only holds if $B(\proj{\psi'_t}, \proj{\psi^{\omega}_t}) \ll 1$ such that 
\begin{align}
	B\left(\proj{\psi^*_t}, \proj{\psi^{\omega}_t}\right)  
	\!\approx\! B\left(\proj{\psi^*_t}, \proj{\psi'_t}\right) 
	\!\leq\! \epsilon_t.
\end{align}

ODE solvers, such as the \emph{forward Euler} \cite{Griffiths2010ODE} method, which operate with a fixed step size may induce large errors in the numerical simulations if the time steps are not chosen sufficiently small.
The forward Euler method evaluates the gradient $\bm{\dot\omega_t}$ and propagates the underlying variable for $n_T$ time steps according to a predefined step size, i.e.,
\begin{align}
   \bm{\omega_{T}} = \bm{\omega_0} + \sum\limits_{k=0}^{n_T}  \delta_t \bm{\dot{\omega}_{t_k}},
\end{align}
with $t_{n_T}=T$ and the step-size $\delta_t=t_{k+1}-t_k$.
In contrast, \emph{Runge-Kutta} \cite{Runge1895_RK, kutta1901beitrag} methods evaluate additional supporting points and compute a parameter update using an average of these points, thereby, truncating the local update error. 
Combining two Runge-Kutta methods of different order but using the same supporting points allows to define efficient adaptive step-size ODE solvers which ensure that the local step-by-step error is small and, thus, that the dominant part of the error is coming from the variational approximation.
The results in Sec.~\ref{sec:varqite_examples} illustrate this aspect on the example of the forward Euler method with fixed step  size and an explicit Runge-Kutta method of order 5(4) (RK54) method from SciPy \cite{2020SciPy-NMeth} that uses additional interpolation points as well as an adaptive step size to minimize the step-by-step integration errors.
We refer the interested reader to an introductory book on numerical ODE solvers such as \cite{Griffiths2010ODE} for further information.

\subsubsection{ODE Definition}
The SLE underlying McLachlan's variational principle, given in 
Eq.~\eqref{eq:McLachlanVarQITE}, is prone to being ill-conditioned and may, thus, only be solvable approximately with a numerical technique such as regularized least squares or pseudo-inversion. The commonly used regularization schemes as well as the pseudo-inversion can be seen as small perturbations which are not necessarily in accordance with the physics of the system. This, in turn, may lead to inappropriate parameter updates. In the following, we shall refer to the ODE definition based on Eq.~\eqref{eq:standardODE_varqite}, i.e., $f_{\text{std}}$ as \emph{standard ODE}.
The alternative ODE definition  given in Eq.~\eqref{eq:argminODE_varqite}, i.e., $f_{\text{min}}$ which shall be referred to as \emph{argmin ODE}, is analytically equivalent to solving $f_{\text{std}}$ with a least square solver. However, the simulation results in Sec.~\ref{sec:varqite_examples} show that the numerical behavior differs.
In fact, the experiments reveal that the argmin ODE can lead to significantly better numerical stability. 
The simulations employ the SciPy COBYLA optimizer \cite{2020SciPy-NMeth} to find $\bm{\dot\omega}$ in $f_{\text{min}}$ where the initial point is chosen
as the numerical solution to the SLE given in 
Eq.~\eqref{eq:McLachlanVarQITE}.

\subsubsection{Error Bound Evaluation}
\label{sec:ode_form}
To enable a reliable error bound evaluation, we jointly evolve the state parameters and the error bounds. 
More explicitly, we extend the parameter  ODE to
\begin{align}
        \begin{pmatrix}
       \boldsymbol{\dot \omega}_t \\
    \dot \epsilon_t
    \end{pmatrix}= 
        \tilde f\left(\bm{\omega_t}, \epsilon_t\right),
\end{align}
with  $\bm{\omega_0}$ being set, $\epsilon_0 = 0$ by assumption, and 
\begin{align}
    \tilde f\left(\bm{\omega_t}, \epsilon_t\right)=\begin{pmatrix}
        f\left(\bm{\omega_t}\right) \\
        \dot\varepsilon_t
    \end{pmatrix},
\end{align}
with $\dot\varepsilon_t$ from 
Eq.~\eqref{eq:qite_error_grad}.
Furthermore, $ f\left(\bm{\omega_t}\right)$ is either chosen as $f_{\text{std}}\left(\bm{\omega_t}\right)$ or $f_{\text{min}}\left(\bm{\omega_t}\right)$.
This formulation has the advantage that the error bound directly reflects the propagation of the evolution and that adaptive step size ODE solvers also consider the changes in the error bounds.

\subsubsection{Chain Rule Evaluation}
\label{sec:ode_form_chain}
If a QML algorithm requires the chain rule evaluation described in Sec.~\ref{sec:varqite_chainRule}, we can extend the parameter ODE in a similar fashion to the error bound evaluation.
The system parameters are evolved according to the following  ODE
\begin{align}
        \begin{pmatrix}
       \boldsymbol{\dot \omega}_t \\
    \nabla_{\bm{\theta}}\boldsymbol{\dot \omega}_t
    \end{pmatrix}= 
        f'\left(\bm{\omega_t}, \bm{\theta}\right),
\end{align}
with  $\bm{\omega_0}$ and $\bm{\theta}$ as defined, all elements in $ \nabla_{\bm{\theta}}\bm{\dot\omega_0}$ being $0$, and 
\begin{align}
     f'\left(\bm{\omega_t}, \bm{\theta}\right)=\begin{pmatrix}
        f_{\text{std}}\left(\bm{\omega_t}\right) \\
        f_{\text{chain}}\left(\bm{\omega_t}, \bm{\theta}\right)
    \end{pmatrix},
\end{align}
with $ f_{\text{chain}}\left(\bm{\omega_t}, \bm{\theta}\right)$ from 
Eq.~\eqref{eq:ode_chain}.

\subsection{Illustrative Examples}
\label{sec:varqite_examples}

The following section presents a set of illustrative examples of VarQITE applications.
First, we introduce three examples that we will use to demonstrate the efficiency of VarQITE and the error bounds derived in Sec.~\ref{sec:error_qite} as well as the impact of the implementation details discussed in Sec.~\ref{sec:methods}. 
All experiments prepare $\ket{\psi^{\omega}_t}$ with an ansatz as shown in Fig.~\ref{fig:effsu2}, adjusted to the number of qubits $n$ given by the respective Hamiltonian:
\begin{enumerate}[(i)]
\label{enumerate:Examples}
    \item An illustrative example is considered with
\begin{align}
    H_{\text{illustrative}} = Z\otimes X + X\otimes Z + 3 \,Z\otimes Z.
\end{align}
Hereby, the evolution time is $T=1$ and the initial parameters are chosen such that all parameters are set to $0$ except for the parameters of the last layer of Pauli $Y$ rotations $\left(RY\right)$ \cite{nielsen10} gates which are chosen to be $\pi/2$. This gives $\ket{\psi_0}=\ket{++}$.
    \item The well-studied Ising model with a transverse magnetic field on an open chain with $3$ qubits is investigated, see, e.g., \cite{Calabrese_2012TransveresIsing}, i.e.,
\begin{align}
\label{eq:ising}
     H_{\text{Ising}} = -J\left(\sum\limits_{i,j}Z_i\otimes  Z_j + g\sum\limits_j X_j\right),
\end{align}
where $J=-\frac{1}{2}$ and $g=-\frac{1}{2}$.
The evolution time is again set to $T=1$ and the initial parameters are all $0$ except for the parameters of the last layer of $RZ$ gates which are chosen at random in $(0, \frac{\pi}{2}]$ such that $\ket{\psi_0}=\ee^{-i\gamma}\ket{000}$ with $\gamma\in\mathbb{R}$.
Notably, we avoid the initial state $\ket{\psi_0}=\ket{000}$ to circumvent getting stuck in a local minima.
    \item The two qubit hydrogen molecule approximation given in \cite{VarSITEMcArdle19} is studied, with
    \begin{align}
        H_{\text{hydrogen}} 
        &= 0.2252\,I\otimes I + 0.5716\,Z\otimes Z \nonumber \\
        &\hspace{3mm}+ 0.3435\,I\otimes Z  - 0.4347\,Z\otimes I \nonumber \\
        &\hspace{3mm}+ 0.0910\,Y\otimes Y + 0.0910\,X\otimes X.     \label{eq:hydrogen}
    \end{align}
    Again, the evolution time is set to $T=1$ and the initial parameters are chosen such that the initial state is $\ket{\psi_0}=\ket{++}$, i.e., all parameters are $0$ except for the last layer of $RY$ rotations which are given as $\pi/2$.
\end{enumerate}
\begin{figure}[h!]
\captionsetup{singlelinecheck = false, format= hang, justification=centerlast, font=footnotesize, labelsep=space}
\begin{center}
\begin{tikzpicture}
\node at (0,0){\includegraphics[width=0.6\linewidth]{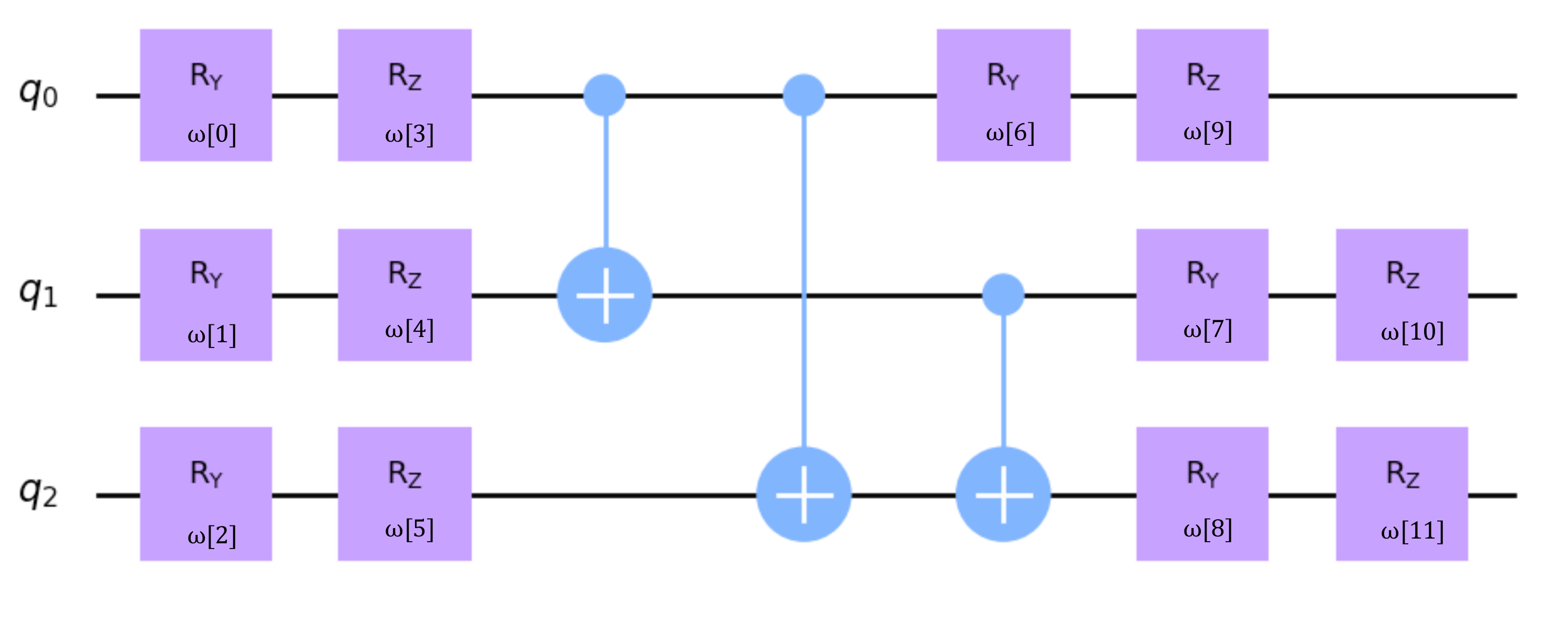}};
\node at (-5.8, 0) {$\ket{0}^{\otimes 3}$};
\draw[decorate, thick, decoration = {brace, amplitude=15pt}] (-4.7,-1) --  (-4.7,1);
\end{tikzpicture}
\end{center}
\caption{This quantum circuit, called \emph{EfficientSU2}, is part of the Qiskit \cite{qiskit} circuit library and chosen as ansatz for the presented VarQITE experiments. It consists of two layers of $RY$ and parameterized single-qubit $RZ$ rotations and one $CX$ entanglement block. }
\label{fig:effsu2}
\end{figure}

\begin{figure}[!ht]
    \centering
    \begin{tikzpicture}
 \node at (1, 0.9) {\textbf{VarQITE: $H_{\text{illustrative}}$ with different ODE solvers}};
\node[anchor=north west] at (-6, 0) {\includegraphics[width=0.4\textwidth]{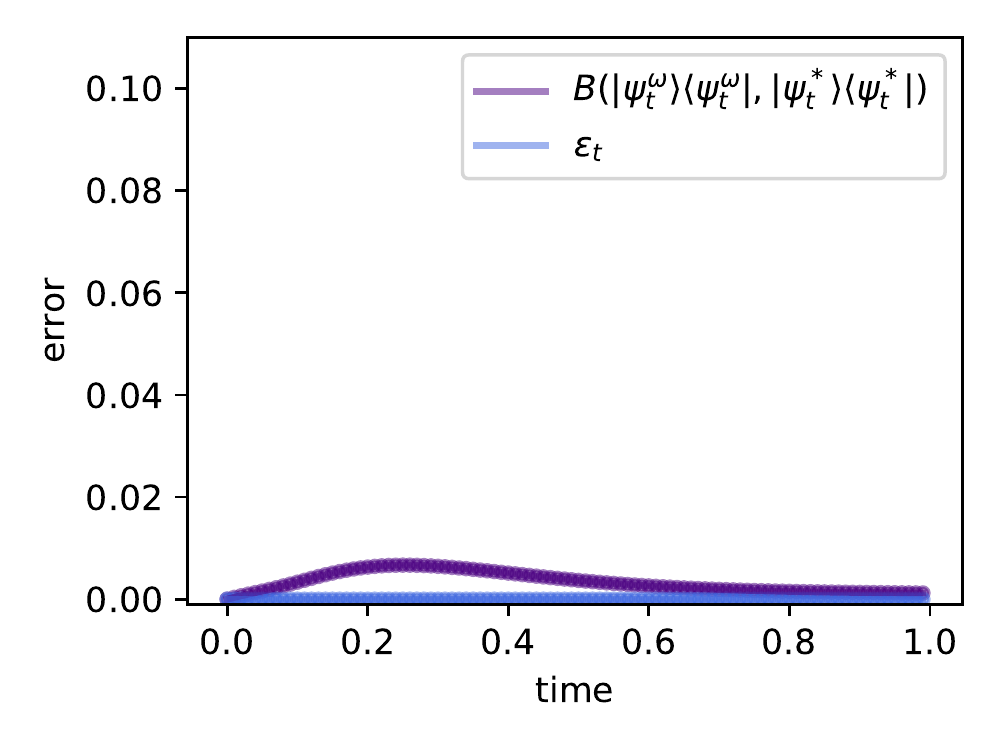}};
\node[anchor=north west] at (-6, -4.8) {\includegraphics[width=0.4\textwidth]{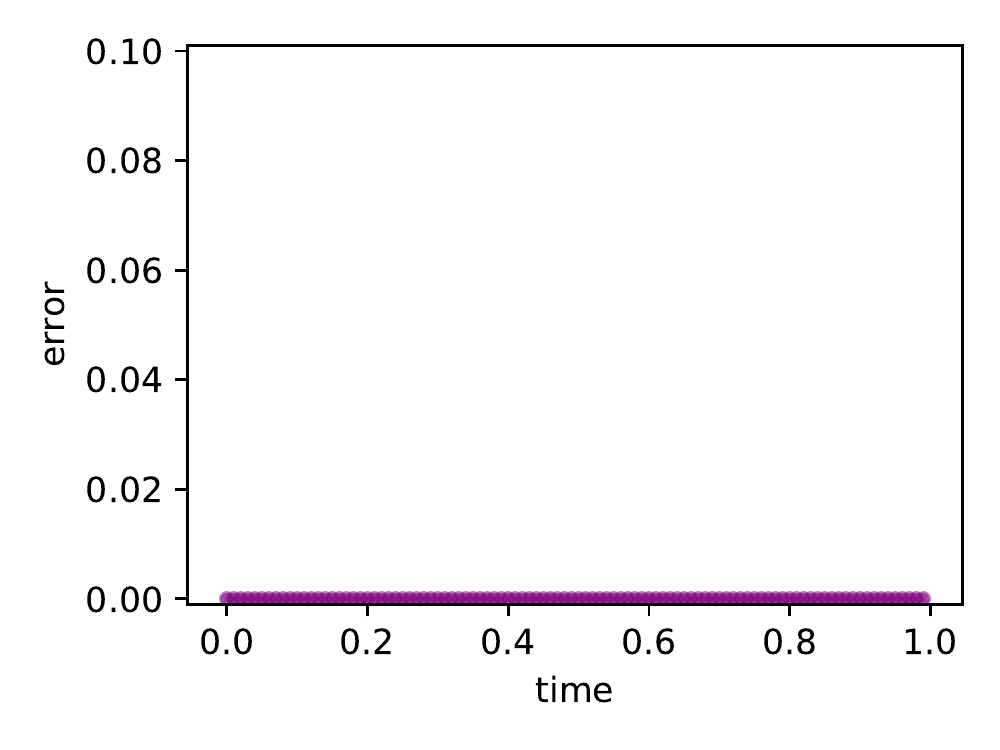}};
\node[anchor=north west] at (-6, -9.6) {\includegraphics[width=0.4\textwidth]{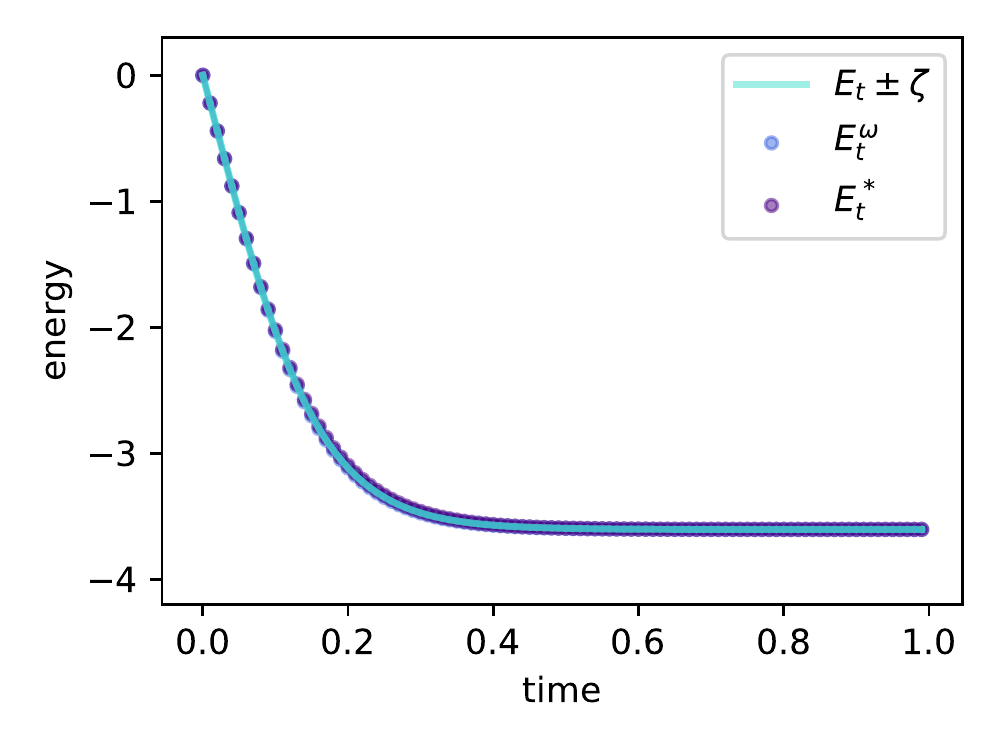}};
\node[anchor=north west] at (-6, -14.4) {\includegraphics[width=0.4\textwidth]{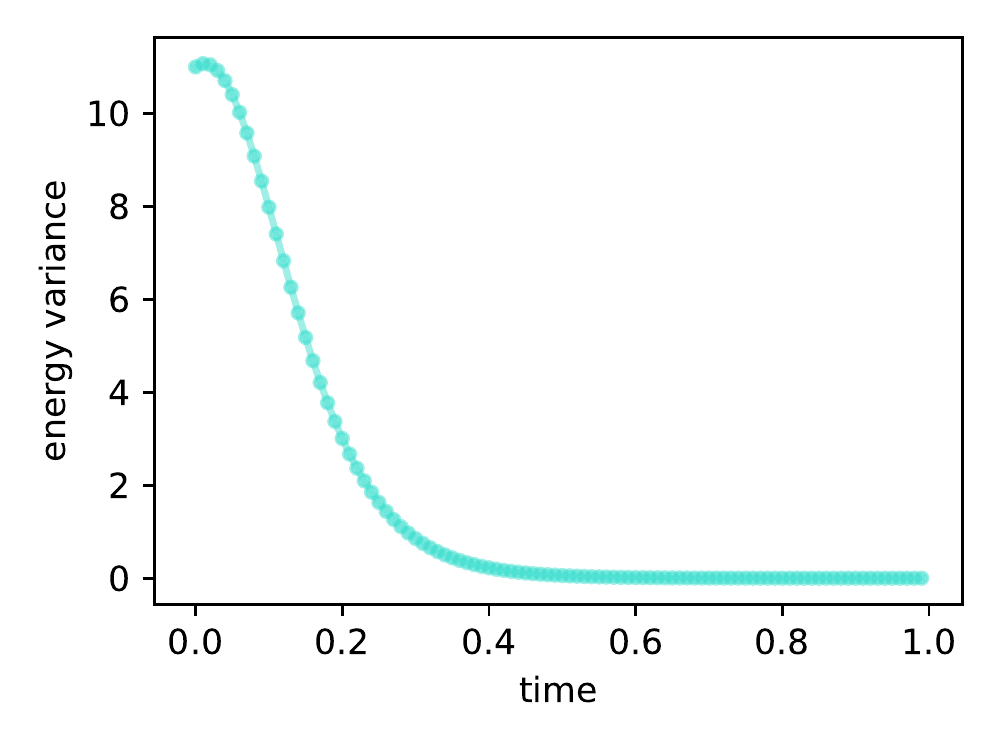}};
\node[anchor=north west] at (1, -0) {\includegraphics[width=0.4\textwidth]{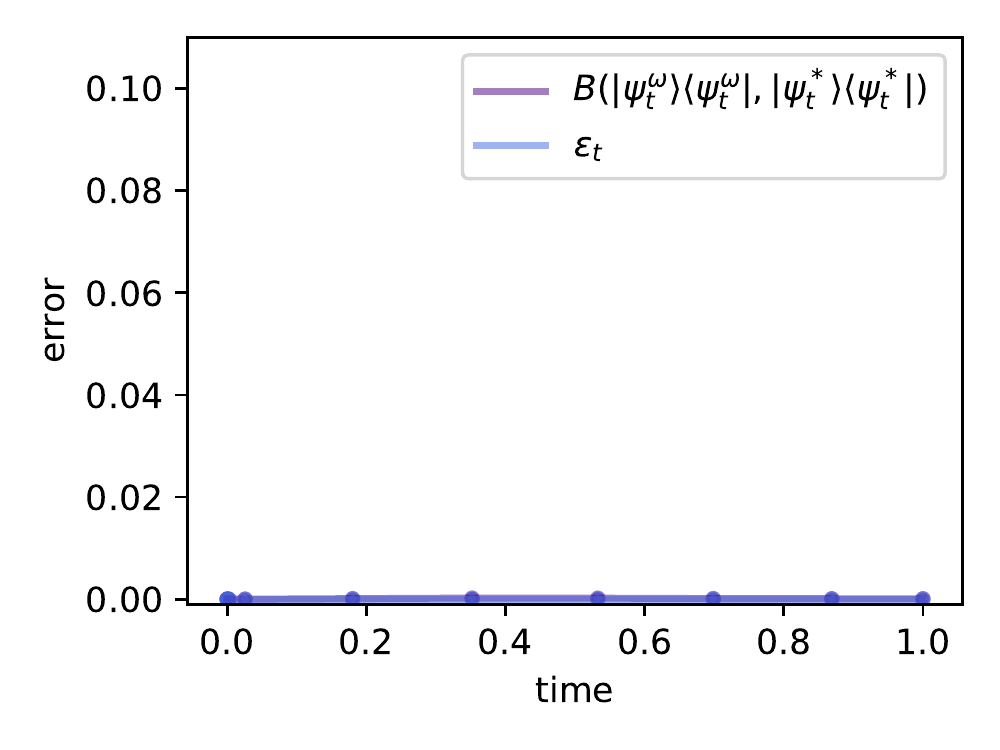}};
\node[anchor=north west] at (1, -4.8) {\includegraphics[width=0.4\textwidth]{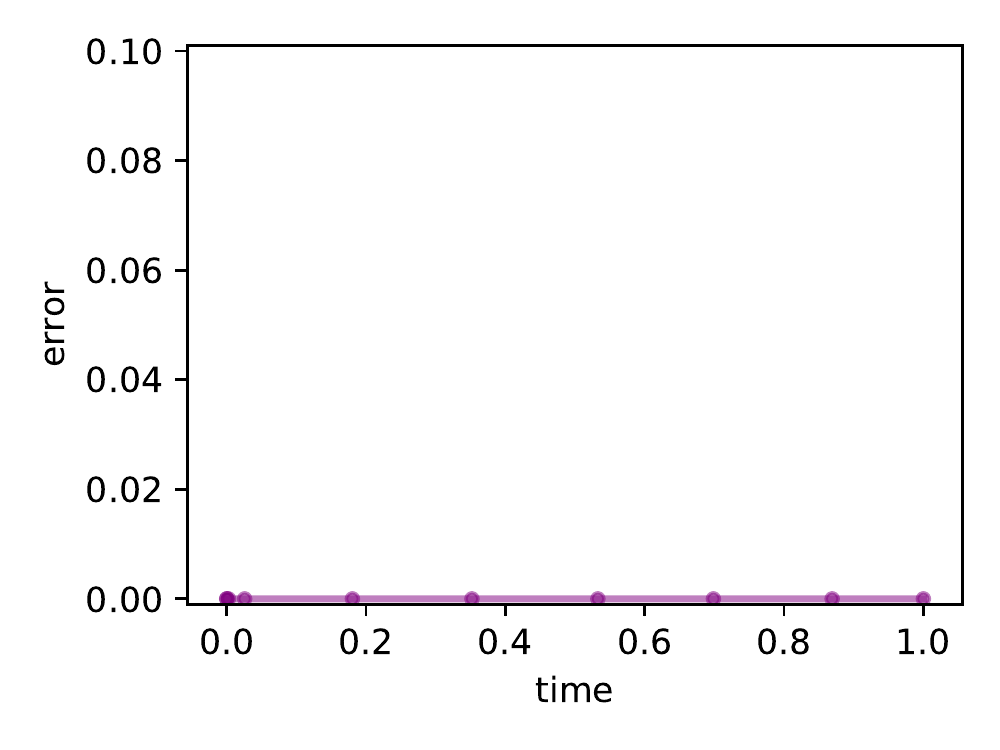}};
\node[anchor=north west] at (1,-9.6) {\includegraphics[width=0.4\textwidth]{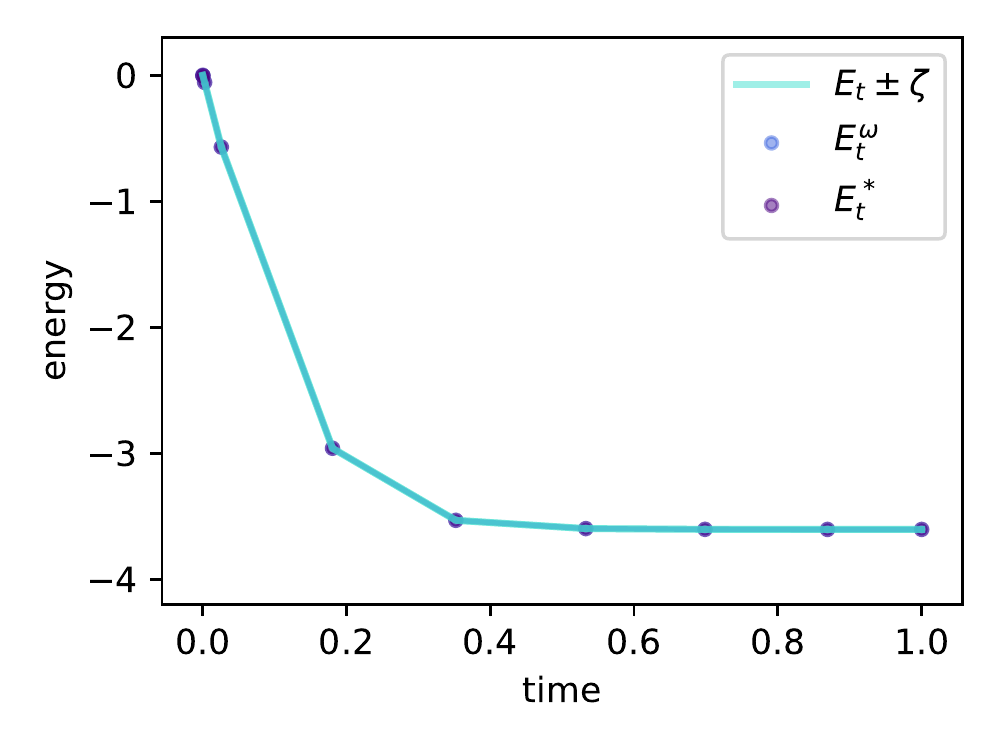}};
\node[anchor=north west] at (1, -14.4) {\includegraphics[width=0.4\textwidth]{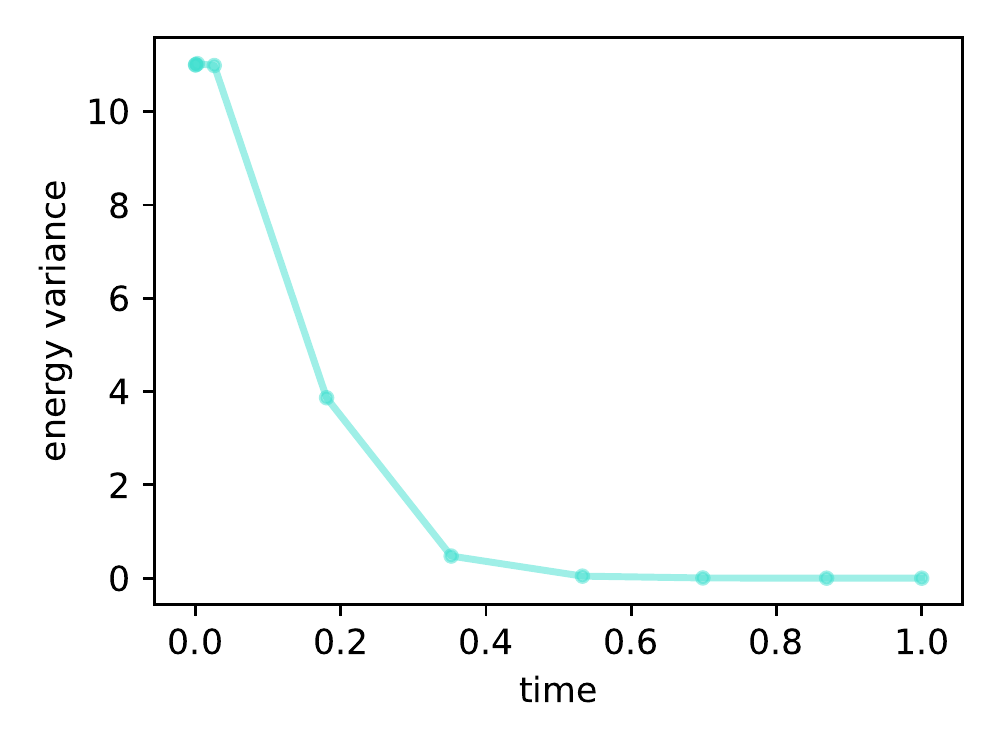}};
\node at (-3.2, 0) {(a)  \small{State Error (Euler, $f_{\text{std}}$)}};
\node at (-3.2, -4.8) {(c)  \small{Grad Error (Euler, $f_{\text{std}}$)}};
\node at (-3.2, -9.6) {(e)  \small{Energy (Euler, $f_{\text{std}}$)}};
\node at (-3.2, -14.4) {(g)  \small{Energy Var (Euler, $f_{\text{std}}$)}};
\node at (4.3, 0) {(b)  \small{State Error (RK54, $f_{\text{std}}$)}};
\node at (4.3, -4.8) {(d)  \small{Grad Error (RK54, $f_{\text{std}}$)}};
\node at (4.3, -9.6) {(f) \small{Energy (RK54, $f_{\text{std}}$)}};
\node at (4.3, -14.4) {(h)  \small{Energy Var (RK54, $f_{\text{std}}$)}};
\end{tikzpicture}
        \captionsetup{singlelinecheck = false, format= hang, justification=centerlast, font=footnotesize, labelsep=space}
        \caption{ VarQITE for $\ket{\psi_0}=\ket{++}$, $H_{\text{illustrative}}$ and $T=1$  with the standard ODE  (a), (c), (e), (g) are computed using forward Euler. (b), (d), (f), (h) employ  RK54. - (a), (b) illustrate the error bound $\epsilon_{t}$ and the true Bures metric. Furthermore, (c), (d) show the gradient errors $\|\ket{e_t}\|_2$.  (e), (f) present the energies as well as the error bound to $E_t^{\omega}$, i.e., $E_t^{\omega}\pm \zeta\left(\bm{\omega_t}, \epsilon_t\right)$.  Lastly, (g), (h) present the evolution of the energy variance \smash{$\Var(H)_{\psi^{\omega}_t}$}.}
     \label{fig:illustrative_qite}
\end{figure}

Now, we investigate the power VarQITE and of the introduced error bound.
To that end, the optimizations for $\zeta\left(\bm{\omega_t}, \epsilon_t\right)$ and $\chi\left(\bm{\omega_t}, \epsilon_t\right)$, see Lemmas \ref{lem_energyDiff} and \ref{lem_secondTerm}, respectively, use a grid search which is viable since the problems are cheap to evaluate. 
Furthermore, we would like to point out that $\delta_t$, used to compute $\dot\varepsilon_t \approx \frac{\varepsilon_t - \varepsilon_{t-1}}{\delta_t}$, does not need to correspond to the step  size of the ODE solver but may be chosen such that a reasonable finite difference approximation is obtained. We set $\delta_t = 10^{-4}$.

\begin{figure}[!h]
    \centering
    \begin{tikzpicture}
 \node at (1, 1.1) {\textbf{VarQITE: $H_{\text{Ising}}$ with different ODE definitions}};
\node[inner sep=0pt, anchor=north west] at (-6, -0.12) {\includegraphics[width=0.4\textwidth]{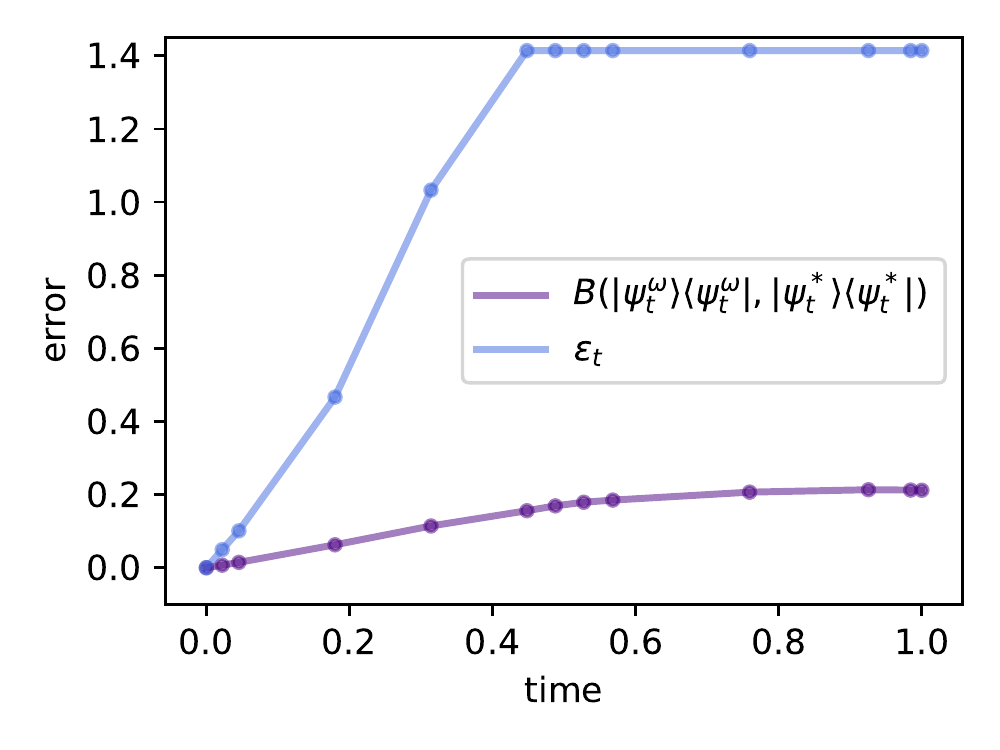}};
\node[anchor=north west] at (1,0) {    \includegraphics[width=0.4\textwidth]{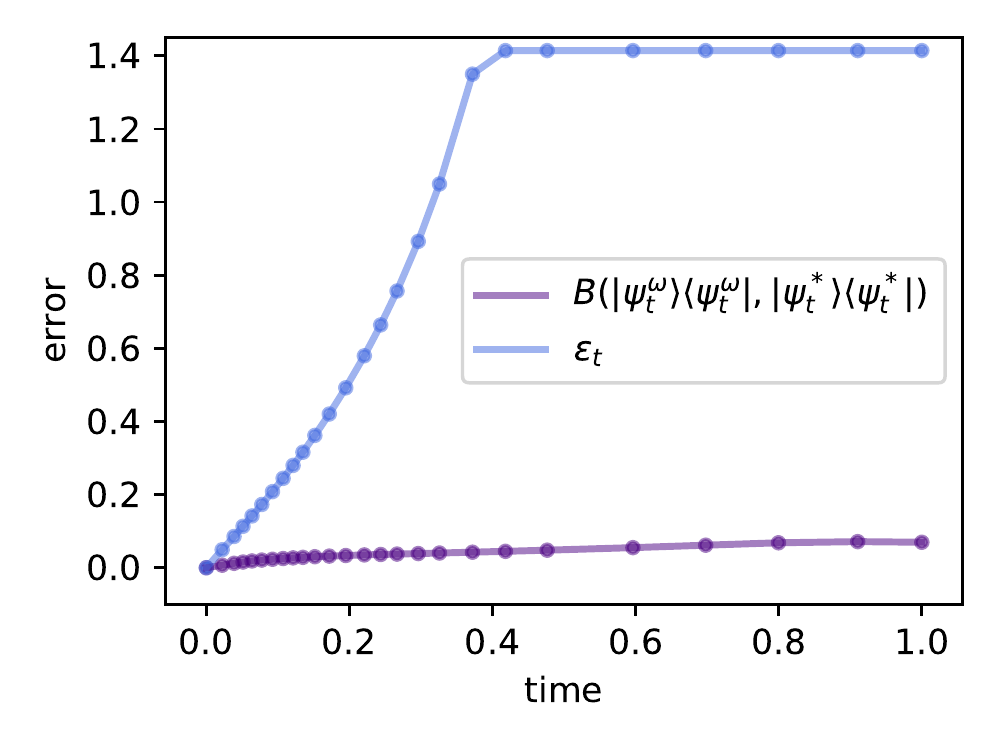}};
\node[anchor=north west] at (-6, -4.7) {\includegraphics[width=0.4\textwidth]{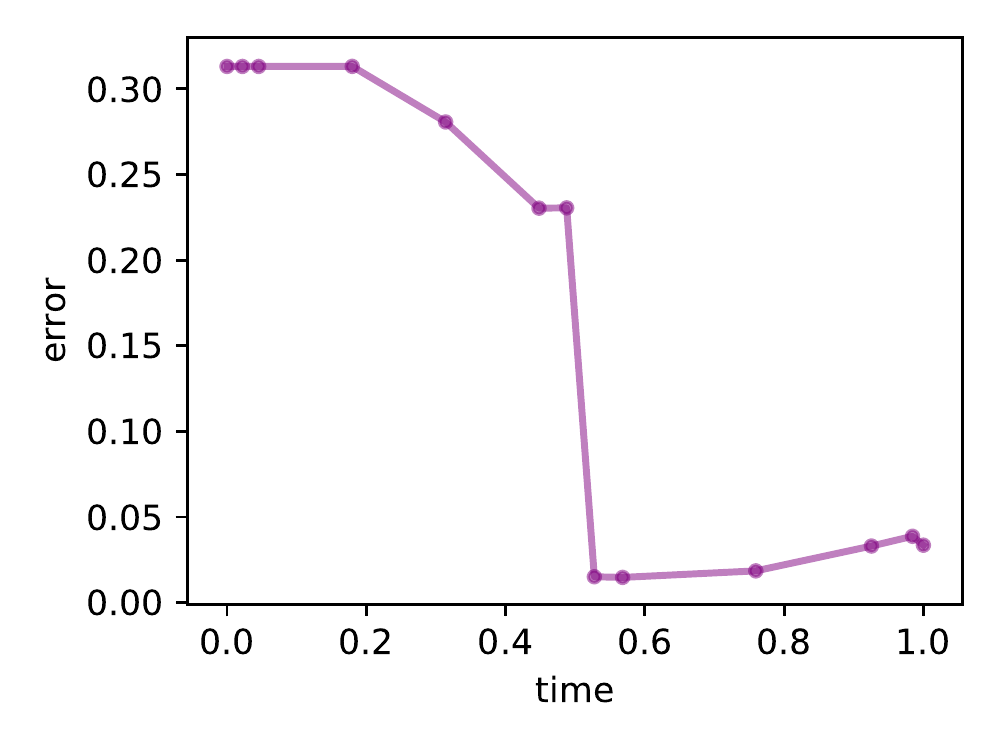}};
\node[anchor=north west] at (1,-4.7) {\includegraphics[width=0.4\textwidth]{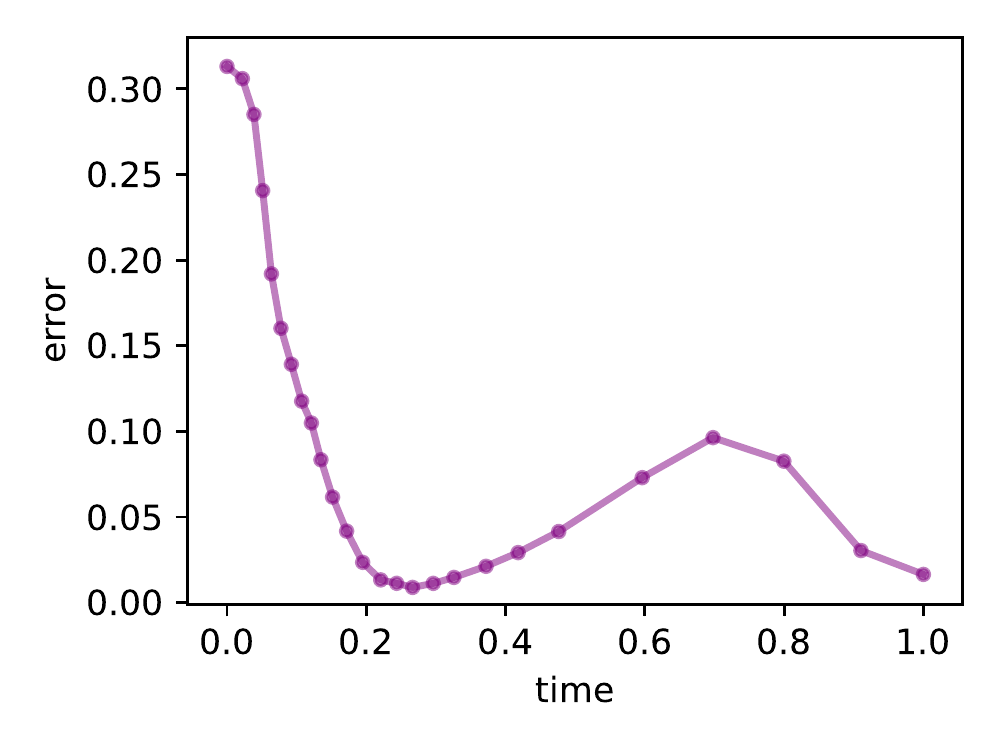}};
\node at (-3.2, 0) {(a) \small{State Error (RK54, $f_{\text{std}}$)}};
\node at (4.3, 0) {(b) \small{State Error (RK54, $f_{\text{argmin}}$)}};
\node at (-3.2, -4.7) {(c) \small{Grad Error (RK54, $f_{\text{std}}$)}};
\node at (4.3, -4.7) {(d) \small{Grad Error (RK54, $f_{\text{argmin}}$)}};
\end{tikzpicture}
        \captionsetup{singlelinecheck = false, format= hang, justification=centerlast, font=footnotesize, labelsep=space}
        \caption{VarQITE for $\ket{\psi_0}=\ee^{-i\gamma}\ket{000}$, $H_{\text{Ising}}$ and $T=1$ with RK54. (a), (b) employ the standard ODE. (c), (d) use the argmin ODE. (a), (c) illustrate the error bound $\epsilon_{t}$ as well as the actual Bures metric. (b), (d) present the corresponding evolution of the gradient errors $\|\ket{e_t}\|_2$.}
            \label{fig:qite_ising}
\end{figure}
\begin{figure}[!h]
    \centering
    \begin{tikzpicture}
 \node at (1, 1.1) {\textbf{VarQITE: $H_{\text{hydrogen}}$}};
\node[anchor=north west] at (-6, -0.) {\includegraphics[width=0.4\textwidth]{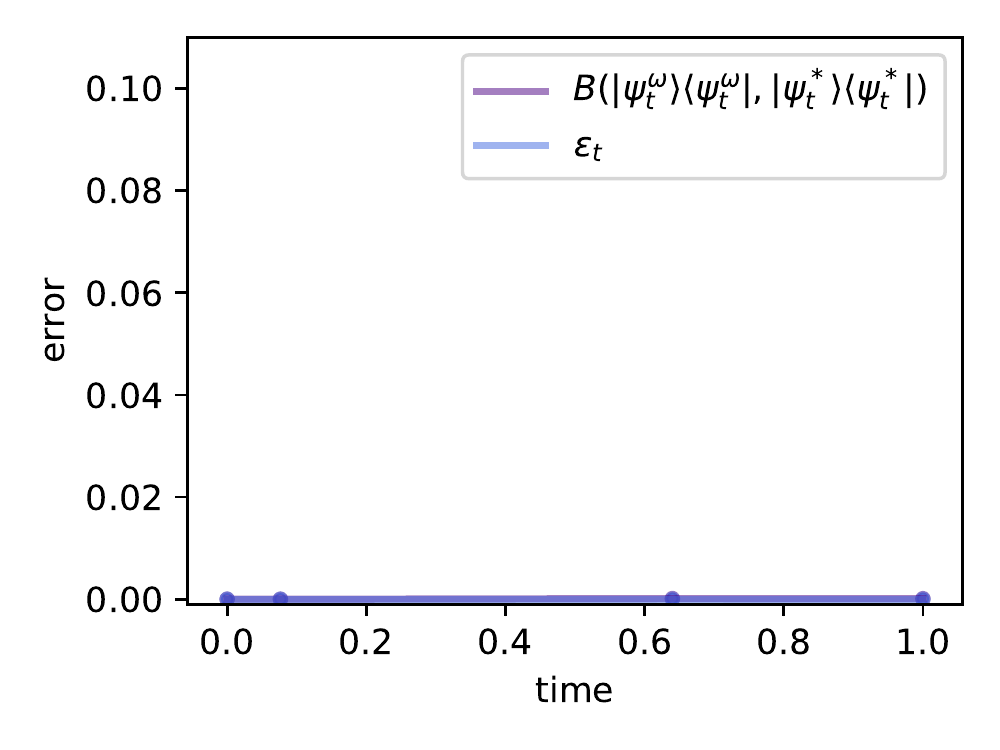}};
\node[anchor=north west]  at (1,0) {    \includegraphics[width=0.4\textwidth]{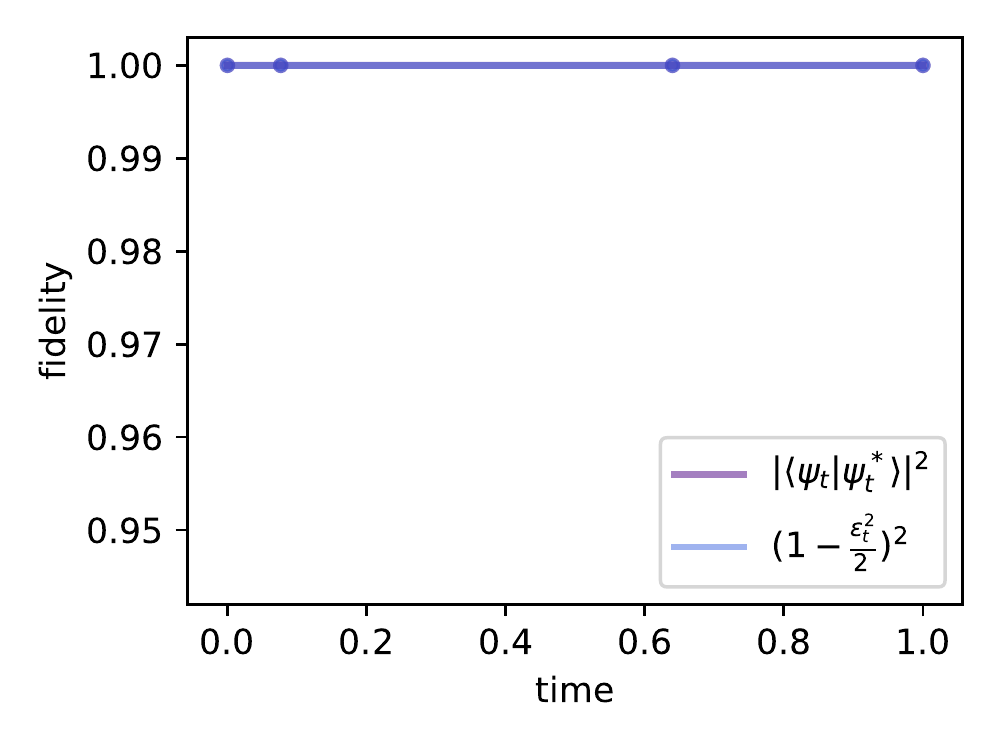}};
\node[anchor=north west] at (-6, -4.7) {\includegraphics[width=0.4\textwidth]{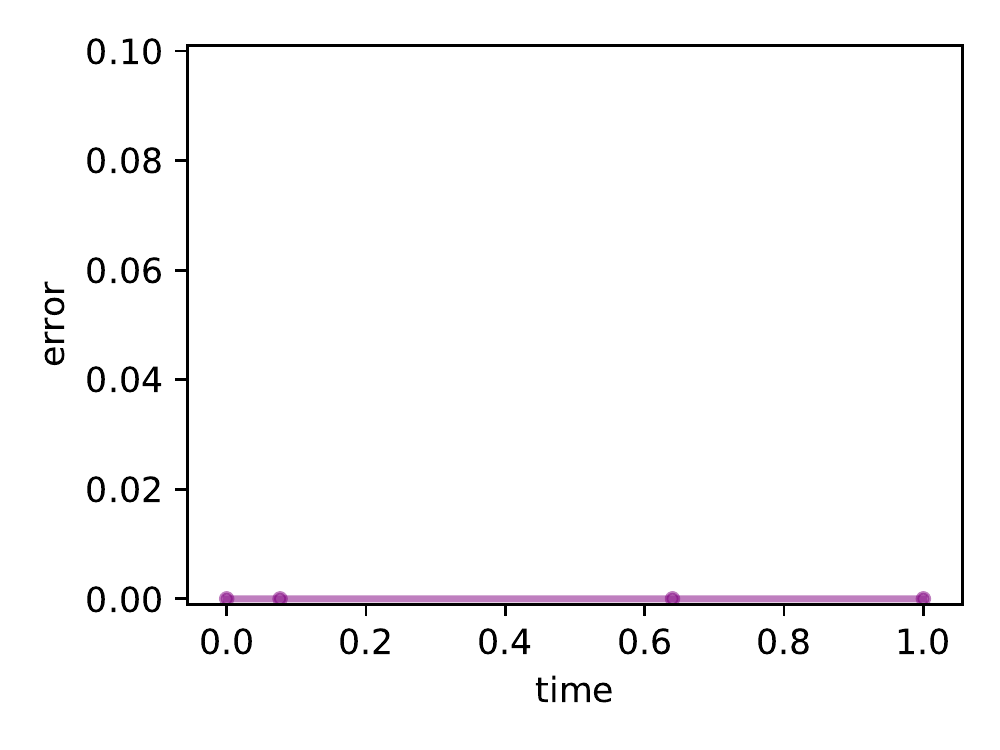}};
\node[anchor=north west] at (1,-4.7) {\includegraphics[width=0.4\textwidth]{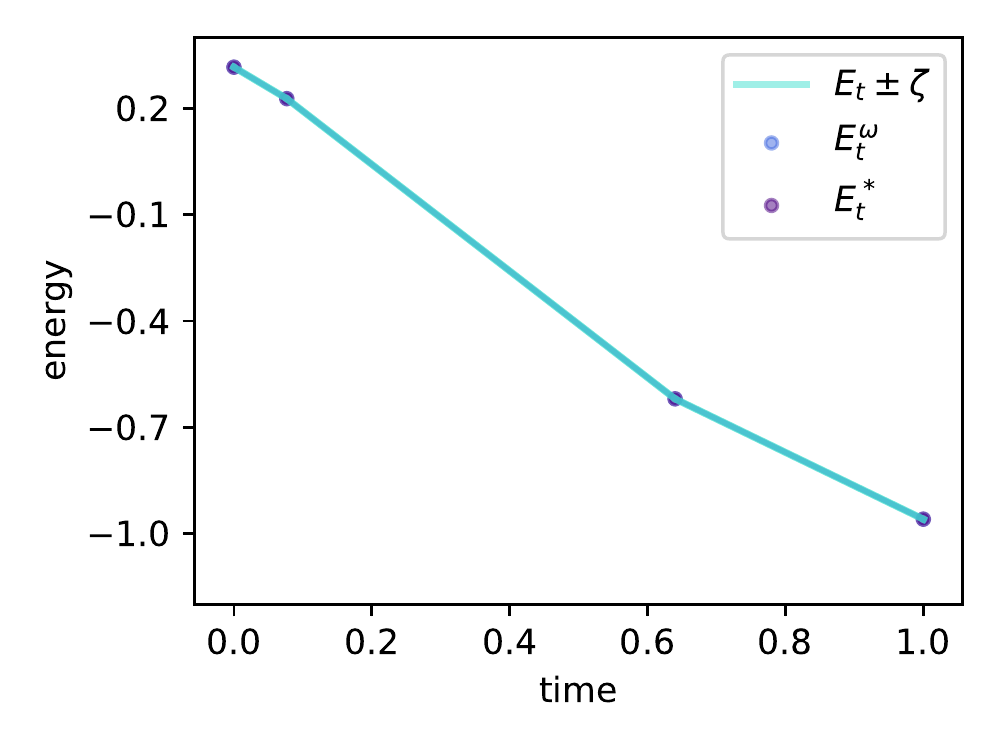}};
\node at (-3.2, 0) {(a) \small{State Error (RK54, $f_{\text{argmin}}$)}};
\node at (4.3, 0) {(b) \small{Fidelity (RK54, $f_{\text{argmin}}$)}};
\node at (-3.2, -4.7) {(c) \small{Grad Error (RK54, $f_{\text{argmin}}$)}};
\node at (4.3, -4.7) {(d) \small{Energy (RK54, $f_{\text{argmin}}$)}};
\end{tikzpicture}
        \captionsetup{singlelinecheck = false, format= hang, justification=centerlast, font=footnotesize, labelsep=space}
        \caption{VarQITE for $\ket{\psi_0}=\ket{++}$, $H_{\text{hydrogen}}$ and $T=1$.  All plots are based on RK54 and the argmin ODE. - (a) illustrates the error bound $\epsilon_{t}$ as well as the actual Bures metric. (b) shows the corresponding fidelity bound $\left(1+\epsilon^2/2\right)^2$ as well as the actual fidelity $|\langle\psi^{\omega}_t|\psi^*_t\rangle|^2$. Furthermore, (c) illustrates the evolution of the gradient errors $\|\ket{e_t}\|_2$.   (d) presents the system energy $E_t^{\omega}$ corresponding to the prepared state, the energy $E_t^*$ corresponding to the target state and the error bound to $E_t^{\omega}$, i.e., $E_t^{\omega}\pm \zeta\left(\bm{\omega_t}, \epsilon_t\right)$.}
            \label{fig:hydrogen}
\end{figure}

First, the outcomes using the standard ODE with forward Euler as well as RK54 are compared for $H_{\text{illustrative}}$.
The results shown in Fig.~\ref{fig:illustrative_qite} provide an example of the potentially insufficient numerical integration accuracy of forward Euler. More explicitly, although the gradient errors and, thus, also the error bounds are all $0$, the actual error between prepared and target state, i.e., $ B(\proj{\psi^*_T}, \proj{\psi^{\omega}_t})$, shown in purple, is positive. The application of RK54 in comparison reduces the error in the integration. Furthermore, the plotted energies $E_t^{\omega}$ and $E_t^*$ are very close and show a convergence behavior. Since the state error is sufficiently small and the energy variance $\Var(H)_{\psi^{\omega}_t} = 
\bra{\psi^{\omega}_t} H^2 \ket{\psi^{\omega}_t} - (E_t^{\omega})^2$ 
shown in Fig.~\ref{fig:illustrative_qite} (d), (h) converges to $0$, we can conclude that the VarQITE is run for a sufficiently long time and actually reached the ground state. 

Moreover, the performance using RK54 and the standard respectively argmin ODE formulation is compared on the example of $H_{\text{Ising}}$. Fig.~\ref{fig:qite_ising} illustrates the sensitivity of the error bounds for VarQITE with respect to the gradient errors $\|\ket{e_t}\|_2$. Since the gradient errors are rather large especially at the beginning of the evolution, the error bounds reach the maximum value of $\sqrt{2}$ throughout the evolution. At this point, they are not representative for the actual error $B\left(\proj{\psi^*_T}, \proj{\psi^{\omega}_t}\right)$ anymore and clipped to $\sqrt{2}$. Notably, for this example the argmin ODE only works marginally better than the standard ODE.

For VarQITE applied to $H_{\text{hydrogen}}$, we focus on RK54 and the argmin ODE formulation.
The results are visualized in Fig.~\ref{fig:hydrogen}. The SLE is solved with a least squares method from NumPy \cite{Numpy2020}.
It can be seen that VarQTE is able to represent the state exactly and that all gradient errors $\|\ket{e_t}\|_2$ are $0$. This is directly reflected by the respective error bound $\epsilon_t$ and fidelity bound $(1-\epsilon_t^2/2)^2$. 
Furthermore, the energy with respect to the prepared state $E_t^{\omega}$ is equal to $E_t^*$ at all times of the numerical ODE integration.

The presented experiments show that the error bounds are good approximations to the actual error if the gradient errors are small. In the case of larger gradient errors, the error bounds can grow fast and exceed the meaningful range.
Moreover, the results reveal that the numerical integration error introduced by forward Euler can easily exceed the error bounds, rendering them useless.

\begin{figure}[h!t]
\captionsetup{singlelinecheck = false, format= hang, justification=raggedright, font=footnotesize, labelsep=space}
\begin{center}
\begin{tikzpicture}
\node at (0, -1.5) {
\includegraphics[width=0.5\linewidth]{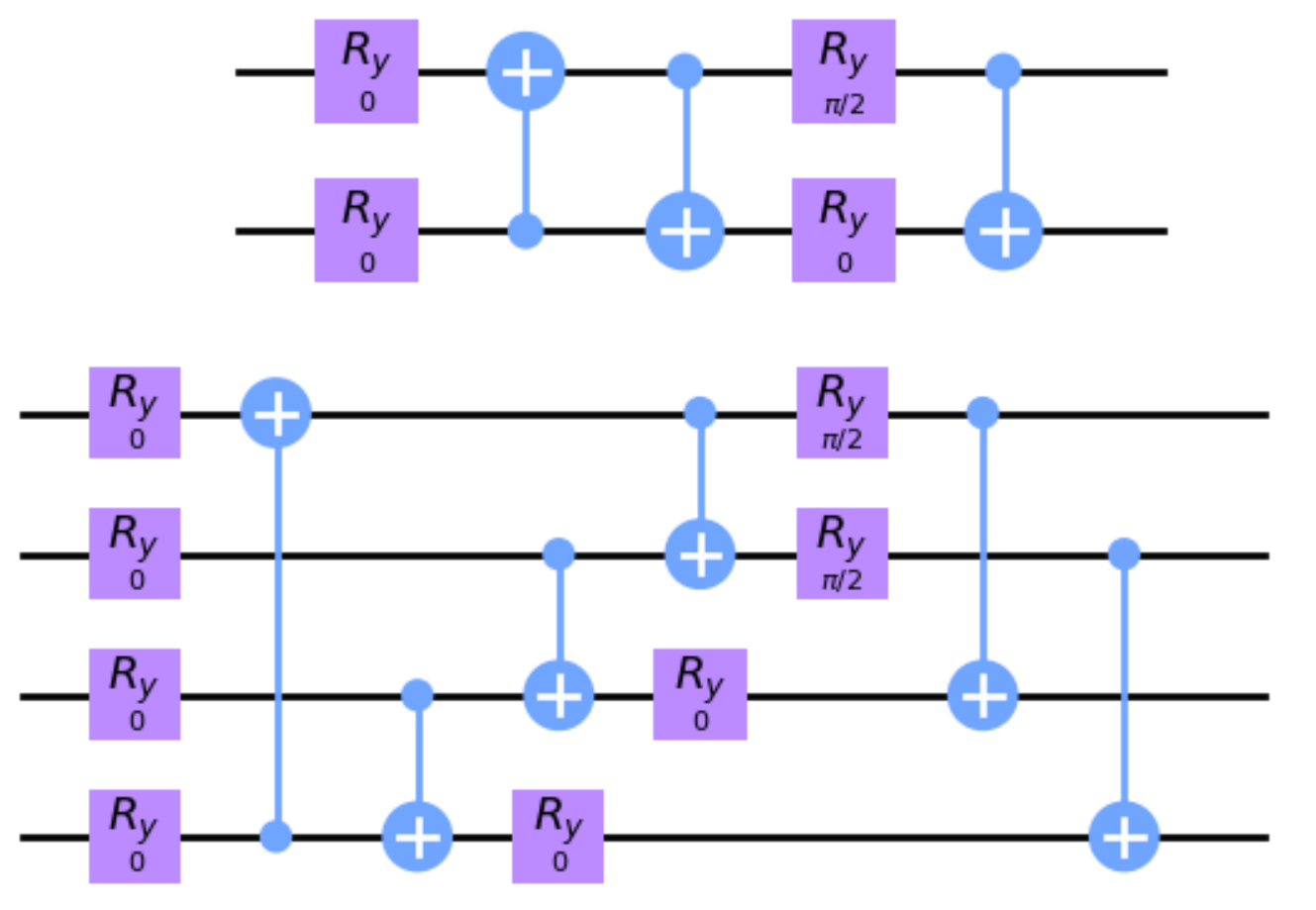}};
\node[rectangle,
    fill = white,
    minimum width = 2cm, 
    minimum height = 2cm] (r) at (-3.3, 0.5) {};
    \node at (-2.9, 1) {(a)};
\node at (-4.3, -.9) {(b)};
\node at (-6.5, -4.3) {(c)};
\draw[decorate, thick, decoration = {brace, mirror, amplitude=5pt}] (-2.5,0.6) --  (-2.5,-.26);
\draw[decorate, thick, decoration = {brace, mirror, amplitude=13pt}] (-3.8,-1.3) --  (-3.8,-3.6);
    \node at (0,-6) {\includegraphics[width=0.8\linewidth]{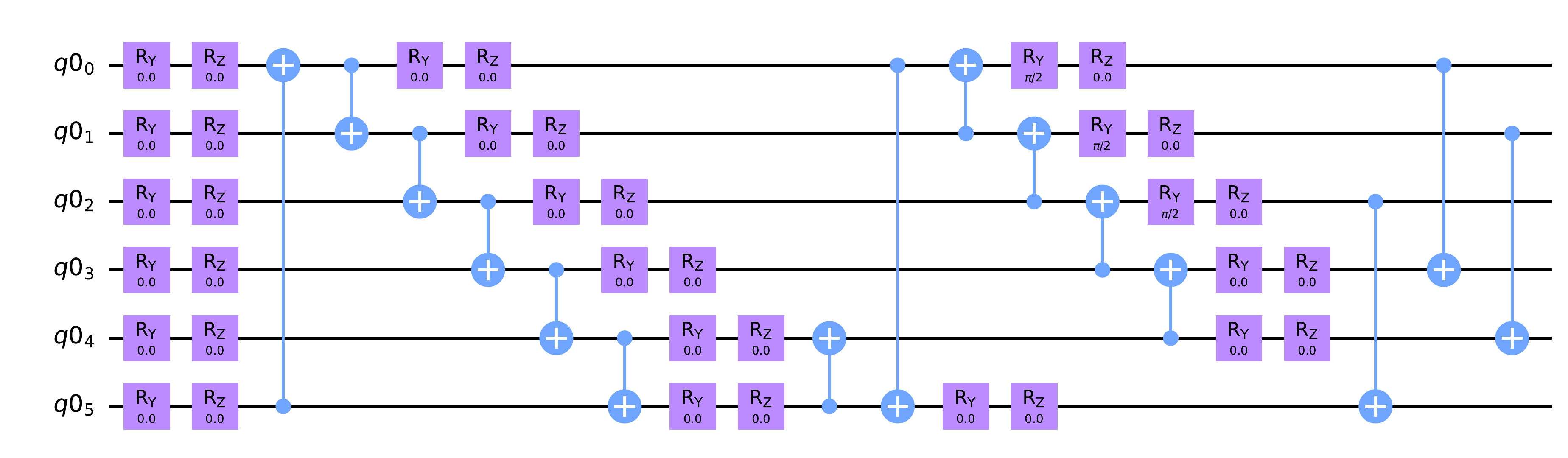}};
\node at (-3.5,.3) {$\ket{0}^{\otimes 2}$};
\node at (-5.,-2.4) {$\ket{0}^{\otimes 4}$};
    \draw[decorate, thick, decoration = {brace, mirror, amplitude=13pt}] (-6.2,-4.7) --  (-6.2,-7.4);
\node at (-7.4,-6) {$\ket{0}^{\otimes 6}$};
\end{tikzpicture}
\end{center}
\caption{The depicted circuits illustrate the initial trial state for the Gibbs state preparation with VarQITE corresponding to (a) $H_1 $, (b) $H_2$ and (c) $H_3$.
}
\label{fig:ansaetze}
\end{figure}

Furthermore, we demonstrate the feasibility of approximate Gibbs state preparation with VarQITE. 
First, we illustrate the convergence of the state fidelity with respect to the target state for the following two simple one- and two-qubit Hamiltonians
\begin{align}
H_1 &= Z, \\
H_2 &= Z\otimes Z - 0.2 \,Z\otimes I  - 0.2 \, I\otimes Z + 0.3 \, X\otimes I + 0.3 \, I\otimes X,
\end{align}
and temperature $k_BT=1$.
The results are computed using the parameterized quantum circuit shown in Fig.~\ref{fig:ansaetze} (a) and (b).
The algorithm is executed with the standard ODE, forward Euler and $10$ time steps on different backends: a statevector simulator -- which is based on matrix-vector multiplication -- and the IBM Quantum Johannesburg $20$-qubit backend \cite{ibmQX}.
In the latter case, readout error-mitigation \cite{qiskit, dewes2012readout} is used to obtain the final results run on real quantum hardware. Fig.~\ref{fig:VarQITE} (a) and (b) depicts the results considering the fidelity between the trained and the target Gibbs state for each time step.
It should be noted that the fidelity for the quantum backend evaluations employ state tomography.
The plots illustrate that the method approximates the states, we are interested in, reasonably well and that also the real quantum hardware achieves fidelity values over $0.99$ and $0.96$, respectively.

\begin{figure}[!htb]
\captionsetup{singlelinecheck = false, format= hang, justification=raggedright, font=footnotesize, labelsep=space}
\begin{center}
    \begin{tikzpicture}
\node at (0,0) {
\includegraphics[width=0.95\linewidth]{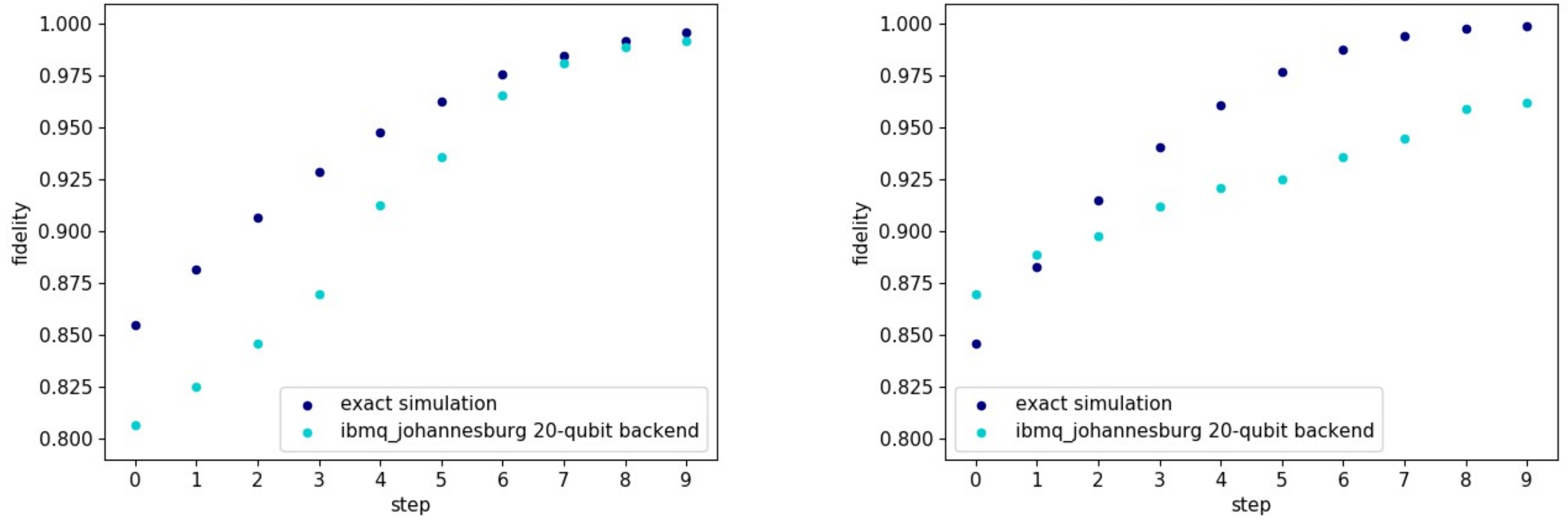}};
 \node at (0,3.5) {\textbf{Gibbs State Preparation - Fidelity}};
 \node at (0,-5.5) {\includegraphics[width=0.5
\linewidth]{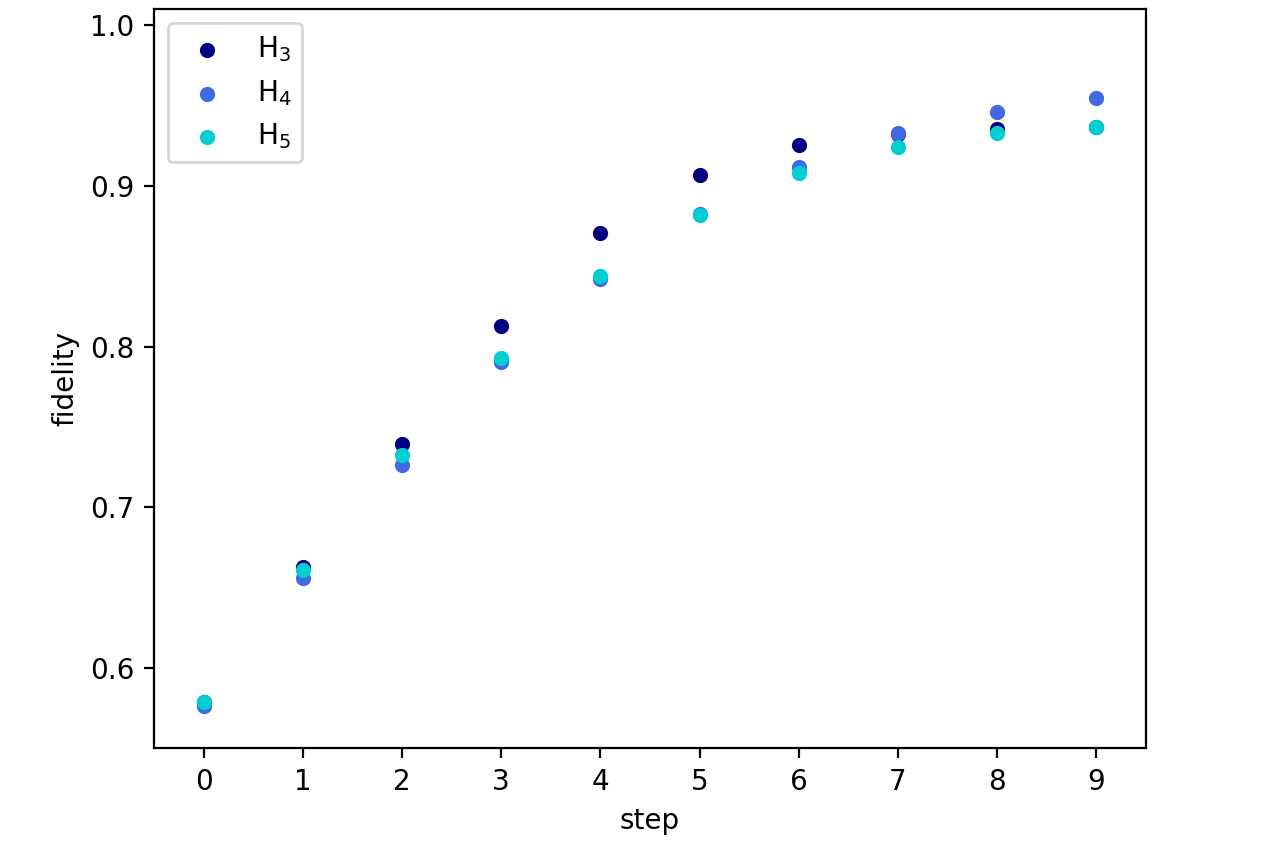}};
     \node at (-6., 2.7) {(a)};
\node at (1, 2.7) {(b)};
\node at (-3, -2.7) {(c)};
\end{tikzpicture}
\end{center}
\caption{The fidelity between approximate VarQITE and target Gibbs state is shown for (a) $H_1$ respectively (b) $H_2$ which are trained with a statevector simulator and real quantum hardware, i.e., the IBM Quantum Johannesburg $20$-qubit backend \cite{ibmQX}. The fidelity between the state that is prepared using VarQITE and the target Gibbs state preparation using exact statevector simulation is illustrated in (c) for $H_3$, $H_4$ and $H_5$. Each VarQITE simulation uses forward Euler with $10$ time steps.
}
\label{fig:VarQITE}
\end{figure}


Next, the scalability of VarQITE for approximate Gibbs state preparation is investigated. To that end, we conduct further experiments for Hamiltonians with $n>2$ qubits using only statevector simulation.
The states are prepared for the following Hamiltonians
\begin{align}
    H_3 &= 2 \, Z\otimes Z\otimes I + I\otimes Z\otimes Z - 0.5 \, I\otimes Z\otimes I, \\
    H_4 &= 2\, Z\otimes Z\otimes I\otimes I +  I\otimes Z\otimes Z\otimes I - 0.5 \, I\otimes Z\otimes I\otimes Z, \\
    H_5 &= 2 \, Z\otimes Z\otimes I\otimes I\otimes I +  I\otimes Z\otimes Z\otimes I\otimes I - 0.5 \, I\otimes Z\otimes I\otimes I\otimes Z,
\end{align}
again with temperature $k_BT=1$ and the standard ODE, forward Euler and $10$ time steps.
All Gibbs states are prepared with a shallow, depth $2$ ansatz that consists of parameterized $RY$ and $RZ$ gates as well as controlled $X$ gates $\left(CX\right)$ \cite{nielsen10}. Fig.~\ref{fig:ansaetze} (c) illustrates the respective quantum circuit for $H_3$, whereby the given parameters prepare the initial state. The states for $H_4$ and $H_5$ are prepared equivalently.
The effectiveness of the respective VarQITE Gibbs state approximation is confirmed with a plot, see Fig.~\ref{fig:VarQITE} (c), of the fidelities between the trained and the target Gibbs state throughout the simulated time evolution. It should be noted that all final states approximate the targets with fidelity bigger than $0.93$.


\addtocontents{toc}{}

\chapter{Generative Quantum Machine Learning}
\label{sec:applications} 

\textbf{Abstract.} 

 The goal of generative quantum machine learning is to learn the random distribution, underlying given training data by using quantum resources such as quantum channels or quantum operators. In many cases, the respective probability distribution is modelled by the sampling statistics of a parameterized quantum state which, in turn, may be employed as approximate but efficient state preparation scheme.
 In this section, we introduce the concept of generative quantum machine learning on the examples of quantum generative adversarial networks and quantum Boltzmann machines.
Furthermore, we present variational implementations of these generative QML algorithms and demonstrate their feasibility on illustrative examples which are either executed with quantum simulations or quantum hardware.
\index{generative quantum machine learning}

\vspace{8mm}

\noindent

Generative quantum machine learning aims at training a parameterized model such that it represents the probability distribution underlying given training data. Notably, this distribution characterizes the generation process of the corresponding data samples. There are different approaches to tackle this problem.

Generally, this QML application is compatible with generic classical or quantum data and parameterized models which comprise of classical as well as quantum computing resources.
In the following, we demonstrate the workflow of generative QML on the example of algorithms that consider classical data samples and employ parameterized quantum channels or operators to model the underlying generation process. To that end, the respective distribution is represented approximately by the sampling probabilities of a quantum state, see amplitude encoding in Def.~\ref{sec:amplitudeEncoding}. 

The first approach that we discuss employs a direct training of the parameters of a quantum channel, i.e., a variational quantum circuit. To begin with, the parameters of a quantum circuit are initialized. Then, the quantum circuit is measured and the respective measurement samples are mapped to a relevant feature space. The samples are, now, used to compute the parameter updates with respect to a given loss function. This cycle is repeated until a certain stopping criteria is sufficed. There exist different types of this approach, e.g., quantum Born machines \cite{ LiuDifferentiableLearning18, BornSupremacyCoyle2020, Hamilton_2019GenerativeModelBenchmarks, BenedettiGenModellin19} and quantum generative adversarial networks \cite{lloyd2qGANs18, killoran2018qgans, SITU2020193qGANs, Benedetti_2019PureStateApproxqGANs, Zeng_2019LearningInferenceonqGANs, huqGANs2019, Alcazar_2020qganFinance} which differ in their learning strategy.
The former employs direct fitting of the trained sampling probabilities to a given probability distribution using, e.g., the negative log-likelihood. The latter uses an indirect learning procedure which can be described as a two-player game between a generator and a discriminator. 
We would like to point out that for multi-modal distributions log-likelihood optimization tends to spread the mass of the learned distribution over all modes whereas GANs tend to focus the mass \cite{goodfellow, metz2017unrolled}.
GAN-based learning is, thus, explicitly suitable for capturing not only uni-modal but also multi-modal distributions.

The second approach that we consider trains the parameters of a quantum operator in the form of a Hamiltonian using the following steps: First, a suitable parameterized Hamiltonian is defined - see Hamiltonian encoding in Def.~\ref{sec:gibbs_related_work}. After choosing a set of initial parameters, the Hamiltonian is mapped onto a quantum state which in turn is measured. The resulting measurement statistics are used to update the operator parameters. The training goal is to optimize the operator parameters such that the sampling statistics reflect the probability distribution underlying the given training data. One prominent example, of this kind of algorithms is given by
quantum Boltzmann machines \cite{Khoshaman_2018QVAE, QBMAmin18, Anschtz2019RealizingQB, QBMWiebe17, Kappen18QBM}. These represent a structured generative QML approach, in the sense that the model Hamiltonian can be directly related to the conditional independence structure of the training data \cite{piatkowski2019exponential}.


This learning approach facilitates \emph{implicit} learning of the data generation process. Standard schemes, on the other hand, may require to first model an \emph{explicit} functional description of the respective distribution whose loading potentially requires expensive quantum arithmetic \cite{Vedral_1996QuantumArithmetic, Beckman_1996QuantumFactoring, Van_Meter_2005QuantumExponentiation, SvoreLookaheadAdder06, Ruiz_Perez_2017QuantumArithmetic} or exponentially many gates. 
Furthermore, we would like to point out that generative QML represents a flexible learning algorithm. More explicitly, it is compatible with online or incremental learning, i.e., the model can be updated if new training data samples become available and the model ansatz can be adapted with respect to the properties of the given training data.
Moreover, it is suitable for execution on current or near-term quantum hardware and may, thus, already be studied in practice.
Since generative QML leads to an approximate model of the probability distribution underlying given training data, it enables, e.g., structural analysis of the given data, the creation of new data samples which are in accordance with the given training data and the loading of an approximate but efficient quantum representation of the respective random distribution.
The loading use-case\footnote{Once the parameters are trained, the quantum state which approximate the probability distribution can be trivially generated as often as needed.} is important for noisy as well as fault-tolerant quantum hardware. In noisy systems, it is of utter importance to reduce the circuit depth for quantum data loading to limit the impact of hardware noise and approximate schemes can help to enable that. Furthermore, due to theoretical limitations, the exact loading of some classical data structures into quantum states cannot be accomplished efficiently but only approximately. In this context, generative quantum machine learning offers a promising avenue.


The remainder of this chapter presents a detailed discussion on quantum generative adversarial networks, see Sec.~\ref{sec:qgan}, and quantum Boltzmann machines, see Sec.~\ref{sec:QBM}. The feasibility of variational implementations is illustrated on examples which are either simulated numerically or executed on actual quantum hardware. Furthermore, Sec.~\ref{sec:EuropeanCallOptionPricingexample} shows how generative QML can be used for approximate quantum data loading and, thereby, enable the application of quantum algorithms to real-world problems, e.g., in the realm of finance. 

\section[Quantum Generative Adversarial Networks]{Quantum Generative Adversarial Networks\footnote{This section is reproduced in part, with permission, from C.~Zoufal, A.~Lucchi, S.~Woerner, "Quantum Generative Adversarial Networks for Learning and Loading Random Distributions", \textit{npj Quantum Information}, vol. 5, Article Nr. 103, 2019}}

\label{sec:qgan}

Quantum generative adversarial networks (qGANs) employ the interplay of a generator and discriminator to map an approximate representation of a probability distribution underlying given data samples into a quantum channel. 
This section particularly focuses on a qGAN implementation where the generator is given by a quantum channel, i.e., a variational quantum circuit, and the discriminator by a classical neural network, and discusses the application of efficient learning and loading of generic probability distributions -- implicitly given by data samples -- into quantum states.
The loading requires $\mathscr{O}\left(poly\left(n\right)\right)$ gates and can, thus, enable the use of potentially advantageous quantum algorithms, such as quantum amplitude estimation (QAE) \cite{brassardQAE02}.
The efficiency of variational qGANs is illustrated using numerical simulations as well as an actual quantum processor provided by the IBM Quantum Experience.

\subsection{Introduction}
\label{sec:qgan_intro}

In classical machine learning, generative adversarial networks (GANs) \cite{goodfellow, Kurach2018TheGL}  have proven useful for generative modeling. These algorithms employ two competing neural networks - a generator and a discriminator - which are trained alternately to learn a probability distribution that is implicitly given by data samples. 
The quantum counterpart, quantum generative adversarial networks -- originally suggested in \cite{lloyd2qGANs18, killoran2018qgans} -- either employ a generator, or a generator and a discriminator which are given as parameterized quantum channels. This form of generative QML is, hence, compatible with classical as well as quantum data. 

Firstly, qGANs may be used to train a quantum state that replicates the statistics of a given quantum state \cite{lloyd2qGANs18, huqGANs2019, killoran2018qgans, Benedetti_2019PureStateApproxqGANs, chakrabarti2019wassersteinqGAN}. To this end, a quantum generator generates quantum states and a quantum discriminator applies measurements to given quantum states. The training goal of the generator is to minimize the distance between the generated and the target quantum states, whereas the discriminator tries to discriminate correctly between these states.
Secondly, this type of generative QML can learn a quantum representation of a probability distribution underlying classical training data. In this context, the generator is given as a parameterized quantum channel and the discriminator as a classical neural network. The aim of the generator is to construct a quantum state whose measurement statistics resemble the statistics of the given training data and the discriminator aims to correctly classify between generated and training data samples.
In \cite{romero}, this setting is used to load continuous probability distributions and generate samples from it. Similarly, \cite{Zeng_2019LearningInferenceonqGANs} employs this approach to model the probability distribution of discrete classical data and generates corresponding samples.
The work presented in \cite{Zoufal2019qGANs} shows that this method may also facilitates efficient learning and loading of probability distributions into quantum states.
The remainder of this section focuses on qGANs that consist of quantum generators and classical discriminators and, more specifically, on their application to probability distribution learning and loading.

The structure is given as follows.
First, we introduce the classical GAN concept in Sec.~\ref{sec:GANs}.
Then, we discuss the idea behind the quantum counterpart, qGANs, outline the implementation, explain the respective application for probability distribution learning and loading in Sec.~\ref{sec:quantumGAN}.
Next, Sec.~\ref{sec:qganApplications} presents different test cases for qGAN-based approximate quantum data loading results obtained with a quantum simulator and the IBM Quantum Boeblingen superconducting quantum computer with 20 qubits, both accessible via the IBM Quantum Experience \cite{ibmQX} 
Finally, the conclusions and a discussion on open questions are presented in Sec.~\ref{sec:qganconclusion_Outlook}.

\subsection{Generative Adversarial Networks}
\label{sec:GANs}

The generative models considered in this work, GANs \cite{goodfellow, Kurach2018TheGL}, employ two neural networks - a generator and a discriminator - to learn random distributions that are implicitly given by training data samples.
Originally, GANs were used in the context of image generation and modification. 
Other generative models, such as variational auto encoders \cite{Kingma2013AutoEncodingVB, Burda2015ImportanceWA},  rely on log-likelihood optimization and are, therefore, prone to generating blurry images. In contrast, GANs manage to generate sharp images and are consequently popular in the machine learning community \cite{Dumoulin2017AdversLearnedInference}.

Suppose a classical training data set $X=\set{\bm{x}_0, \ldots, \bm{x}_{s-1}} \subset \mathbb{R}^{d_{\text{out}}}$ sampled from a potentially unknown probability distribution $p_{\text{real}}$. Let $G^{\omega}: \mathbb{R}^{d_{\text{in}}} \rightarrow \mathbb{R}^{d_{\text{out}}}$ and $D^{\phi}:\mathbb{R}^{d_{\text{out}}} \rightarrow \{0, 1\}$ denote the generator and the discriminator networks, respectively. The corresponding network parameters are given by $\bm{\omega} \in \mathbb{R}^{k_{\text{g}}}$ and $\bm{\phi} \in \mathbb{R}^{k_{\text{d}}}$. 
The generator $G^{\omega}$ aims to translate samples from a fixed prior distribution $p_{\text{prior}}$ in $\mathbb{R}^{d_{\text{in}}}$ into samples which are indistinguishable from samples of the real distribution $p_{\text{real}}$ in $\mathbb{R}^{d_{\text{out}}}$. 
The discriminator $D^{\phi}$, on the other hand, tries to distinguish between data from the generator and from the training set.
The training process is illustrated in Fig.~\ref{fig:gan}.	
\begin{figure}
\begin{tikzpicture}
\node at(0,3.5){\textbf{Generative Adversarial Network}};
\node at(0,0){
\includegraphics[width=0.95\linewidth]{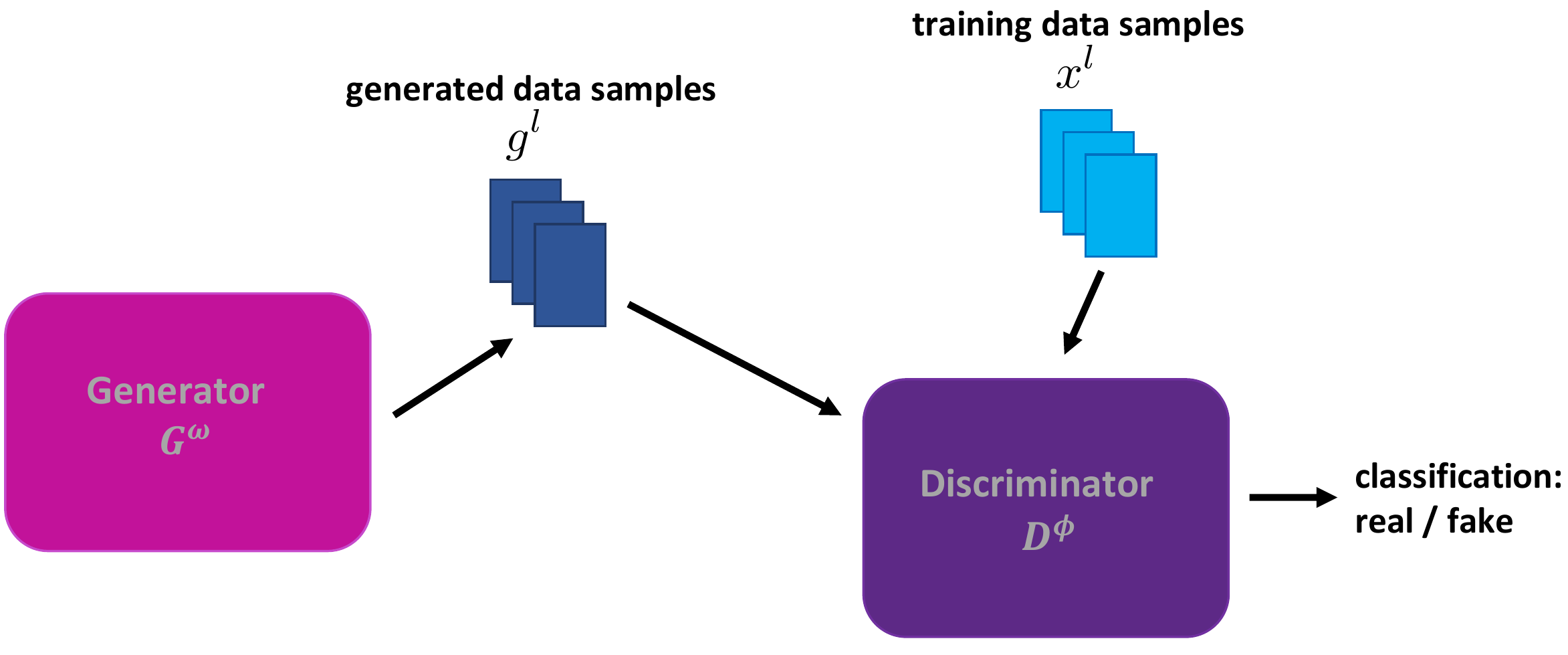}};
\end{tikzpicture}
\caption{This figure illustrates the setting employed during a GAN training. First, the generator creates data samples. Second, the discriminator tries to differentiate between the generated samples and the training samples. The generator and discriminator are trained alternately. \label{fig:gan}}

\end{figure}

The optimization objective of classical GANs may be defined in various ways. In this work, we consider the non-saturating loss \cite{fedus2018many} which is also used in the original GAN work \cite{goodfellow}. The generator's loss function

\begin{equation}
\begin{split}
	L_G\left(\bm{\phi},\bm{\omega}\right) = \thinspace -\mathbb{E}_{z\sim p_{\text{prior}}}\left[\log\left(D^{\phi}\left(G^{\omega}\left(z\right)\right)\right)\right]
	\end{split},
\end{equation}
aims at maximizing the likelihood that the generator creates samples which are labeled as real data samples.
On the other hand, the discriminator's loss function
\begin{equation}
\begin{split}
	L_D\left(\bm{\phi},\bm{\omega}\right) = \mathbb{E}_{x\sim p_{\text{real}}}\left[\log D^{\phi}\left(x\right) \right] +
	\mathbb{E}_{z\sim p_{\text{prior}}}\left[\log\left(1-D^{\phi}\left(G^{\omega}\left(z\right)\right)\right)\right],
	\end{split}
\end{equation}

aims at maximizing the likelihood that the discriminator labels training data samples as training data samples and generated data samples as generated data samples.
In practice, the expected values are approximated with data batches of size $m$ \footnote{Notably, the respective training often \emph{clips} the values in the logarithm to ensure that the loss function is well-defined, i.e., $\log{\left(\max(x, \epsilon)\right)}$ for a small $\epsilon>0$.}, i.e.,
\begin{equation}
\begin{split}
	L_G\left(\bm{\phi},\bm{\omega}\right) = -\frac{1}{m}\sum\limits_{j=1}^{m}\left[\log\left(D^{\phi}\left(G^{\omega}\left(\bm{z_j}\right)\right)\right)\right],
	\end{split}
\end{equation}and
\begin{equation}
	L_D\left(\bm{\phi},\bm{\omega}\right) = \frac{1}{m}\sum\limits_{j=1}^{m}\left[\log D^{\phi}\left(\bm{x}_j\right)  + \right.
	\left.\log\left(1-D^{\phi}\left(G^{\omega}\left(\bm{z_j}\right)\right)\right)\right],
\end{equation}
for $\bm{x}_j \in X$ and $\bm{z_j} \sim p_{\text{prior}}$.
Training the GAN is equivalent to searching for a Nash-equilibrium of a two-player game:
\begin{align}
 \label{eq:minmaxGenerator}
\underset{\bm{\omega}}{\max}&\: L_G\left(\bm{\phi},\bm{\omega}\right) \\
\label{eq:minmax}
 \underset{\bm{\phi}}{\max}&\: L_D\left(\bm{\phi},\bm{\omega}\right).
\end{align}

Typically, the optimization of Eq.~\eqref{eq:minmaxGenerator} and Eq.~\eqref{eq:minmax} employs alternating update steps for the generator and the discriminator. These alternating steps lead to non-stationary objective functions, i.e., an update of the generator's (discriminator's) network parameters also changes the discriminator's (generator's) loss function.
Common choices to perform the update steps are adaptive learning-rate, gradient-based optimizers, such as ADAM \cite{Kingmaadam14} and AMSGRAD \cite{amsgrad}, that are well suited for solving non-stationary objective functions \cite{Kingmaadam14}.

\subsection{Quantum Algorithm}
\label{sec:quantumGAN}

Next, we discuss how a qGAN implementation that uses a \textbf{quantum} generator and a \textbf{classical} discriminator can be employed to capture the probability distribution of \textbf{classical} training samples.
In this setting, a parameterized quantum channel, i.e., the quantum generator, is trained to transform a given $n$-qubit input state $\ket{\psi_{\text{in}}}$ to an $n$-qubit output state
\begin{equation}
G^{\omega}\ket{\psi_{\text{in}}} = \ket{g^{\omega}} = \sum\limits_{b=0}^{2^n-1}\sqrt{p_b(\bm{\omega})}\ket{b},
\end{equation}
where $p_b(\bm{\omega})$ describes the occurrence probability of a basis state $\ket{b}$. 
This parameterized quantum channel may, e.g., be chosen as parameterized quantum circuit\footnote{Parameterized quantum circuits have the potential to act as universal function approximators \cite{Schuld_2019QMLHilbertSpace}.}, see Sec.~\ref{sec:model}, or as a quantum Boltzmann machine, see Sec.~\ref{sec:qbm}.
For simplicity, we assume that the domain of $X$ is $\{0, \ldots, 2^n-1\}$ because the states that are represented by the generator can, then, naturally be mapped to the sample space of the training data.
This assumption may be easily relaxed, for instance, by introducing an affine mapping between $\{0, \dots, 2^n-1\}$ and an equidistant grid suitable for $X$.
In this case, it might be necessary to map points in $X$ to the closest grid point to allow for an efficient training.
The number of qubits $n$ determines the resolution of the qGAN output -- the number of discrete values $2^n$ that can be represented.
During the training, this affine mapping can be applied classically after measuring the quantum state. However, when the resulting quantum channel is used within another quantum algorithm the mapping must be executed as part of the quantum circuit.
Such an affine mapping can be implemented with a gate-based quantum circuit using linearly many gates \cite{worQuantumRiskAnalysis19}. 
Furthermore, this generative QML algorithm can be easily extended to $d$-dimensional random distributions by choosing $d$ qubit registers with $n_i$ qubits each, for $i = 1, \ldots, d$.

The classical discriminator, a neural network consisting of several linear layers with non-linear activation functions layers, processes the data samples and labels them either as being real  or generated. Notably, the topology of the networks, i.e., number of nodes and layers, needs to be carefully chosen to ensure that the discriminator does not overpower the generator and vice versa.

To train the qGAN, we need to generate classical data samples from the quantum generator. This is achieved by measuring the output state $\ket{g^{\omega}}$ in a particular basis, e.g., the computational basis given as $\ket{b},\:b \in \set{0, \ldots, 2^n-1}$. 
Unlike in the classical case, the sampling does not require a stochastic input but is based on the inherent stochasticity of quantum measurements. 
In the following, it is assumed that the measurement frequency of $\ket{b}$ represents the sampling probability $p_b(\bm{\omega})$.

Given $m$ data samples $\bm{g_j}$  from the quantum generator which are sampled with probability $p_j(\bm{\omega})$ and $m$ randomly chosen training data samples $\bm{x}_j$ for $j\in\set{0, \ldots, m}$, the loss functions of the qGAN are
\begin{align}
\label{eq:lossqGanG}
	 L_G\left(\bm{\phi}, \bm{\omega}\right) = -\sum\limits_{j=1}^{m}p_j(\bm{\omega})\log\left( D^{\phi}\left(\bm{g}_{j}\right) \right),
\end{align}
for the generator, and
\begin{align}
\label{eq:lossqGanD}
	 L_D\left(\bm{\phi}, \bm{\omega}\right) = \sum\limits_{j=1}^{m}\frac{1}{m}\log\left( D^{\phi}\left(\bm{x}_j\right)\right)  +  
	 p_j(\bm{\omega})\log\left(1-D^{\phi}\left(\bm{g}_{j}\right)\right),
\end{align}
for the discriminator, respectively.
The corresponding gradients
\begin{align}
\label{eq:gradqGanG}
	 \nabla_{\bm{\omega}} L_G\left(\bm{\phi}, \bm{\omega}\right) = -\sum\limits_{j=1}^{m}	 \nabla_{\bm{\omega}}p_j(\bm{\omega})\log \left(D^{\phi}\left(\bm{g}_{j}\right) \right),
\end{align}
and
\begin{align}
\label{eq:gradqGanD}
	 	 \nabla_{\bm{\phi}} L_D\left(\bm{\phi}, \bm{\omega}\right) =\sum\limits_{j=1}^{m}\nabla_{\bm{\phi}}\Big[\frac{1}{m}\log\left( D^{\phi}\left(\bm{x}_j\right)\right)  +  
	p_j(\bm{\omega})\log\left(1-D^{\phi}\left(\bm{g}_{j}\right)\right)\Big],
\end{align}
may be evaluated using the analytic quantum gradient techniques described in Sec.~\ref{sec:analytic_gradients}.
As in the classical case, see Eq.~(\ref{eq:minmaxGenerator}) and (\ref{eq:minmax}), the loss functions are optimized alternately with respect to the generator's parameters $\bm{\omega}$ and the discriminator's parameters $\bm{\phi}$.

We would like to point out that a carefully chosen input state $\ket{\psi_{\text{in}}}$ can help to reduce the complexity of the quantum generator and the number of training epochs as well as avoid local optima in the quantum circuit training.
Since the preparation of $\ket{\psi_{\text{in}}}$ should not dominate the overall gate complexity, the input state must be efficiently loadable.
This is feasible, e.g., for efficiently integrable probability distributions, such as log-concave distributions \cite{groverSuperpositionseffintegrablepdfs02}.
In practice, statistical analysis of the training data can guide the choice for a suitable $\ket{\psi_{\text{in}}}$ from the family of efficiently loadable distributions, e.g., by matching expected value and variance.
Later in this section, we present numerical experiments that analyze the impact of $\ket{\psi_{\text{in}}}$.

\subsection{Illustrative Examples}
\label{sec:qganApplications}

Next, we present the results of a broad simulation study on training qGANs with different settings for different target distributions. This study is going to put a focus on the application to approximate quantum data loading for generic probability distributions.
The training data for the following examples are given by $20000$ samples of
\begin{enumerate}
    \item a log-normal distribution with $\mu = 1$ and $\sigma = 1$,
    \item a triangular distribution with lower limit $l=0$, upper limit $u=7$ and mode $m=2$,
    \item and a bimodal distribution consisting of two superimposed Gaussian distributions with $\mu_1=0.5$, $\sigma_1=1$ and $\mu_2=3.5$, $\sigma_2=0.5$.
\end{enumerate}
All distributions are truncated to $\left[0, 7\right]$ and the samples were rounded to integer values.

We consider a quantum generator acting on $n=3$ qubits, which can represent $2^3=8$ values such as $(0, 1, \ldots, 7)$.
The quantum generator, shown in Fig.~\ref{fig:varForm}, is implemented with a parameterized quantum circuit that consists of parameterized single-qubit rotations and blocks of two-qubit gates.
More specifically, the circuit consists of a first layer of $RY$, and then $l$ alternating repetitions of $CZ$ gates and further layers of $RY$ gates.
Similarly to increasing the number of layers in deep neural networks \cite{Goodfellow2016_DeepL}, increasing the depth $l$ enables the circuit to represent more complex structures and increases the number of parameters.\footnote{As discussed in Sec.~\ref{sec:vanishing_grads}, an increase in the quantum circuit depth may also lead to training issues due to the barren plateau phenomenon.} We test quantum generators with depths $l \in \set{1, 2, 3}$.
\begin{figure}
   \centering{
   \begin{tikzpicture}
    \node at (0,0) {\includegraphics[width=0.6\linewidth]{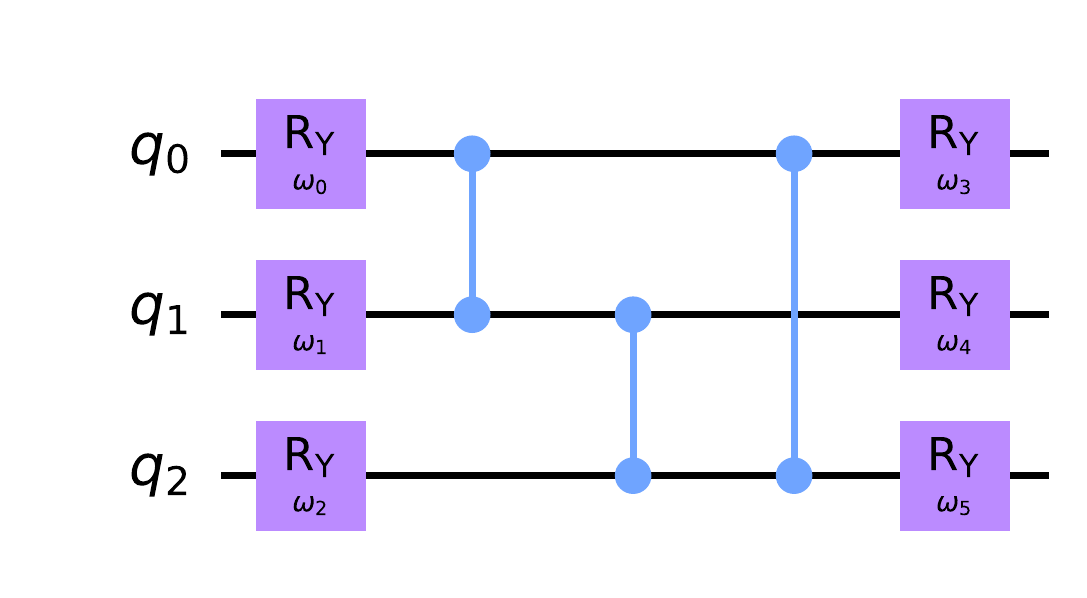}};
\node at (-6,-0) {$\ket{\psi_{\textnormal{in}}}$};
\draw[decorate, thick, decoration = {brace, amplitude=15pt}] (-4.5,-2) --  (-4.5,2);
\node[rectangle, thick, dashed,
    draw = black,
    minimum width = 6.5cm, 
    minimum height = 5cm] (r) at (1,0) {};
\draw[decorate, thick, decoration = {brace, mirror, amplitude=15pt}] (4.5,-2) --  (4.5,2);
\node at (6,-0) {$\ket{g^{\omega}}$};
\node at (1, -3) {$l$ times};
\end{tikzpicture}
\caption{The variational quantum circuit, depicted acts on $n=3$ qubits acts as quantum generator. It is composed of $l+1$ layers of single-qubit $RY$ gates and $l$ entangling blocks. Each entangling block applies $CZ$ gates from qubit $i$ to qubit $\left(i+1\right) \mod\: 3,\:i \in \set{0, \ldots, 2}$ to create entanglement between the different qubits.
 \label{fig:varForm}
 }}
\end{figure}

Furthermore, the initial state of the generator $\ket{g^{\omega}}$ is prepared according to a discrete uniform distribution, a truncated and discretized normal distribution with $\mu$ and $\sigma$ being empirical estimates of mean and standard deviation of the training data samples as well as a randomly chosen initial distribution.
To prepare a uniform distribution on $3$ qubits, we set the generator's input state $\ket{\psi_{\text{in}}}$ to $\ket{+}^{\otimes 3}$ with $\ket{+} = \frac{\ket{0}+\ket{1}}{\sqrt{2}}$ and the initial parameters $\bm{\omega}$ close to $0$, i.e., we draw the parameter values from a uniform distribution on $[-\delta, +\delta]$, for $\delta = 10^{-1}$. This leads to a state which is close to a uniform distribution but slightly perturbed.
Adding small random perturbations helps to break symmetries and can, thus,  improve the training performance \cite{LEHTOKANGAS1998265, Thimm_percept97, Chen2019}.
Loading a normal distribution can be conducted efficiently  \cite{groverSuperpositionseffintegrablepdfs02} but may require the use of involved quantum arithmetic techniques. 
Instead, we fit the parameters of a shallow quantum circuit to represent a normal distribution, use this state as $\ket{\psi_{\text{in}}}$ and, then, choose the initial parameters $\bm{\omega}$ again by drawing from a uniform distribution on $[-\delta, +\delta]$, for $\delta = 10^{-1}$. This results in a state $\ket{g^{\omega}}$ which represents a slightly perturbed normal distribution. 
The fitting employs a $3$-qubit variational quantum circuit with depth $1$depicted in in Fig.~\ref{fig:normal_init} and is based on a least squares loss function.
More specifically, we minimize the distance between the measurement probabilities $p_i\left(\bm{\omega}\right)$ of the circuit output and the probability density function (PDF) of a discretized normal distribution $q_i$ as
\begin{equation}
	\min_{\bm{\omega}} \sum\limits_i \left\|p_i\left(\bm{\omega}\right) - q_i \right\|^2.
\end{equation}
The trained circuit parameters for the log-normal distribution are,
\begin{equation*}
	\bm{\omega}_\text{{ln}}= \left[0.3580, 1.0903, 1.5255, 1.3651, 1.4932, -0.9092\right], 
\end{equation*}
for the triangular distribution read,
\begin{equation*}
\begin{split}
	\bm{\omega}_\text{{tr}} = \left[1.5343, 1.6183, 0.8559, -0.4041, 0.4953, 1.2238\right]
	\end{split}
\end{equation*}
 and for the bimodal distribution correspond to,
 \begin{equation*}
 	\bm{\omega}_\text{{bm}} = \left[0.4683, 0.8200, 1.4512, 1.1875, 1.3883, -0.8418\right].
 \end{equation*}
\begin{figure}[h!]
\captionsetup{singlelinecheck = false, format= hang, justification=raggedright, font=footnotesize, labelsep=space}
\begin{center}
\begin{tikzpicture}
   \node at (-5.5,2) {$\ket{0}^{\otimes 3}$};
\draw[decorate, thick, decoration = {brace, amplitude=15pt}] (-4,0.2) --  (-4,3.7);
\node at (0,2){\includegraphics[width=0.5\textwidth]{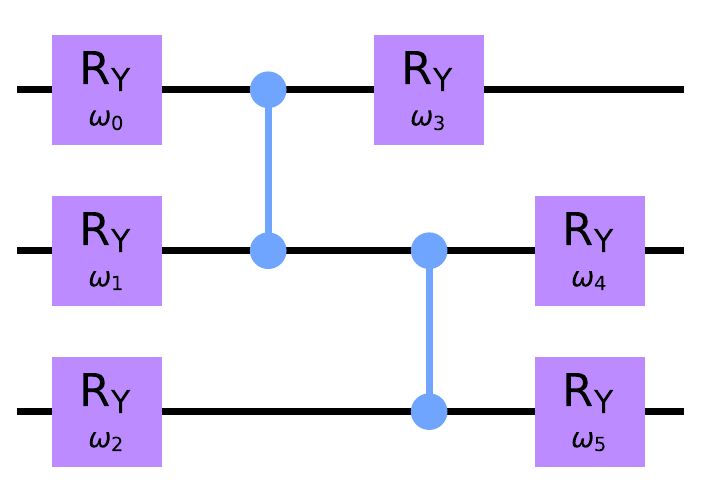}};
\end{tikzpicture}
\end{center}
\caption{This variational quantum circuit is used for approximate quantum loading of a normal distribution. To reduce the number of gates required for initialization with a normal state, the parameters of the illustrated circuit are fitted to load an approximate discretized normal distribution into a quantum state.}
\label{fig:normal_init}
\end{figure}
To create a randomly chosen distribution, we set $\ket{\psi_{\text{in}}}=\textstyle{\ket{0}^{\otimes 3}}$ and initialize the parameters of the variational quantum circuit following a uniform distribution on $[-\pi, \pi]$.
From now on, we refer to these three cases as \emph{uniform}, \emph{normal}, and \emph{random} initialization.
Since, we chose a variational quantum circuit with $RY$ and $CZ$ gates the circuit does not have any effect on the state amplitudes of the $\ket{\psi_{\text{in}}}$ but only flips the phases for $\omega_{j} = 0, \:\forall j$. This is particularly interesting if the initial state can be expected to be close to the target state, e.g., because it is initialized with respect to the first and second momentum of the training data. In other words, this feature is important to exploit the potential training advantage of informed initial state choices.

In the following examples, the discriminator is implemented with PyTorch \cite{pytorch}.
The neural network consists of a $3$-node input layer, a $50$-node hidden-layer layer, a $20$-node hidden-layer and a single-node output layer. 
First, the input and the hidden layer apply linear transformations followed by Leaky ReLU functions \cite{goodfellow, Pedamonti_2018_Non-linear_Activation_Functions_NN, HeRectifiers15}.
Then, the output layer implements another linear transformation and applies a sigmoid function.
The network should neither be too weak nor too powerful to ensure that neither the generator nor the discriminator overpowers the other network during the training.
The choice for the discriminator topology is based on empirical tests.

The qGAN is trained for $2000$ iterations using AMSGRAD \cite{amsgrad} with the initial learning rate being $10^{-4}$.
Due to the utilization of first and second momentum terms, this is a robust optimization technique for non-stationary objective functions as well as for noisy gradients \cite{Kingmaadam14} which makes it particularly suitable for evaluations with real quantum hardware.
The training stability is improved further by applying a gradient penalty on the discriminator's loss function \cite{Kodali2017OnCA, Roth2017StabilizingTO}.
In each training epoch, the training data is shuffled and split into batches of size $2000$.
The generated data samples are created by preparing and measuring the quantum generator $2000$ times.
Then, the batches are used to update the parameters of the discriminator and the generator in an alternating fashion. After the updates are completed for all batches, a new epoch starts.

Fig.~\ref{fig:benchmark} illustrates the trained and the target probability distributions as well as the loss function throughout the training for 
\begin{enumerate}
    \item the log-normal distribution with normal initialization and a depth $2$ variational circuit acting as generator,
    \item the triangular distribution with random initialization and a depth $2$ quantum generator,
    \item and the bimodal distribution with uniform initialization and a depth $3$ ansatz.
\end{enumerate}
The plots show that the investigated qGAN trainings lead to generator statistics which are close to a discretized form of the target distribution and, thus, perform well. 
Furthermore, the plots reveal that the loss functions not always converge to the same value but still result in good models.

\begin{figure}[!h]
    \centering{
    \begin{tikzpicture}
\node at (1, 0.5) {\textbf{qGAN Benchmarking}};
\node[inner sep=0pt, anchor=north west] at (-5.5, -0.12) {\includegraphics[width=0.35\textwidth]{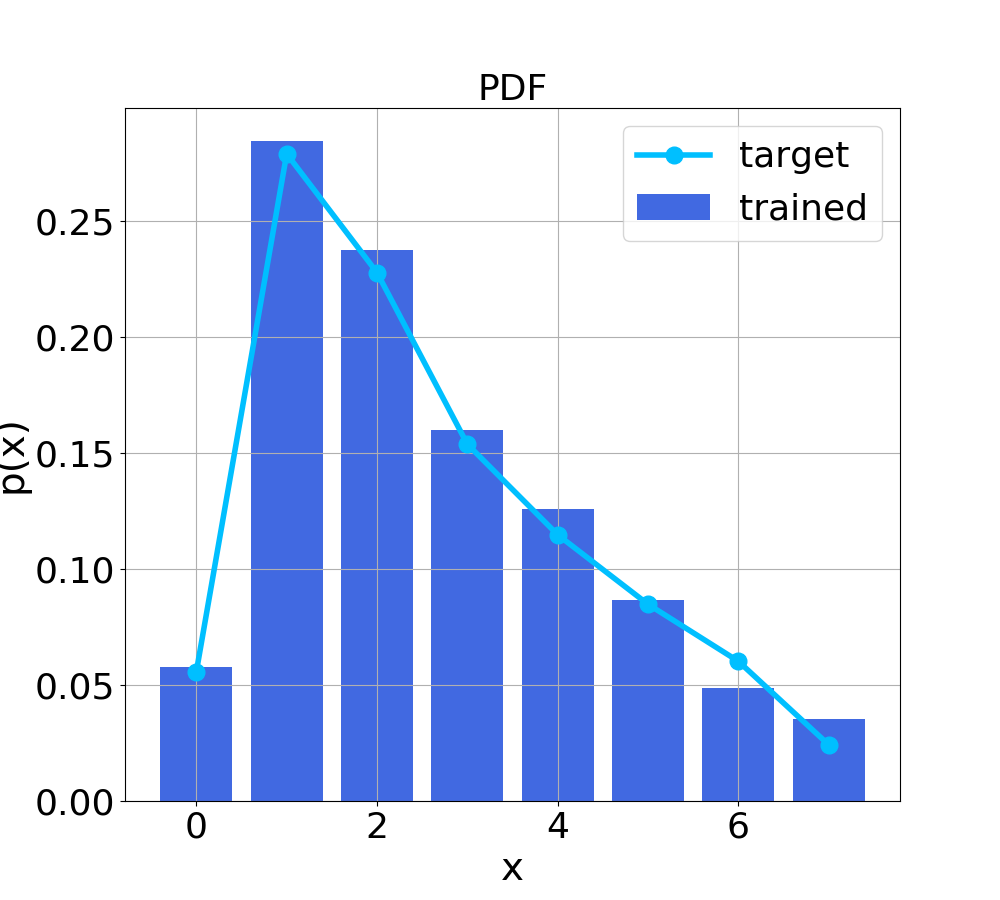}};
\node[anchor=north west] at (1,0) {    \includegraphics[width=0.417\textwidth]{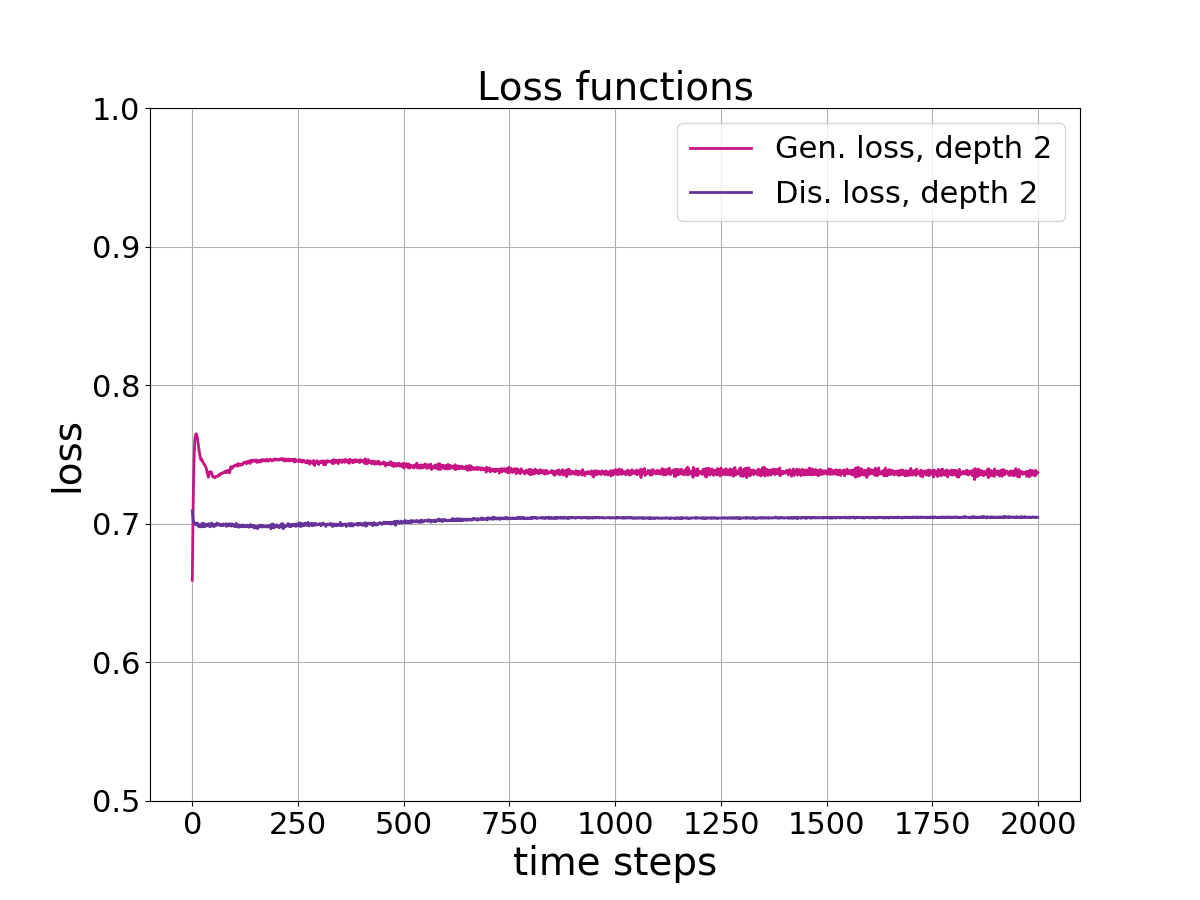}};
\node[anchor=north west] at (-5.5, -5) {\includegraphics[width=0.35\textwidth]{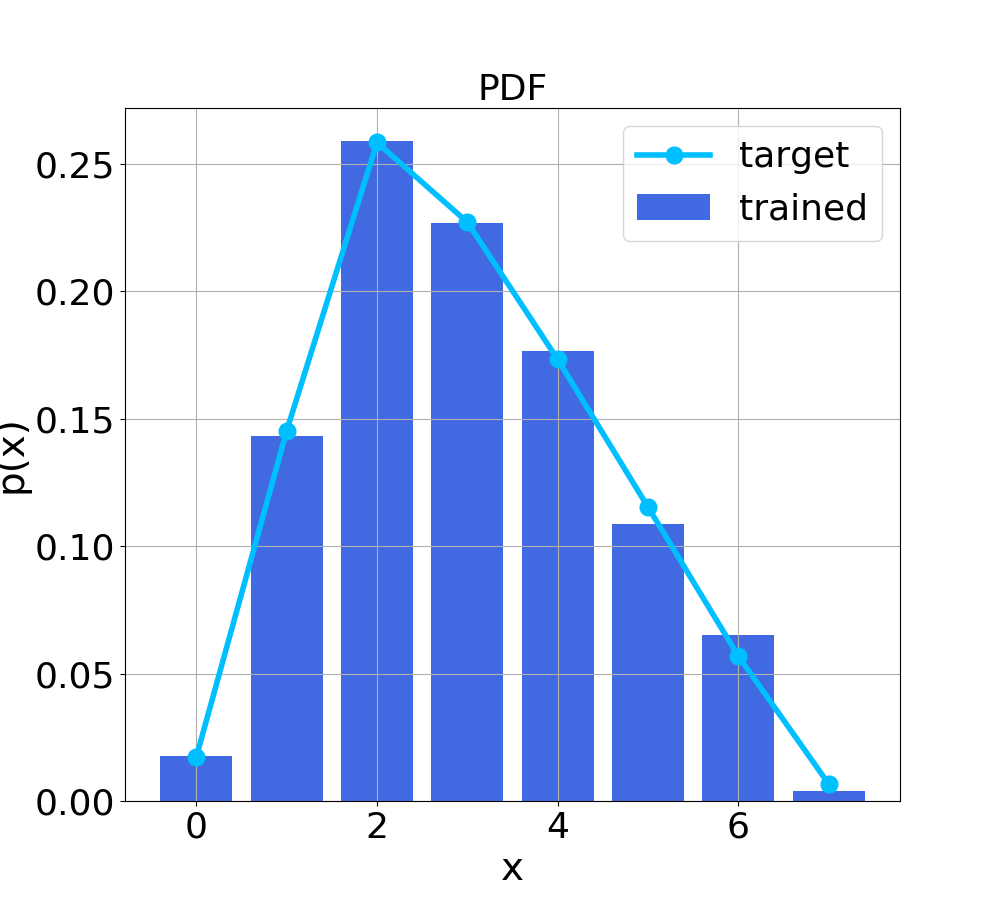}};
\node[anchor=north west] at (1, -5.03) {\includegraphics[width=0.417\textwidth]{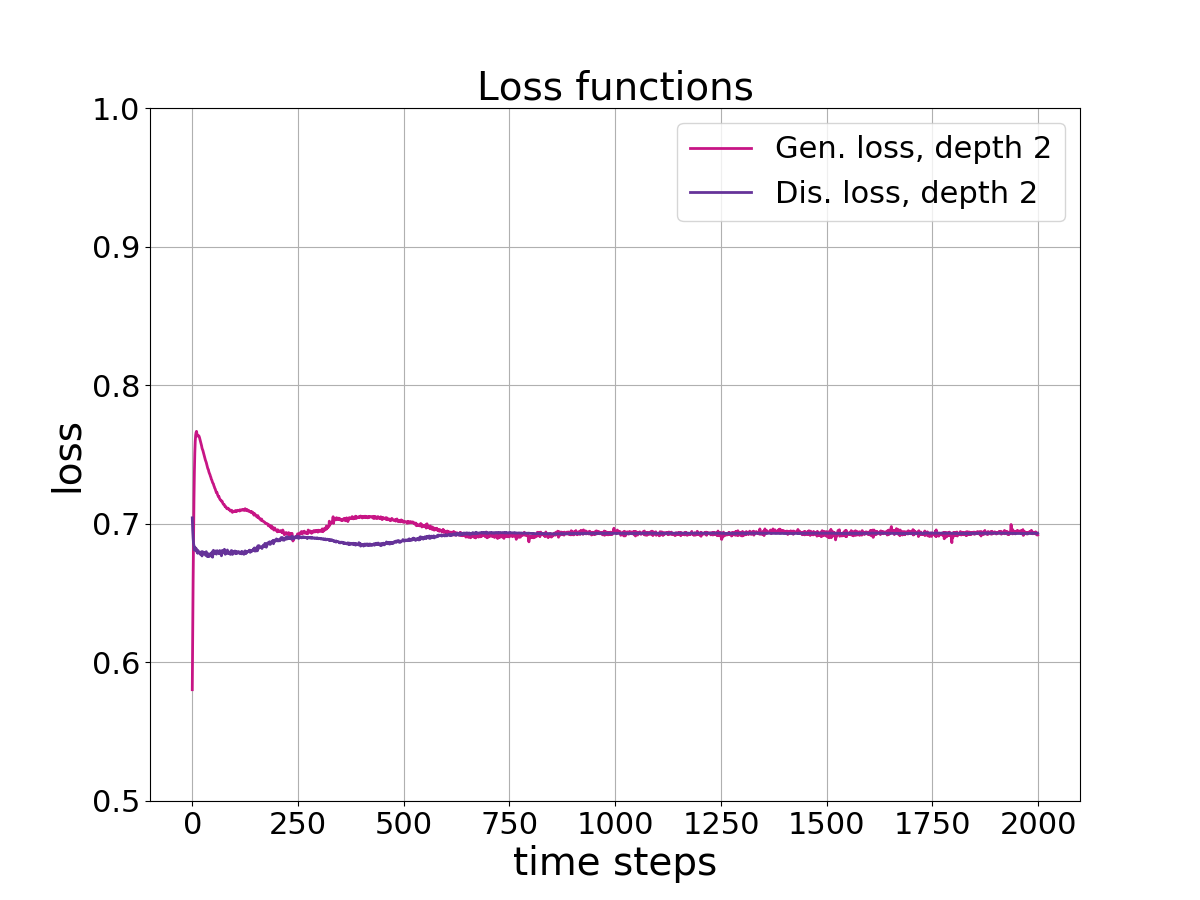}};
\node[anchor=north west] at (-5.5, -10)
{\includegraphics[width=0.35\textwidth]{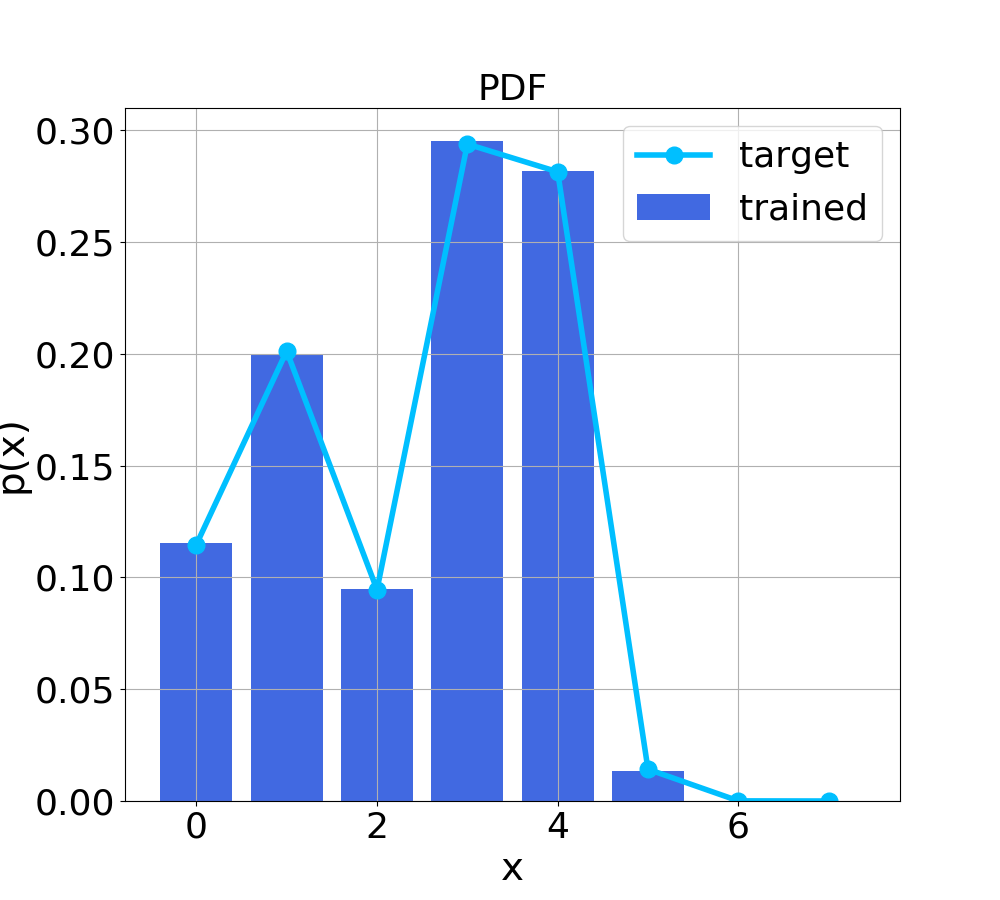}};
\node[anchor=north west] at (1,-10.03) {\includegraphics[width=0.417\textwidth]{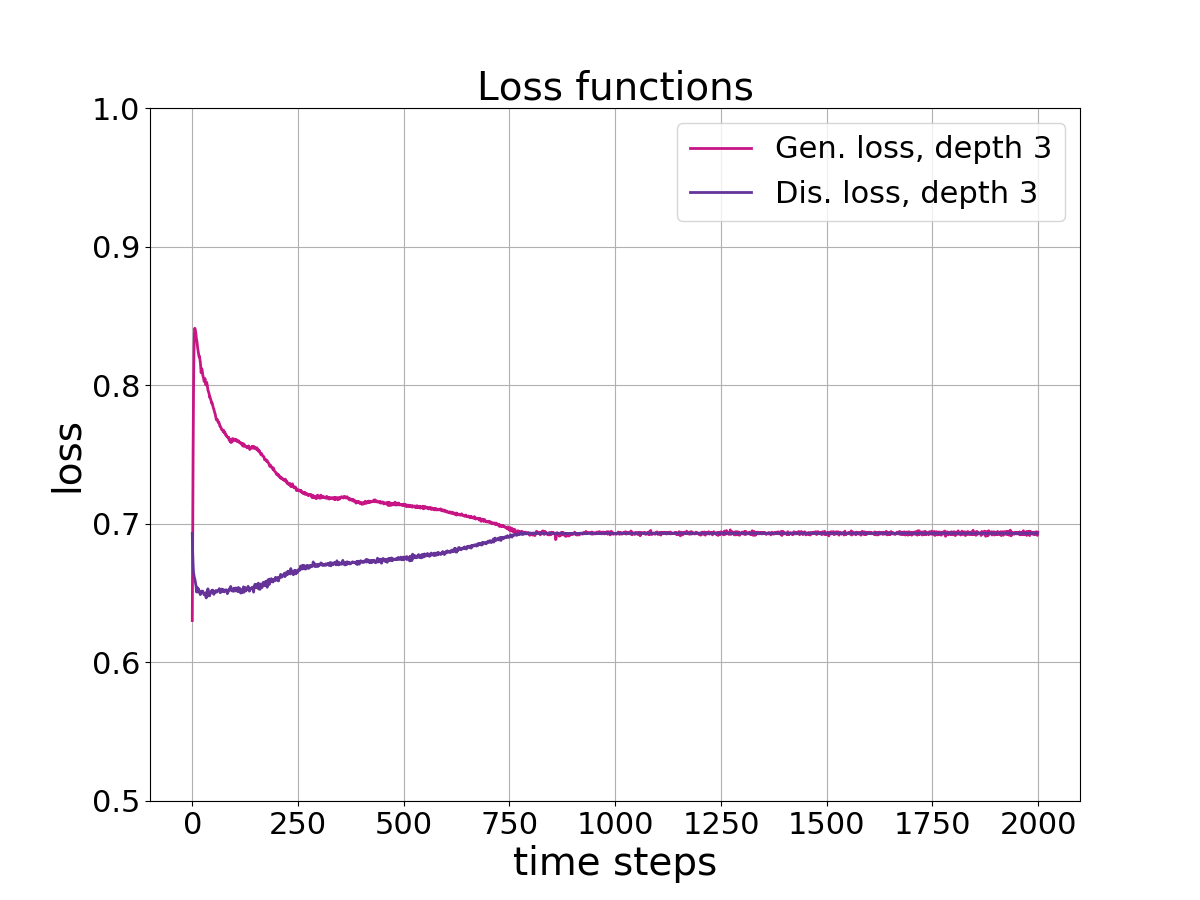}};
\node at (-5, -0.2) {(a)};
\node at (1.7, -0.2) {(b)};
\node at (-5, -5.2) {(c)};
\node at (1.7, -5.2) {(d)};
\node at (-5, -10.2) {(e)};
\node at (1.7, -10.2) {(f)};
\end{tikzpicture}
}
    \caption{
    (a, b) Log-normal target distribution with normal initialization and a depth $2$ generator. 
    (c, d) Triangular target distribution with random initialization and a depth $2$ generator.
    (e, f) Bimodal target distribution with uniform initialization and a depth $3$ generator.
    (a, c, e) The presented probability density functions correspond to the trained $\ket{g^{\omega}}$. 
    (b, d, f) The loss function progress is illustrated for the generator as well as for the discriminator.}
    \label{fig:benchmark}
\end{figure}


In fact, classical GAN literature discusses that the loss functions do not necessarily reflect whether the method converges \cite{Grnarova2018EvaluatingGV}. Instead, measures such as the Kolmogorov-Smirnov statistic and the relative entropy, which are introduced in Sec.~\ref{sec:data_encoding}, can provide information about the closeness of the trained and the target probability distribution. 
Here, the Kolmogorov-Smirnov statistic is used as a goodness-of-fit test for a confidence level of $95\%$.
Given the null-hypothesis that the prepared and target distribution are equivalent, we draw $s=500$ samples from both distributions and choose a confidence level $(1 - \alpha)$ with $\alpha = 0.05$.
The null-hypothesis is rejected if the corresponding Kolmogorov-Smirnov statistic is bigger than $0.0859$.

\begin{table}[!htb]
\centering{
\small{
\begin{tabular}{c|c|c|c|c|c|c|c}
\textbf{Data }& \textbf{Init} & $\bm{k}$	& $\bm{\mu_{KS}}$ &	$\bm{\sigma_{KS}}$	& $\bm{n_{\leq b}}$ & $\bm{\mu_{RE}}$	& $\bm{\sigma_{RE}}$ \\
\hline
\multirow{9}{* } {log-normal}& \multirow{3}{*}{uniform} & 1	& 0.0522	& 0.0214 &	9	&0.0454	&0.0856\\
& & 	2 &	0.0699  & 0.0204	 & 7	 & 0.0739	 & 0.0510 \\
 & &	3 &0.0576	 &0.0206	 &9	 &0.0309	 &0.0206 \\

& \multirow{3}{*}{normal}  &	1	 &0.1301	 &0.1016 &	5	 &0.1379 &	0.1449\\
& &2	 &0.1380 &	0.0347	 &1 &	0.1283	 &0.0716\\
& &	3	 &0.0810 &	0.0491	 & 7 &	0.0435 &	0.0560 \\
&\multirow{3}{*}{random}	 &1	 &0.0821	 &0.0466	& 7	 &0.0916 &	0.0678\\
& &	2	 &0.0780 &	0.0337	 &6	 &0.0639	 &0.0463\\
& &	3	 &0.0541 &	0.0174 &	10 &	0.0436	 &0.0456\\
\hline
\multirow{9}{* } {triangular	}&\multirow{3}{*}{uniform}&	1&	0.0880&	0.0632	&6	&0.0624&	0.0535\\
&&	2&	0.0336	&0.0174	&10&	0.0091&	0.0042\\
&&	3	&0.0695	&0.1028&	9	&0.0760 &	0.1929\\
&\multirow{3}{*}{normal} 	&1&	0.0288	&0.0106	&10	&0.0038&	0.0048\\
&&	2&	0.0484	&0.0424&	9	&0.0210&0.0315\\
&&	3&	0.0251	&0.0067& 10&	0.0033	&0.0038 \\
&\multirow{3}{*}{random}	 &	1	&0.0843&	0.0635	&7&0.1050	&0.1387\\
&&	2&0.0538&	0.0294	&9&	0.0387	&0.0486\\
&&	3	&0.0438	&0.0163&	10	&0.0201	&0.0194\\
\hline
\multirow{9}{* } {bimodal}&\multirow{3}{*}{uniform}	&1	&0.1288	&0.0259	&0&0.3254	&0.0146\\
&&	2	&0.0358	&0.0206	&10	&0.0192&	0.0252\\
&&	3	&0.0278	&0.0172&	10	&0.0127	&0.0040\\
&\multirow{3}{*}{normal} 	&1&	0.0509	&0.0162	&9&	0.3417&	0.0031\\
&&	2	&0.0406	&0.0135	&10	&0.0114&	0.0094\\
&&	3	&0.0374&	0.0067	&10	&0.0018&0.0041\\
&\multirow{3}{*}{random}		&1	&0.2432&	0.0537	&0	&0.5813&	0.2541\\
&&	2	&0.0279&0.0078	&10	&0.0088&	0.0060\\
&&	3 &	0.0318&0.0133&	10&	0.0070&	0.0069\\

\end{tabular}
}}
\caption{Benchmarking the qGAN Training \\
The table presents results for training a qGAN for log-normal, triangular and bimodal target distributions, uniform, normal and random initializations, and variational circuits with depth $1, 2$ and $3$. The tests were repeated $10$ times using quantum simulation. The table shows the mean $\left(\mu\right)$ and the standard deviation $\left(\sigma\right)$ of the Kolmogorov-Smirnov statistic $\left(\text{KS}\right)$ as well as of the relative entropy $\left(\text{RE}\right)$ between the generator output and the corresponding target distribution. Furthermore, the table shows the number of runs which are not rejected according to the Kolmogorov Smirnov statistic $\left(n_{\leq b}\right)$ with confidence level $95\%$, i.e., with acceptance bound $b=0.0859$.}
\label{tbl:trainingBenchmarking}
\end{table}

For each setting, we repeat the training $10$ times to get a better understanding of the robustness of the results.
Tbl.~\ref{tbl:trainingBenchmarking} shows aggregated results over all $10$ runs and presents the mean $\mu_{\text{KS}}$, the standard deviation $\sigma_{\text{KS}}$ and the number of runs $n_{\leq b}$ that are not rejected by the Kolmogorov-Smirnov statistic as well as the mean $\mu_{\text{RE}}$ and standard deviation $\sigma_{\text{RE}}$ of the relative entropy outcomes between the generator output and the corresponding target distribution.
The data shows that increasing the quantum generator depth $l$ usually improves the training outcomes.
Furthermore, the table illustrates that a carefully chosen initialization can have favorable effects, as can be seen especially well for the bimodal target distribution with normal initialization.
Since the standard deviations are relatively small and the number of not rejected results is usually close to $10$, at least for depth $l \geq 2$, we conclude that the presented approach is quite robust and also applicable to more complicated distributions.

Next, the log-normal example is investigated in more detail. We compare the training results using a shot-based quantum simulation given a quantum generator with depth $l=1$ which is sufficient for this small example, for uniform, normal and random initializations. Fig.~\ref{fig:training} shows the PDFs corresponding to the trained $\ket{g^{\omega}}$ and the target PDF as well as the progress of the relative entropy. The results illustrate that both uniform and normal initialization perform better than the random initialization. Furthermore, the relative entropy plot highlights the fact that a carefully chosen initialization such as the normal initialization can lead to a better starting point for the generator training.
Tbl.~\ref{tbl:ks} presents the Kolmogorov-Smirnov statistics of the experiments.
The results again confirm that initialization impacts the training performance.
The statistics for the normal initialization are better than for the uniform initialization, which itself outperforms random initialization. It should be noted that the null-hypothesis is not rejected for any of the settings.


\begin{figure}
   \centering{
   \begin{tikzpicture}
   \node at (0, 3) {\textbf{Log-Normal Distribution Simulation}};
   \node at (-5,0) {\includegraphics[width=0.3\linewidth]{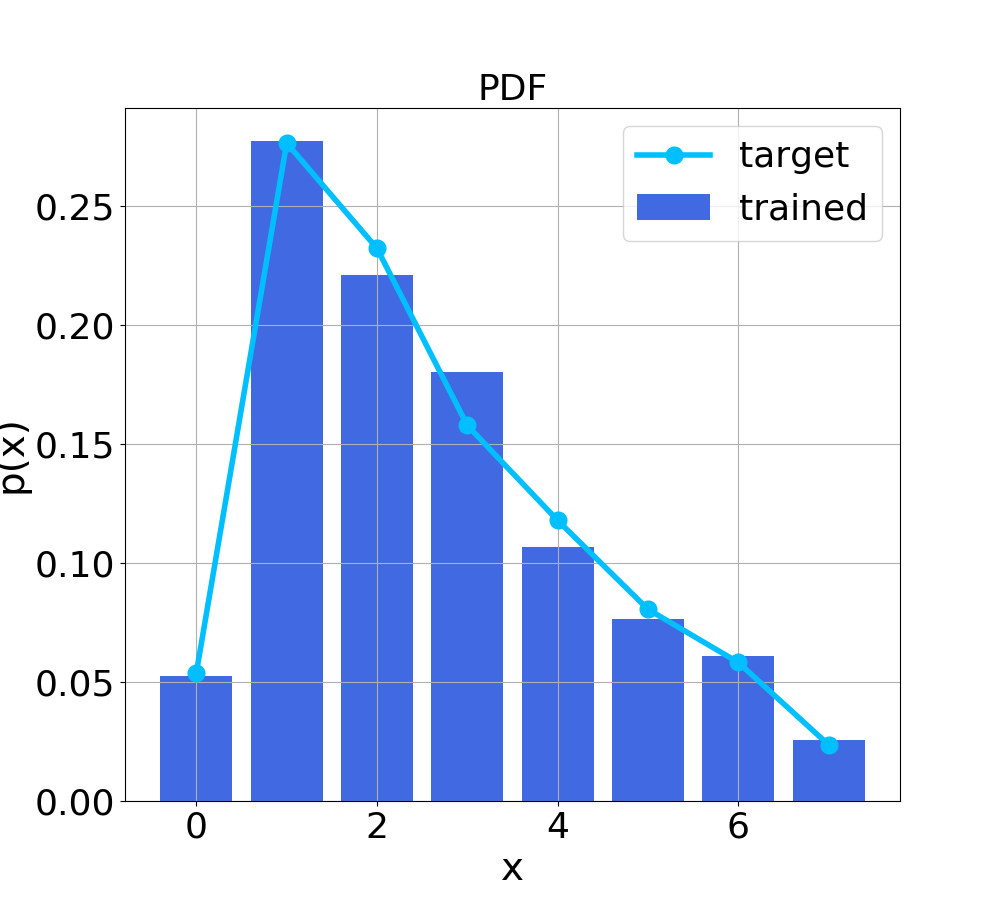}};
   \node at (0,0) {\includegraphics[width=0.3\linewidth]{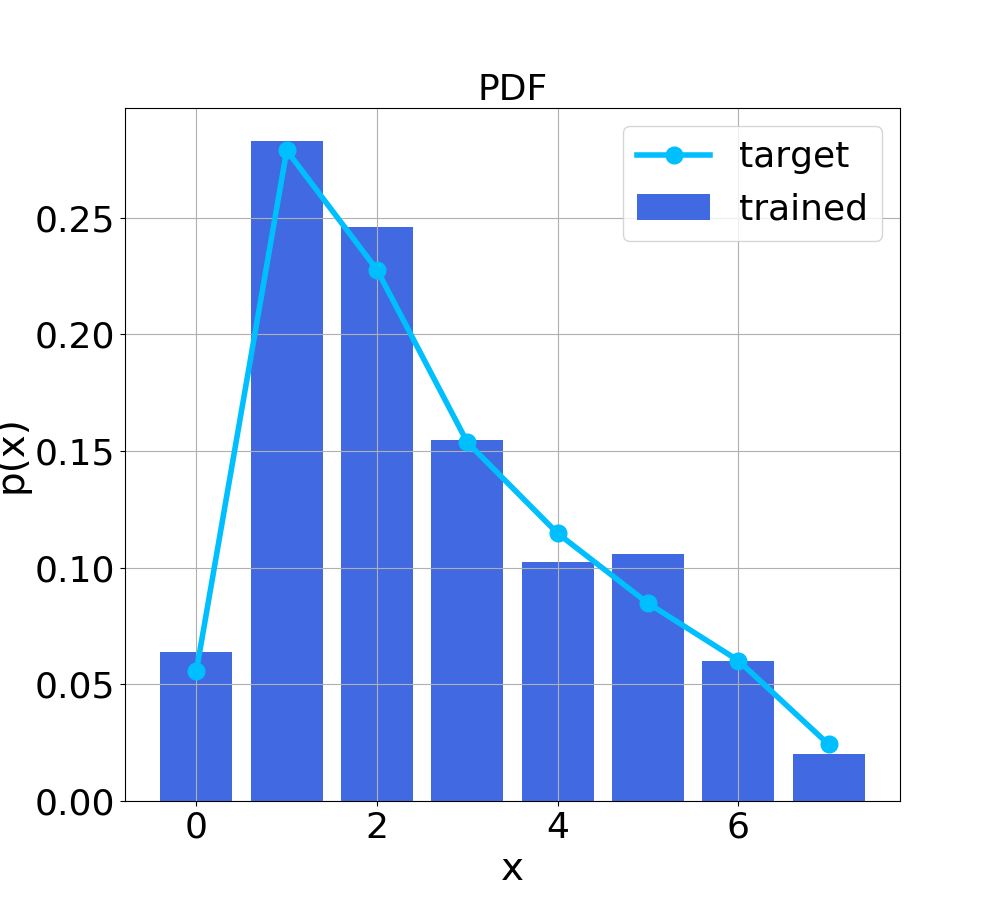}};
   \node at (5,0) {\includegraphics[width=0.3\linewidth]{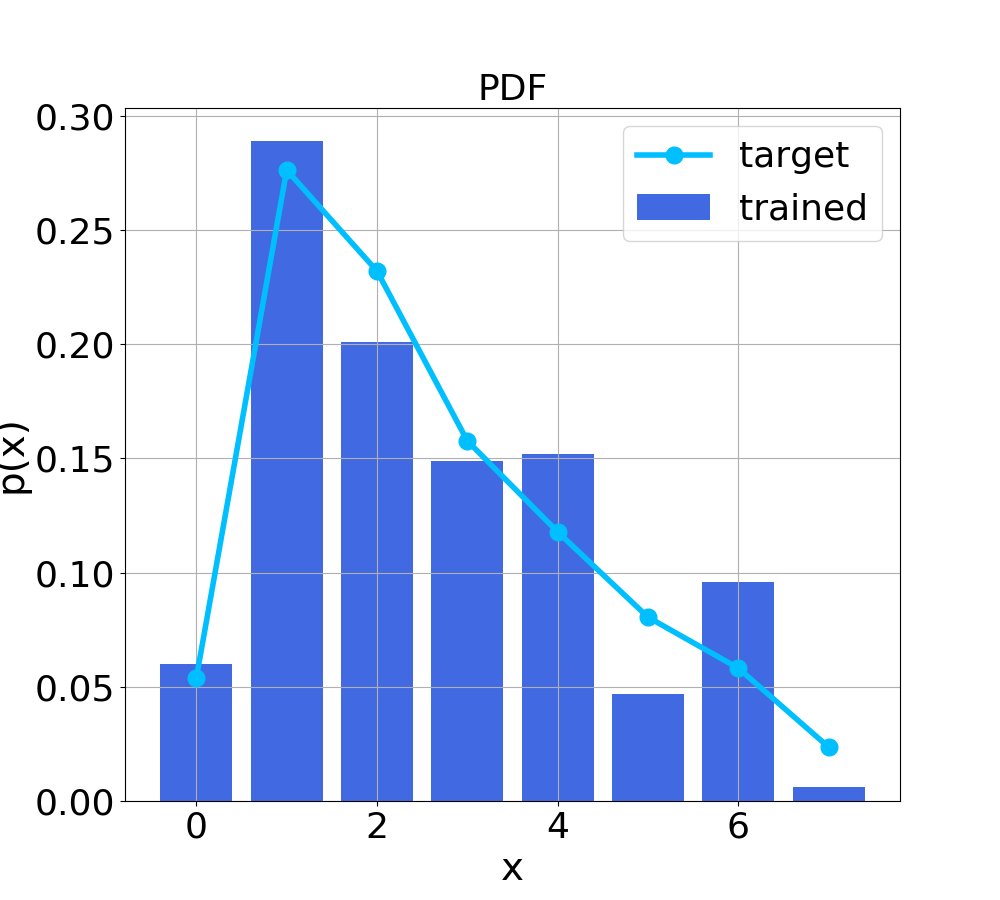}};
    \node at (0,-6) {\includegraphics[width=0.8\linewidth]{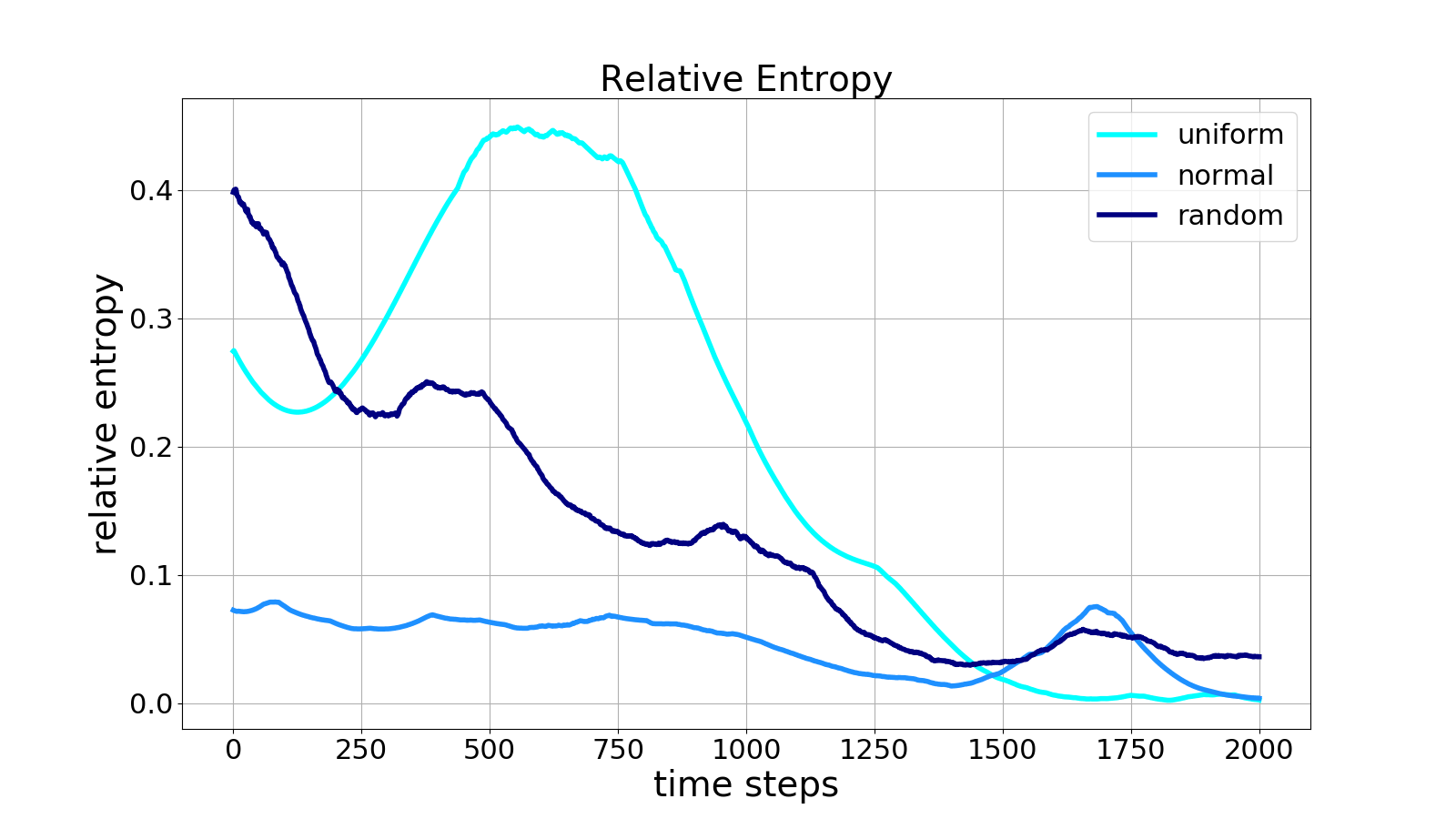}};
\node at (-7,2.) {(a)};
\node at (-2, 2.) {(b)};
\node at (3, 2.) {(c)};
\node at (-4.5, -3) {(d)};
\end{tikzpicture}
}
\caption{The figure illustrates the PDFs corresponding to $\ket{g^{\omega}}$ trained on samples from a log-normal distribution using a (a) uniformly, (b) randomly, and (c) normally initialized quantum generator. Furthermore, (d) presents the  convergence of the relative entropy for the various initializations over $2000$ training epochs.}
 \label{fig:training}
\end{figure}

\begin{table}
\centering{
\begin{tabular}{c|c|c}
\textbf{Initialization} &  $\bm{D}_{\text{\textbf{KS}}}$ & \textbf{Reject} \\
\hline
uniform & $0.0369$ & No  \\
normal  & $0.0320$ & No  \\
random  & $0.0560$ & No
\end{tabular}
}
\caption{
The Kolmogorov-Smirnov statistic is computed for randomly chosen samples from $\ket{g^{\omega}}$ and from the discretized, truncated log-normal distribution ${X}$ using shot-based quantum simulations.}
\label{tbl:ks}
\end{table}


Now, we present a qGAN training run with an actual quantum processor -- the IBM Quantum Boeblingen chip \cite{ibmQX} -- and compare the results with a simulation that implements an idealized model of the hardware noise.
The same log-normal training data, quantum generator and discriminator as before are used. To improve the robustness against the hardware noise, we set the optimizer's learning rate to $10^{-3}$. 
The initialization is chosen according to the random setting because it requires the least gates. 
Due to the increased learning rate, it is sufficient to run the training for $200$ optimization epochs.
Fig.~\ref{fig:pdf_real} presents the PDF corresponding to $\ket{g^{\omega}}$ trained with IBM Quantum Boeblingen respectively with the noisy simulation as well as the respective comparison of the progress of the loss functions and the relative entropy.
The figure illustrates that the generated probability distributions converge towards the random distribution underlying the training data samples for both, the simulation and the real quantum hardware. Notably, some of the more prominent fluctuations might be due to the fact that the IBM Quantum Boeblingen chip is calibrated on a daily basis which is, due to the queuing, circuit preparation, and network communication overhead, shorter than the overall training time of the qGAN.
$D_{\text{KS}}$ determines whether the null-hypothesis is rejected using a confidence level of $95\%$.
The results presented in Tbl.~\ref{tbl:ks_hw} confirm that we are able to train an appropriate model on the actual quantum hardware.

\begin{figure}[h!]
   \centering{
   \begin{tikzpicture}
   \node at (-1, 3.3) {\textbf{Log-Normal Distribution Hardware}};
   \node at (-5,0) {\includegraphics[width=0.35\linewidth]{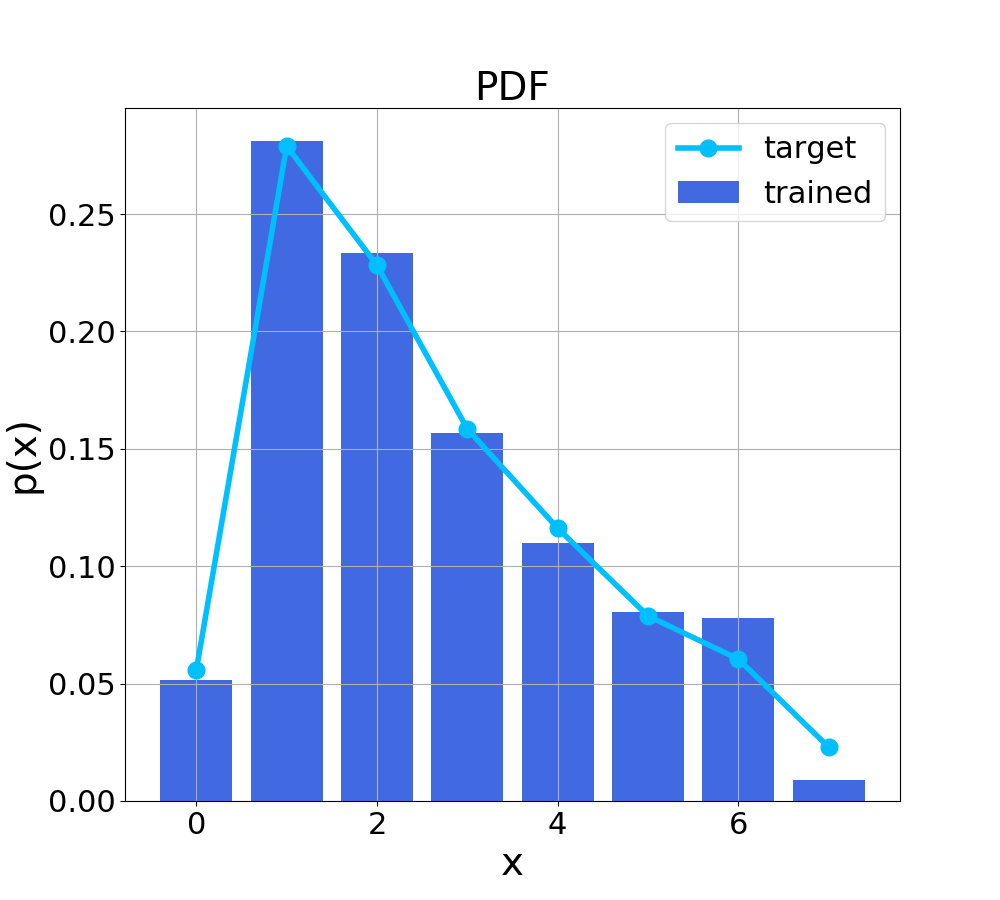}};
   \node at (-5,-5) {\includegraphics[width=0.35\linewidth]{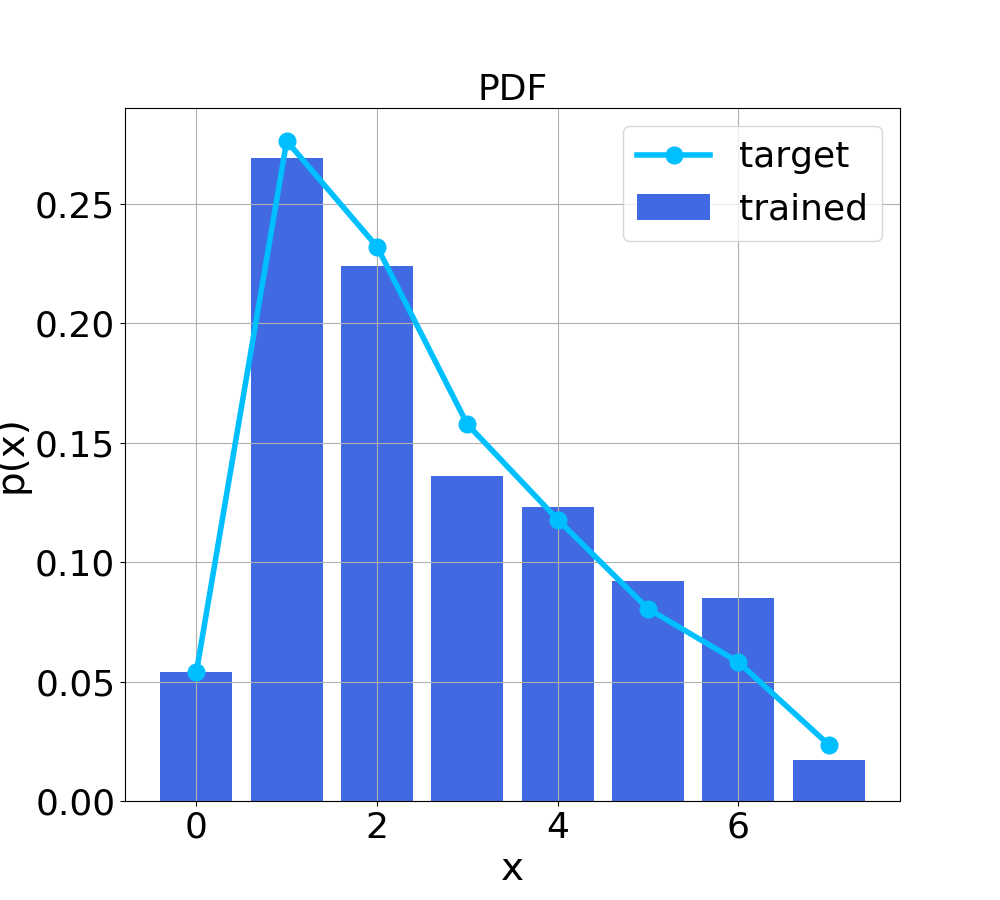}};
   \node at (2,0) {\includegraphics[width=0.57\linewidth]{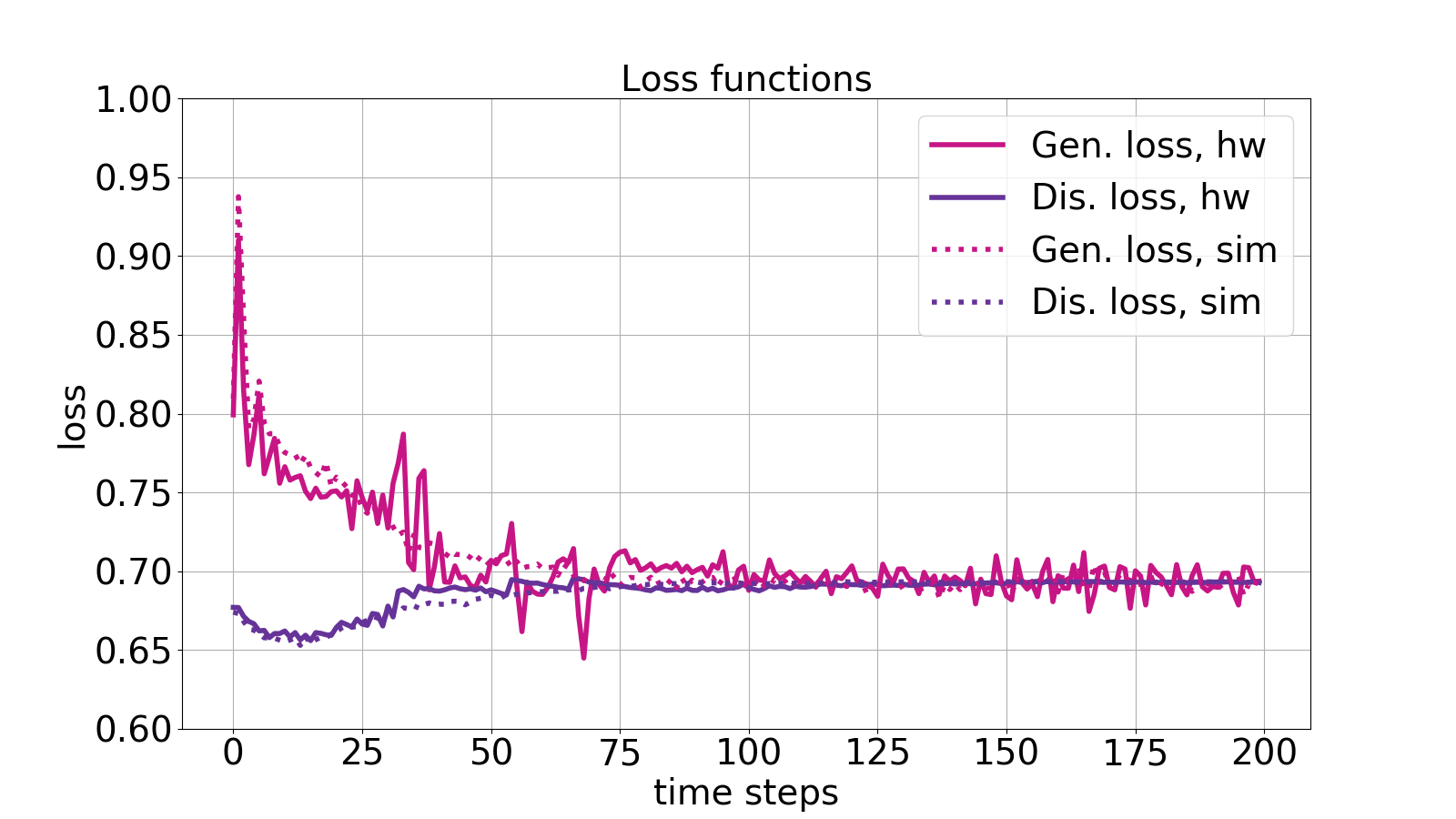}};
    \node at (2,-5) {\includegraphics[width=0.57\linewidth]{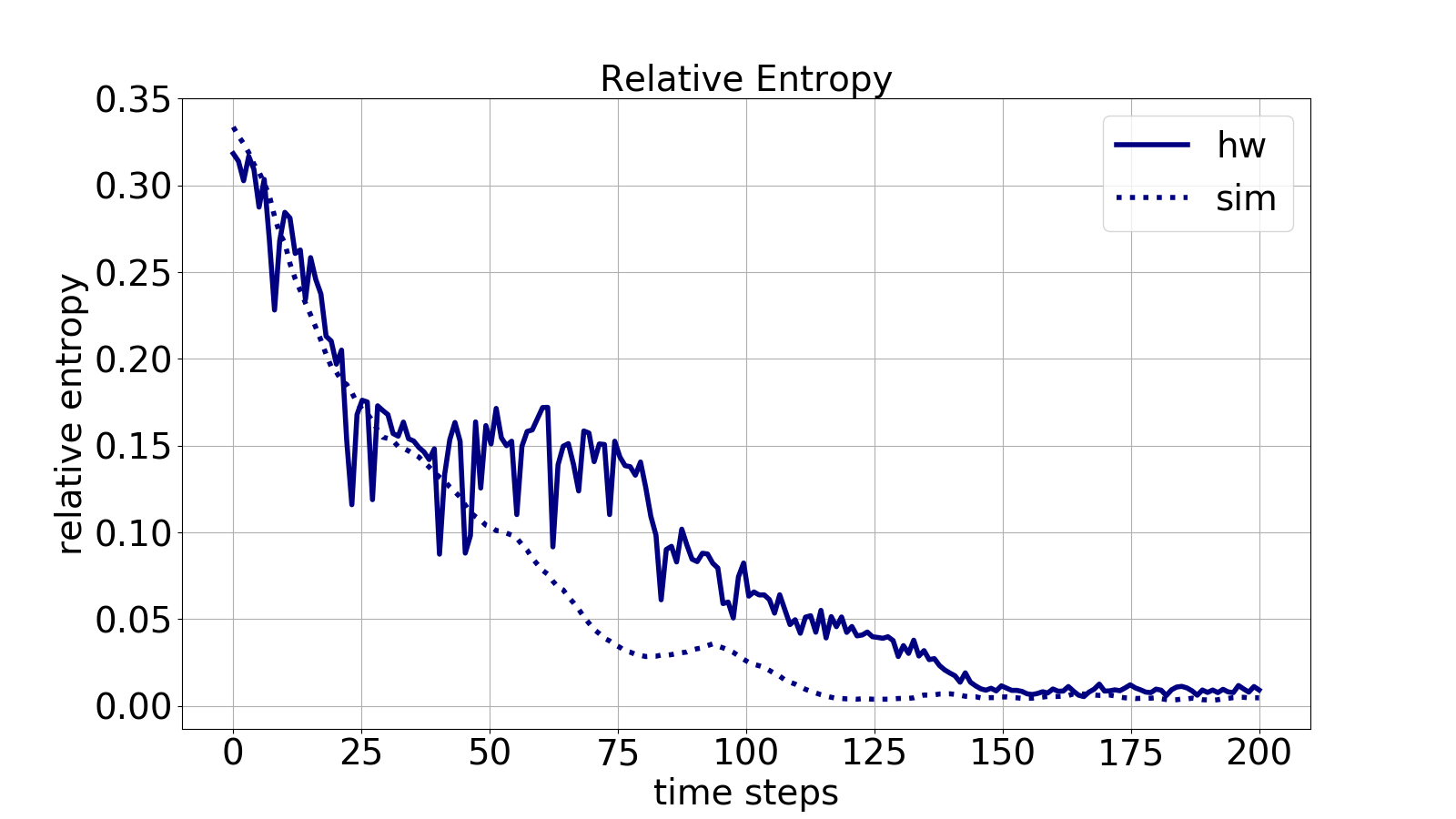}};
\node at (-7.3,2.2) {(a)};
\node at (-2.1, 2.2) {(b)};
\node at (-7.3, -2.8) {(c)};
\node at (-2.1, -2.8) {(d)};
\end{tikzpicture}
}
\caption{The presented PDFs correspond to $\ket{g^{\omega}}$ trained on (a) the IBM Quantum Boeblingen and (c) a quantum simulation with a model of the hardware noise.
Moreover, (b) shows the respective evolution of the loss functions and (d) illustrates the corresponding progress of the relative entropy during the training of the qGAN.
 \label{fig:pdf_real}}
\end{figure}

\begin{table}[!h]
\centering{
\begin{tabular}{c|c|c|c}
\textbf{Initialization} & \textbf{Backend}  &  $\bm{D}_{\text{\textbf{KS}}}$ & \textbf{Reject} \\
\hline
random & simulation & $0.0420$ & No \\
random & quantum computer & $0.0224$ & No
\end{tabular}
}
\caption{
The Kolmogorov-Smirnov statistic is computed for randomly chosen samples of $\ket{g^{\omega}}$ trained with a noisy quantum simulation and using the IBM Quantum Boeblingen device.}
\label{tbl:ks_hw}
\end{table}

Finally, we are going to present a numerical example which is trained on a real world data set, i.e., the first two principle components of multivariate, constant maturity treasury rates of US government bonds.
The training data set $X$ consists of more than $5000$ samples. To reduce the number of required qubits for a reasonable representation of the distribution, data samples smaller than the $5\%-$percentile and bigger than the $95\%-$percentile are discarded. 
Furthermore, depth $l \in \set{2, 3, 6}$ quantum generators are compared which act on $n=6$ qubits, i.e., $3$ qubits per dimension (principle component).
The input state $\ket{\psi_{\text{in}}}$ is prepared as a multivariate uniform distribution and the generator parameters $\bm{\omega}$ are initialized with random draws from a uniform distribution on the interval $[-\delta, +\delta]$ with $\delta = 10^{-1}$.
Here, the classical discriminator is composed of a $6$-node input layer, $512$-node hidden-layer, a $256$-node hidden-layer, and a single-node output layer. 
Equivalently to the discriminator described before, the hidden layers apply linear transformations followed by Leaky ReLU functions \cite{Pedamonti_2018_Non-linear_Activation_Functions_NN} and the output layer employs a linear transformation followed by a sigmoid function.
Furthermore, the optimization uses data batches of size $1200$ and is run for $20000$ training epochs.

The evolution of the relative entropy between the generated and the real probability distribution is shown in Fig.~\ref{fig:rel_ent_multi}. The plot reveals that higher depths, can lead to better results as well as faster convergence but also that the largest depth does not necessarily result in the best model. This indicates that good models can already be found using shallow quantum circuits as ansatz. 
\begin{figure}[!htb]
\captionsetup{singlelinecheck = false, format= hang, justification=raggedright, font=footnotesize, labelsep=space} 
\centering{
\begin{tikzpicture}
\node at(0,3.5){\textbf{Relative Entropy - US Government Bonds}};
\node at(0,0){\includegraphics[width=0.7\linewidth]{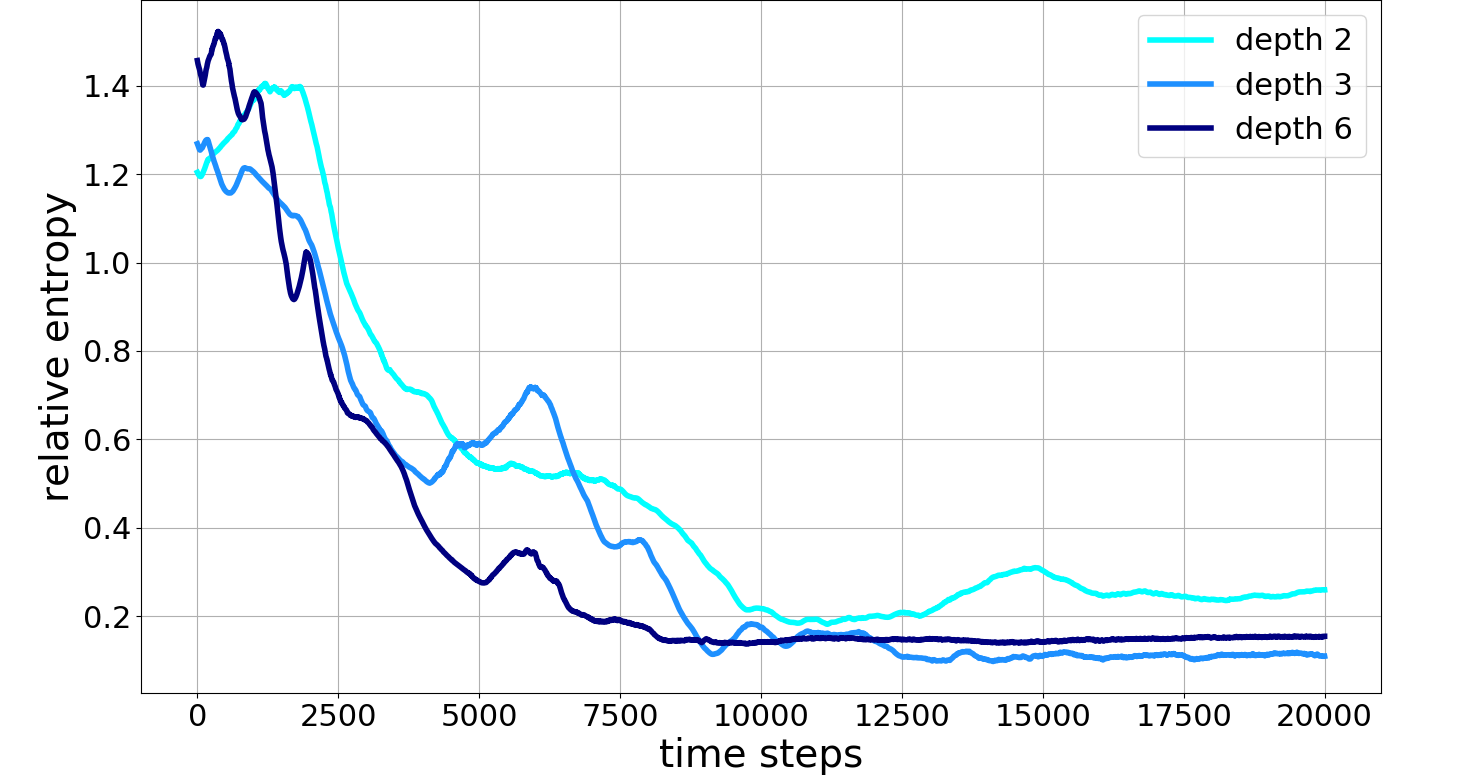}};
\end{tikzpicture}
}

\caption{The progress of the relative entropy between the quantum generator and the multivariate random distribution underlying the training data is shown for quantum generators with depth $l \in \set{2, 3, 6}$.}
 \label{fig:rel_ent_multi}
\end{figure}

%
\subsection{Discussion}
\label{sec:qganconclusion_Outlook}
This section presents qGANs, a generative QML algorithm which combines the use of quantum channels and an established, classical ML algorithm, i.e., GANs.
Moreover, it is described how qGANs are able to facilitate an efficient, approximate probability distribution learning and loading scheme that requires $\mathscr{O}\left(poly\left(n\right)\right)$ many gates. The efficiency of this method is illustrated on various examples which are run with simulators as well as actual quantum hardware.

One of the main open questions that is subject to future research is the analysis of optimal quantum generator structures as well as training strategies.
Like in classical machine learning, it is neither a priori clear what model structure is the most suitable for a given problem nor what training strategy may achieve the best results.

Another interesting subject for future research is the investigation of different loss functions for qGANs.
Classically, the Wasserstein distance \cite{arjovsky2017wassersteingan} has gained a lot of popularity as it tends to lead to stable GAN trainings. A quantum-equivalent to the successful classical Wasserstein distance \cite{kiani2021quantumEarthMover, chakrabarti2019wassersteinqGAN} can be used to achieve beneficial effects when learning the structure underlying quantum data with a qGAN \cite{chakrabarti2019wassersteinqGAN}. However, the evaluation of a quantum Wasserstein distance  is more involved. In practice, it may make sense to try different distance measures and study the trade-off between complexity and performance. 

Furthermore, one would intuitively expect that deeper quantum circuits which have more parameters and are, therefore, more expressive are a more desirable ansatz choice than shallow circuits. However, it is shown in \cite{holmes2021AnsatzExpressBarrenPlateaus} that the loss landscape induced by highly expressive ans\"atze is very flat and the respective models are, therefore, difficult to train. In the worst case, the loss function can flatten \cite{arrasmith2020effectbps_grad_freeOpt} and the gradients vanish \cite{Clean_2018_BarrenPlateaus} exponentially in the system size. Further details on vanishing gradients and related QML training issues as well as possible remedies are discussed in Sec.~\ref{sec:vanishing_grads}. In this context, it would be interesting to design and study problem-specific ans\"atze.


\section[Quantum Boltzmann Machines]{Quantum Boltzmann Machines\footnote{This section is reproduced in part, with permission, from C.~Zoufal, A.~Lucchi, S.~Woerner, "Variational Quantum Boltzmann Machines", \textit{Quantum Machine Intelligence}, vol. 3, Article Nr. 7, 2021}}
\label{sec:QBM}

Quantum Boltzmann machines are energy-based quantum machine learning models. \linebreak
More explicitly, the model information is encoded into a parameterized Hamiltonian, see Hamiltonian encoding in Def.~\ref{sec:gibbs_related_work}, whose structure can be chosen according to the problem structure \cite{BresslerMRFSSTructure13, piatkowski2019exponential, discreteProbDistApproxChow68}. Furthermore, the Hamiltonian is related to a probability distribution at thermal equilibrium via the so-called \emph{Gibbs state}. The aim of a quantum Boltzmann machine algorithm is to train the Hamiltonian parameters such that the sampling statistics of the respective Gibbs state approximate a target distribution.

This section introduces classical Boltzmann machines as well as the quantum counterpart, discusses various quantum Boltzmann machine implementations, and gives a detailed explanation of a quantum Boltzmann machine approach that relies on variational quantum imaginary time evolution for Gibbs state preparation. Lastly, we demonstrate the feasibility of variational quantum Boltzmann machines for generative modelling on various numerical examples.

\subsection{Introduction}
\label{sec:qbm_intro}

Boltzmann Machines (BMs) \cite{HintonBM1985, Du2019BM} offer a powerful framework for modelling probability distributions. In fact, they are known to be universal approximators of probability distributions \cite{montufar2015deep, YOUNES1996109}. 
BMs are widely applicable. Furthermore, they can even be formulated as discriminative as well as generative learning models \cite{Liu2010}.
Applications have been studied in a large variety of domains such as the analysis of quantum many-body systems, statistics, biochemistry, social networks, signal processing and finance, see, e.g., \cite{TorlaiNeuralNetwork18, CarleoRBMsQManyBody18,  CarleoRBMsQuantumManyBody17, YusukeRBM17, AnshuSample-efficientQManyBody, Melko2019, HRASKO2015RBMTimeSeries, Tubiana19RBMProteins, LiuRBMsSocialNetworks13, Mohamed10RBMSignal, Assis18RBMFin}. 

These types of neural networks use an undirected graph-structure to encode relevant information. 
More precisely, the respective information is stored in bias coefficients and connection weights of network nodes. The nodes are typically related to binary spin-systems and grouped into those that determine the output, the visible nodes, and those that act as latent variables, the hidden nodes.
Furthermore, the network structure is linked to an energy function which facilitates the definition of a probability distribution over the possible node configurations by using a concept from statistical mechanics, i.e., Gibbs states \cite{Boltzmann1877, gibbs02}.
The aim of Boltzmann machine training is to learn a set of weights such that the resulting model approximates a target probability distribution that may be implicitly given by training data $X$. 
However, Boltzmann machines are complicated to train in practice because the loss function's derivative requires the evaluation of a normalization factor, the partition function, that can become exponentially expensive and, thus, render the model infeasible.
Usually, the partition function is approximated using Markov Chain Monte Carlo methods which are prone to requiring long runtimes until convergence \cite{Hinton05CD, murphy2012machinelearning} or approximate gradient estimation using contrastive divergence \cite{Hinton2002TrainingPO} or pseudo-likelihood \cite{Besag1975} -- potentially leading to inaccurate results \cite{Tieleman08, Sutskever10}.

Quantum Boltzmann Machines (QBM) \cite{QBMAmin18} are a natural adaption of BMs to the quantum computing framework. They have the potential to exploit the intrinsic normalization of quantum states and may, thereby, help to avoid the evaluation of the exponentially expensive partition function.
Furthermore, QBMs are compatible with a larger class of Hamiltonians than their classical counterpart. More explicitly, classical simulation, e.g., based on quantum Monte Carlo methods \cite{Troyer05}, can suffer from the NP-hard \cite{LeeuwenTheoreticalCS90} so-called \emph{sign-problem} when computing an expectation value with respect to a QBM Hamiltonian  \cite{Hangleiter2019EasingTM, OkunishiSignProblem14, Li2016SignProblemFreeQMC, SignProblemAlet16, PhysRevB.91.241117}. Conducting the respective evaluation on a quantum computer instead could potentially help to avoid this problem \cite{OrtizQAFermionic01}.
Instead of an energy function with nodes being represented by binary spin values, QBMs define the underlying network using a Hermitian operator, a parameterized, $n$-qubit Hamiltonian
$H\left({\bm{\theta}}\right)$.
The respective qubits can be grouped into those which determine the model output, the \emph{visible} qubits, and those which act as latent variables the \emph{hidden} qubits.
The aim of the model is to learn Hamiltonian parameters such that the resulting Gibbs state reflects a given target system.


Unfortunately, several quantum Boltzmann machine implementations \cite{QBMAmin18, Anschtz2019RealizingQB, QBMWiebe17, Kappen18QBM, Wiebe2019GenerativeTO} are incompatible with efficient evaluation of the loss function's analytic gradients if the given model has hidden units and $\textstyle{\exists j: \:\left[H\left({\bm{\theta}}\right), \frac{\partial H\left({\bm{\theta}}\right)}{\partial\theta_j}\right] \neq 0}$.
Instead, the use of hidden qubits is either avoided, i.e., only fully-visible settings are considered \cite{QBMWiebe17, Kappen18QBM, Wiebe2019GenerativeTO}, or the gradients are computed with respect to an upper bound of the loss \cite{QBMAmin18, Anschtz2019RealizingQB, QBMWiebe17} that is based on the
Golden-Thompson inequality \cite{ThompsonInequality1965, Golden65Helm}.
It should be noted that training with an upper bound, renders the use of transverse Hamiltonian components, i.e., off-diagonal Pauli terms, difficult and imposes restrictions on the compatible models.

Therefore, we are going to present the details of a variational QBM implementation \cite{VarQBMZoufal20} that is compatible with generic Hamiltonians $H\left({\bm{\theta}}\right)$ for $\bm{\theta}\in \mathbb{R}^p$ and arbitrary Pauli terms $h_i$, and with near-term, gate-based quantum computers.
The respective method exploits \emph{variational quantum imaginary time evolution} \cite{VarSITEMcArdle19, Simon18TheoryVarQSim}, see Sec.~\ref{sec:varqite}, which is based on McLachlan's variational principle \cite{McLachlan64}, to not only prepare approximate Gibbs states, see Sec.~\ref{sec:varqite_gibbs}, but also to train the model with gradients of the actual loss function, see Sec.~\ref{sec:varqite_chainRule}.
This variational QBM algorithm (\varqbm) is inherently normalized which implies that the training does not require the explicit evaluation of the partition function. 
The following discussion is focused on the training of quantum Gibbs states whose sampling behavior reflects a classical probability distribution. Notably, the scheme could be easily adapted to approximate a quantum state instead by using the quantum relative entropy as a loss function \cite{QBMWiebe17, Kappen18QBM, Wiebe2019GenerativeTO}.

The remainder of this section is structured as follows. Firstly, we review classical Boltzmann machines in Sec.~\ref{sec:BM}.
Then, we explain QBMs and outline \varqbm{} in Sec.~\ref{sec:QBM}. Next, we present illustrate applications of VarQBMs for generative learning in Sec.~\ref{sec:qbm_examples}.
Finally, a conclusion and an outlook are given in Sec.~\ref{sec:qbm_conclusion_outlook}.

\subsection{Boltzmann Machines}
\label{sec:BM}

A BM \cite{HintonBM1985} represents a network model that stores the learned knowledge in bias and connection weights of network nodes. 
These weights are trained to generate outcomes according to a probability distribution of interest, e.g., to generate samples which are similar to given training samples $x$. 
Typically, this type of neural network is related to an Ising-type model \cite{Ising1925, peierls_1936} such that each node $i$ corresponds to a binary variable $z_i \in \set{-1, +1}$.
Now, the set of nodes may be split into visible $v$ and hidden $h$ nodes representing observed and latent variables, respectively.
Furthermore, a certain configuration $z = v \cup h$ determines an energy which is given as 
\begin{equation}
    E_z\left({\bm{\tilde\theta}, \bm{\theta}}\right)= -\sum\limits_i\tilde{\theta}_iz_i - \sum\limits_{i, j}\theta_{ij}z_iz_j,
\end{equation}
with $\tilde{\theta}_i, \theta_{ij}\in\mathbb{R}$ denoting the bias respectively connection weights and $z_i$ representing the value taken by node $i$. 
The probability to observe a configuration $v$ of the visible nodes -- which is related to an item in the training data set -- is defined as \begin{equation}
\label{eq:gibbs_dis}
    p_v\left({\bm{\tilde\theta}, \bm{\theta}}\right) = \frac{\sum_he^{-E_z\left({\bm{\tilde\theta}, \bm{\theta}}\right)/\left(\text{k}_{\text{B}}\text{T}\right)}}{Z}  = \frac{e^{-E_v\left({\bm{\tilde\theta}, \bm{\theta}}\right)/\left(\text{k}_{\text{B}}\text{T}\right)}}{Z},
\end{equation}
where $k_B$ is the Boltzmann constant, $T$ the system temperature and $Z$ the canonical partition function
\begin{equation}
    Z=\sum\limits_{z}e^{-E_z\left({\bm{\tilde\theta}, \bm{\theta}}\right)/\left(\text{k}_{\text{B}}\text{T}\right)}.
\end{equation}

Given training data items $x$ \footnote{Notably, the number of individual training data items $x$ must be smaller or equal to the number of configurations that are accessible to the visible nodes of the BM.} which are distributed according to a discrete probability distribution $p^{\text{data}}$,
the goal of a BM is to fit the target probability distribution $p^{\text{data}}$ with $p\left({\bm{\tilde\theta}, \bm{\theta}}\right)$. To that end, we identify the items $x$ in the training data set with configurations $v$ and optimize, e.g., the averaged negative log-likelihood function
\begin{equation}
\label{eq:crossEnt}
	L\left({\bm{\tilde\theta}, \bm{\theta}}\right) = -\sum\limits_{x}p_x\left({\bm{\tilde\theta}, \bm{\theta}}\right)\log{p_x^{\text{data}}},
\end{equation}
where $p_x^{\text{data}}$ denotes the occurrence probabilities of the data items $x$\footnote{Suppose the number of individual training data items $x$ is smaller than the number of possible configurations $v$. Then, it can happen that a configuration is sampled which is not identified with a training data sample such that $p_x^{\text{data}}=0$. To ensure that the loss function can still be computed, one can employ \emph{clipping}, i.e., $\max\{p_x^{\text{data}}, \epsilon \}$ for a small $\epsilon>0$.}.

In theory, fully-connected BMs have interesting representation capabilities \cite{HintonBM1985, YOUNES1996109, Fischer12RBM}, i.e., they are universal probability distribution approximators \cite{Roux10, Du2019BM}.
However, in practice they are difficult to train as the optimization easily gets expensive. 
Thus, it has become common practice to restrict the connectivity between nodes which relates to restricted Boltzmann Machines\cite{RBM_Montufar_2018}. 
Furthermore, several approximation techniques, such as contrastive divergence \cite{Hinton2002TrainingPO}, have been developed to facilitate BM training. 
However, these approximation techniques typically still face issues such as long computation time due to a large amount of required Markov chain steps or poor compatibility with multimodal probability distributions \cite{murphy2012machinelearning}.
Furthermore, BMs are actually compatible with generative as well as discriminative learning tasks, see Appendix \ref{app:qbms} for the latter.
For further details, we refer the interested reader to \cite{Hinton2012, Fischer2012, Fischer2015}.

\subsection{Quantum Algorithm}
\label{sec:qbm}

The nodes of a QBM can again be grouped into visible $v$ and hidden $h$ nodes (qubits). The output of the QBM is determined by the configuration of the visible qubits, whereas the hidden ones correspond to the latent variables.
The network structure is defined by a parameterized, $n$-qubit Hamiltonian
\begin{equation}
    H_z\left({\bm{\theta}}\right)=\sum_{i=0}^{p-1}\theta_ih_i,
\end{equation}
with $z = v \cup h$, $\bm{\theta}\in\mathbb{R}^p$ and $h_i=\bigotimes_{j=0}^{n-1}\sigma_i^j$ for $\sigma_i^j\in\set{I, X, Y, Z}$ acting on the $j^{\text{th}}$ qubit -- which represents the $j^{\text{th}}$ node. 
This Hamiltonian relates to a quantum Gibbs state,
\begin{align}
\label{eq:QGibbs}
    \rho_z(\bm{\theta}) = {e^{-H_z\left({\bm{\theta}}\right)/\left(\text{k}_{\text{B}}\text{T}\right)}}/{Z}
\end{align}
with the Boltzmann constant $\text{k}_{\text{B}}$, the system temperature $\text{T}$ and the partition function
\begin{equation}
    Z=\text{Tr}\left[e^{-H_z\left({\bm{\theta}}\right)/\left(\text{k}_{\text{B}}\text{T}\right)}\right].
\end{equation}
QBMs are often represented by an Ising model \cite{Ising1925}, i.e., a $2$-local system \cite{bravyi06LocalHam} with nearest-neighbor coupling that is defined with regard to a particular grid but generally any Hamiltonian could be used. 
Now, the probability to measure a configuration $\ket{x}$ of the visible qubits is defined with respect to a projective measurement $\Lambda_x = \proj{x}^v\otimes \mathds{1}^h$ on the quantum Gibbs state $\rho_z(\bm{\theta})$, i.e., the probability to measure $\ket{x}$ is given by 
\begin{equation}
     p_v\left(\bm{\theta}\right) = \text{Tr}\left[\Lambda_v\rho_z(\bm{\theta}) \right].
\end{equation}

Our goal is to train the Hamiltonian parameters $\bm{\theta}$ such that the sampling probabilities of the corresponding $\rho_z(\bm{\theta})$ reflect the probability distribution underlying given classical training data. For this purpose, the loss function is given in accordance with the classical case
\begin{equation}
\label{eq:loss_qbm}
	L\left(\bm{\theta}\right) = -\sum\limits_{x}p_x\left(\bm{\theta}\right)\log{p_x^{\text{data}} },
\end{equation}
where the training data items $x$ are identified with a configuration of the visible qubits $v$ and 
\begin{equation}
\label{eq:loss_derivative}
\begin{split}
	\nabla_{\bm{\theta}}L\left(\bm{\theta}\right) = - \sum\limits_{x}\nabla_{\bm{\theta}}p_x\left(\bm{\theta}\right)\log{p_x^{\text{data}} }.
		\end{split}
\end{equation}

Next, a variational QBM implementation (\varqbm{}) is introduced that facilitates the evaluation of $\nabla_{\bm{\theta}}L\left(\bm{\theta}\right)$ using automatic differentiation. The key component of this QML algorithm is variational Gibbs state preparation. 
To that end, variational quantum imaginary time evolution (VarQITE), introduced in Sec.~\ref{sec:varqite}, is employed to prepare an approximation to the Gibbs state, see Sec.~\ref{sec:varqite_gibbs}, using a parameterized quantum state $\rho\left(\bm{\omega}\left(\bm{\theta}\right)\right) \approx \rho_z\left(\bm{\theta}\right)$ with $\bm{\omega}\in\mathbb{R}^k$. Furthermore, $\nabla_{\bm{\theta}}\bm{\omega}$ can be calculated with the VarQITE chain rule introduced in Sec.~\ref{sec:varqite_chainRule}. 
Now, we use the quantum gradient techniques introduced in Sec.~\ref{sec:gradients} to evaluate $\nabla_{\bm{\omega}}p_x\left(\bm{\theta}\right)$ for
\begin{equation}
  p_x\left(\bm{\omega}\left(\bm{\theta}\right)\right) = \text{Tr}\left[\Lambda_x\rho\left(\bm{\omega}\left(\bm{\theta}\right)\right)\right].  
\end{equation}
Then, automatic differentiation, as described in Sec.~\ref{sec:varqite_chainRule}, gives
\begin{align}
    \nabla_{\bm{\theta}}p_x\left(\bm{\omega}\left(\bm{\theta}\right)\right) = \nabla_{\bm{\omega}}p_x\left(\bm{\omega}\left(\bm{\theta}\right)\right)\nabla_{\bm{\theta}}\bm{\omega},
\end{align}
which, in turn, enables the computation of $\nabla_{\bm{\theta}}L\left(\bm{\theta}\right)$.
The parameters $\bm{\theta}$ may, now, be updated using a classical optimizer, such as Truncated Newton \cite{TNCDembo1983} or Adam \cite{Kingmaadam14}.
The \varqbm{} training is illustrated in Fig.~\ref{fig:varqbm}.

\begin{figure}[h!]
\captionsetup{singlelinecheck = false, format= hang, justification=raggedright, font=footnotesize, labelsep=space}
\begin{center}
\begin{tikzpicture}
\node at (0, 3) {VarQBM training};
\node at (0, 0){   
\includegraphics[width=0.5\linewidth]{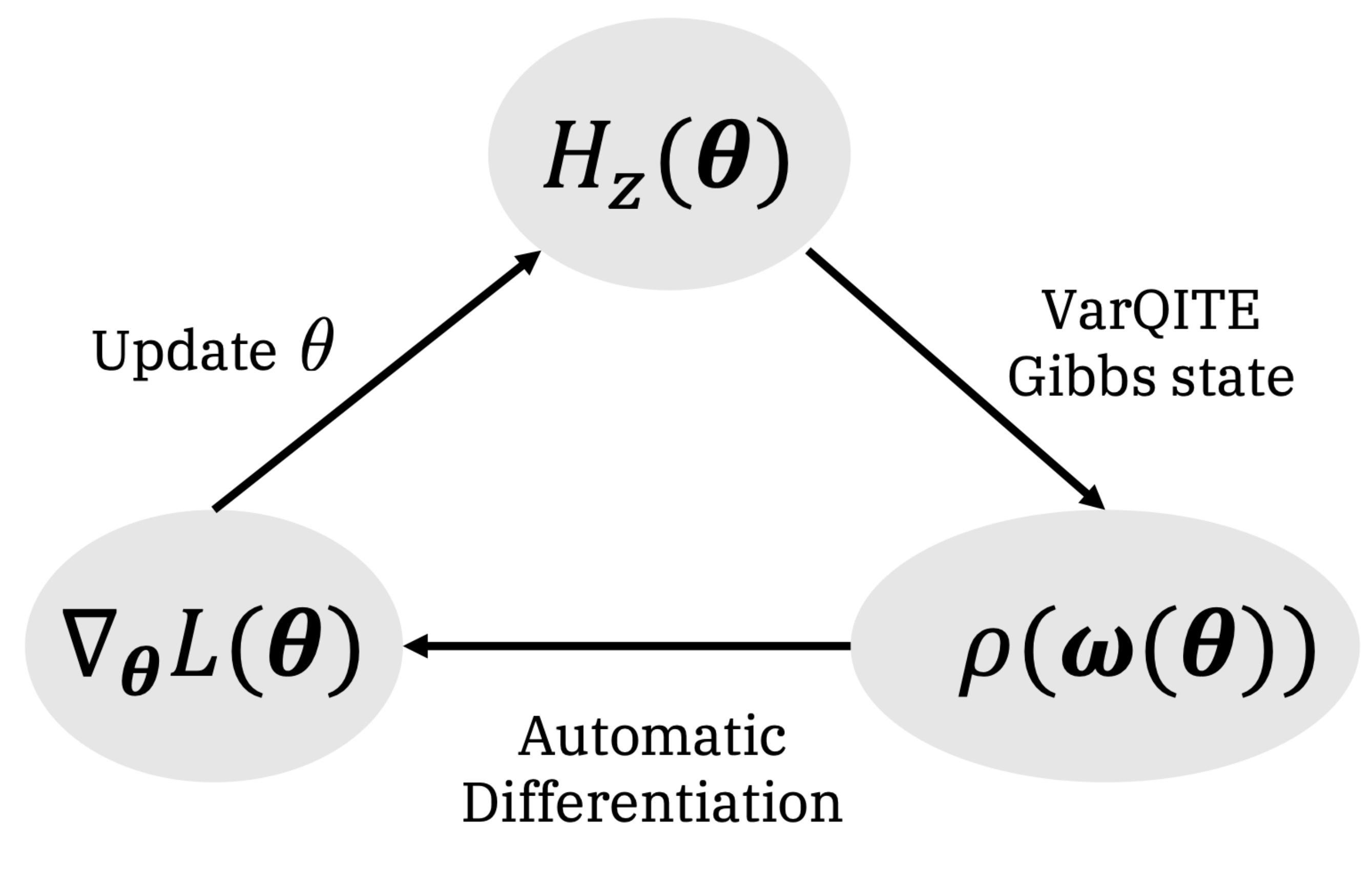}};
\end{tikzpicture}
\end{center}
\caption{The \varqbm{} training includes the following steps. First, we fix the Pauli terms for $H_z\left({\bm{\theta}}\right)$ and choose initial parameters $\bm{\theta}$. Then, VarQITE is used to generate $\rho\left(\bm{\omega}\left(\bm{\theta}\right)\right)$ and compute $\nabla_{\bm{\theta}}\bm{\omega}$. 
The quantum state and the derivative are needed to evaluate 
$\nabla_{\bm{\theta}}  p_x\left(\bm{\omega}\left(\bm{\theta}\right)\right)$. Now, we can find $\nabla_{\bm{\theta}}L$ to update the Hamiltonian parameters with a classical optimizer.}
\label{fig:varqbm}
\end{figure}

\subsection{Illustrative Examples}
\label{sec:qbm_examples}

 \begin{figure}[!htb]
\captionsetup{singlelinecheck = false, format= hang, justification=raggedright, font=footnotesize, labelsep=space}
\begin{center}
\begin{tikzpicture}
\node at (0, 0) {\includegraphics[width=0.67\linewidth]{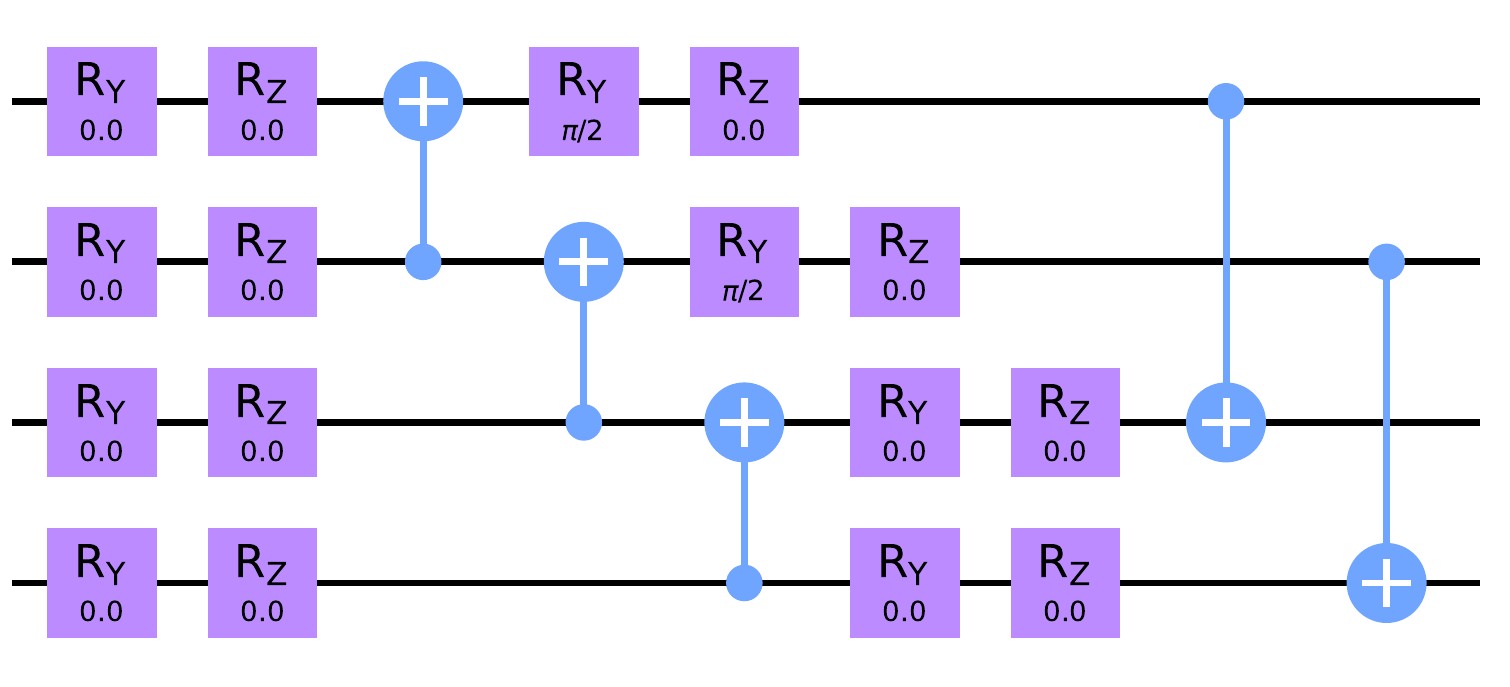}};
   \node at (-6.6, 0) {$\ket{0}^{\otimes 2n}$};
\draw[decorate, thick, decoration = {brace, amplitude=15pt}] (-5.3,-1.65) --  (-5.3,1.6);
\end{tikzpicture}
\end{center}
\caption{The \varqbm{} training uses VarQITE as a sub-routine for the preparation of an $n$-qubit Gibbs state. The examples presented in this section are based on the parameterized ansatz shown in this figure, whereby the given parameters are used to prepare the initial state.
Notably, this figure illustrates the $n=2$ setting. The ans\"atze for the $n=3,\, 4$ GHZ states are constructed analogously.}
\label{fig:ansatz_Bell}
\end{figure}

\begin{figure}[!htb]
\captionsetup{singlelinecheck = false, format= hang, justification=raggedright, font=footnotesize, labelsep=space}
\begin{center}
\begin{tikzpicture}
\node at (-4, 3.8){Bell and GHZ State Training};
\node at (-7, 0){   
\includegraphics[width=0.5\linewidth]{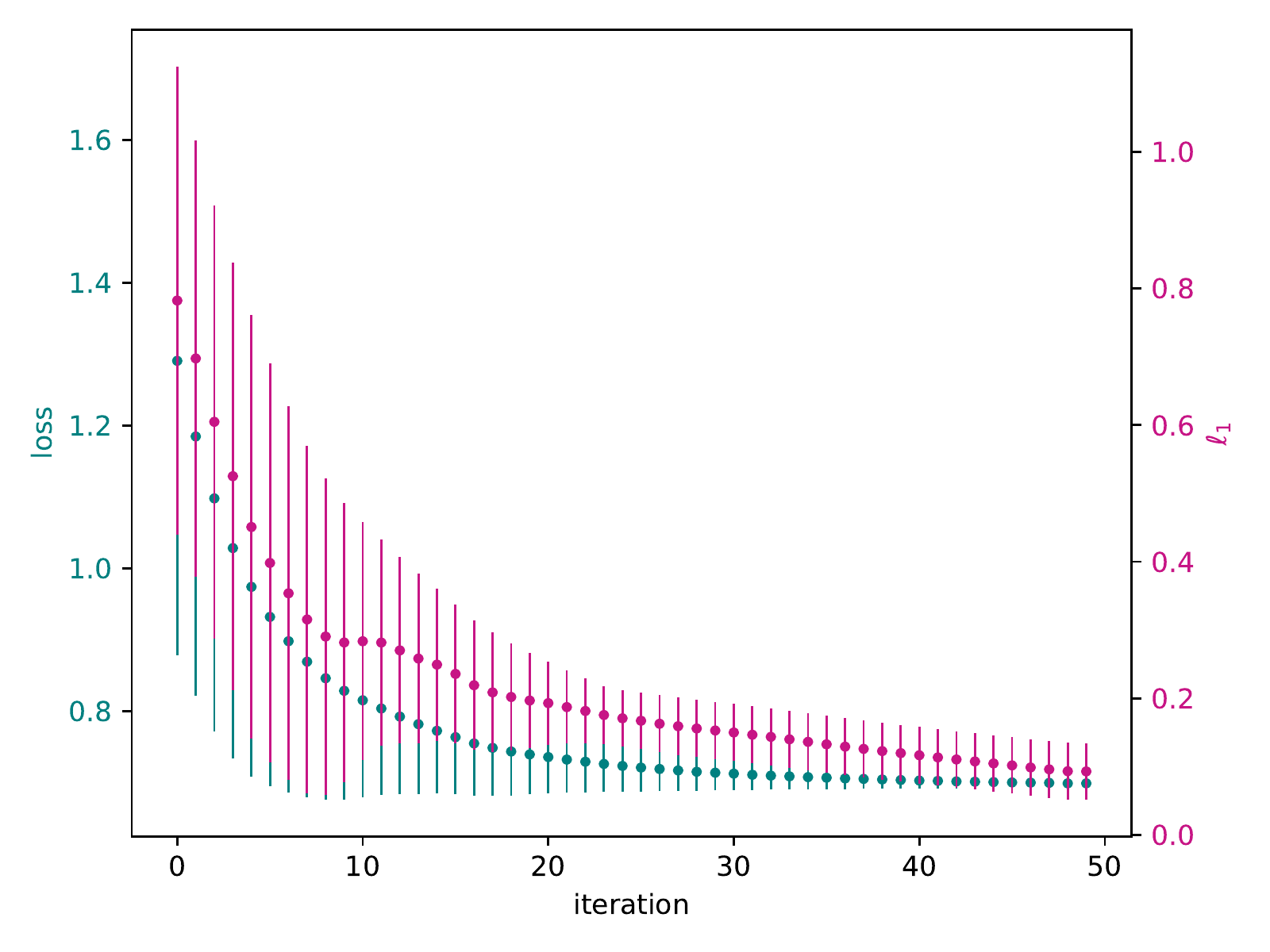}};
\node at (0, 0){   
\includegraphics[width=0.5\linewidth]{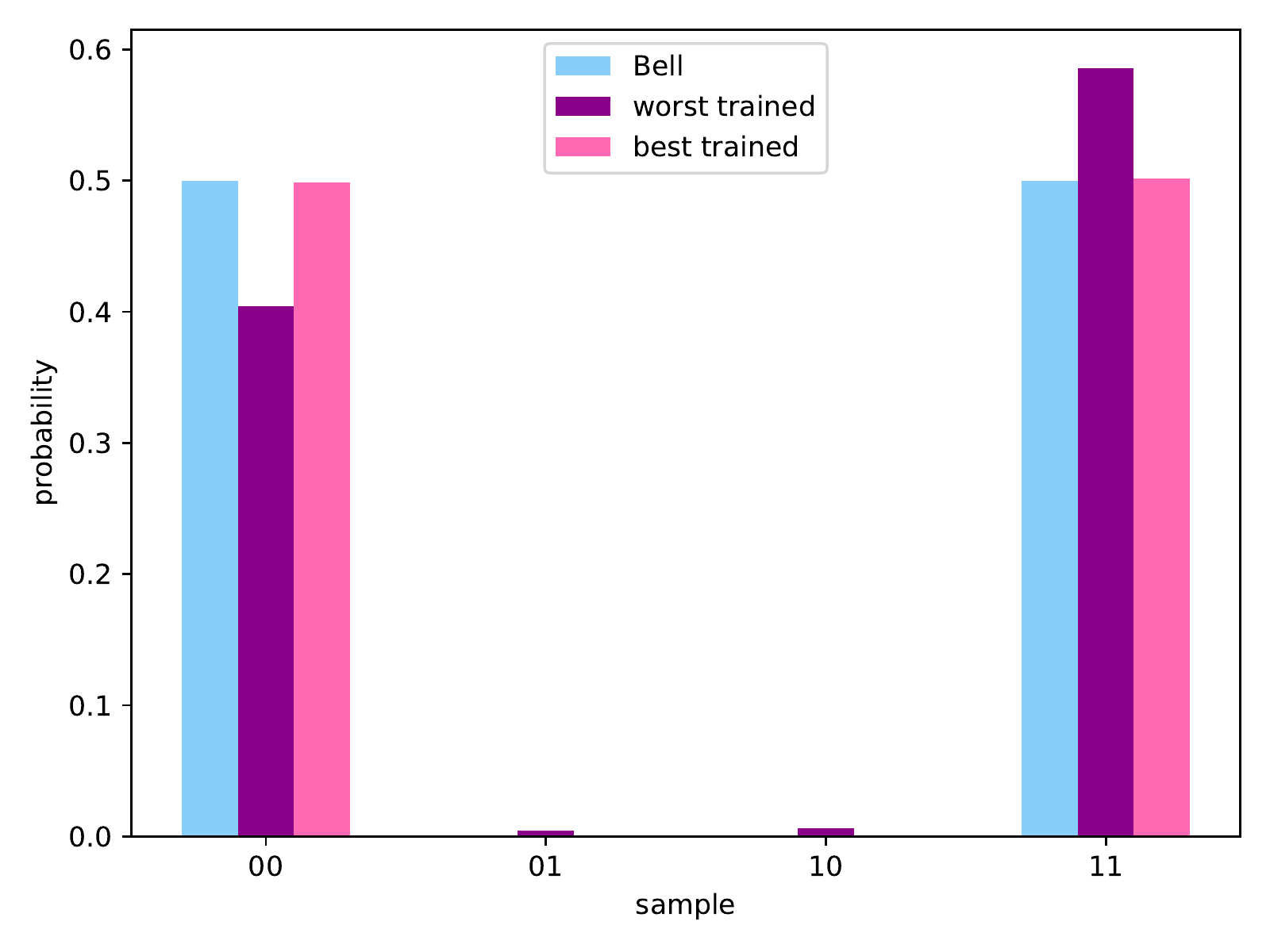}};
\node at (-3, -6){   
\includegraphics[width=0.5\linewidth]{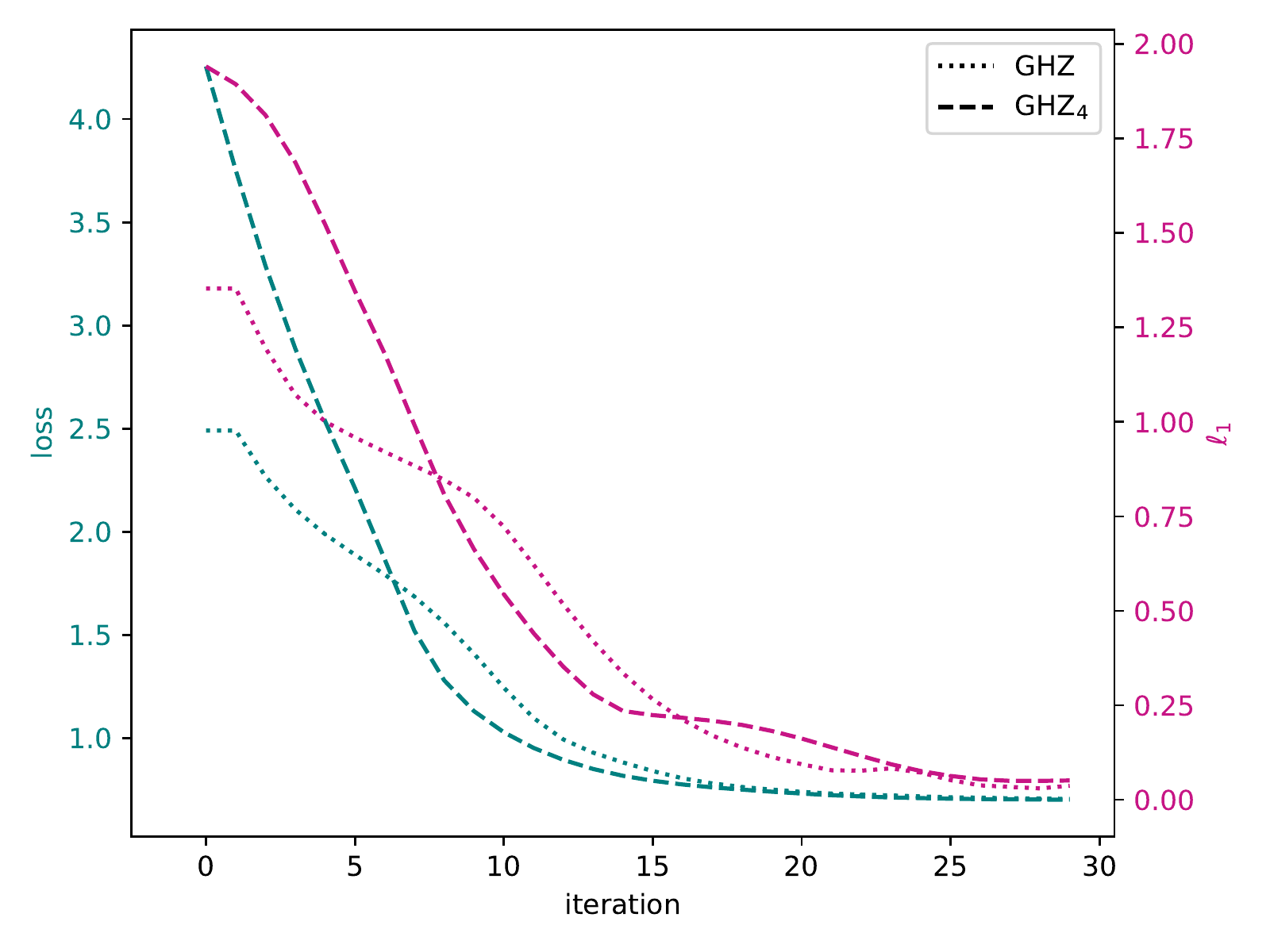}};
\node at (-10, 3){(a)};
\node at (-3, 3){(b)};
\node at (-6, -3){(c)};
\end{tikzpicture}
\end{center}
\caption{(a) The training progress of a fully-visible QBM model is illustrated which aims to represent the measurement distribution of a Bell state.
The green function corresponds to the loss and the pink function represents the distance between the trained and target distribution with respect to the $\ell_1$ norm at each step of the iteration.
The points represent the mean and the error bars the standard deviation of the results.
(b) The sampling probability of the Bell state (blue) as well as the best (pink) and worst (purple) probability distribution achieved from  $10$ different random seeds is presented. 
All results are computed for $10$ different random seeds. 
(c) The training progress of a fully-visible QBM model is shown that aims to represent the measurement distribution of a $3$ and a $4$ qubit GHZ state.
The green respectively pink dots (crosses) correspond to the loss respectively the distance between the trained and target distribution with respect to the $\ell_1$ norm at each step of the iteration for the $3$ ($4$) qubit GHZ state.}
\label{fig:qrbm}
\end{figure}



Now, we present three illustrative examples for generative QBMs \footnote{These examples are trained with $L\left(\bm{\theta}\right) = -\sum_{x}p_x^{\text{data}} \log{ p_x\left(\bm{\theta}\right)}$ which is convention in existing ML literature. Although this is feasible for the small system size of the presented examples, a sampling based evaluation of the respective loss function gradient would not scale.}.
We would like to point out that all of the presented examples learn the sampling statistics of maximally entangled states which exhibit non-local correlations. This setting is not compatible with every Gibbs state preparation method -- see Sec.~\ref{sec:data_encoding} for further discussion -- but specifically enabled by the use of VarQITE.
More precisely, we train QBMs to mimic the sampling statistics of a Bell state
    \begin{equation}
        \ket{\phi^{+}}=\frac{\ket{00} + \ket{11}}{\sqrt{2}},
    \end{equation}
    of a Greenberg-Horne-Zeiliinger (GHZ) state \cite{GHZ1989}
    \begin{equation}
        \ket{\text{GHZ}}=\frac{\ket{000} + \ket{111}}{\sqrt{2}},
    \end{equation}
    and a generalized $4$ qubit GHZ state 
    \begin{equation}
        \ket{\text{GHZ}_4}=\frac{\ket{0000} + \ket{1111}}{\sqrt{2}}.
    \end{equation}
The learning models are given by fully visible QBMs which are based on two-local Hamiltonians of the form
\begin{align}
     H\left(\bm{\tilde\theta}, \bm{\theta}\right)= -\left(\sum\limits_{i=0}^{n-1} \tilde\theta_i Z_j + \sum\limits_{i,j=0}^{n-1}\theta_{i,j} Z_i\otimes Z_j \right),
\end{align}
where $Z_i$ acts on qubit $i$ and $n$ corresponds to the number of state qubits.

The respective optimization runs on an ideal statevector simulation of a quantum computer using AMSGrad \cite{amsgrad} with initial learning rate $0.1$, first momentum $0.7$, and second momentum $0.99$ as optimization routine. The initial values of the Hamiltonian parameters $\bm{\theta}$ are drawn from a uniform distribution on $\left[-1, 1\right]$.
Furthermore, the Gibbs state preparations use forward Euler ODE solvers with $10$ steps per state preparation and $ \text{k}_{\text{B}}\text{T} = 1$. For a detailed discussion on the possible choices of ODE solvers for VarQITE, we refer to Sec.~\ref{sec:odesolvers}. The respective ansatz is chosen according to Fig.~\ref{fig:ansatz_Bell}.

First, we investigate the training of $p^{\text{Bell}}$ in detail. More specifically, we run $10$ experiments using different randomly drawn initial values for $\bm{\theta}$ and look into the statistics.
The averaged values of the loss function as well as the distance between the target distribution $p^{\text{Bell}}$ and the trained distribution $p^{\text{Bell}}\left(\bm{\omega}\left(\bm{\theta}\right)\right)$ with respect to the $\ell_1$ norm are illustrated over $50$ optimization iterations in Fig.~\ref{fig:qrbm} (a). 
The plot shows that loss and distance converge toward the same values for all sets of initial parameters. Likewise, the trained parameters $\bm{\theta}$ converge to similar values.
Moreover, Fig.~\ref{fig:qrbm} (b) illustrates the target probability distribution and for the best and worst of the trained distributions. The plot reveals that the model is able to train the respective distribution well.

Next, we present the results for the bigger \varqbm{} examples which learn $p^{\text{GHZ}}$ respectively $p^{\text{GHZ}_4}$. Fig.~\ref{fig:qrbm} (c) illustrates the progress of the loss function as well as the $\ell_1$ norm between the trained and the target distribution for $30$ training step. Both measures show good convergence behavior and, thus, indicate that the method has potential to scale well.

To sum up, the proof of concept examples presented in this section indicate that \varqbm{} has promising training properties. Since we use fully visible Hamiltonian models and shallow quantum circuits  neither entanglement-induced nor circuit-depth-induced barren plateaus, see Sec.~\ref{sec:vanishing_grads}, are expected to impact this training setting. However, the potential occurrence of vanishing gradients which is due to hardware noise and global cost functions require further investigation with quantum hardware and larger experiments, respectively.

Lastly, we would like to point out that QBMs -- like their classical counterpart -- are applicable to discriminative learning, as well. In this context, the QBM is used to learn a conditional distribution with respect to training data which is given by pairs of data samples and data labels. An example of a \varqbm{} model that is trained for a classification task on an artificial credit card transaction data set, where the quantum model outperforms a set of standard classifiers from scikit-learn \cite{scikit-learn2011}, is presented in Appendix \ref{appendix:discriminativeqbm}. Potential advantages of using QBMs for discriminative learning require additional investigation. In particular, it would be interesting to study whether QBMs enable the representation of a larger class of conditional distributions than their classical counterparts and might, therefore, represent more expressive classification models.

\subsection{Discussion}
\label{sec:qbm_conclusion_outlook}

This section discusses the application of QBMs for generative QML which can be used to facilitate approximate quantum data loading. Furthermore, we outline an implementation method which employs VarQITE for approximate Gibbs state preparation. Notably, this variational Gibbs state preparation method allows for an efficient, a posteriori error bound of the preparation accuracy, see Sec.~\ref{sec:error_qite}. We would like to emphasize that the evaluation of this error bound can be easily integrated into the algorithmic workflow of VarQBMs. A benchmark of the Gibbs state preparation accuracy with respect to the mentioned error bound remains to be studied in future research. This variational QBM implementation is compatible with generic Hamiltonians, enables automatic differentiation for gradient-based training and is suitable for execution on first-generation quantum hardware. Moreover, the presented scheme is not only compatible with local but also long-range correlations and arbitrary system temperatures for a sufficiently powerful ansatz. Lastly, we present proof of concept examples which demonstrate the power of this method.

Although generative QML models that train the parameters of a given quantum channel directly, such as the qGANs presented in Sec.~\ref{sec:qgan}, are simpler to implement than QBMs, which encode the information in the parameters of quantum operators, there are various promising aspects about the latter.
It is known that the energy functions of BMs and related energy-based models can be chosen so that they reflect the conditional independence structure of given training data \cite{BresslerMRFSSTructure13, piatkowski2019exponential, discreteProbDistApproxChow68, takashina2018structureMRF, vuffray2020efficientLearningGraphModels}. Similar approaches can enable us to make informed choices about the Hamiltonian structure of a QBM. Furthermore, the Hamiltonian structure may facilitate the construction of problem-specific ans\"atze that, in turn, might help to avoid exponentially vanishing gradients and improve training efficiency, see Sec.~\ref{sec:vanishing_grads}.
Due to their inherent structure, QBMs are also more likely to lead to well interpretable machine learning models than, e.g., qGANs. 

Furthermore, QBMs are not limited to generative applications but can also be used to train discriminative models, see Appendix \ref{appendix:discriminativeqbm}. Interestingly, the respective algorithm avoids the loading of the input training data into a quantum state. Instead, it encodes the data into the parameters of the Hamiltonian.

An open question for future research would be to conduct a thorough analysis of trial state ans\"atze which would help to improve our understanding of the model's representation capabilities and enable more informed choices of ansatz states. More specifically, the investigation of problem-specific ans\"atze could help to avoid vanishing gradient problems, see Sec.~\ref{sec:vanishing_grads} for further discussion.



\section[Approximate Data Loading for Quantum Amplitude Estimation]{Approximate Data Loading for Quantum Amplitude Estimation\footnote{This section is reproduced in part, with permission, from C.~Zoufal, A.~Lucchi, S.~Woerner, "Quantum Generative Adversarial Networks for Learning and Loading Random Distributions", \textit{npj Quantum Information}, vol. 5, Article Nr. 103, 2019}}
\label{sec:EuropeanCallOptionPricingexample}

Generative quantum machine learning may facilitate the study of potentially advantageous fault-tolerant quantum algorithms such as QAE \cite{brassardQAE02} and the HHL algorithm\footnote{Assuming that the respective matrix is well-conditioned and the right-hand-side has a suitable form \cite{Aaronson2015_finePrint}.} \cite{hhl}. These algorithms are stable up to small errors in the input state, i.e., small deviations in the input only lead to small deviations in the result and are, thus, compatible with approximate quantum data loading. 
The following section demonstrates how generative QML algorithms can help to study the potential advantages of quantum algorithms in the context of a real-world example. More specifically, we present a qGAN-based approximate quantum data loading for financial derivative pricing with QAE. To that end, qGANs are used to learn and load a model for the spot price of an asset underlying a European call option. This, in turn, enables the evaluation of characteristics of this model, such as the expected payoff, with QAE \cite{brassardQAE02, worQuantumRiskAnalysis19}.
To demonstrate and verify the efficiency of the suggested method, we implement a small illustrative example that is based on the analytically evaluable standard model for European option pricing, the Black-Scholes model \cite{BlackScholes}.
We would like to point out that in this context, QAE can achieve a quadratic improvement in the error scaling compared to standard classical Monte Carlo simulation.

The remaining section is structured as follows. First, we discuss the pricing of European call options with the Black-Scholes model and how to load a model for the spot price of an asset underlying a European call option into a quantum channel in Sec.~\ref{sec:europeancallOpt}. Then, we give an introduction to QAE in Sec.~\ref{sec:qae}. 
Finally, Sec.~\ref{subsec:europeanOptionPricingResults} presents the results of the call option pricing and Sec.~\ref{sec:generativeQMLconclusionapplication} gives conclusions and an outlook.

\subsection{European Call Option Pricing}
\label{sec:europeancallOpt}
Now, we are going to explain European call option pricing with respect to the Black-Scholes model \cite{BlackScholes}.
The owner of a European call option is permitted, but not obliged, to buy an underlying asset for a given strike price $K$ at a predefined future maturity date $T$, where the asset's spot price at maturity $S_T$ is assumed to be uncertain.
If $S_T \leq K$, i.e., the spot price is below the strike price, it is unreasonable to exercise the option as it is cheaper to buy the asset regularly.
However, if $S_T > K$, exercising the option to buy the asset for price $K$ and immediately selling it again for $S_T$ can realize a payoff $S_T - K$. In other words, the payoff of the option is defined as $\max\lbrace S_T - K, 0 \rbrace$ and our goal is to evaluate the expected payoff $\EX\left[\max\lbrace S_T - K,0 \rbrace\right]$. To that end, $S_T$ is assumed to follow a particular random distribution.
This corresponds to the fair option price before discounting. Notably, discounting is neglected to simplify the problem.

We would also like to point out that this model often over-simplifies the real circumstances.
In more realistic and complex cases, where the spot price follows a more generic stochastic process or where the payoff function has a more complicated structure, options are usually evaluated with Monte Carlo simulations \cite{Glasserman2003}.
A Monte Carlo simulation uses $N$ random samples drawn from the respective distribution to evaluate an estimate for a characteristic of the distribution, e.g., the expected payoff. The estimation error of this technique behaves like $\epsilon=\mathscr{O}(1/\sqrt{N})$.
When using $n$ evaluation qubits in QAE, which is explained next, this induces the evaluation of $N=2^n$ quantum samples to estimate the respective distribution characteristic. Now, this quantum algorithm achieves a Grover-type error scaling for option pricing, i.e., $\epsilon=\mathscr{O}(1/N)$ \cite{brassardQAE02, Rebentrost2018, worQuantumRiskAnalysis19}.

To evaluate an option's expected payoff with QAE, the problem must be encoded into a quantum channel that loads the respective probability distribution and implements the payoff function.
The respective model for an asset's spot price can be loaded approximately by training a generative QML algorithm such as a qGAN. To that end, we collect historic data about the asset's spot price and use it as training data for our generative QML model.
According to the Black-Scholes model \cite{BlackScholes}, the spot price at maturity $S_T$ for a European call option is log-normally distributed.
Thus, we, now, assume that the asset's spot price, which is typically unknown, is given distributed according to a log-normal distribution. This means that we can use the qGAN training results for the log-normal distribution presented in Sec.~\ref{sec:qganApplications} as a model for the asset's spot price.
More explicitly, we base our analysis on the model that is trained with an actual quantum computer, the IBM Quantum Boeblingen $20$ qubit chip \cite{ibmQX}, see Fig.~\ref{fig:pdf_real}.

\subsection{Quantum Amplitude Estimation}
\label{sec:qae}
Given a quantum channel 
\begin{equation}
\label{eq:qae}
\mathscr{A}\ket{0}^{\otimes n+1} = \sqrt{1-a}\ket{\psi_0}\ket{0} + \sqrt{a}\ket{\psi_1}\ket{1},
\end{equation}
where $\ket{\psi_0}$, $\ket{\psi_1}$ denote $n$-qubit states, the QAE algorithm \cite{brassardQAE02}, illustrated in Fig.~\ref{fig:qae}, enables the efficient evaluation of the amplitude $a$.
The algorithm requires $m$ additional evaluation qubits that control the applications of an operator $\mathscr{Q} = - \mathscr{A}\mathscr{S}_0\mathscr{A}^{\dagger}\mathscr{S}_{\psi_0}$ where $\mathscr{S}_0 = \mathbb{I}^{\otimes n +1} - 2\ket{0}\bra{0}^{\otimes n+1}$ and $\mathscr{S}_{\psi_0} = \mathbb{I}^{\otimes n +1} - 2\ket{\psi_0}\bra{\psi_0}\otimes\ket{0}\bra{0}$.
The error in the outcome -- ignoring higher terms -- can be bounded by $\frac{\pi}{2^m}$. Considering that $2^m$ is the number of quantum samples used for the estimate evaluation, this error scaling offers a quadratic improvement compared to classical Monte Carlo simulations.

\begin{figure}[h!]
\captionsetup{singlelinecheck = false, format= hang, justification=raggedright, font=footnotesize, labelsep=space}
\begin{center}
\begin{tikzpicture}
 \node at (0,4){Quantum Amplitude Estimation};
\node at (0,0){\includegraphics[width=0.8\linewidth]{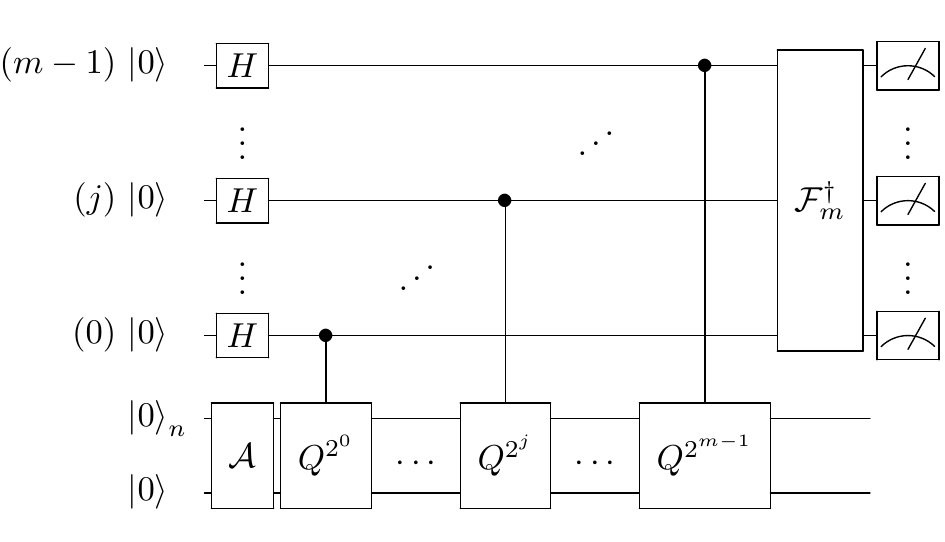}};
\end{tikzpicture}
\end{center}
\caption{The illustrated quantum circuit corresponds to the QAE algorithm with the inverse quantum Fourier transform \cite{nielsen10} being denoted by $\mathscr{F}_m^{\dagger}$.}
\label{fig:qae}
\end{figure}

To use QAE for the pricing of European options, we need to construct and implement a suitable oracle $\mathscr{A}$. First, we load a discretized form of the uncertainty distribution that represents the spot price $S_T$ of the underlying asset at the option's maturity $T$ into a quantum state $\sum_{i=0}^{2^n-1} \sqrt{p_i}\ket{i}$, where $\ket{i}$ represents a possible spot price. It should be noted that small errors in this state preparation only lead to small errors in the final result.
Then, we add a working qubit $\ket{0}$ and use a compare circuit which applies an $X$ gate to the working qubit if $i>K$, i.e.
\begin{eqnarray}
\ket{i}\ket{0} \mapsto \begin{cases}
\ket{i}\ket{0} & \text{, if} \;  i \leq K \\
\ket{i}\ket{1} &  \text{, if} \; i > K,
\end{cases}
\end{eqnarray}
where $K$ denotes the strike price. 
Now, the state reads
\begin{equation}
\sum_{i=0}^{K} \sqrt{p_i} \ket{i}\ket{0} + \sum_{i=K+1}^{2^n-1} \sqrt{p_i} \ket{i} \ket{1}.
\end{equation}
Next, we want to apply the payoff function 
\begin{align}
    f: i\mapsto\frac{i - K}{2^n - K - 1}
\end{align}
to the quantum state. To that end, we can use an additional working qubit to control the mapping of the payoff function as
\begin{align}
    F: \ket{i}\ket{0} \mapsto \sqrt{1 - f(i)}\ket{i} \ket{0} + \sqrt{f(i)}\ket{i} \ket{1}.
\end{align}
For practical reasons, we implement an efficient polynomial approximation to $F$ that is introduced in \cite{worQuantumRiskAnalysis19} and, thereby, avoid potentially expensive operations.
Now, the full construction for $\mathscr{A}$ prepares the quantum state
\begin{align}
\label{eq:qaeRot}
\mathscr{A}\ket{0}^{\otimes n+1} &= \sum_{i=0}^{K} \sqrt{p_i} \ket{i}\ket{0} \ket{0} \nonumber\\
&\hspace{5mm}+\sum_{i=K+1}^{2^n-1} \sqrt{p_i} \ket{i} \ket{1} \left( \sqrt{1 - f(i)} \ket{0} + \sqrt{f(i)} \ket{1} \right).
\end{align}

Eventually, the probability of measuring $\ket{1}$ in the last qubit is equal to
\begin{eqnarray}
\mathbb{P}[\ket{1}] &=&
\frac{1}{2^n - K - 1} \sum_{i=K+1}^{2^n-1} p_i (i - K) \\
&=& \frac{1}{2^n - K - 1} \mathbb{E}[\max\{0, S_T - K\}].
\end{eqnarray}
We can see from comparing Eq.~\eqref{eq:qae} and Eq.~\eqref{eq:qaeRot} that $\mathbb{P}[\ket{1}] = a$.
QAE can, thus, be used to efficiently evaluate $\mathbb{E}[\max\{0, S_T - K\}] = \mathbb{P}[\ket{1} ](2^n - K - 1)$.




\subsection{Results}
\label{subsec:europeanOptionPricingResults}

It remains to be demonstrated that the generative QML facilitates approximate quantum data loading in the context of quantum applications. To that end, we consider a European call option pricing task, where
the quantum generator approximates the spot price at maturity $S_T$.
More specifically, we integrate the distribution loading quantum channel into a quantum algorithm-based on QAE to evaluate the expected payoff $\EX\left[\max\left\{S_T - K, 0 \right\}\right]$ for $K=\$2$, illustrated in Fig.~\ref{fig:european_option}.  
We refer to \cite{worQuantumRiskAnalysis19, Stamatopoulos_2020} for further discussion on derivative pricing with QAE.

\begin{figure}[!h]
\centering{
\begin{tikzpicture}
   \node at (0, 3.3) {\textbf{Payoff Function European Call Option}};
 \node at (0, 0) { \includegraphics[width=0.6\linewidth]{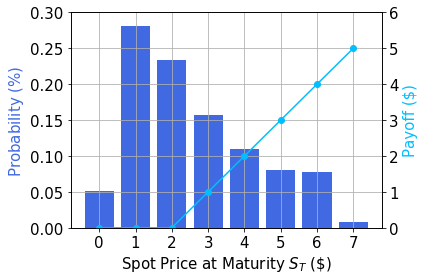}};
 \end{tikzpicture}
 }
\caption{
The illustration shows a probability distribution model of the spot price at maturity $S_T$ that has been learned with a randomly initialized qGAN run on the IBM Quantum Boeblingen chip and the corresponding payoff function for a European call option.
\label{fig:european_option} }
\end{figure}

The results for estimating $\EX\left[\max\left\{ S_T - K, 0 \right\}\right]$ are given in Tab.~\ref{tbl:europeanCallOption}, where we compare
\begin{itemize}
\item an analytic evaluation with the exact (truncated and discretized) log-normal distribution $p_{\text{real}}$,
\item a Monte Carlo simulation utilizing $\ket{g^{\omega}}$ trained and generated with the quantum hardware (IBM Quantum Boeblingen), i.e., $1024$ random samples of $S_T$ are drawn by measuring $\ket{g^{\omega}}$ and used to estimate the expected payoff, and

\item a classically simulated QAE-based evaluation using $m=8$ evaluation qubits, i.e., $2^8=256$ quantum samples, where the probability distribution $\ket{g^{\omega}}$ is trained with IBM Quantum Boeblingen chip.

\end{itemize}

The resulting confidence intervals (CI) are shown for a confidence level of $95\%$ for Monte Carlo simulation as well as QAE.
The CIs are of comparable size, although, because of better scaling, QAE requires only a fourth of the samples.
Since the distribution is approximated, both CIs are close to the exact value but do not actually contain it.
Note that the estimates and the CIs of the Monte Carlo and the QAE evaluation are not subject to the same level of noise effects. This is because the QAE evaluation uses the generator parameters trained with IBM Quantum Boeblingen but is run with a quantum simulator, whereas the Monte Carlo simulation is solely run on actual quantum hardware. To be able to run QAE on a quantum computer, further improvements are required, e.g., longer coherence times and higher gate fidelities.

\begin{table}
\captionsetup{singlelinecheck = false, format= hang, justification=raggedright, font=footnotesize, labelsep=space}
\centering{
\begin{tabular}{c|c|c|c|c}
\textbf{Approach} & \textbf{Distribution} & \textbf{Payoff} (\$) & \#\textbf{Samples}  & \textbf{CI} (\$)\\ 
\hline
analytic    & log-normal       &  1.0602  & -    & - \\
MC + QC    & $\ket{g^{\omega}}$ & 0.9740  & 1024& $\pm 0.0848$ \\
QAE     & $\ket{g^{\omega}}$ &  1.1391  &  256 & $\pm 0.0710$
\end{tabular}
}
\caption{
This table presents a comparison of different approaches to evaluate $\EX\left[\max\lbrace S_T - K, 0 \rbrace\right]$: an analytic evaluation of the log-normal model, a Monte Carlo simulation drawing samples from IBM Quantum Boeblingen, and a classically simulated QAE.
Furthermore, the $95\%$ confidence intervals of the estimates are shown.}
\label{tbl:europeanCallOption}
\end{table}

\subsection{Discussion}
\label{sec:generativeQMLconclusionapplication}
This section presents an explicit example where quantum data loading with generative QML facilitates a QAE application. More specifically, the quantum algorithm is used in the context of European call option pricing, where it has the potential to achieve a quadratic improvement compared to classical Monte Carlo simulations. To that end, a qGAN trains a quantum state such that it represents a model of the spot price at maturity underlying the European call option.

Of course, this setting can also be used to price a variety of other options as is described in \cite{Stamatopoulos_2020}. QAE applications are not limited to option pricing but can generally be explored for problems which are classically evaluated with Monte Carlo methods, such as simulation-based optimization \cite{Gacon_2020QSBO}. For all of these tasks and many others, generative QML offers a flexible and promising method to load approximations to the necessary data in an efficient way.

\addtocontents{toc}{}
\chapter{Conclusion and Outlook}
\label{conclusion} 

\index{conclusion}

\vspace{8mm}

\noindent

\section{Conclusion}
\label{sec:conclusion}

This thesis introduces and analyzes two generative QML algorithms, qGANs and QBMs, that can be implemented with variational quantum circuits and, are thus, suitable for execution on first-generation quantum hardware. Both paradigms have different advantages that were analyzed and demonstrated theoretically as well as practically.


QGANs represent a simple but flexible algorithm that is straight-forward to implement. However, the model offers limited inherent structure. The choice of the input state can be based on information about the data structure but the generator architecture as well as the discriminator architecture are usually chosen heuristically. Notably, the respective architectures must be selected carefully to avoid barren plateaus as well as an imbalanced training behavior.

QBMs represent a structured approach and can, thus, help with the development of problem-specific ans\"atze as well as facilitate interpretability. Nevertheless, the algorithm requires Gibbs state preparation -- or at least Gibbs sampling -- which is generally difficult and potentially expensive. 
As part of the presented VarQBM workflow, we overcome this hurdle by using VarQITE for approximate Gibbs state preparation.
We benchmark the power of this technique and present a novel, efficient, a posteriori error bound for the simulation accuracy of VarQITE. 

Furthermore, we discuss that a useful and feasible application of generative QML is given by approximate quantum data loading.
The respective practicality is accentuated with various examples that are executed using quantum simulations or actual quantum hardware.

\section{Outlook}
\label{sec:outlook}

Although quantum machine learning is considered a promising field for achieving a quantum advantage \cite{Huang_2021InformationTheoreticBoundQA, Huang_2021PowerDatainQML}, there are still many open questions about its strengths and weaknesses. 
It is assumed that, for certain problem instances,  parameterized quantum circuits may be employed as a beneficial \cite{BiamonteQML17} quantum counterpart to classical neural networks.
In order to understand the practical potential of quantum machine learning models, it is important to investigate properties such as:
\begin{itemize}
    \item the capacity, referring to the variety of functions that a model ansatz can represent,
    \item the generalization capabilities, expressing a model's ability that is trained on a given data set to generalize to unseen data instances,
    \item the expressivity, describing the fraction of the state space that can be accessed by the chosen ansatz,
    \item and the trainability, i.e., the speed of convergence.
\end{itemize}
However, there are only few research results that enable the investigation of these properties.
For example, the work presented in \cite{Abbas_2021PowerofQNNs} compares the capacity of a set of classical neural networks with comparable quantum counterparts using the \emph{effective dimension} as a capacity measure and finds that the parameterized quantum circuits exhibit advantageous properties. 
How to measure the generalization capabilities of quantum machine learning models using mutual information is studied in \cite{banchi2021generalizationQML}.
Statistical learning theory \cite{Vapnik1995StatLearningTheory} is used in \cite{du2021efficientMeasureExpressivityVQA} to efficiently quantify the expressivity of variational quantum circuits and, also, to study the generalization properties of these circuits.
Furthermore, the trainability of variational quantum circuits is, e.g., studied in
\cite{du2020learnabilityQNN} again using statistical learning theory.
So far, these measures have only been applied to study a limited number of examples. To understand the true practical power and bottlenecks of QML applications, thorough benchmarking is required.
To that end, it is vital to have the respective measures available in state-of-the-art quantum software frameworks to facilitate an easy access to model analysis as part of a quantum machine learning workflow.

Additionally, we need to get a better understanding of how we could scale up quantum machine learning models. As discussed in Sec.~\ref{sec:vanishing_grads}, using expressive quantum circuits or global cost functions can easily lead to exponentially vanishing gradients and an exponential flattening of the loss landscape and, thus, impede efficient training. Therefore, scalable quantum machine learning algorithms have to be defined carefully.
The beginnings of classical machine learning also faced similar problems and managed to resolve them with targeted ansatz structures, loss functions, and initialization strategies. 
We expect that the development and application of problem-specific ans\"atze with limited entanglement will play a crucial role in the future of quantum machine learning and are curious whether non-linear transformations could help to circumvent some of the discussed problems.

Lastly, we would like to point out that shallow, parameterized quantum circuits offer a suitable setting to study the practical performance of quantum algorithms with currently available simulators and hardware backends. They, also, act as a stepping stone, in the sense that they allow us to build up knowledge, start designing software platforms and form expectations about the resources that are required to facilitate full-scale quantum applications. 

\addtocontents{toc}{}
\appendix

\chapter[Variational Quantum Real Time Evolution]{Appendix: Variational Quantum Real Time Evolution\footnote{This section is reproduced in part, with permission, from C.~Zoufal, D.~Sutter, S.~Woerner, "Error Bounds for Variational Quantum Time Evolution", Preprint available at arXiv:2108.00022, 2021}}
\label{app:varqrte}


\textbf{Abstract.} 
This chapter introduces variational quantum real time evolution, as well as, an a posteriori error bound that enables to estimate the approximation accuracy of this method. To illustrate the performance of the presented variational quantum real time evolution and the respective error bounds, various numerical examples are presented.
\index{appendix VarQTE}

\vspace{8mm}

\noindent

Quantum time evolution describes the dynamics of a quantum state with respect to a given Hamiltonian $H$, see Sec.~\ref{sec:qte} for further discussion on this topic.
Real quantum time dynamics are described by the Schr\"odinger equation
\begin{align}
    i\hslash\ket{\dot\psi^*_t}= H\ket{\psi^*_t},
\end{align}
where $\ket{\dot\psi^*_t} = \frac{\partial \ket{\psi^*_t}}{\partial t}$.

This Appendix is structured as follows. First, Sec.~\ref{sec:varqrte_implementation} discusses how to implement quantum real time evolution with a variational quantum circuit. Then, Sec.~\ref{subsec:error_qrte} introduces an efficient, phase-agnostic, a posteriori error bound for the variational implementation. Lastly, Sec.~\ref{app:varqrte_experiments} illustrates the efficiency and practicality of the variational quantum real time method as well as the introduced error bound.

\section{Variational Approach}
\label{sec:varqrte_implementation}

The variational quantum real time evolution (VarQRTE) derived in the following employs McLachlan's variational principle \cite{McLachlan64} to simulate the actual time dynamics by propagating a parameterized state $\ket{\psi^{\omega}_t}$. The respective approach is agnostic to a potential time-dependent global phase \cite{Simon18TheoryVarQSim}.

Consider the real time evolution of a parameterized state with an explicit time-\linebreak dependent global phase parameter $\nu$, i.e., $\ket{\psi^{\nu}_t} = \ee^{-i\nu}\ket{\psi^{\omega}_t}$ for $\nu = \nu_t \in \mathbb{R}$,
where
\begin{align}
\ket{\dot\psi^\nu} = -i\dot\nu\ee^{-i\nu}\ket{\psi^{\omega}_t}+ \ee^{-i\nu}\ket{\dot{\psi}^{\omega}_t}.
\end{align}
The Schr\"odinger equation with respect to $\ket{\psi^{\nu}_t}$ reads
\begin{align}
\label{eq:phase_fix_Schroed}
    i\hslash\ket{\dot\psi^\nu} = H\ket{\psi^\nu}
\end{align}
and can be rewritten as 
\begin{align}
    \label{eq:Schroedinger_real_nu}
     i\hslash\ee^{-i\nu}\ket{\dot\psi_t^{\omega}} = \left(H-\hslash\dot\nu\mathds{1}\right)\ee^{-i\nu}\ket{\psi_t^{\omega}}.
\end{align}
To simplify the notation, we shall from now on refer to $\hslash\dot\nu\mathds{1}$ as $\hslash\dot\nu$.
Division by $\ee^{-i\nu}$ gives
\begin{align}
\label{eq:redefinedVarQRTE}
     i\hslash\ket{\dot\psi^{\omega}_t} = \left(H-\hslash\dot\nu\right)\ket{\psi^{\omega}_t}.
\end{align}
Sine, we employ a variational ansatz $\ket{\psi^{\omega}_t}$ Eq.~\eqref{eq:redefinedVarQRTE} may not hold exactly. The goal of McLachlan's variational principle \cite{McLachlan64} is, now, to find parameter updates $\bm{\dot\omega}$ which minimize the error in Eq.~\eqref{eq:redefinedVarQRTE} w.r.t.~the $\ell_2$-norm $\norm{x}_2 = \sqrt{\langle x,x\rangle}$, i.e.,
\begin{align}
\label{eq:MacLachlan_phase_real}
 \delta\norm{i\hslash\ket{\dot\psi^{\omega}_t} - \left(H-\hslash\dot\nu\right)\ket{\psi^{\omega}_t}}_2&= 0.
\end{align}
Furthermore, we can find an explicit expression for $\dot\nu$ by evaluating the corresponding variational principle, i.e.,
\begin{align}
     \delta_{\dot\nu}\norm{i\hslash\ket{\dot\psi^{\omega}_t} - \left(H-\hslash\dot\nu\right)\ket{\psi^{\omega}_t}}_2&= 0
\end{align}
which leads to
\begin{align} \label{eq_nu_dot}
     \dot\nu = \frac{1}{\hslash}E_t^{\omega} + \text{Im}\left(\braket{\dot\psi^{\omega}_t|\psi^{\omega}_t}\right),
\end{align}
where $E_t^{\omega}:=\bra{\psi^{\omega}_t}H\ket{\psi^{\omega}_t}$.

Finally, we can see that Eq.~\eqref{eq:Schroedinger_real_nu} defines an evolution for $\ket{\psi^{\omega}_t}$ that simulates the existence of the global phase parameter $\nu$ without actually integrating or tracking $\ee^{-i\nu}$. We, now, rewrite this equation as
\begin{align}
\label{eq:MacLachlan_phase_agnostic_real}
     i\hslash\ket{\dot\psi^{\dot\nu}_t}= H\ket{\psi^{\omega}_t},
\end{align}
where $\ket{\dot\psi^{\dot\nu}_t} := \ket{\dot\psi^{\omega}_t} -\frac{i}{\hslash}(E_t^{\omega}+\hslash \text{Im}(\braket{\dot\psi^{\omega}_t|\psi^{\omega}_t}))\ket{\psi^{\omega}_t}$ represents the effective state gradient.
McLachlan's variational principle now implies
\begin{align}
\label{eq:MacLachlan_mixed_real}
     \delta\| i\hslash\ket{\dot\psi^{\dot\nu}_t} - H\ket{\psi^{\omega}_t} \|_2&= 0.
\end{align}

Since $\ket{\psi^{\omega}_t}$ is given by a parameterized quantum circuit, solving Eq.~\eqref{eq:MacLachlan_mixed_real} with $\ket{\dot\psi^{\omega}_t} = \sum_i\dot\omega_i\frac{\partial\ket{\psi^{\omega}_t}}{\partial\omega_i}$
results in 
 \begin{align}
 \label{eq:McLachlanVarQRTE}
    \hslash \sum\limits_{j=0}^{k}\mathcal{F}^Q_{ij} \dot\omega_j = \text{Im}\left( C_i -  \frac{\partial \bra{\psi^{\omega}_t}}{\partial \omega_i}\ket{\psi^{\omega}_t}E_t^{\omega}\right),
\end{align}
where $C_i = \bra{\psi^{\omega}_t} H\frac{\partial \ket{\psi^{\omega}_t}}{\partial \omega_i}$ and
$\mathcal{F}_{ij}^Q$ denotes the $(i,j)$-entry of the Fubini-Study metric \cite{QFIMBraunstein94, meyer2021fisher} given by
\begin{align}
\mathcal{F}^Q_{ij} =
\text{Re}\left(\frac{\partial \bra{\psi^{\omega}_t}}{\partial \omega_i}\frac{\partial \ket{\psi^{\omega}_t}}{\partial \omega_j} - \frac{\partial \bra{\psi^{\omega}_t}}{\partial \omega_i}\proj{\psi^{\omega}_t}\frac{\partial \ket{\psi^{\omega}_t}}{\partial \omega_j}\right). \nonumber
\end{align}
The efficient evaluation of the terms in Eq.~\eqref{eq:McLachlanVarQRTE} is discussed in Sec.~\ref{sec:analytic_gradients}.

Solving the system of linear equations (SLE) \cite{Liesen2015LinAlg} from Eq.~\eqref{eq:McLachlanVarQRTE} for $\bm{\dot{\omega}}$ defines an ordinary differential equation (ODE) that describes the evolution of the parameters, i.e.,
\begin{align}
\label{eq:standardODE_varqrte}
    f_{\text{std}}\left(\bm{\omega}\right) = \frac{1}{\hslash}\left(\mathcal{F}^Q\right)^{-1}\text{Im}\left( \bm{C} -  \frac{\partial \bra{\psi^{\omega}_t}}{\partial \bm{\omega}}\ket{\psi^{\omega}_t}E_t^{\omega}\right),
\end{align}
where $\bm{C} = \left(C_0, \ldots, C_k\right)$.

Notably, the gradient $\ket{\dot\psi^{\dot\nu}_t}$ computed on the basis of this variational principle will not always be exact. Therefore, $\| \ket{e_t}\| > 0$
where
\begin{align}
\label{eq:gradientErrorVarQRTE}
     \ket{e_t} := \ket{\dot\psi^{\dot\nu}_t}+ \frac{i}{\hslash}H\ket{\psi^{\omega}_t}
\end{align}
denotes the gradient error.
This gradient error also enables an alternative definition of a VarQRTE ODE. More explicitly, we can define the ODE as optimization problem that searches for the minimizing argument of the squared gradient error norm,
\begin{es}
\label{eq:argminODE_varqrte}
  f_{\text{min}}\left(\bm{\omega}\right)=\underset{\boldsymbol{\dot{\omega}}\in\mathbb{R}^{k+1}}{\text{argmin}} \, \|\ket{e_{t}}\|_2 ^2,
\end{es}
which can also be written as
\begin{align}
\label{eq:qrte_error}
    \norm{\ket{e_{t}}}_2  ^2 &= \Var\left(H\right)_{\psi^{\omega}_t} + \left( \braket{\dot\psi^{\omega}_t|\dot\psi^{\omega}_t} - \braket{\dot\psi^{\omega}_t|\psi^{\omega}_t}\braket{\psi^{\omega}_t|\dot\psi^{\omega}_t}\right)  \nonumber \\
    &\hspace{5mm}- 2\textnormal{Im}\left(\bra{\dot\psi^{\omega}_t}H\ket{\psi^{\omega}_t}-E_t^{\omega} \braket{\dot\psi^{\omega}_t|\psi^{\omega}_t} \right),
\end{align}
with $\Var(H)_{\psi^{\omega}_t} = \bra{\psi^{\omega}_t} H^2\ket{\psi^{\omega}_t} - (E_t^{\omega})^2$ and $\textstyle{
 2\text{Re}(\braket{\psi^{\omega}_t|\dot\psi^{\omega}_t}) = \frac{\partial\braket{\psi^{\omega}_t|\psi^{\omega}_t}}{\partial t} = 0}.$

Since the time-dependence of $\ket{\psi^{\omega}_t}$ is encoded in the real parameters $\bm{\omega}$,
Eq.~\eqref{eq:qrte_error} can, therefore, be rewritten as
\begin{align}
\label{eq:error_rqte_var}
    \|\ket{e_t}\|_2^2 &= \Var(H)_{\psi^{\omega}_t} + \sum\limits_{i, j}\dot \omega_i \dot \omega_j \mathcal{F}^Q_{ij} \nonumber \\ 
    &\hspace{5mm}- 2\sum_i\dot\omega_i\text{Im}\left(C_i -  \frac{\partial \bra{\psi^{\omega}_t}}{\partial \omega_i}\ket{\psi^{\omega}_t}E_t^{\omega}\right)
\end{align}
using
\begin{align}
     &\text{Im}\left(\bra{\dot \psi^{\omega}_t}H\ket{\psi^{\omega}_t}-E_t^{\omega}\braket{\dot \psi^{\omega}_t|\psi^{\omega}_t}\right) \nonumber \\ 
     &\hspace{10mm}= \sum_i\dot\omega_i\text{Im}\left( C_i -  \frac{\partial \bra{\psi^{\omega}_t}}{\partial \omega_i}\ket{\psi^{\omega}_t}E_t^{\omega}\right) 
\end{align}
and
\begin{align}
   \braket{\dot\psi^{\omega}_t|\dot\psi^{\omega}_t} - \braket{\dot\psi^{\omega}_t|\psi^{\omega}_t}\braket{\psi^{\omega}_t|\dot\psi^{\omega}_t} = \sum_{i,j}\dot \omega_i \dot \omega_j \mathcal{F}^Q_{ij}.
\end{align}
This reformulation enables the evaluation of the norm since the individual terms can be efficiently computed.

\section{Error Bound}
\label{subsec:error_qrte}
An error bound for VarQRTE in terms of the $\ell_2$-norm has been derived in~\cite{MartinazzoErrorVarQuantumDyn20}. We extend this bound to the Bures metric given in Eq.~\eqref{eq:Bures} and, thereby, avoid that a physically irrelevant mismatch in the global phase influences the bound.

\begin{theorem} \label{thm_VQRT}
For $T>0$, let $\ket{\psi^*_T}$ be the exact solution to Eq.~\eqref{eq:MacLachlan_phase_agnostic_real} and $\ket{\psi^{\omega}_T}$ correspond to the VarQRTE approximation. Then
\begin{align}
 B\left(\proj{\psi^*_T}, \proj{\psi^{\omega}_T}\right) \leq \epsilon_{T},
\end{align}
for $\epsilon_{T} = \int_{0}^{T} \dot\varepsilon_t \di t$ with the error rate given by the gradient error, i.e.,
\begin{align}
    \dot\varepsilon_t = \norm{\ket{e_t}}_2. \label{eq:varqrte_error_grad}
\end{align}
\end{theorem}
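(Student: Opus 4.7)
The plan is to mirror the strategy of Theorem \ref{thm_VQITE} but exploit the fact that real-time evolution is unitary, which eliminates the complications (energy difference and overlap bounds $\zeta, \chi$) that arose in the VarQITE case. I first establish an infinitesimal bound on $B(\ket{\psi^{\omega}_{t+\delta_t}},\ket{\psi^*_{t+\delta_t}})$ and then integrate to get $\epsilon_T = \int_0^T \dot\varepsilon_t\,\di t$. Throughout I work modulo terms of order $\mathscr{O}(\delta_t^2)$, so the discretized evolution operator to be used is $U_{\delta_t} := \mathds{1} - \tfrac{i\delta_t}{\hbar}H$, which is unitary up to that order and, hence, preserves the Bures metric to the same order.

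For the key triangle inequality step I would insert two intermediate states, namely $U_{\delta_t}\ket{\psi^{\omega}_t}$ and $U_{\delta_t}\ket{\psi^*_t}$, giving
\begin{align}
 B\!\left(\ket{\psi^{\omega}_{t+\delta_t}},\ket{\psi^*_{t+\delta_t}}\right) &\leq B\!\left(\ket{\psi^{\omega}_{t+\delta_t}},\, U_{\delta_t}\ket{\psi^{\omega}_t}\right)\nonumber\\
 &\quad + B\!\left(U_{\delta_t}\ket{\psi^{\omega}_t},\, U_{\delta_t}\ket{\psi^*_t}\right)\nonumber\\
 &\quad + B\!\left(U_{\delta_t}\ket{\psi^*_t},\, \ket{\psi^*_{t+\delta_t}}\right).
\end{align}
I would then bound the three terms as follows. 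The first term: writing $\ket{\psi^{\omega}_{t+\delta_t}} = \ket{\psi^{\omega}_t} + \delta_t \ket{\dot\psi^{\dot\nu}_t}$ and using the phase-invariance of the Bures metric exactly as in Eq.~\eqref{eq_triangle_part1_qite} of the VarQITE proof, I can minimize over a global phase and bound this by the $\ell_2$ distance $\delta_t\,\|\ket{\dot\psi^{\dot\nu}_t} + \tfrac{i}{\hbar}H\ket{\psi^{\omega}_t}\|_2 = \delta_t\,\|\ket{e_t}\|_2$, invoking Eq.~\eqref{eq:gradientErrorVarQRTE}. The second term uses the unitary-invariance of the Bures metric (to order $\delta_t$), since $U_{\delta_t}^\dagger U_{\delta_t} = \mathds{1} + \mathscr{O}(\delta_t^2)$ because $H$ is Hermitian; therefore the inner products defining the Bures metric are preserved up to $\mathscr{O}(\delta_t^2)$, and this term is at most $B(\ket{\psi^{\omega}_t},\ket{\psi^*_t})\leq \varepsilon_t$. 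The third term vanishes at order $\delta_t$ because $\ket{\psi^*_{t+\delta_t}} = e^{-iH\delta_t/\hbar}\ket{\psi^*_t} = U_{\delta_t}\ket{\psi^*_t} + \mathscr{O}(\delta_t^2)$, so after minimizing over the global phase it contributes only $\mathscr{O}(\delta_t^2)$.

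Collecting the three bounds gives
\begin{align}
 \varepsilon_{t+\delta_t} \leq \varepsilon_t + \delta_t\,\|\ket{e_t}\|_2 + \mathscr{O}(\delta_t^2),
\end{align}
from which, taking $\delta_t\to 0$, I obtain the differential inequality $\dot\varepsilon_t \leq \|\ket{e_t}\|_2$. Integrating from $0$ to $T$, using $\varepsilon_0=0$ (initial state assumed exact), yields the claim with $\dot\varepsilon_t = \|\ket{e_t}\|_2$, which is efficiently computable via the explicit formula Eq.~\eqref{eq:error_rqte_var}.

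The main obstacle I anticipate is the careful handling of the effective state gradient $\ket{\dot\psi^{\dot\nu}_t}$ defined by Eq.~\eqref{eq_nu_dot} in the phase-agnostic formulation: one must verify that the particular $\dot\nu$ chosen by McLachlan's principle is precisely what makes the residual in the first triangle term collapse to $\|\ket{e_t}\|_2$ after the global-phase minimization in the Bures metric, rather than a different (larger) $\ell_2$ expression. Once this correspondence is established (analogously to how Eq.~\eqref{eq_forDavid} was used in the VarQITE proof), the rest is bookkeeping, and the unitarity of real-time evolution ensures no analogue of $\zeta$ or $\chi$ is needed — the bound is phrased purely in terms of the easily evaluated gradient error norm.
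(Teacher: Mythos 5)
Your proposal is correct and follows essentially the same route as the paper's proof: a triangle inequality through the first-order Taylor expansion $\mathds{1}-\tfrac{i\delta_t}{\hslash}H$ of the exact propagator, with the first term collapsing to $\delta_t\norm{\ket{e_t}}_2$ via the phase minimization in the Bures metric and the definition of $\ket{\dot\psi^{\dot\nu}_t}$, the second term controlled by (approximate) unitarity, and the mismatch between the Taylor expansion and $\ee^{-i\delta_t H/\hslash}$ absorbed into $\mathscr{O}(\delta_t^2)$. The only cosmetic difference is that you make the third intermediate state explicit where the paper folds it into the second triangle term; the substance, including the role of Eq.~\eqref{eq_nu_dot} in making the residual equal to $\norm{\ket{e_t}}_2$, is identical.
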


\begin{proof}
For $\delta_t > 0$, let the state evolution be defined with respect to the effective gradient given in Eq.~\eqref{eq:MacLachlan_phase_agnostic_real}
    \begin{align} 
  \ket{\psi^{\omega}_{t+\delta_t}  }&= \ket{\psi^{\omega}_{t}  } + \delta_t\ket{\dot\psi^{\dot\nu}_t} \nonumber \\
  &=\ket{\psi^{\omega}_{t} } \!+\! \delta_t\!\left(\!\ket{\dot\psi^{\omega}_{t}}\!-\!\frac{i}{\hslash}E_t^{\omega}  \!-\!i \text{Im}\left(\!\braket{\dot\psi^{\omega}_t|\psi^{\omega}_t}\!\right)\ket{\psi^{\omega}_t} \!\right)\!. \label{eq_step1}
    \end{align}
For the remainder of the proof, we use the simplified notation
$B\left(\ket{v}, \ket{w}\right)$ when referring to $B\left(\proj{v}, \proj{w}\right)$.
Combining Eq.~\eqref{eq_step1} with the triangle inequality gives 
\begin{align}
    B\left(\ket{\psi^{\omega}_{t+\delta_t}}, \ket{\psi^*_{t+\delta_t}}\right) 
      &\leq B\Big(\ket{\psi^{\omega}_{t+\delta_t}}, \big(\mathds{1}\!-\!\frac{i\delta_t}{\hslash}H\big)\ket{\psi^{\omega}_{t}}\! \Big)\nonumber  \\
      &\hspace{-6mm}+B\Big(\big(\mathds{1}-\frac{i\delta_t}{\hslash}H\big)\ket{\psi^{\omega}_{t}}, \ket{\psi^*_{t+\delta_t}}\Big). \label{eq_triangle_start}
\end{align}
Using Eq.~\eqref{eq_buresPhase} and neglecting terms of order $\mathscr{O}(\delta_t^2)$ gives
\begin{align}
     &B\Big(\ket{\psi^{\omega}_{t+\delta_t}}, \big(\mathds{1}-\frac{i\delta_t}{\hslash}H\big)\ket{\psi^{\omega}_{t}} \Big) \nonumber \\
     &\hspace{5mm}=\min_{\phi \in [0,2\pi]}\norm{\ee^{i\phi}\big(\ket{\psi^{\omega}_{t+\delta_t}} \big) - \big(\mathds{1}-\frac{i\delta_t}{\hslash}H\big)\ket{\psi _{t}}}_2 \nonumber\\
     &\hspace{5mm}\leq \norm{\ket{\psi^{\omega}_{t+\delta_t}}- \big(\mathds{1}-\frac{i\delta_t}{\hslash}H\big)\ket{\psi _{t}}}_2 \nonumber\\
     &\hspace{5mm}=\delta_t \norm{\ket{\dot\psi^{\omega}_t}  + \frac{i}{\hslash}\big(H\!-\!E_t^{\omega}\!-\!\hslash\,\text{Im}(\braket{\dot\psi^{\omega}_t|\psi^{\omega}_t}) \big)\ket{\psi^{\omega}_t}
    }_2 \nonumber \\
     &\hspace{5mm}=: \delta_t\norm{\ket{e_t}}_2 , \label{eq_triangle_part1}
\end{align}
where the penultimate step uses Eq.~\eqref{eq_nu_dot}.

To bound the second term in Eq.~\eqref{eq_triangle_start}, we again use Eq.~\eqref{eq_buresPhase}, neglecting terms of order $\mathscr{O}(\delta_t^2)$ to obtain
\begin{align}
     &B\Big(\big(\mathds{1}-\frac{i\delta_t}{\hslash}H\big)\ket{\psi^{\omega}_t}, \big(\mathds{1}-\frac{i\delta_t}{\hslash}H\big)\ket{\psi^*_t}\Big) \nonumber  \\
     &\hspace{5mm}= \min_{\phi \in [0,2\pi]} \norm{\big(\mathds{1}-\frac{i\delta_t}{\hslash}H\big)\big(\ee^{i\phi}\ket{\psi^{\omega}_t}-\ket{\psi^*_t}\big)}_2 \nonumber  \\
     &\hspace{5mm}\leq \min_{\phi \in [0,2\pi]} \norm{\ee^{-\frac{i\delta_t}{\hslash}H}\big(\ee^{i\phi}\ket{\psi^{\omega}_t}-\ket{\psi^*_t}\big)}_2\nonumber  \\
     &\hspace{10mm}+\norm{\Big(\big(\mathds{1}-\frac{i\delta_t}{\hslash}H\big)-\ee^{-\frac{i\delta_t}{\hslash}H} \big)\big(\ee^{i\phi}\ket{\psi^{\omega}_t}-\ket{\psi^*_t}\big)\Big)}_2\nonumber   \\
     &\hspace{5mm}=\min_{\phi \in [0,2\pi]} \norm{\ee^{-\frac{i\delta_t}{\hslash}H}\big(\ee^{i\phi}\ket{\psi^{\omega}_t}-\ket{\psi^*_t}\big)}_2 \nonumber \\
     &\hspace{5mm}= B(\ket{\psi^{\omega}_t}, \ket{\psi_t^*}). \label{eq_triangle_part2}
\end{align}
The penultimate step uses that $\mathds{1} -i \frac{\delta_t}{\hslash} H $ is the first Taylor expansion of \smash{$\exp(-i \frac{\delta_t}{\hslash} H)$} and drops terms of order $\mathscr{O}\left(\delta_t^2\right)$.

Combining Eqs~\eqref{eq_triangle_start},~\eqref{eq_triangle_part1} and~\eqref{eq_triangle_part2} gives
\begin{align}
     B\left(\ket{\psi^{\omega}_{t+\delta_t}}, \ket{\psi^*_{t+\delta_t}}\right)
     \leq B\left(\ket{\psi^{\omega}_t}, \ket{\psi_t^*}\right) + \delta_t \norm{\ket{e_t}}_2.
\end{align}
Assuming that $B\left(\ket{\psi_0}, \ket{\psi_0^*}\right) = 0$, we can evolve
\begin{align}
B\left(\ket{\psi^{\omega}_{T}}, \ket{\psi_{T}^*}\right) = \delta_t\sum\limits_{k=0}^{K}\|\ket{e_{k\delta_t}}\|_2,
\end{align}
where $K$ corresponds to the number of time steps.
Finally setting $\delta_t = T/t$ leads to
\begin{align}
     B\left(\ket{\psi^{\omega}_{T}}, \ket{\psi_{T}^*}\right) \leq \int_{0}^{T} \|\ket{e_t}\|_2 \, \di t := \epsilon_T,
\end{align}
which proves the assertion.\qed
\end{proof}

It should be noted that the error bound is compatible with practical VarQRTE implementations which use, e.g., regularized least squares methods or pseudo-inversion methods to solve the system of linear equations from Eq.~\eqref{eq:McLachlanVarQRTE}.

\begin{figure}[!h]
    \centering
    \captionsetup{singlelinecheck = false, format= hang,justification=centerlast, font=footnotesize, labelsep=space}
    \begin{tikzpicture}
\node at (1, 1.1) {\textbf{VarQRTE: $H_{\text{illustrative}}$ with different ODE solvers}};
\node[inner sep=0pt, anchor=north west] at (-6, -0.12) {\includegraphics[width=0.4\textwidth]{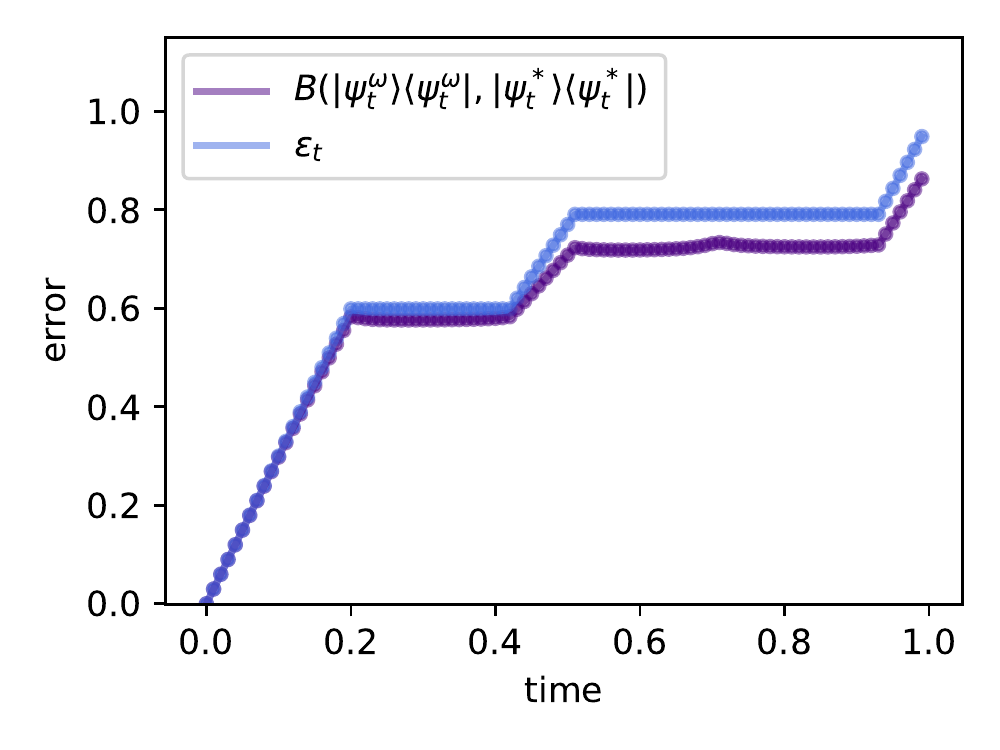}};
\node[anchor=north west] at (1,0) {    \includegraphics[width=0.4\textwidth]{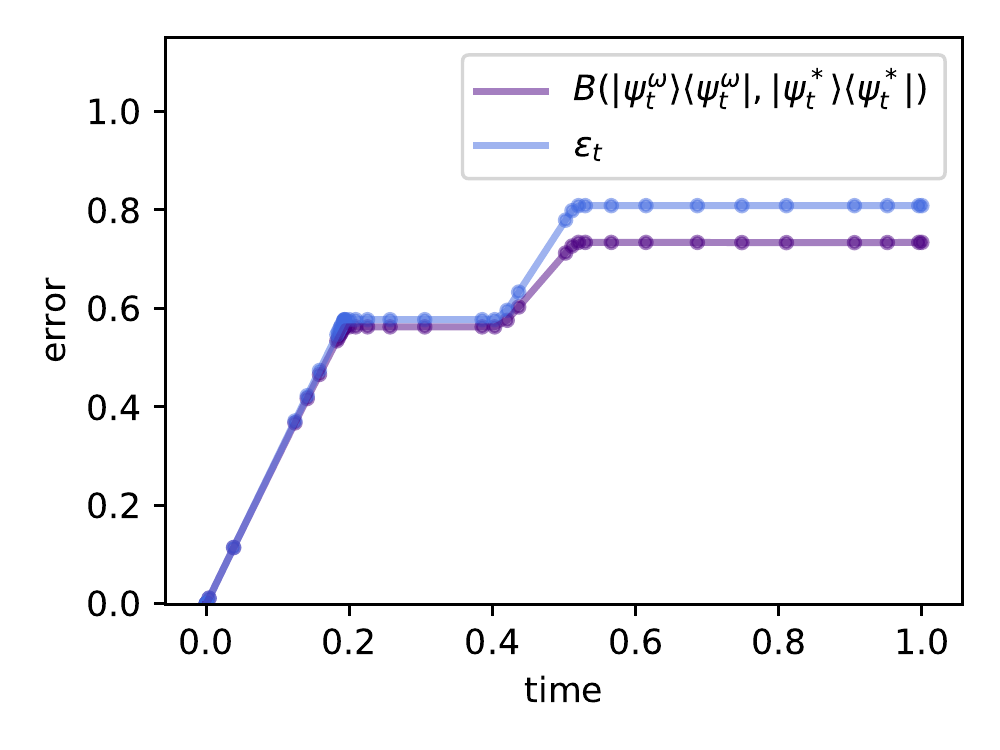}};
\node[anchor=north west] at (-6, -4.7) {\includegraphics[width=0.4\textwidth]{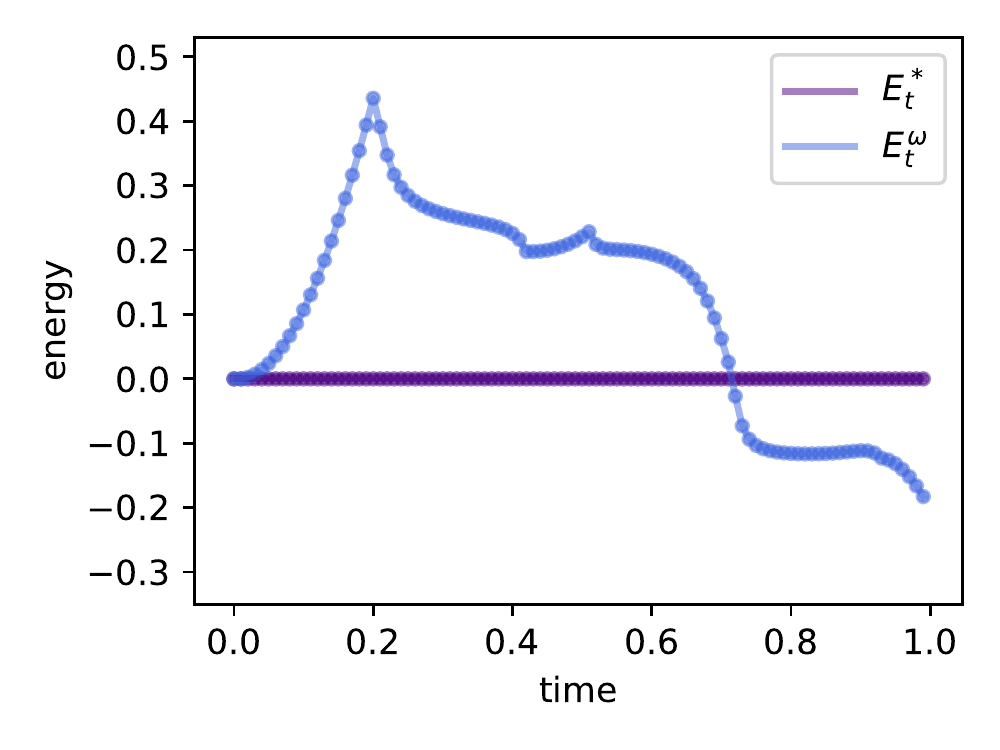}};
\node[anchor=north west] at (1,-4.7) {\includegraphics[width=0.4\textwidth]{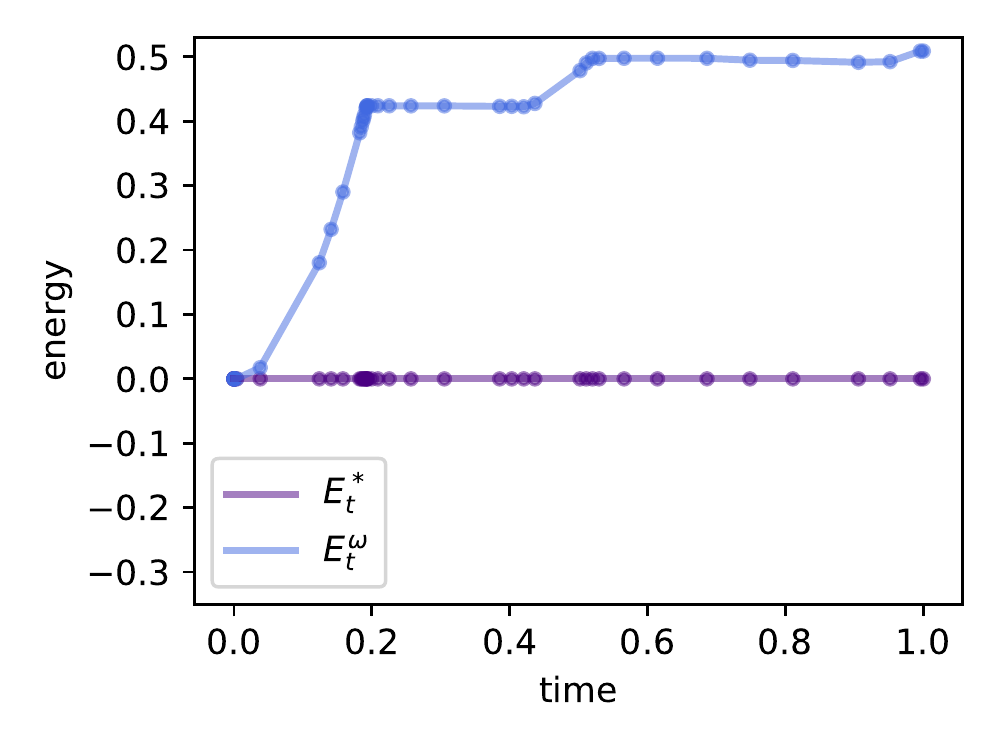}};
\node at (-3.2, 0) {(a) \small{State Error (Euler, $f_{\text{std}}$) }};
\node at (4.3, 0) {(b)  \small{State Error (RK54, $f_{\text{std}}$)} };
\node at (-3.2, -4.7) {(c)  \small{Energy (Euler, $f_{\text{std}}$)} };
\node at (4.3, -4.7) {(d)\small{ Energy (RK54, $f_{\text{std}}$) }};
\end{tikzpicture}
    \caption{VarQRTE for $\ket{\psi_0}=\ket{++}$, $H_{\text{illustrative}}$ and $T=1$  with the standard ODE. (a), (c) employ forward Euler. (b), (d) use RK54. -  (a), (b) illustrate the error bounds $\epsilon_t$ and the actual Bures metric between $\ket{\psi^*_t}$ and  $\ket{\psi^{\omega}_t}$.  (c), (d) show the corresponding energies.}
    \label{fig:illustrative_varqrte}
\end{figure}

\section{Examples}
\label{app:varqrte_experiments}
\begin{figure}[!ht]
    \centering
    \captionsetup{singlelinecheck = false, format= hang, justification=centerlast, font=footnotesize, labelsep=space}
    \begin{tikzpicture}
\node at (1, 1.1) {\textbf{VarQRTE: $H_{\text{Ising}}$ with different ODE definitions}};
\node[inner sep=0pt, anchor=north west] at (-6, -0.12) {\includegraphics[width=0.4\textwidth]{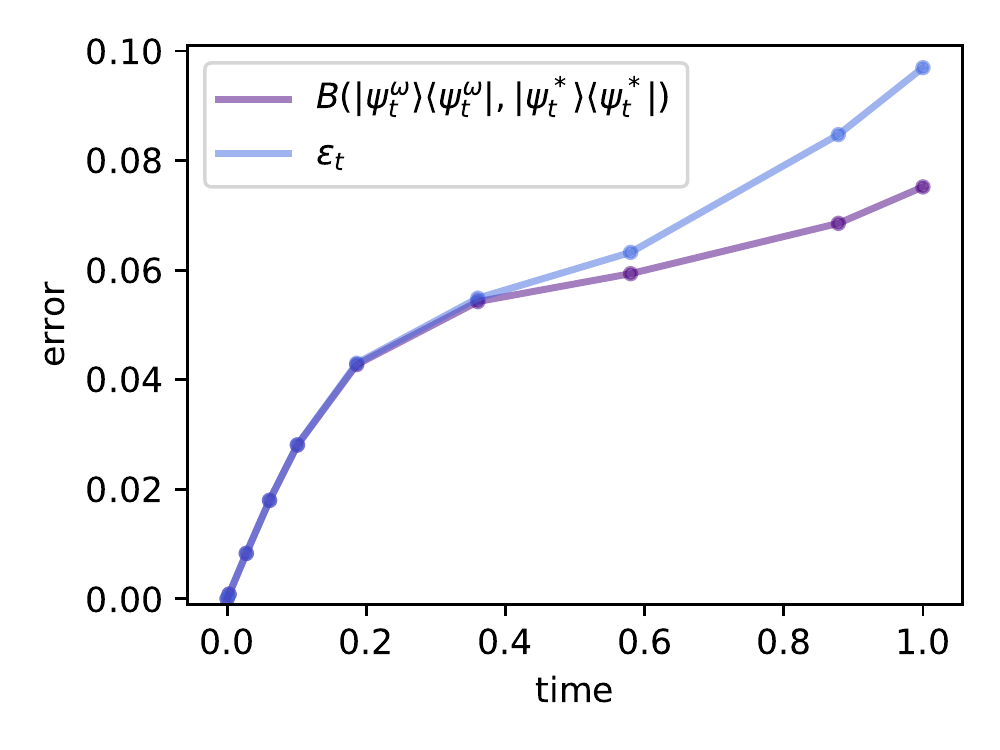}};
\node[anchor=north west] at (1,0) {    \includegraphics[width=0.4\textwidth]{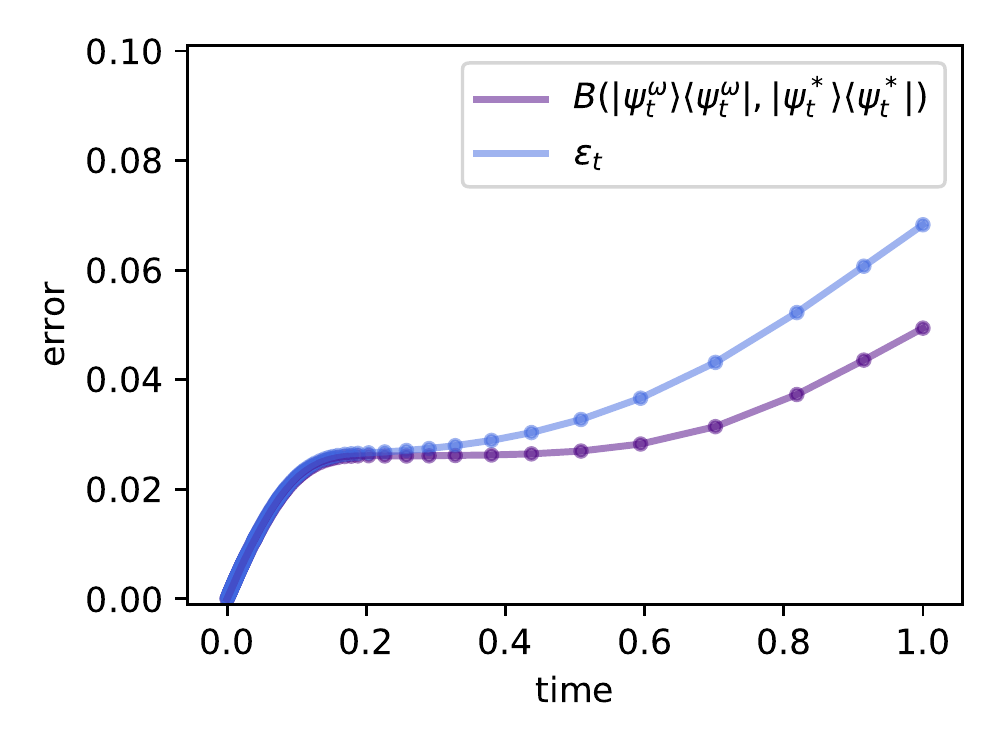}};
\node[anchor=north west] at (-6, -4.7) {\includegraphics[width=0.4\textwidth]{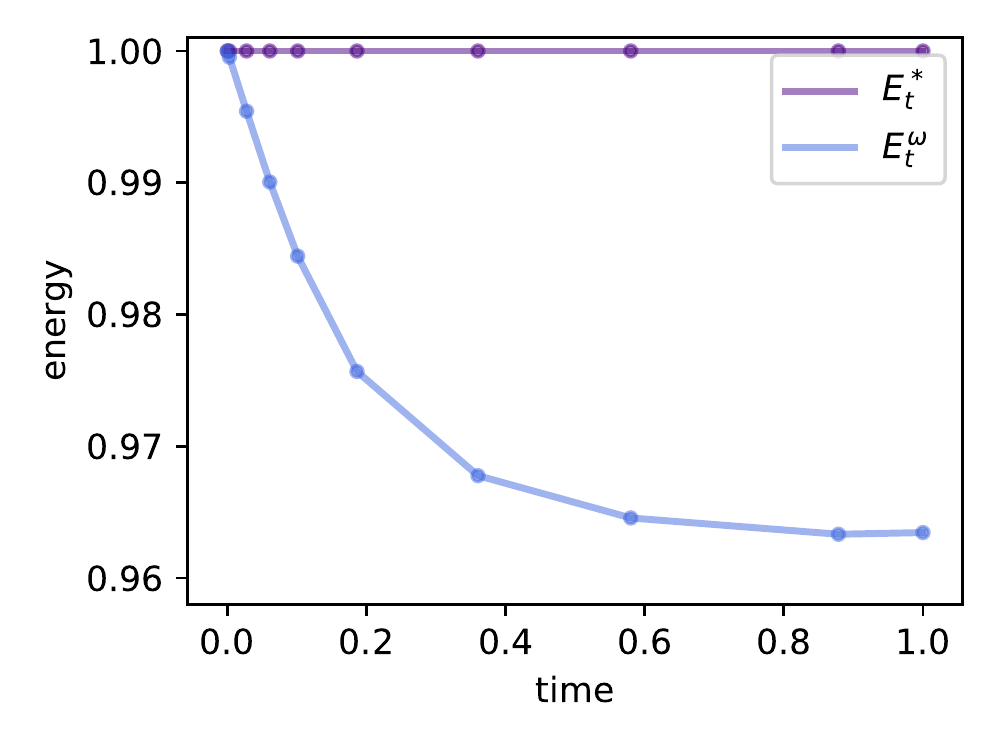}};
\node[anchor=north west] at (1,-4.7) {\includegraphics[width=0.4\textwidth]{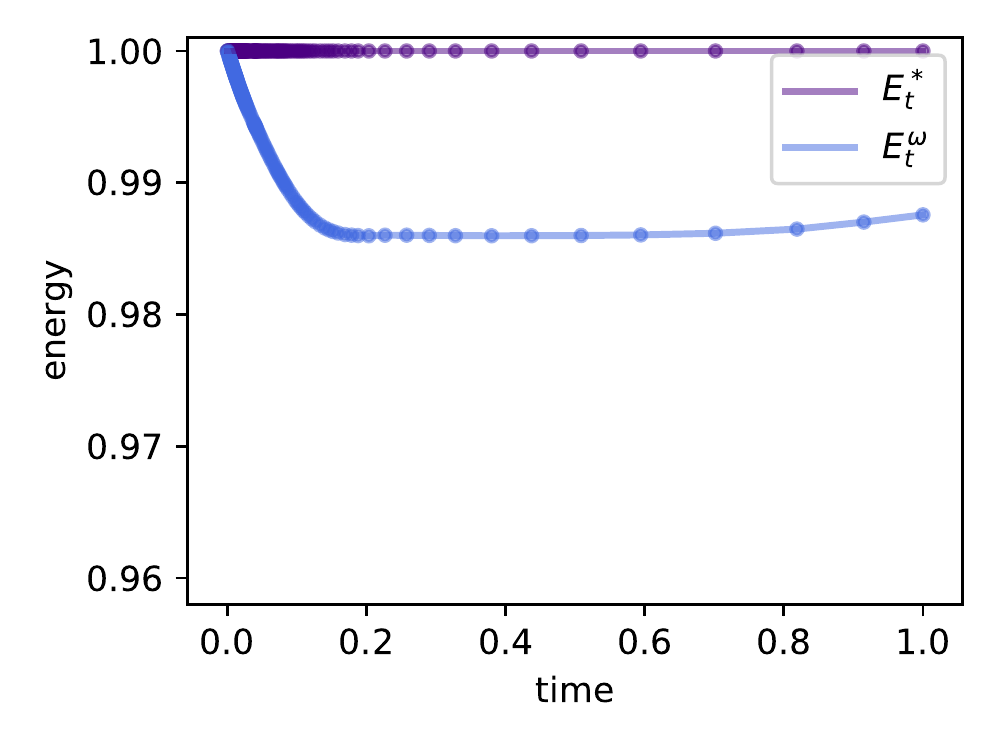}};
\node at (-3.2, 0) {(a) \small{State Error (RK54, $f_{\text{std}}$)}};
\node at (4.3, 0) {(b) \small{State Error (RK54, $f_{\text{min}}$)}};
\node at (-3.2, -4.7) {(c) \small{Energy (RK54, $f_{\text{std}}$)}};
\node at (4.3, -4.7) {(d) \small{Energy (RK54, $f_{\text{min}}$)}};
\end{tikzpicture}
    \caption{VarQRTE for $\ket{\psi_0}=\ee^{-i\alpha}\ket{000}$, $H_{\text{Ising}}$ and $T=1$ with RK54. (a), (c) are based on the standard ODE.  (b), (d) use the argmin ODE. - (a), (b) illustrate the error bounds $\epsilon_t$ and the actual Bures metric. (c), (d) show the energies of the prepared and the target state.}
    \label{fig:ising_varqrte}
\end{figure}

In the following, we present a set of numerical experiments and investigate the error bounds for VarQRTE with a particular focus on the comparison of different ODE formulations and solvers which are presented in Sec.~\ref{sec:methods}. For that purpose, we use the three example settings for $H_{\text{illustrative}}$, $H_{\text{Ising}}$ and $H_{\text{hydrogen}}$ described in Sec.~\ref{enumerate:Examples}.

Firstly, we apply the forward Euler method with $100$ time steps as well as an adaptive step size RK54 ODE solver to the illustrative example using the standard ODE. The resulting SLEs are solved using a least square solver provided by NumPy \cite{Numpy2020}.
The results shown in Fig.~\ref{fig:illustrative_varqrte} illustrate that the error bounds are very tight and, thus, relevant for practical accuracy estimations.  
Furthermore, one can see that RK54 achieves better error bounds as well as smaller fluctuations in the system energy while using significantly less time steps. The plateaus are due to exact local gradients, i.e., $\|\ket{e_{t}}\|_2 = 0$.
Furthermore, we would like to point out that the energy should actually be preserved for a real time evolution under a time-independent Hamiltonian but McLachlan's variational principle does not guarantee energy preservation.

\begin{figure*}[!ht]
    \captionsetup{singlelinecheck = false, format= hang, justification=centerlast, font=footnotesize, labelsep=space}
    \centering
    \begin{tikzpicture}
     \node at (1, 0.9) {\textbf{VarQRTE: $H_{\text{hydrogen}}$ with different ODE solvers and different ODE types}};
     \node[anchor=north west] at (-6, -0.12) {\includegraphics[width=0.4\textwidth]{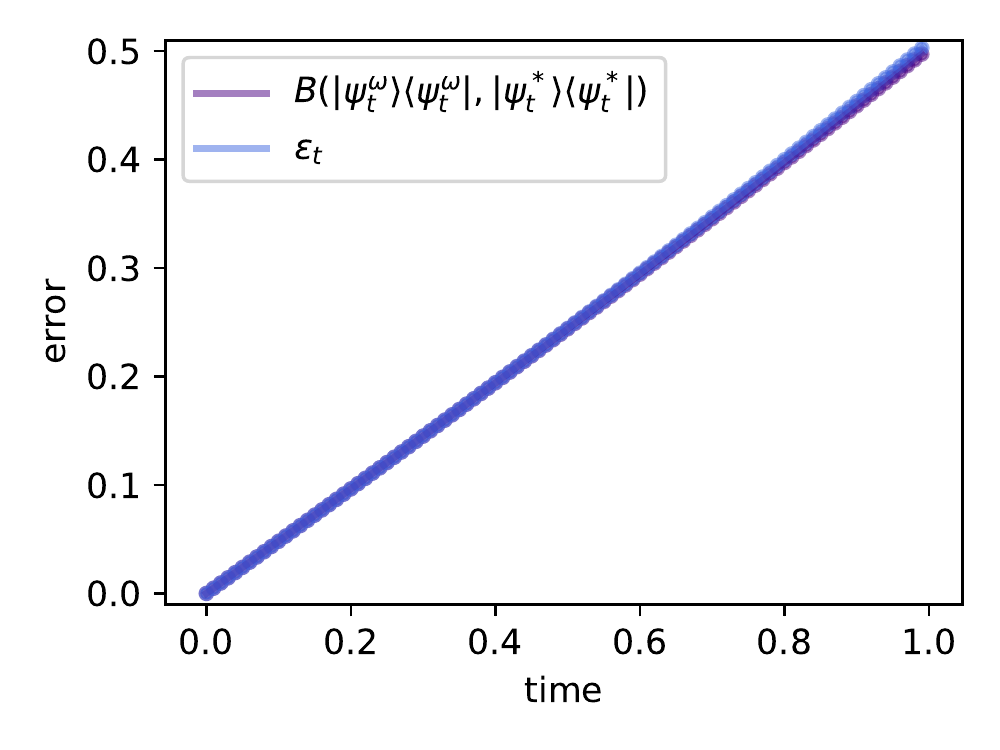}};
\node[anchor=north west] at (-6, -4.8) {\includegraphics[width=0.4\textwidth]{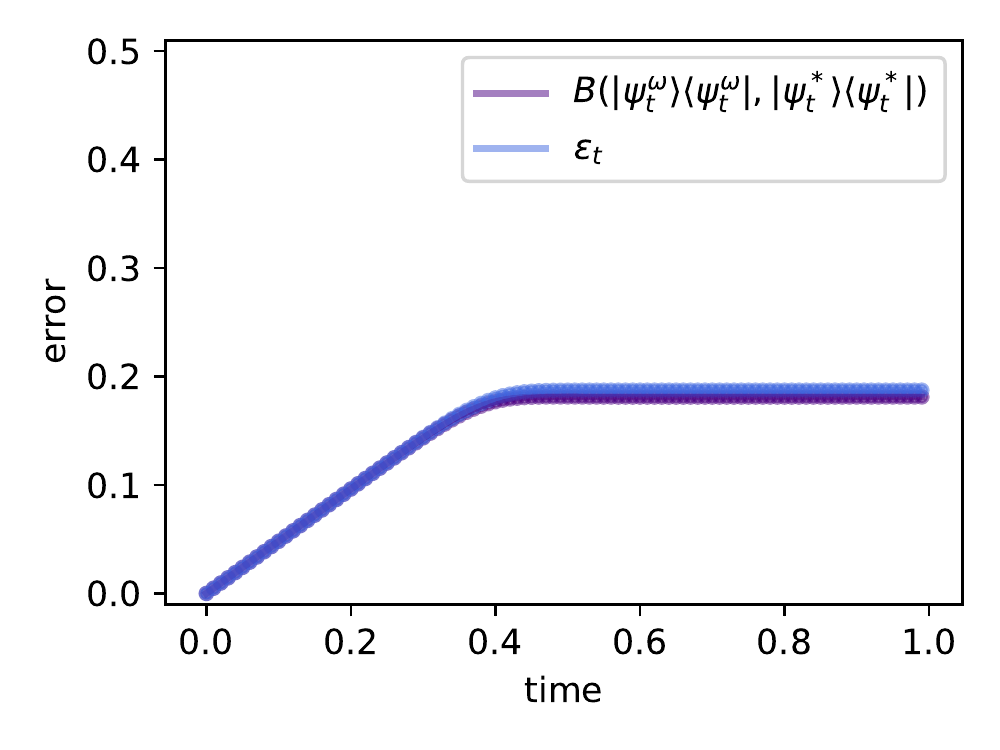}};
\node[anchor=north west] at (-6, -9.6) {\includegraphics[width=0.4\textwidth]{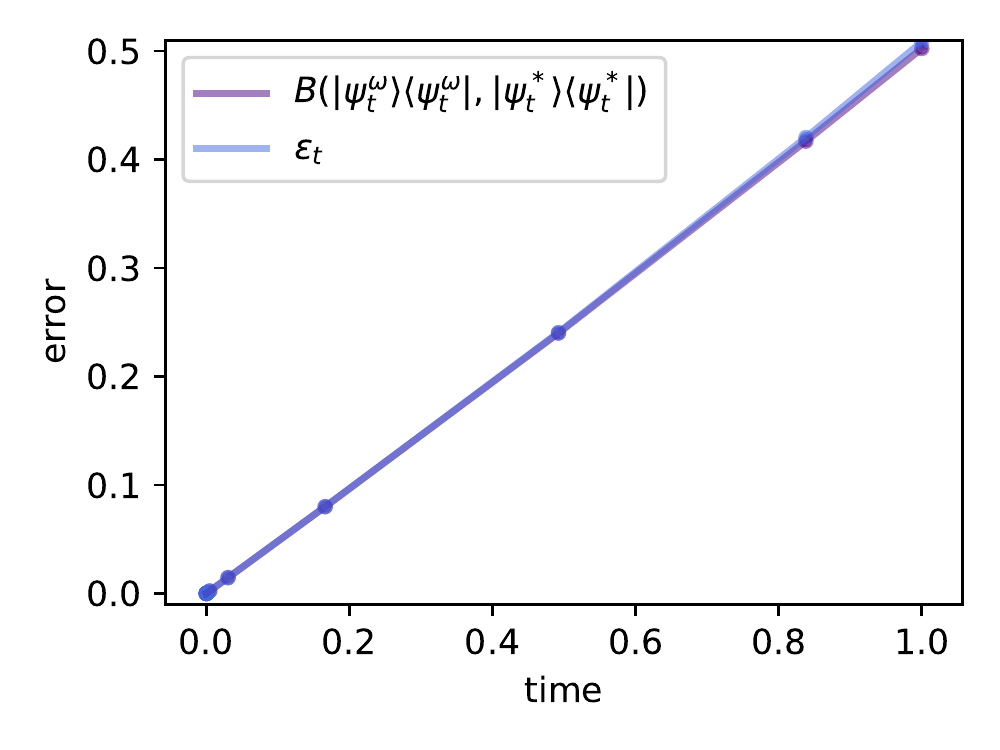}};
\node[anchor=north west] at (-6, -14.4) {\includegraphics[width=0.4\textwidth]{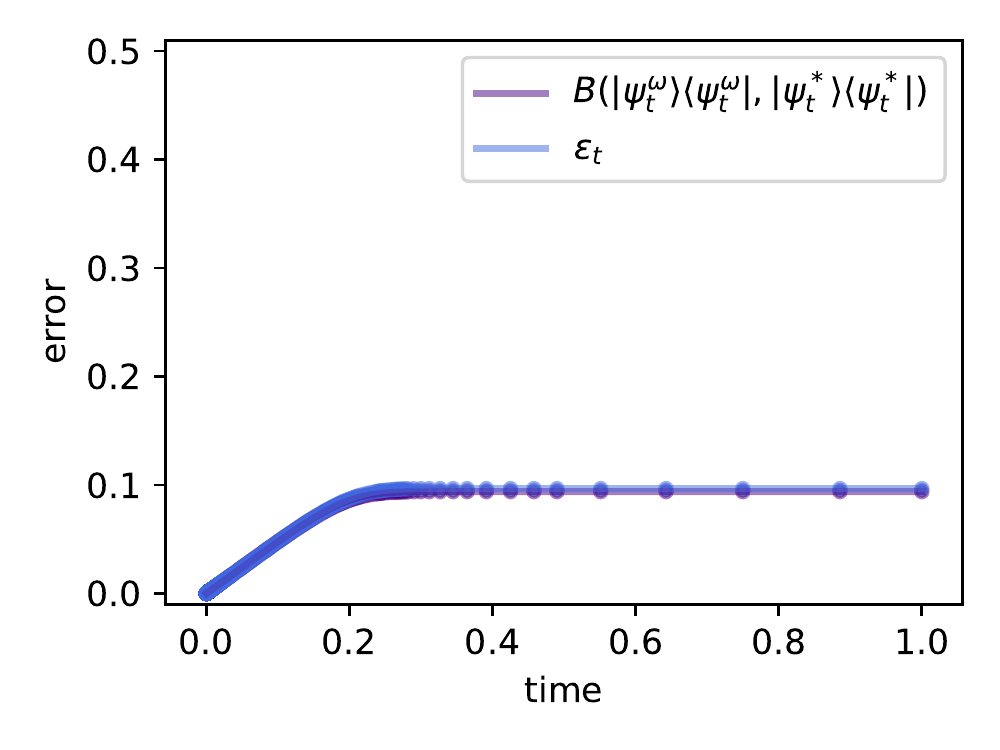}};
\node[anchor=north west] at (1, -0) {\includegraphics[width=0.4\textwidth]{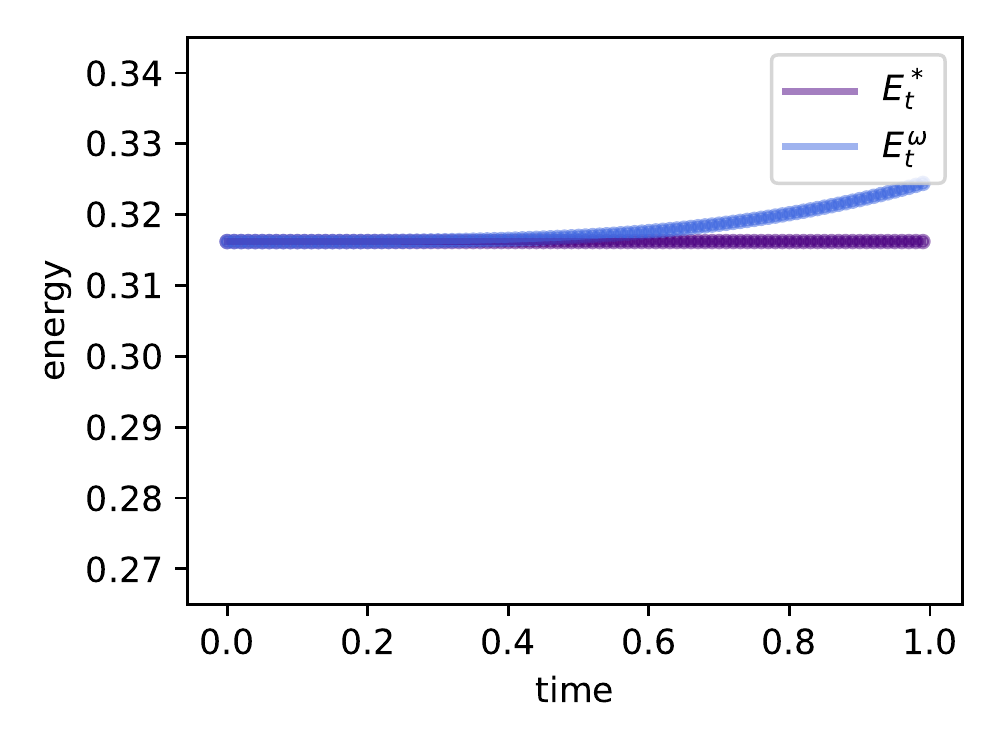}};
\node[anchor=north west] at (1, -4.8) {\includegraphics[width=0.4\textwidth]{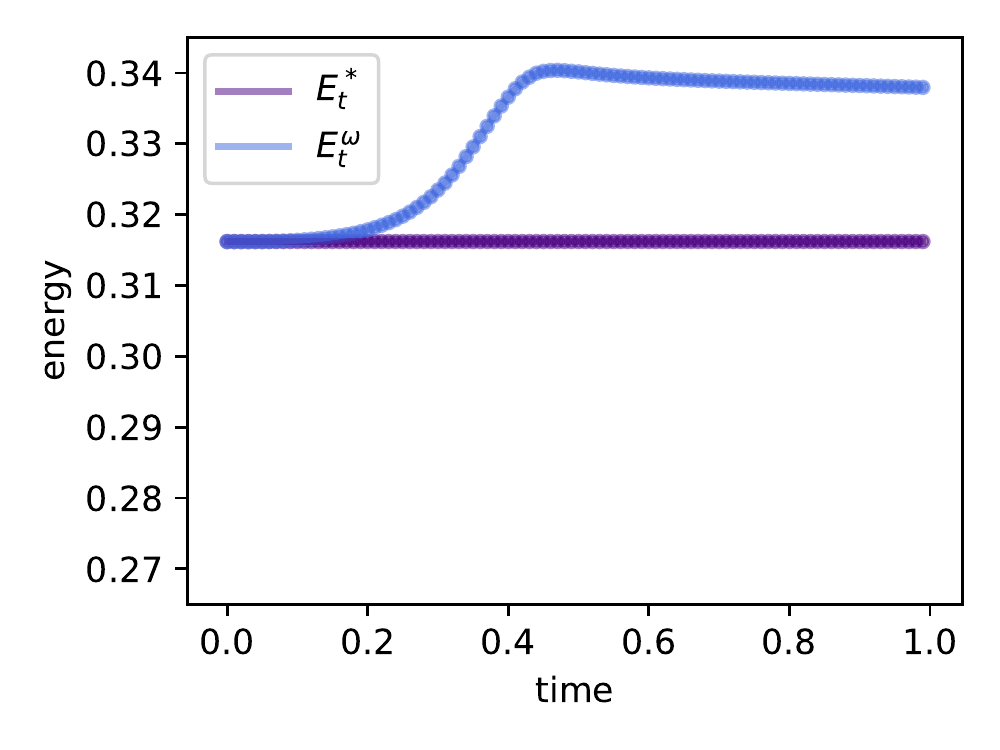}};
\node[anchor=north west] at (1, -9.6) {\includegraphics[width=0.4\textwidth]{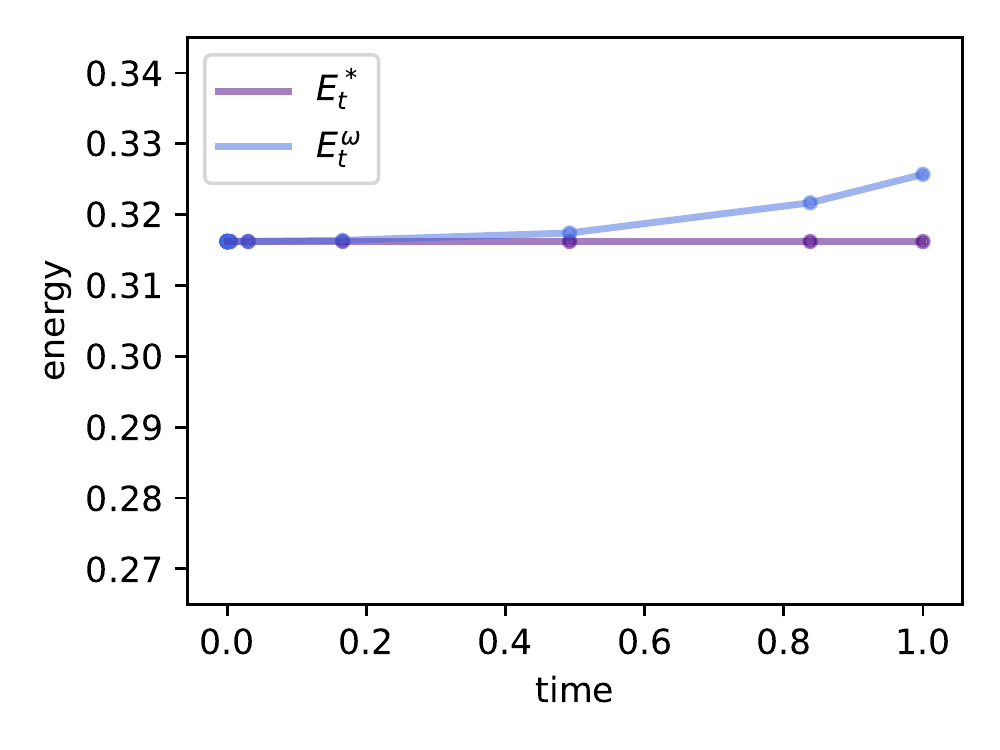}};
\node[anchor=north west] at (1, -14.4) {\includegraphics[width=0.4\textwidth]{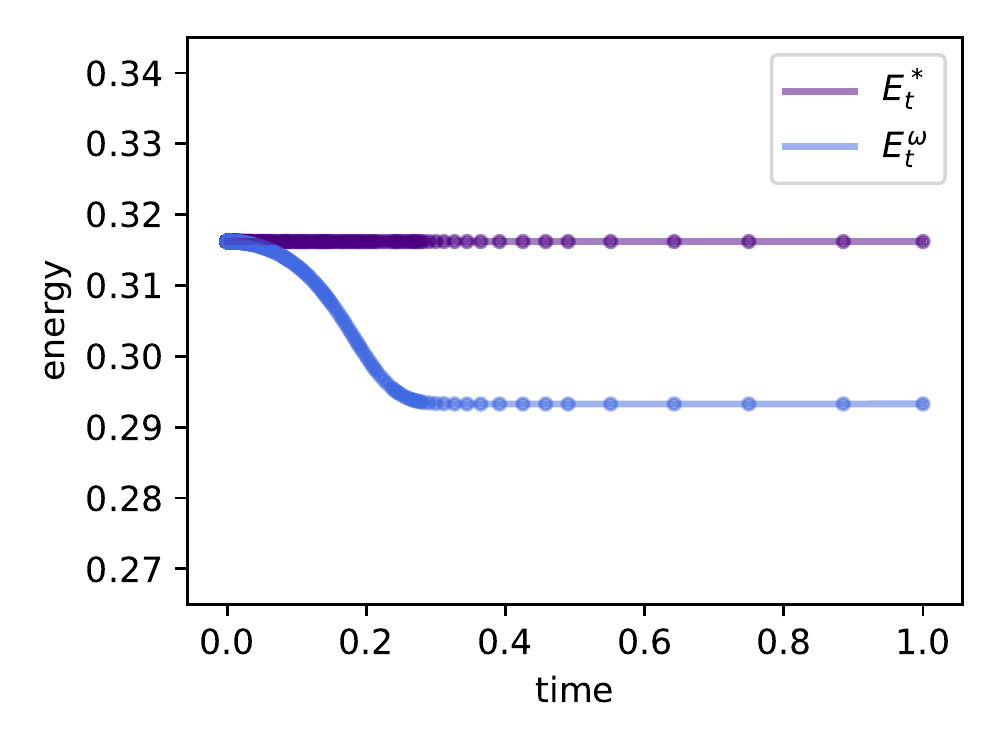}};
\footnotesize{
\node at (-3.2, 0) {(a) \small{State Error (Euler, $f_{\text{std}}$) }};
\node at (-3.2, -4.8) {(c) \small{State Error (Euler, $f_{\text{min}}$)}};
\node at (-3.2, -9.6) {(e) \small{State Error (RK54, $f_{\text{std}}$)}};
\node at (-3.2, -14.4) {(g) \small{State Error (RK54, $f_{\text{min}}$)}};
\node at (4.3, 0) {(b) \small{Energy (Euler, $f_{\text{std}}$)}};
\node at (4.3, -4.8) {(d) \small{Energy (Euler, $f_{\text{min}}$)}};
\node at (4.3, -9.6) {(f) \small{Energy (RK54, $f_{\text{std}}$)} };
\node at (4.3, -14.4) {(h) \small{Energy (RK54, $f_{\text{min}}$)}};}
\end{tikzpicture}
    \caption{VarQRTE for $\ket{\psi_0}=\ket{++}$, $H_{\text{hydrogen}}$ and $T=1$. (a), (b), (c), (d) employ a forward Euler solver. (e), (f), (g), (h) use an RK54 ODE solver. (a), (b), (e), (f) use the standard ODE. (c), (d), (g), (h) rely on the argmin ODE. -
    (a), (c), (e), (g) illustrate the error bounds $\epsilon_{t}$ and the true Bures metric. (b), (d), (f), (h) show $E_t^{\omega}$ and $E_t^*$.}
    \label{fig:hydrogen_varqrte_energy}
\end{figure*}

Next, we compare the implications of the ODE formulation on the Ising model using RK54.
The initial points for the optimization of the argmin ODE are chosen as the solution to the respective SLE at time $t$.
Fig.~\ref{fig:ising_varqrte} presents the Bures metrics as well as the respective bounds for the prepared $\ket{\psi^{\omega}_t}$ and  the target state $\ket{\psi^*_t}$. The errors show that the argmin ODE leads to smaller  errors than the standard ODE. Furthermore, it can be seen that also the system energy changes less for the former.


Lastly, the error bounds for the hydrogen Hamiltonian from Eq.~\eqref{eq:hydrogen} are compared the standard and argmin ODE as well as forward Euler and RK54 ODE solvers. 
The results are presented in Fig.~\ref{fig:hydrogen_varqrte_energy}. Notably, the experiment which uses RK54 and the argmin ODE leads to the best results, i.e., the smallest state error as well as error bound. 
In general, one can see that the argmin ODE achieves better errors compared to the standard ODE, the error seems to saturate for the former while it keeps increasing with the latter.
Furthermore, RK54 improves the errors as well as the error bounds for both ODE definitions while using significantly less time steps.
We would like to point out that the setting which gives to the smallest error $\epsilon_t$ does not necessarily lead to the smallest discrepancy between $E_t^{\omega}$ and $E_t^*$, as can be seen when comparing the RK54 results.

To sum this up, the numerical results reveal that the error bounds represent good estimates for the actual errors. The experiments indicate further that an adaptive step size ODE solver such as RK54 significantly improves the simulation results while reducing the computational costs. Moreover, it was shown that replacing the standard ODE by the argmin ODE has also a a positive influence on the simulation accuracy. 
Lastly, the results reveal that the lack of energy conservation in McLachlan's variational principle can lead to significant energy fluctuations.

\addtocontents{toc}{}


\chapter[Discriminative Learning with Quantum Boltzmann Machines]{Appendix: Discriminative Learning with Quantum Boltzmann Machines\footnote{This section is reproduced in part, with permission, from C.~Zoufal, A.~Lucchi, S.~Woerner, "Variational Quantum Boltzmann Machines", \textit{Quantum Machine Intelligence}, vol. 3, Article Nr. 7, 2021}}
\label{app:qbms} 

\textbf{Abstract.} 
We outline the application of quantum Boltzmann machines to discriminative learning and present a classification example where a variational quantum Boltzmann machine learns to identify fraudulent credit card data.
\index{appendix QBMs}

\vspace{8mm}

\noindent

Classical and quantum Boltzmann machines are not only compatible with generative, see Sec.~\ref{sec:QBM}, but also with discriminative learning. To that end, the algorithm needs to be adapted such that it can learn a conditional probability distribution, i.e., the probability to measure a certain label given a particular input data sample.

The following Appendix, first, explains how QBMs can be used as discriminative models in Sec.~\ref{appendix:discriminativeqbm}. Then, Sec.~\ref{app:discr_qbm_results} gives an example of QBM classification w.r.t.~a credit card transactions data set and compares the model to a set of standard classifiers.
\section{Discriminative Quantum Boltzmann Machine}
\label{appendix:discriminativeqbm}

Given a training data set that corresponds to tuples of input data samples and labels $\left(\bm{x}, y\right)$ with
$\bm{x}\in\mathbb{R}^s$, discriminative learning with QBMs can be enabled by encoding the input data samples $\bm{x}$ into Hamiltonian weights, i.e., we employ Hamiltonian encoding, see Def.~\ref{sec:gibbs_related_work}. The nodes $z$ of the system constitute again of hidden $h$ and visible $v$ units such that $z=v\cup h$. The parameters of the Hamiltonian
\begin{align}
    H_z\left(\bm{\theta}, \bm{x}\right)=\sum\limits_{i}f_i\left(\bm{\theta}_i, \bm{x}\right)h_i
\end{align}
are given by a function $f_i\left(\bm{\theta}_i, \bm{x}\right)$ which maps the parameter vector $\bm{\theta}_i \in \mathbb{R}^s$ and $\bm{x}$ to a scalar in $\mathbb{R}$ and the corresponding Gibbs state for temperature $T$ reads
\begin{align}
    \rho_z\left(\bm{\theta}; \bm{x}\right) =\frac{e^{-H_z\left(\bm{\theta}, \bm{x}\right)/\left(\text{k}_{\text{B}}\text{T}\right)}}{Z},
\end{align}
where $k_B$ refers to the Boltzmann constant and $Z$ represents the partition function
\begin{equation}
    Z=\text{Tr}\left[e^{-H_z\left(\bm{\theta}, \bm{x}\right)/\left(\text{k}_{\text{B}}\text{T}\right)}\right].
\end{equation}

Furthermore, the conditional probability to sample label $y$ when given data sample $\bm{x}$ is given by
\begin{align}
   p_{y|x}\left(\bm{\theta}\right) = \text{Tr}\left[\Lambda_y\rho(\bm{\theta}; \bm{x})\right],
\end{align}
where $\Lambda_y = \proj{y}^v\otimes\mathds{1}^h$ represents a projector onto a configuration $y$ of the visible units. Notably, the goal of the algorithm is to train the parameters of $H_z\left(\bm{\theta}, \bm{x}\right)$ such that the probabilities to sample the correct labels for given data instances is maximized.
The respective loss function reads
\begin{align}
\label{eq:loss_sup}
	L\left(\bm{\theta}\right) = -\sum\limits_{x}p_x^{\text{data}}\sum\limits_{y}p_{y|x}^{\text{data}}\log{p_{y|x}\left(\bm{\theta}\right)},
\end{align}
where $p_x^{\text{data}}$ can be identified with the occurrence frequency of data point $\bm{x}$ in the training data set and 
\begin{align}
    p_{y|x}^{\text{data}} =     \begin{cases}
        1, \text{ if } y \:\text{is the correct label to}\: \bm{x}, \\
        0, \text{ otherwise.}
    \end{cases}
\end{align}

\section{Results}
\label{app:discr_qbm_results}

The following QBM training is based on $500$ artificially created credit card transactions \cite{Altman2019} where the labels denote whether an instance is fraudulent or non-fraudulent. In the given data set about $15\%$  correspond to fraudulent transactions. To avoid redundant state preparation, the training is run for all unique item instances in the data set and the results are averaged according to the item's occurrence frequency.
The data set includes the following features: location (ZIP code), time, amount, and Merchant Category Code (MCC) of the transactions.
To facilitate the training, the features of the given data set are discretized and normalized as follows. Using k-means clustering, each of the first three features are independently discretized to $3$ reasonable bins. Furthermore, we consider MCCs $<10 000$ and group them into $10$ different categories.
The discretization is discussed in more detail in Tbl.~\ref{tbl:discr_data_preproc}.
Furthermore, for each feature, we map the values $x$ to $x' = \frac{x-\mu}{\sigma}$ with $\mu$ denoting the mean and $\sigma$ denoting the standard deviation. 
\begin{table}[!ht]
\captionsetup{singlelinecheck = false, format= hang, justification=raggedright, font=footnotesize, labelsep=space}
\begin{center}
\footnotesize{
\begin{tabular}{c|c|c}
\textbf{Feature} & \textbf{Condition} & \textbf{Value} \\
 \hline
\multirow{3}{*}{time} & $0$AM $- 11$AM  & 0 \\
& $11$AM $- 6$PM  & 1 \\
& $6$PM - $0$AM & 2 \\
\hline
\multirow{3}{*}{amount} & $<\$50 $ & 0 \\
& $\$50-150 $ & 1 \\ 
& $>\$150$ & 2 \\
\hline
\multirow{3}{*}{ZIP} & east & 0 \\ 
& central & 1 \\ 
& west & 2 \\
\end{tabular}
}
\end{center}
\caption{The table discusses the clustering of a transaction fraud data set which is used to train a discriminative QBM model. MCC refers to the merchant category code and ZIP to zone improvement plan.}
\label{tbl:discr_data_preproc}
\end{table}

The complexity of this model demands a Hamiltonian that has sufficient representation capabilities. We investigate a diagonal Hamiltonian
\begin{equation}
\label{eq:H_disc_no_diags}
\begin{split}
    H^{(0)}_z\left(\bm{\theta}, \bm{x}\right) =&\:f^{(0)}_0\left(\bm{\theta}, \bm{x}\right) \, Z\otimes Z + f^{(0)}_1\left(\bm{\theta}, \bm{x}\right)\,Z\otimes I  +\\
    &\:f^{(0)}_2\left(\bm{\theta}, \bm{x}\right)\, I\otimes Z,
\end{split}
\end{equation}
a Hamiltonian with off-diagonal terms whose parameters are fixed to $0.1$
\begin{equation}
\label{eq:H_disc_diag_fixed}
\begin{split}
    H^{(1)}_z\left(\bm{\theta}, \bm{x}\right) =&\:f^{(1)}_0\left(\bm{\theta}, \bm{x}\right) \, Z\otimes Z + f_1^{(1)}\left(\bm{\theta}, \bm{x}\right)\, Z\otimes I  +\\
    &\:f^{(1)}_2\left(\bm{\theta}, \bm{x}\right)\, I\otimes Z + 0.1\, X\otimes I + 0.1 \, I\otimes X,
\end{split}
\end{equation}
and a Hamiltonian with parameterized off-diagonals
\begin{equation}
\label{eq:H_disc}
\begin{split}
    H^{(2)}_z\left(\bm{\theta}, \bm{x}\right) =&\:f^{(2)}_0\left(\bm{\theta}, \bm{x}\right) \, Z\otimes Z + f^{(2)}_1\left(\bm{\theta}, \bm{x}\right)\, Z\otimes I  +\\
    &\:f^{(2)}_2\left(\bm{\theta}, \bm{x}\right)\, I\otimes Z + f^{(2)}_3\left(\bm{\theta}, \bm{x}\right)\, X\otimes I + f^{(2)}_4\left(\bm{\theta}, \bm{x}\right)\, I\otimes X,
\end{split}
\end{equation}
where $f^{(j)}_i\big(\bm{\theta}^{(j)}_i, \bm{x}\big) = \bm{\theta}^{(j)}_i \cdot \bm{x} $ corresponds to the dot product of the vector corresponding to the data item $\bm{x} $ and a parameter vector $\bm{\theta}^{(j)}_i $ of equal length. 
The first and second qubit of all systems correspond to a hidden $h$ and visible $v$ node, respectively. Given a transaction instance, the measurement output of the visible units labels it as being either fraudulent or valid.



The training uses a Conjugate Gradient optimization routine with a maximum iteration number of $100$. 
Furthermore, the initial values for the Hamiltonian parameters are drawn from a uniform distribution on $\left[-1, 1\right]$.

 \begin{figure}[!htb]
\captionsetup{singlelinecheck = false, format= hang, justification=raggedright, font=footnotesize, labelsep=space}
\begin{center}
\begin{tikzpicture}

\node at (0,0){\includegraphics[width=0.7\linewidth]{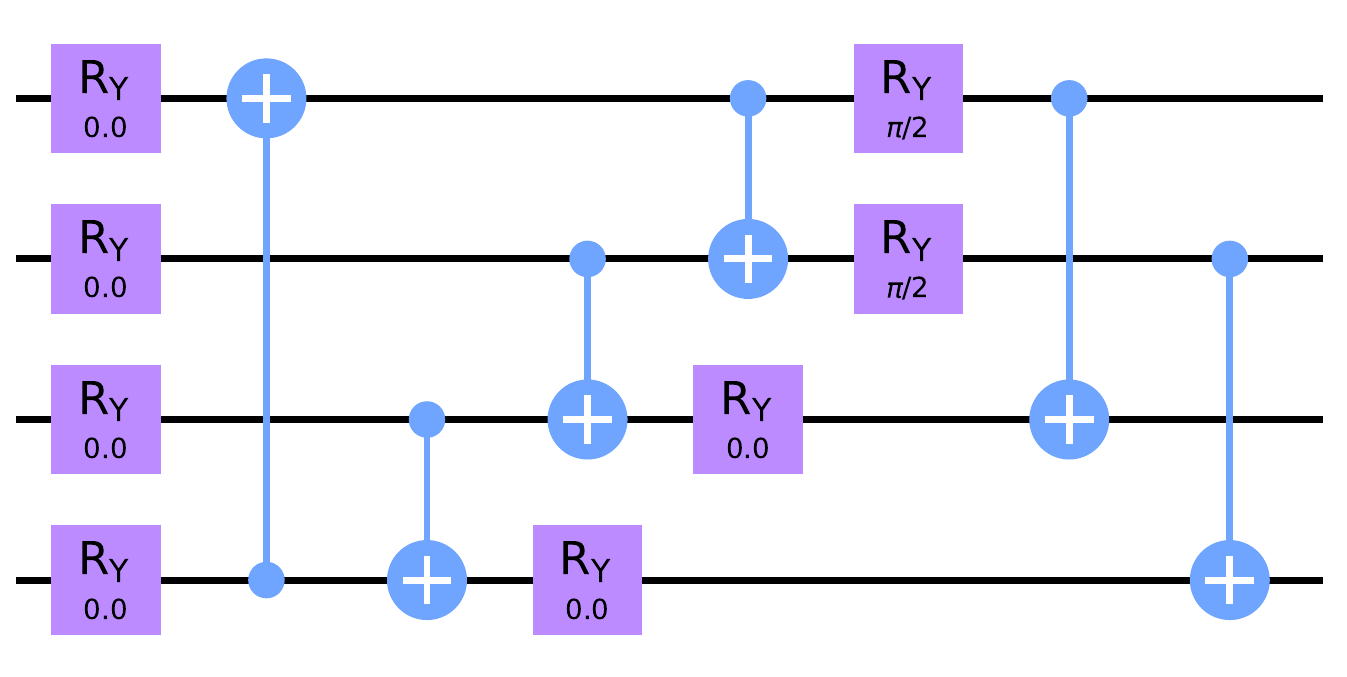}};
   \node at (-7, 0) {$\ket{0}^{\otimes 4}$};
   \draw[decorate, thick, decoration = {brace, amplitude=15pt}] (-5.5,-1.9) --  (-5.5,1.9);
\end{tikzpicture}
\end{center}
\caption{The Gibbs state preparation with VarQITE uses the illustrated parameterized quantum circuit as ansatz. The shown parameters correspond to the initial parameters of the Gibbs state preparation. The first qubit is the visible unit that determines the QBM output, the second qubit represents the hidden unit, and the last two qubits are working qubits needed to generate a purification of a maximally-mixed state as starting state for the evolution.}
\label{fig:ansatz_Disc}
\end{figure}
\pagebreak
In the following, we are going to employ the \varqbm{} implementation introduced in Sec.~\ref{sec:qbm} where VarQITE, see Sec.~\ref{sec:varqite}, is used to approximate the underlying Gibbs state $\rho(\bm{\theta};\bm{\omega}; \bm{x}) \approx \rho_z(\bm{\theta}; \bm{x})$ and to facilitate automatic differentiation for the training \footnote{These examples are trained with $L\left(\bm{\theta}\right) = -\sum_{x}p_x^{\text{data}} \log{ p_x\left(\bm{\theta}\right)}$ which is convention in existing ML literature. Although this is feasible for the small system size of the presented examples, a sampling based evaluation of the respective loss function gradient would not scale.}.
Given a test data set consisting of $250$ instances, with about $10\%$ fraudulent transactions, the Gibbs states, corresponding to the unique items of the test data, are approximated using VarQITE with the trained parameters $\bm{\theta}$ and the trial state shown in Fig.~\ref{fig:ansatz_Disc}. 
To predict the labels of the data instances, we sample from the states $\rho(\bm{\theta};\bm{\omega}; \bm{x})$ and choose the label with the highest sampling probability.
These results are, then, used to evaluate the accuracy (Acc), recall (Rec), precision (Pre) and F$_1$ score.
It should be noted that we choose a relatively simple quantum circuit to keep the simulation cost small. However, it can be expected that a more complex parameterized quantum circuit would lead to further improvement in the training results.

\begin{table}[h!tb]
\footnotesize
\centering{
\captionsetup{singlelinecheck = false, format= hang, justification=raggedright, font=footnotesize, labelsep=space}
{\renewcommand{\arraystretch}{1.2}
\begin{tabular}{c | c | c | c | c }
\textbf{Model} & \textbf{Accuracy} & \textbf{Recall} & \textbf{Precision} & \textbf{F}$_{\bm{1}}$\\
\hline
Nearest Neighbours & $0.94$ &  $0.54$ & $0.72$ & $0.62$ \\
Linear SVM & $0.90 $& $0$ & $0$ & $0$ \\
RBF SVM & $0.94 $& $0.42$ & $0.83$ & $0.56$  \\
Gaussian Process & $0.94$ & $0.46$ & $0.85$ & $0.60$\\
Gaussian Naive Bayes  & $0.91$ &$ 0.42 $& $0.56$ & $0.48$\\
Decision Tree & $0.94$ & $0.42$ & $0.83$ & $0.56$\\
Random Forest & $0.93 $& $0.29$ & $1.00$ & $0.45$\\
Multi-layer Perceptron  & $0.94$ &  $0.38$ &$0.9$ & $0.53$\\
AdaBoost & $0.94$ & $0.54$ & $0.81$ & $0.65$\\
QDA & $0.92$ & $0.46$ & $0.61$ & $0.52$\\
RBM & $0.90$ & $0$ & $0$ & $0$ \\
\textbf{\varqbm$^{(0)}$ } & $\mathbf{0.93}$ & 
$\mathbf{0.50}$ & $\mathbf{0.67}$ & $\mathbf{0.57}$\\
\textbf{\varqbm$^{(1)}$} & $\mathbf{0.93}$ & 
$\mathbf{0.50}$ & $\mathbf{0.67}$ & $\mathbf{0.57}$\\
\rowcolor{lightgray}
\textbf{\varqbm}$^{(2)}$ & $\mathbf{0.95}$ & 
$\mathbf{0.58}$ & $\mathbf{0.88}$ & $\mathbf{0.70}$
\end{tabular}
}
}
\caption{This table presents performance measures for scikit-learn standard classifiers, as well as the trained variational QBM models. The nearest neighbours classifier uses a $3$ nearest neighbours vote. The linear and RBF support vector machine (SVM) are based on a linear and radial kernel, respectively. The linear SVM uses a regularization term of $0.25$ and for the RBF SVM the kernel coefficient is set to $2$. The maximum depth of the Decision Tree, as well as the random forest is set to 5. Furthermore, the random forest classifier uses $10$ trees and uses $1$ feature to search for the best spit. The multi-layer perceptron uses $\ell_2$ regularization with coefficient $1$ and a maximum iteration number of $1000$. QDA refers to quadratic discriminant analysis. It should be noted that the remaining classifier properties are default settings.}
\label{tbl:measuresQBM}
\end{table}

The resulting values are compared to a set of standard classifiers defined in a \emph{scikit-learn} \cite{scikit-learn2011} classifier comparison tutorial \cite{scikitClassifierComp}, as well as a classifier consisting of a restricted Boltzmann machine and a logistic regression (RBM) \cite{scikitRBMClassifier}, see Tbl.~\ref{tbl:measuresQBM}. The respective classifiers are used with the hyper parameters defined in the tutorials.
Neither the linear SVM nor the RBM classify any test data item as fraudulent and, thus, the classifier sets precision and recall score to $0$.
The comparison reveals that all variational QBMs perform similarly well to the classical classifiers considering accuracy and are competitive regarding precision. In terms of recall and F$_1$ score, the \varqbm{} models without off-diagonal elements and with fixed off-diagonal elements perform well. Notably, the addition of fixed-off diagonal terms in the Hamiltonian does not help to improve the model performance compared to a Hamiltonian without these terms. Finally, the \varqbm{} with trained off-diagonal elements achieves the best results in terms of recall and F$_1$ score.

\addtocontents{toc}{}

%
%
%
%
%
%
%



\typeout{}
\bibliographystyle{arxiv_no_month}
\bibliography{references}

\begin{thebibliography}{100}

\bibitem{Aaronson2015_finePrint}
S.~Aaronson.
\newblock Read the fine print.
\newblock {\em Nature Physics}, 11, 2015.

\bibitem{Aaronson11ComputationComplexityOptics}
S.~Aaronson and A.~Arkhipov.
\newblock The computational complexity of linear optics.
\newblock STOC '11, New York, NY, USA, 2011. Association for Computing
  Machinery.

\bibitem{Abbas_2021PowerofQNNs}
A.~Abbas, D.~Sutter, C.~Zoufal, A.~Lucchi, A.~Figalli, and S.~Woerner.
\newblock The power of quantum neural networks.
\newblock {\em Nature Computational Science}, 1(6), 2021.

\bibitem{qiskit}
H.~Abraham et~al.
\newblock Qiskit: An open-source framework for quantum computing, 2019.
\newblock
  \texttt{\href{http://dx.doi.org/10.5281/zenodo.2562110}{DOI:\,10.5281/zenodo.2562110}}.

\bibitem{AbramsQPE99}
D.~S. Abrams and S.~Lloyd.
\newblock {Quantum Algorithm Providing Exponential Speed Increase for Finding
  Eigenvalues and Eigenvectors}.
\newblock {\em Phys. Rev. Lett.}, 83, 1999.

\bibitem{HintonBM1985}
D.~H. Ackley, G.~E. Hinton, and T.~J. Sejnowski.
\newblock {A Learning Algorithm for Boltzmann Machines}.
\newblock {\em Cognitive Science}, 9(1), 1985.

\bibitem{aharonov2013quantumpcp}
D.~Aharonov, I.~Arad, and T.~Vidick.
\newblock {Guest Column: The Quantum PCP Conjecture}.
\newblock {\em SIGACT News}, 44(2), 2013.

\bibitem{QUnsupervisedAimeur13}
E.~A{\"\i}meur, G.~Brassard, and S.~Gambs.
\newblock Quantum speed-up for unsupervised learning.
\newblock {\em Machine Learning}, 90(2), 2013.

\bibitem{Alcazar_2020qganFinance}
J.~Alcazar, V.~Leyton-Ortega, and A.~Perdomo-Ortiz.
\newblock Classical versus quantum models in machine learning: insights from a
  finance application.
\newblock {\em Machine Learning: Science and Technology}, 1(3), 2020.

\bibitem{SignProblemAlet16}
F.~Alet, K.~Damle, and S.~Pujari.
\newblock {Sign-Problem-Free Monte Carlo Simulation of Certain Frustrated
  Quantum Magnets}.
\newblock {\em Phys. Rev. Lett.}, 117, 2016.

\bibitem{alghassi2021variationalFK}
H.~Alghassi, A.~Deshmukh, N.~Ibrahim, N.~Robles, S.~Woerner, and C.~Zoufal.
\newblock {A variational quantum algorithm for the Feynman-Kac formula}.
\newblock {\em arXiv preprint - arXiv:2108.10846}, 2021.

\bibitem{Altman2019}
E.~Altman.
\newblock Synthesizing credit card transactions.
\newblock {\em arXiv preprint - arXiv:1910.03033}, 2019.

\bibitem{NatGradAmari98}
S.~Amari and S.~Douglas.
\newblock Why natural gradient?
\newblock In {\em Proceedings of the 1998 IEEE International Conference on
  Acoustics, Speech and Signal Processing, ICASSP '98 (Cat. No.98CH36181)},
  volume~2, 1998.

\bibitem{Amari2016InfoGeometry}
S.-i. Amari.
\newblock {\em Information Geometry and Its Applications}.
\newblock Springer Japan, Tokyo, 2016.

\bibitem{Ambainis07t-designs}
A.~Ambainis and J.~Emerson.
\newblock {Quantum t-designs: t-wise Independence in the Quantum World}.
\newblock In {\em Twenty-Second Annual IEEE Conference on Computational
  Complexity (CCC'07)}, 2007.

\bibitem{QBMAmin18}
M.~H. Amin, E.~Andriyash, J.~Rolfe, B.~Kulchytskyy, and R.~Melko.
\newblock {Quantum Boltzmann Machine}.
\newblock {\em Phys. Rev. X}, 8, 2018.

\bibitem{Anschtz2019RealizingQB}
E.~R. Ansch{\"u}tz and Y.~Cao.
\newblock {Realizing Quantum Boltzmann Machines Through Eigenstate
  Thermalization}.
\newblock {\em arXiv preprint - arXiv:1903.01359}, 2019.

\bibitem{AnshuSample-efficientQManyBody}
A.~Anshu, S.~Arunachalam, T.~Kuwahara, and M.~Soleimanifar.
\newblock Sample-efficient learning of interacting quantum systems.
\newblock {\em Nature Physics}, 17(8), 2021.

\bibitem{arjovsky2017wassersteingan}
M.~Arjovsky, S.~Chintala, and L.~Bottou.
\newblock Wasserstein gan.
\newblock {\em arXiv preprint - arXiv:1701.07875}, 2017.

\bibitem{arrasmith2020effectbps_grad_freeOpt}
A.~Arrasmith, M.~Cerezo, P.~Czarnik, L.~Cincio, and P.~J. Coles.
\newblock Effect of barren plateaus on gradient-free optimization.
\newblock {\em arXiv preprint - arXiv:2011.12245}, 2020.

\bibitem{Assis18RBMFin}
C.~A.~S. {Assis}, A.~C.~M. {Pereira}, E.~G. {Carrano}, R.~{Ramos}, and
  W.~{Dias}.
\newblock {Restricted Boltzmann Machines for the Prediction of Trends in
  Financial Time Series}.
\newblock In {\em International Joint Conference on Neural Networks}, 2018.

\bibitem{QuantumSimMaerialsBabbush18}
R.~Babbush, N.~Wiebe, J.~McClean, J.~McClain, H.~Neven, and G.~K.-L. Chan.
\newblock Low-depth quantum simulation of materials.
\newblock {\em Phys. Rev. X}, 8, 2018.

\bibitem{Banchi2021measuringanalytic}
L.~Banchi and G.~E. Crooks.
\newblock Measuring {A}nalytic {G}radients of {G}eneral {Q}uantum {E}volution
  with the {S}tochastic {P}arameter {S}hift {R}ule.
\newblock {\em {Quantum}}, 5, 2021.

\bibitem{banchi2021generalizationQML}
L.~Banchi, J.~Pereira, and S.~Pirandola.
\newblock Generalization in quantum machine learning: a quantum information
  perspective.
\newblock {\em arXiv preprint - arXiv:2102.08991}, 2021.

\bibitem{BarendsFermionicModels2015}
R.~Barends et~al.
\newblock {Digital quantum simulation of fermionic models with a
  superconducting circuit}.
\newblock {\em Nature Communications}, 6(1), 2015.

\bibitem{barison2021efficientQRTE}
S.~Barison, F.~Vicentini, and G.~Carleo.
\newblock An efficient quantum algorithm for the time evolution of
  parameterized circuits.
\newblock {\em {Quantum}}, 5, 2021.

\bibitem{Barkoutsos20VQOCVaR}
P.~Barkoutsos, G.~Nannicini, A.~Robert, I.~Tavernelli, and S.~Woerner.
\newblock {Improving Variational Quantum Optimization using CVaR}.
\newblock {\em Quantum}, 4, 2020.

\bibitem{BarkoutsosQAAlchemical21}
P.~K. Barkoutsos, F.~Gkritsis, P.~J. Ollitrault, I.~O. Sokolov, S.~Woerner, and
  I.~Tavernelli.
\newblock Quantum algorithm for alchemical optimization in material design.
\newblock {\em Chem. Sci.}, 12, 2021.

\bibitem{Beckman_1996QuantumFactoring}
D.~Beckman, A.~N. Chari, S.~Devabhaktuni, and J.~Preskill.
\newblock Efficient networks for quantum factoring.
\newblock {\em Physical Review A}, 54(2), 1996.

\bibitem{GrantQuantumClassifiers18}
M.~Benedetti, D.~Garcia-Pintos, O.~Perdomo, V.~Leyton-Ortega, Y.~Nam, and
  A.~Perdomo-Ortiz.
\newblock Hierarchical quantum classifiers.
\newblock {\em npj Quantum Information}, 4(1), 2018.

\bibitem{BenedettiGenModellin19}
M.~Benedetti, D.~Garcia-Pintos, O.~Perdomo, V.~Leyton-Ortega, Y.~Nam, and
  A.~Perdomo-Ortiz.
\newblock A generative modeling approach for benchmarking and training shallow
  quantum circuits.
\newblock {\em npj Quantum Information}, 5(1), 2019.

\bibitem{Benedetti_2019PureStateApproxqGANs}
M.~Benedetti, E.~Grant, L.~Wossnig, and S.~Severini.
\newblock Adversarial quantum circuit learning for pure state approximation.
\newblock {\em New Journal of Physics}, 21(4), 2019.

\bibitem{Benedetti_2019PQC}
M.~Benedetti, E.~Lloyd, S.~Sack, and M.~Fiorentini.
\newblock Parameterized quantum circuits as machine learning models.
\newblock {\em Quantum Science and Technology}, 4(4), 2019.

\bibitem{BengioConvexNNs05}
Y.~Bengio, N.~L. Roux, P.~Vincent, O.~Delalleau, and P.~Marcotte.
\newblock Convex neural networks.
\newblock In {\em Proceedings of the 18th International Conference on Neural
  Information Processing Systems}, NIPS'05, pages 123–--130, Cambridge, MA,
  USA, 2005. MIT Press.

\bibitem{Besag1975}
J.~Besag.
\newblock {Statistical Analysis of Non-Lattice Data}.
\newblock {\em Journal of the Royal Statistical Society. Series D (The
  Statistician)}, 24(3), 1975.

\bibitem{BiamonteQML17}
J.~Biamonte, P.~Wittek, N.~Pancotti, P.~Rebentrost, N.~Wiebe, and S.~Lloyd.
\newblock Quantum machine learning.
\newblock {\em Nature}, 549(7671), 2017.

\bibitem{bishop2006patternRecognition}
C.~M. Bishop.
\newblock Pattern recognition.
\newblock {\em Machine learning}, 128(9), 2006.

\bibitem{BlackScholes}
F.~Black and M.~Scholes.
\newblock The pricing of options and corporate liabilities.
\newblock {\em Journal of Political Economy}, 81(3):637--654, 1973.

\bibitem{Boltzmann1877}
L.~Boltzmann.
\newblock {{\"U}ber die Natur der Gasmolek{\"u}le}.
\newblock In {\em Wissenschaftliche Abhandlungen, Vol. I, II, and III}, 1877.

\bibitem{BoserSV92}
B.~E. Boser, I.~M. Guyon, and V.~N. Vapnik.
\newblock A training algorithm for optimal margin classifiers.
\newblock In {\em Proceedings of the Fifth Annual Workshop on Computational
  Learning Theory}, COLT '92, New York, NY, USA, 1992. Association for
  Computing Machinery.

\bibitem{brandao2019quantumsdp}
F.~G. S.~L. Brand{\~a}o, A.~Kalev, T.~Li, C.~Y.-Y. Lin, K.~M. Svore, and X.~Wu.
\newblock {Quantum SDP Solvers: Large Speed-Ups, Optimality, and Applications
  to Quantum Learning}.
\newblock In C.~Baier, I.~Chatzigiannakis, P.~Flocchini, and S.~Leonardi,
  editors, {\em 46th International Colloquium on Automata, Languages, and
  Programming (ICALP 2019)}, volume 132 of {\em Leibniz International
  Proceedings in Informatics (LIPIcs)}, pages 27:1--27:14, Dagstuhl, Germany,
  2019. Schloss Dagstuhl--Leibniz-Zentrum fuer Informatik.

\bibitem{brandaoFiniteCorrLengthEfficientPrep19}
F.~G. S.~L. Brand{\~a}o and M.~J. Kastoryano.
\newblock {Finite Correlation Length Implies Efficient Preparation of Quantum
  Thermal States}.
\newblock {\em Communications in Mathematical Physics}, 365(1), 2019.

\bibitem{brandao2017quantumsdp}
F.~G. S.~L. Brand{\~a}o and K.~Svore.
\newblock Quantum speed-ups for semidefinite programming.
\newblock {\em arXiv preprint - arXiv:1609.05537}, 2017.

\bibitem{brassardQAE02}
G.~Brassard, P.~Hoyer, M.~Mosca, and A.~Tapp.
\newblock {Quantum Amplitude Amplification and Estimation}.
\newblock {\em Contemporary Mathematics}, 305, 2002.

\bibitem{QFIMBraunstein94}
S.~L. Braunstein and C.~M. Caves.
\newblock Statistical distance and the geometry of quantum states.
\newblock {\em Phys. Rev. Lett.}, 72, 1994.

\bibitem{Bravo_Prieto_2020QSVD}
C.~Bravo-Prieto, D.~Garc\'ia-Mart\'in, and J.~I. Latorre.
\newblock Quantum singular value decomposer.
\newblock {\em Physical Review A}, 101(6), 2020.

\bibitem{bravyi06LocalHam}
S.~Bravyi, D.~DiVincenzo, R.~Oliveira, and B.~Terhal.
\newblock {The Complexity of Stoquastic Local Hamiltonian Problems}.
\newblock {\em Quantum Information and Computation}, 8, 2006.

\bibitem{bravyi2017tapering}
S.~Bravyi, J.~M. Gambetta, A.~Mezzacapo, and K.~Temme.
\newblock Tapering off qubits to simulate fermionic hamiltonians.
\newblock {\em arXiv preprint - arXiv:1701.08213}, 2017.

\bibitem{Bravyi_2018QuantumAdvantageShallowCircuits}
S.~Bravyi, D.~Gosset, and R.~K\"onig.
\newblock Quantum advantage with shallow circuits.
\newblock {\em Science}, 362(6412), 2018.

\bibitem{Bravyi20QAShallowCircuits}
S.~Bravyi, D.~Gosset, R.~K{\"o}nig, and M.~Tomamichel.
\newblock Quantum advantage with noisy shallow circuits.
\newblock {\em Nature Physics}, 16(10), 2020.

\bibitem{BresslerMRFSSTructure13}
G.~Bresler, E.~Mossel, and A.~Sly.
\newblock {Reconstruction of Markov Random Fields from Samples: Some
  Observations and Algorithms}.
\newblock {\em SIAM Journal on Computing}, 42(2), 2013.

\bibitem{Burda2015ImportanceWA}
Y.~Burda, R.~B. Grosse, and R.~Salakhutdinov.
\newblock {Importance Weighted Autoencoders}.
\newblock {\em arXiv preprint - arXiv:1509.00519}, 2015.

\bibitem{Calabrese_2012TransveresIsing}
P.~Calabrese, F.~H.~L. Essler, and M.~Fagotti.
\newblock Quantum quench in the transverse field ising chain: I. time evolution
  of order parameter correlators.
\newblock {\em Journal of Statistical Mechanics: Theory and Experiment}, 2012.

\bibitem{Carbonell1983}
J.~G. Carbonell, R.~S. Michalski, and T.~M. Mitchell.
\newblock {\em Machine Learning: An Artificial Intelligence Approach}.
\newblock Springer Berlin Heidelberg, Berlin, Heidelberg, 1983.

\bibitem{CarleoRBMsQManyBody18}
G.~Carleo, Y.~Nomura, and M.~Imada.
\newblock Constructing exact representations of quantum many-body systems with
  deep neural networks.
\newblock {\em Nature Communications}, 9, 2018.

\bibitem{CarleoRBMsQuantumManyBody17}
G.~Carleo and M.~Troyer.
\newblock Solving the quantum many-body problem with artificial neural
  networks.
\newblock {\em Science}, 355, 2017.

\bibitem{Hinton05CD}
M.~Carreira-Perpinan and G.~Hinton.
\newblock {On Contrastive Divergence Learning}.
\newblock {\em Artificial Intelligence and Statistics}, 2005.

\bibitem{CarreraEfficientStatePreparation21}
A.~Carrera~Vazquez and S.~Woerner.
\newblock Efficient state preparation for quantum amplitude estimation.
\newblock {\em Phys. Rev. Applied}, 15, 2021.

\bibitem{CauchyGD1847}
A.~Cauchy.
\newblock Methode generale pour la resolution des systemes d'equations
  simultanees.
\newblock {\em C.R. Acad. Sci. Paris}, 25, 1847.

\bibitem{Cerezo_2021higher_order_bps}
M.~Cerezo and P.~J. Coles.
\newblock Higher order derivatives of quantum neural networks with barren
  plateaus.
\newblock {\em Quantum Science and Technology}, 6(3), 2021.

\bibitem{cerezo2020vqse}
M.~Cerezo, K.~Sharma, A.~Arrasmith, and P.~J. Coles.
\newblock {Variational Quantum State Eigensolver}.
\newblock {\em arXiv preprint - arXiv:2004.01372}, 2020.

\bibitem{CerezoCostFunctDependentBarrenPlats21}
M.~Cerezo, A.~Sone, T.~Volkoff, L.~Cincio, and P.~J. Coles.
\newblock Cost function dependent barren plateaus in shallow parametrized
  quantum circuits.
\newblock {\em Nature Communications}, 12(1), 2021.

\bibitem{chakrabarti2019wassersteinqGAN}
S.~Chakrabarti, Y.~Huang, T.~Li, S.~Feizi, and X.~Wu.
\newblock {Quantum Wasserstein Generative Adversarial Networks}.
\newblock {\em arXiv preprint - arXiv:1911.00111}, 2019.

\bibitem{Chakrabarti2021thresholdquantum}
S.~Chakrabarti, R.~Krishnakumar, G.~Mazzola, N.~Stamatopoulos, S.~Woerner, and
  W.~J. Zeng.
\newblock A {T}hreshold for {Q}uantum {A}dvantage in {D}erivative {P}ricing.
\newblock {\em {Quantum}}, 5, 2021.

\bibitem{chakraborty_QRegression19}
S.~Chakraborty, A.~Gily{\'e}n, and S.~Jeffery.
\newblock {The Power of Block-Encoded Matrix Powers: Improved Regression
  Techniques via Faster Hamiltonian Simulation}.
\newblock In C.~Baier, I.~Chatzigiannakis, P.~Flocchini, and S.~Leonardi,
  editors, {\em 46th International Colloquium on Automata, Languages, and
  Programming (ICALP 2019)}, volume 132 of {\em Leibniz International
  Proceedings in Informatics (LIPIcs)}, pages 33:1--33:14. Schloss
  Dagstuhl--Leibniz-Zentrum fuer Informatik, 2019.

\bibitem{kolmogorov}
I.~Chakravarti, R.~Laha, and J.~Roy.
\newblock {\em Handbook of methods of applied statistics}.
\newblock Number Vol. 1 in Wiley series in probability and mathematical
  statistics. Wiley, 1967.

\bibitem{Chen2019}
Y.~Chen, Y.~Chi, J.~Fan, and C.~Ma.
\newblock Gradient descent with random initialization: fast global convergence
  for non-convex phase retrieval.
\newblock {\em Mathematical Programming}, 176(1), 2019.

\bibitem{LCUHamiltonianSimulaitonWiebe12}
A.~M. Childs and N.~Wiebe.
\newblock {Hamiltonian Simulation Using Linear Combinations of Unitary
  Operations}.
\newblock {\em Quantum Information and Computation}, 12(11/12), 2012.

\bibitem{discreteProbDistApproxChow68}
C.~Chow and C.~Liu.
\newblock Approximating discrete probability distributions with dependence
  trees.
\newblock {\em IEEE Transactions on Information Theory}, 14(3), 1968.

\bibitem{WiebeVariationalGibbs2020}
A.~Chowdhury, G.~H. Low, and N.~Wiebe.
\newblock {A Variational Quantum Algorithm for Preparing Quantum Gibbs States}.
\newblock {\em arXiv preprint - arXiv:2002.00055}, 2020.

\bibitem{Cincio_2018SwapTest}
L.~Cincio et~al.
\newblock Learning the quantum algorithm for state overlap.
\newblock {\em New Journal of Physics}, 20(11), 2018.

\bibitem{Cong_2019QCNN}
I.~Cong, S.~Choi, and M.~D. Lukin.
\newblock Quantum convolutional neural networks.
\newblock {\em Nature Physics}, 15(12), 2019.

\bibitem{BornSupremacyCoyle2020}
B.~Coyle et~al.
\newblock {The Born supremacy: quantum advantage and training of an Ising Born
  machine}.
\newblock {\em npj Quantum Information}, 6(1), 2020.

\bibitem{Crawford2021efficientquantum}
O.~Crawford, B.~v. Straaten, D.~Wang, T.~Parks, E.~Campbell, and S.~Brierley.
\newblock Efficient quantum measurement of {P}auli operators in the presence of
  finite sampling error.
\newblock {\em {Quantum}}, 5, 2021.

\bibitem{Dahlsten_2007t-designs}
O.~C.~O. Dahlsten, R.~Oliveira, and M.~B. Plenio.
\newblock The emergence of typical entanglement in two-party random processes.
\newblock {\em Journal of Physics A: Mathematical and Theoretical}, 40(28),
  2007.

\bibitem{killoran2018qgans}
P.-L. Dallaire-Demers and N.~Killoran.
\newblock Quantum generative adversarial networks.
\newblock {\em Phys. Rev. A}, 98, 2018.

\bibitem{TNCDembo1983}
R.~S. Dembo and T.~Steihaug.
\newblock {Truncated-Newton algorithms for large-scale unconstrained
  optimization}.
\newblock {\em Mathematical Programming}, 26(2), 1983.

\bibitem{dewes2012readout}
A.~Dewes et~al.
\newblock {Characterization of a Two-Transmon Processor with Individual
  Single-Shot Qubit Readout}.
\newblock {\em Phys. Rev. Lett.}, 108, 2012.

\bibitem{QVAEDing19}
Y.~Ding, L.~Lamata, M.~Sanz, X.~Chen, and E.~Solano.
\newblock {Experimental Implementation of a Quantum Autoencoder via Quantum
  Adders}.
\newblock {\em Advanced Quantum Technologies}, 2(7-8), 2019.

\bibitem{QuantumReinforcementDong05}
D.~Dong, C.~Chen, and Z.~Chen.
\newblock Quantum reinforcement learning.
\newblock In L.~Wang, K.~Chen, and Y.~S. Ong, editors, {\em Advances in Natural
  Computation}, Berlin, Heidelberg, 2005. Springer Berlin Heidelberg.

\bibitem{SvoreLookaheadAdder06}
T.~G. Draper, S.~A. Kutin, E.~M. Rains, and K.~M. Svore.
\newblock A logarithmic-depth quantum carry-lookahead adder.
\newblock {\em Quantum Info. Comput.}, 6(4), 2006.

\bibitem{Du2019BM}
K.-L. Du and M.~N.~S. Swamy.
\newblock {Boltzmann Machines}.
\newblock In {\em Neural Networks and Statistical Learning}. Springer London,
  London, 2019.

\bibitem{DuExpressivePowerPQCs_2020}
Y.~Du, M.-H. Hsieh, T.~Liu, and D.~Tao.
\newblock Expressive power of parametrized quantum circuits.
\newblock {\em Physical Review Research}, 2(3), 2020.

\bibitem{du2020learnabilityQNN}
Y.~Du, M.-H. Hsieh, T.~Liu, S.~You, and D.~Tao.
\newblock On the learnability of quantum neural networks.
\newblock {\em arXiv preprint - arXiv:2007.12369}, 2020.

\bibitem{du2021efficientMeasureExpressivityVQA}
Y.~Du, Z.~Tu, X.~Yuan, and D.~Tao.
\newblock An efficient measure for the expressivity of variational quantum
  algorithms.
\newblock {\em arXiv preprint - arXiv:2104.09961}, 2021.

\bibitem{Dumoulin2017AdversLearnedInference}
V.~Dumoulin et~al.
\newblock Adversarially learned inference.
\newblock In {\em International Conference on Learning Representations}, 2017.

\bibitem{Dutta_2020QRegression}
S.~Dutta et~al.
\newblock Quantum circuit design methodology for multiple linear regression.
\newblock {\em IET Quantum Communication}, 1(2), 2020.

\bibitem{SchmidtDecompositionEkert95}
A.~Ekert and P.~L. Knight.
\newblock {Entangled quantum systems and the Schmidt decomposition}.
\newblock {\em American Journal of Physics}, 63(5), 1995.

\bibitem{EkertSwapTest02}
A.~K. Ekert, C.~M. Alves, D.~K.~L. Oi, M.~Horodecki, P.~Horodecki, and L.~C.
  Kwek.
\newblock Direct estimations of linear and nonlinear functionals of a quantum
  state.
\newblock {\em Phys. Rev. Lett.}, 88, 2002.

\bibitem{VarQSGeneralEndo20}
S.~Endo, J.~Sun, Y.~Li, S.~C. Benjamin, and X.~Yuan.
\newblock Variational quantum simulation of general processes.
\newblock {\em Phys. Rev. Lett.}, 125, 2020.

\bibitem{Fagioli_2018Cosmo}
M.~Fagioli et~al.
\newblock {Forward modeling of spectroscopic galaxy surveys: application to
  {SDSS}}.
\newblock {\em Journal of Cosmology and Astroparticle Physics}, 2018(11), 2018.

\bibitem{FarhiQAOA14}
E.~Farhi, J.~Goldstone, and S.~Gutmann.
\newblock {A Quantum Approximate Optimization Algorithm Applied to a Bounded
  Occurrence Constraint Problem}.
\newblock {\em arXiv preprint - arXiv:1411.4028}, 2014.

\bibitem{farhi2018classification}
E.~Farhi and H.~Neven.
\newblock Classification with quantum neural networks on near term processors.
\newblock {\em arXiv preprint - arXiv:1802.06002}, 2018.

\bibitem{fedus2018many}
W.~Fedus, M.~Rosca, B.~Lakshminarayanan, A.~M. Dai, S.~Mohamed, and
  I.~Goodfellow.
\newblock {Many Paths to Equilibrium: {GAN}s Do Not Need to Decrease a
  Divergence At Every Step}.
\newblock In {\em International Conference on Learning Representations}, 2018.

\bibitem{FeynmanSimulatingPhysics1982}
R.~P. Feynman.
\newblock Simulating physics with computers.
\newblock {\em International Journal of Theoretical Physics}, 21(6), 1982.

\bibitem{Fischer2015}
A.~Fischer.
\newblock {Training Restricted Boltzmann Machines}.
\newblock {\em KI - K{\"u}nstliche Intelligenz}, 29(4), 2015.

\bibitem{Fischer12RBM}
A.~Fischer and C.~Igel.
\newblock {An Introduction to Restricted Boltzmann Machines}.
\newblock In L.~Alvarez, M.~Mejail, L.~Gomez, and J.~Jacobo, editors, {\em
  Progress in Pattern Recognition, Image Analysis, Computer Vision, and
  Applications}. Springer Berlin Heidelberg, 2012.

\bibitem{Fischer2012}
A.~Fischer and C.~Igel.
\newblock {An Introduction to Restricted Boltzmann Machines}.
\newblock In L.~Alvarez, M.~Mejail, L.~Gomez, and J.~Jacobo, editors, {\em
  {Progress in Pattern Recognition, Image Analysis, Computer Vision, and
  Applications}}, Berlin, Heidelberg, 2012. Springer Berlin Heidelberg.

\bibitem{fontanela2021quantumpdes}
F.~Fontanela, A.~Jacquier, and M.~Oumgari.
\newblock {A Quantum algorithm for linear PDEs arising in Finance}.
\newblock {\em arXiv preprint - arXiv:1912.02753}, 2021.

\bibitem{gacon2021simultaneous}
J.~Gacon, C.~Zoufal, G.~Carleo, and S.~Woerner.
\newblock Simultaneous perturbation stochastic approximation of the quantum
  fisher information.
\newblock {\em arXiv preprint - arXiv:2103.09232}, 2021.

\bibitem{Gacon_2020QSBO}
J.~Gacon, C.~Zoufal, and S.~Woerner.
\newblock Quantum-enhanced simulation-based optimization.
\newblock {\em 2020 IEEE International Conference on Quantum Computing and
  Engineering (QCE)}, 2020.

\bibitem{Ganzhorn2011Eigenstates}
M.~Ganzhorn, D.~Egger, P.~Barkoutsos, P.~Ollitrault, G.~Salis, N.~Moll,
  M.~Roth, A.~Fuhrer, P.~Mueller, S.~Woerner, I.~Tavernelli, and S.~Filipp.
\newblock Gate-efficient simulation of molecular eigenstates on a quantum
  computer.
\newblock {\em Phys. Rev. Applied}, 11, 2019.

\bibitem{GemanGibbs84}
S.~Geman and D.~Geman.
\newblock {Stochastic Relaxation, Gibbs Distributions, and the Bayesian
  Restoration of Images}.
\newblock {\em IEEE Transactions on Pattern Analysis and Machine Intelligence},
  PAMI-6(6), 1984.

\bibitem{Non-convexStochasticGhadimi26}
S.~Ghadimi and G.~Lan.
\newblock Accelerated gradient methods for nonconvex nonlinear and stochastic
  programming.
\newblock {\em Mathematical Programming}, 156(1), 2016.

\bibitem{gibbs02}
J.~W. Gibbs.
\newblock {\em Elementary principles in statistical mechanics}.
\newblock Cambridge University Press, 1902.

\bibitem{gibbs_2010}
J.~W. Gibbs.
\newblock {\em Elementary Principles in Statistical Mechanics: Developed with
  Especial Reference to the Rational Foundation of Thermodynamics}.
\newblock Cambridge Library Collection - Mathematics. Cambridge University
  Press, 2010.

\bibitem{qram2}
V.~Giovannetti, S.~Lloyd, and L.~Maccone.
\newblock Architectures for a quantum random access memory.
\newblock {\em Phys. Rev. A}, 78, 2008.

\bibitem{qram}
V.~Giovannetti, S.~Lloyd, and L.~Maccone.
\newblock Quantum random access memory.
\newblock {\em Physical review letters}, 100, 2008.

\bibitem{Glasserman2003}
P.~Glasserman.
\newblock {\em Monte Carlo Methods in Financial Engineering}.
\newblock Springer-Verlag New York, 2003.

\bibitem{glick2021covariantQuantumKernels}
J.~R. Glick et~al.
\newblock Covariant quantum kernels for data with group structure.
\newblock {\em arXiv preprint - arXiv:2105.03406}, 2021.

\bibitem{Golden65Helm}
S.~Golden.
\newblock {Lower Bounds for the Helmholtz Function}.
\newblock {\em Phys. Rev.}, 137, 1965.

\bibitem{gonzalezconde2021pricing}
J.~Gonzalez-Conde, {\'A}.~Rodríguez-Rozas, E.~Solano, and M.~Sanz.
\newblock Pricing financial derivatives with exponential quantum speedup.
\newblock {\em arXiv preprint - arXiv:2101.04023}, 2021.

\bibitem{Goodfellow2016_DeepL}
I.~Goodfellow, Y.~Bengio, and A.~Courville.
\newblock {\em Deep Learning}.
\newblock MIT Press, 2016.

\bibitem{goodfellow}
I.~J. Goodfellow, J.~Pouget-Abadie, M.~Mirza, B.~Xu, D.~Warde-Farley, S.~Ozair,
  A.~Courville, and Y.~Bengio.
\newblock {Generative Adversarial Networks}.
\newblock In Z.~Ghahramani et~al., editors, {\em Advances in Neural Information
  Processing Systems 27}. Curran Associates, Inc., 2014.

\bibitem{Grant_2019BPInitialization}
E.~Grant, L.~Wossnig, M.~Ostaszewski, and M.~Benedetti.
\newblock An initialization strategy for addressing barren plateaus in
  parametrized quantum circuits.
\newblock {\em Quantum}, 3, 2019.

\bibitem{GHZ1989}
D.~M. Greenberger, M.~A. Horne, and A.~Zeilinger.
\newblock {\em {Bell's Theorem, Quantum Theory and Conceptions of the
  Universe}}, chapter Going Beyond Bell's Theorem.
\newblock Springer Netherlands, Dordrecht, 1989.

\bibitem{Griffiths2010ODE}
D.~F. Griffiths and D.~J. Higham.
\newblock {\em Numerical Methods for Ordinary Differential Equations: Initial
  Value Problems}, chapter Euler's Method, pages 19--31.
\newblock Springer London, London, 2010.

\bibitem{GrinkoIterativeQAE2021}
D.~Grinko, J.~Gacon, C.~Zoufal, and S.~Woerner.
\newblock Iterative quantum amplitude estimation.
\newblock {\em npj Quantum Information}, 7(1), 2021.

\bibitem{Grnarova2018EvaluatingGV}
P.~Grnarova, K.~Y. Levy, A.~Lucchi, N.~Perraudin, I.~Goodfellow, T.~Hofmann,
  and A.~Krause.
\newblock A domain agnostic measure for monitoring and evaluating gans.
\newblock {\em arXiv preprint - arXiv:1811.05512}, 2020.

\bibitem{groverSuperpositionseffintegrablepdfs02}
L.~Grover and T.~Rudolph.
\newblock {Creating superpositions that correspond to efficiently integrable
  probability distributions}.
\newblock {\em arXiv preprint - arXiv:quant-ph/0208112}, 2002.

\bibitem{grover_search96}
L.~K. Grover.
\newblock {A Fast Quantum Mechanical Algorithm for Database Search}.
\newblock In {\em Proceedings of the Twenty-Eighth Annual ACM Symposium on
  Theory of Computing}, STOC '96, pages 212–--219, New York, NY, USA, 1996.
  Association for Computing Machinery.

\bibitem{Grover2000SynthesisSuperpos}
L.~K. Grover.
\newblock Synthesis of quantum superpositions by quantum computation.
\newblock {\em Phys. Rev. Lett.}, 85, 2000.

\bibitem{guo2021thermalqState}
X.-Y. Guo et~al.
\newblock Thermal variational quantum simulation on a superconducting quantum
  processor.
\newblock {\em arXiv preprint - arXiv:2107.06234}, 2021.

\bibitem{HaarMeasure1933}
A.~Haar.
\newblock {Der Massbegriff in der Theorie der Kontinuierlichen Gruppen}.
\newblock {\em Annals of Mathematics}, 34(1), 1933.

\bibitem{HamamuraEfficientQuantumObservables2020}
I.~Hamamura and T.~Imamichi.
\newblock Efficient evaluation of quantum observables using entangled
  measurements.
\newblock {\em npj Quantum Information}, 6(1), 2020.

\bibitem{Hamilton_2019GenerativeModelBenchmarks}
K.~E. Hamilton, E.~F. Dumitrescu, and R.~C. Pooser.
\newblock Generative model benchmarks for superconducting qubits.
\newblock {\em Physical Review A}, 99(6), 2019.

\bibitem{Hangleiter2019EasingTM}
D.~Hangleiter, I.~Roth, D.~Nagaj, and J.~Eisert.
\newblock {Easing the Monte Carlo sign problem}.
\newblock {\em Science Advances}, 6(33), 2020.

\bibitem{Numpy2020}
C.~R. Harris, K.~J. Millman, S.~J. van~der Walt, R.~Gommers, P.~Virtanen,
  D.~Cournapeau, E.~Wieser, J.~Taylor, S.~Berg, N.~J. Smith, R.~Kern, M.~Picus,
  S.~Hoyer, M.~H. van Kerkwijk, M.~Brett, A.~Haldane, J.~F. del R{\'{i}}o,
  M.~Wiebe, P.~Peterson, P.~G{\'{e}}rard-Marchant, K.~Sheppard, T.~Reddy,
  W.~Weckesser, H.~Abbasi, C.~Gohlke, and T.~E. Oliphant.
\newblock Array programming with {NumPy}.
\newblock {\em Nature}, 585(7825), 2020.

\bibitem{harrow2018approximatet_designs}
A.~Harrow and S.~Mehraban.
\newblock Approximate unitary $t$-designs by short random quantum circuits
  using nearest-neighbor and long-range gates.
\newblock {\em arXiv preprint - arXiv:1809.06957}, 2018.

\bibitem{hhl}
A.~W. Harrow, A.~Hassidim, and S.~Lloyd.
\newblock {Quantum algorithm for linear systems of equations}.
\newblock {\em Physical Review Letters}, 103(15), 2009.

\bibitem{Harrow09_2-designs}
A.~W. Harrow and R.~A. Low.
\newblock Random quantum circuits are approximate 2-designs.
\newblock {\em Communications in Mathematical Physics}, 291(1), 2009.

\bibitem{HarrowGradient21}
A.~W. Harrow and J.~C. Napp.
\newblock Low-depth gradient measurements can improve convergence in
  variational hybrid quantum-classical algorithms.
\newblock {\em Phys. Rev. Lett.}, 126, 2021.

\bibitem{haug2021capacity}
T.~Haug, K.~Bharti, and M.~S. Kim.
\newblock Capacity and quantum geometry of parametrized quantum circuits.
\newblock {\em arXiv preprint - arXiv:2102.01659}, 2021.

\bibitem{19HavSupervisedLearning}
V.~Havl{\'\i}{\v c}ek, , A.~D. C{\'o}rcoles, K.~Temme, A.~W. Harrow,
  A.~Kandala, J.~M. Chow, and J.~M. Gambetta.
\newblock Supervised learning with quantum-enhanced feature spaces.
\newblock {\em Nature}, 567(7747), 2019.

\bibitem{Hayashi_2005tdesigns}
A.~Hayashi, T.~Hashimoto, and M.~Horibe.
\newblock Reexamination of optimal quantum state estimation of pure states.
\newblock {\em Physical Review A}, 72(3), 2005.

\bibitem{HayashiQuantumInfo06}
M.~Hayashi.
\newblock {\em Quantum Information: An Introduction}.
\newblock Springer Berlin Heidelberg, Berlin, Heidelberg, 2006.

\bibitem{HeRectifiers15}
K.~He, X.~Zhang, S.~Ren, and J.~Sun.
\newblock {Delving Deep into Rectifiers: Surpassing Human-Level Performance on
  ImageNet Classification}.
\newblock {\em IEEE International Conference on Computer Vision (ICCV 2015)},
  1502, 2015.

\bibitem{HerbertNoQuantumSpeedup21}
S.~Herbert.
\newblock {No quantum speedup with Grover-Rudolph state preparation for quantum
  Monte Carlo integration}.
\newblock {\em Phys. Rev. E}, 103, 2021.

\bibitem{Hinton_BM_2011}
G.~Hinton.
\newblock {\em Boltzmann Machines}.
\newblock Springer, 2011.

\bibitem{HintonDeepBeliefNets06}
G.~Hinton, S.~Osindero, and Y.-W. Teh.
\newblock {A Fast Learning Algorithm for Deep Belief Nets}.
\newblock {\em Neural computation}, 18, 2006.

\bibitem{Hinton2002TrainingPO}
G.~E. Hinton.
\newblock {Training Products of Experts by Minimizing Contrastive Divergence}.
\newblock {\em Neural Computation}, 14, 2002.

\bibitem{Hinton2012}
G.~E. Hinton.
\newblock {A Practical Guide to Training Restricted Boltzmann Machines}.
\newblock In G.~Montavon, G.~B. Orr, and K.-R. M{\"u}ller, editors, {\em Neural
  Networks: Tricks of the Trade: Second Edition}. Springer Berlin Heidelberg,
  Berlin, Heidelberg, 2012.

\bibitem{hochreiter1991untersuchungenNN}
S.~Hochreiter.
\newblock {Untersuchungen zu dynamischen neuronalen Netzen}.
\newblock {\em Diploma, Technische Universit{\"a}t M{\"u}nchen}, 91(1), 1991.

\bibitem{HofmannKernelMethods08}
T.~Hofmann, B.~Sch{\"o}lkopf, and A.~J. Smola.
\newblock {Kernel methods in machine learning}.
\newblock {\em The Annals of Statistics}, 36(3), 2008.

\bibitem{Holmes_2021BPs_scramblers}
Z.~Holmes, A.~Arrasmith, B.~Yan, P.~J. Coles, A.~Albrecht, and A.~T.
  Sornborger.
\newblock {Barren Plateaus Preclude Learning Scramblers}.
\newblock {\em Physical Review Letters}, 126(19), 2021.

\bibitem{holmes2021AnsatzExpressBarrenPlateaus}
Z.~Holmes, K.~Sharma, M.~Cerezo, and P.~J. Coles.
\newblock Connecting ansatz expressibility to gradient magnitudes and barren
  plateaus.
\newblock {\em arxiv preprint - arXiv:2101.02138}, 2021.

\bibitem{HRASKO2015RBMTimeSeries}
R.~Hrasko, A.~G. Pacheco, and R.~A. Krohling.
\newblock {Time Series Prediction Using Restricted Boltzmann Machines and
  Backpropagation}.
\newblock {\em 3rd International Conference on Information Technology and
  Quantitative Management}, 55, 2015.

\bibitem{huqGANs2019}
L.~Hu et~al.
\newblock Quantum generative adversarial learning in a superconducting quantum
  circuit.
\newblock {\em Science Advances}, 5, 2019.

\bibitem{Huang_2021PowerDatainQML}
H.-Y. Huang, M.~Broughton, M.~Mohseni, R.~Babbush, S.~Boixo, H.~Neven, and
  J.~R. McClean.
\newblock Power of data in quantum machine learning.
\newblock {\em Nature Communications}, 12(1), 2021.

\bibitem{Huang_2021InformationTheoreticBoundQA}
H.-Y. Huang, R.~Kueng, and J.~Preskill.
\newblock {Information-Theoretic Bounds on Quantum Advantage in Machine
  Learning}.
\newblock {\em Physical Review Letters}, 126(19), 2021.
\newblock
  \texttt{\href{http://dx.doi.org/10.1103/physrevlett.126.190505}{DOI:\,10.1103/physrevlett.126.190505}}.

\bibitem{hubregtsen2021singlecomponent}
T.~Hubregtsen, , F.~Wilde, S.~Qasim, and J.~Eisert.
\newblock Single-component gradient rules for variational quantum algorithms.
\newblock {\em arXiv preprint - arXiv:2106.01388}, 2021.

\bibitem{Huembeli_2021Hessian_Loss}
P.~Huembeli and A.~Dauphin.
\newblock Characterizing the loss landscape of variational quantum circuits.
\newblock {\em Quantum Science and Technology}, 6(2), 2021.

\bibitem{ibmQX}
{IBM Quantum}, 2021.
\newblock Available online: \url{https://quantum-computing.ibm.com/}.

\bibitem{Ising1925}
E.~Ising.
\newblock {Beitrag zur Theorie des Ferromagnetismus}.
\newblock {\em Zeitschrift f{\"u}r Physik}, 31(1), 1925.

\bibitem{izmaylov2021analyticqgradients}
A.~F. Izmaylov, R.~A. Lang, and T.-C. Yen.
\newblock {Analytic gradients in variational quantum algorithms: Algebraic
  extensions of the parameter-shift rule to general unitary transformations}.
\newblock {\em arXiv preprint - arXiv:2107.08131}, 2021.

\bibitem{QuantumEnhancementsDeepReinforcementLearningJerbi21}
S.~Jerbi, L.~M. Trenkwalder, H.~Poulsen~Nautrup, H.~J. Briegel, and V.~Dunjko.
\newblock Quantum enhancements for deep reinforcement learning in large spaces.
\newblock {\em PRX Quantum}, 2, 2021.

\bibitem{johri2020nearestcentroidClass}
S.~Johri, S.~Debnath, A.~Mocherla, A.~Singh, A.~Prakash, J.~Kim, and
  I.~Kerenidis.
\newblock Nearest centroid classification on a trapped ion quantum computer.
\newblock {\em arXiv preprint - arXiv:2012.04145}, 2020.

\bibitem{JUSTEL1997_multiKS}
A.~Justel, D.~Pe{\~n}a, and R.~Zamar.
\newblock {A multivariate Kolmogorov-Smirnov test of goodness of fit}.
\newblock {\em Statistics and Probability Letters}, 35(3), 1997.

\bibitem{Kappen18QBM}
H.~Kappen.
\newblock Learning quantum models from quantum or classical data.
\newblock {\em arXiv preprint - arXiv:1803.11278}, 2018.

\bibitem{Kardestuncer1975}
H.~Kardestuncer.
\newblock {Finite Differences}.
\newblock In {\em {Discrete Mechanics A Unified Approach}}. Springer Vienna,
  1975.

\bibitem{karras2018progressiveGANS}
T.~Karras, T.~Aila, S.~Laine, and J.~Lehtinen.
\newblock {Progressive Growing of GANs for Improved Quality, Stability, and
  Variation}.
\newblock {\em arXiv preprint - arXiv:1710.10196}, 2018.

\bibitem{kassal2011simulating}
I.~Kassal, J.~D. Whitfield, A.~Perdomo-Ortiz, M.-H. Yung, and A.~Aspuru-Guzik.
\newblock Simulating chemistry using quantum computers.
\newblock {\em Annual review of physical chemistry}, 62, 2011.

\bibitem{kastoryano2016quantum_gibbs_samplers}
M.~J. Kastoryano and F.~G. S.~L. Brand{\~a}o.
\newblock {Quantum Gibbs Samplers: the commuting case}.
\newblock {\em arXiv preprint - arXiv:1409.3435}, 2016.

\bibitem{Kerenidisqmeans19}
I.~Kerenidis et~al.
\newblock {q-means: A quantum algorithm for unsupervised machine learning}.
\newblock In H.~Wallach et~al., editors, {\em {Advances in Neural Information
  Processing Systems}}, volume~32, pages 4134 -- 4144. Curran Associates, Inc.,
  2019.

\bibitem{QSlowFeatureAnalysis20Kerenidis}
I.~Kerenidis and A.~Luongo.
\newblock {Classification of the MNIST data set with quantum slow feature
  analysis}.
\newblock {\em Phys. Rev. A}, 101, 2020.

\bibitem{khan2019kmeans}
S.~U. Khan, A.~J. Awan, and G.~Vall-Llosera.
\newblock K-means clustering on noisy intermediate scale quantum computers.
\newblock {\em arXiv preprint - arXiv:1909.12183}, 2019.

\bibitem{Khoshaman_2018QVAE}
A.~Khoshaman, W.~Vinci, B.~Denis, E.~Andriyash, H.~Sadeghi, and M.~H. Amin.
\newblock {Quantum variational autoencoder}.
\newblock {\em Quantum Science and Technology}, 4(1), 2018.

\bibitem{kiani2021quantumEarthMover}
B.~T. Kiani, G.~D. Palma, M.~Marvian, Z.-W. Liu, and S.~Lloyd.
\newblock {Quantum Earth Mover's Distance: A New Approach to Learning Quantum
  Data}.
\newblock {\em arXiv preprint -arXiv:2101.03037}, 2021.

\bibitem{QBMWiebe17}
M.~Kieferov\'a and N.~Wiebe.
\newblock {Tomography and Generative Training with Quantum Boltzmann Machines}.
\newblock {\em Phys. Rev. A}, 96, 2017.

\bibitem{kim2021quantumLandscape}
J.~Kim and Y.~Oz.
\newblock {Quantum Energy Landscape and VQA Optimization}.
\newblock {\em arXiv preprint - arXiv:2107.10166}, 2021.

\bibitem{Kingmaadam14}
D.~P. Kingma and J.~Ba.
\newblock {Adam: {A} Method for Stochastic Optimization}.
\newblock In Y.~Bengio and Y.~LeCun, editors, {\em 3rd International Conference
  on Learning Representations}, 2015.

\bibitem{Kingma2013AutoEncodingVB}
D.~P. Kingma and M.~Welling.
\newblock {Auto-Encoding Variational Bayes}.
\newblock {\em arXiv preprint - arXiv:1312.6114}, 2014.

\bibitem{Kingma_2019VAE}
D.~P. Kingma and M.~Welling.
\newblock {An Introduction to Variational Autoencoders}.
\newblock {\em Foundations and Trends in Machine Learning}, 12(4), 2019.

\bibitem{KitaevQPE}
A.~Y. Kitaev.
\newblock {Quantum measurements and the Abelian Stabilizer Problem}.
\newblock {\em arXiv preprint - arXiv:quant-ph/9511026}, 1995.

\bibitem{Kodali2017OnCA}
N.~Kodali, J.~Abernethy, J.~Hays, and Z.~Kira.
\newblock {On Convergence and Stability of GANs}.
\newblock {\em arXiv preprint - arXiv:1705.07215}, 2017.

\bibitem{KollerGraphicalModels09}
D.~Koller and N.~Friedman.
\newblock {\em Probabilistic Graphical Models: Principles and Techniques -
  Adaptive Computation and Machine Learning}.
\newblock The MIT Press, 2009.

\bibitem{Kubo_2021StochastidDifferential}
K.~Kubo, Y.~O. Nakagawa, S.~Endo, and S.~Nagayama.
\newblock Variational quantum simulations of stochastic differential equations.
\newblock {\em Physical Review A}, 103(5), 2021.

\bibitem{kullback1951}
S.~Kullback and R.~A. Leibler.
\newblock On information and sufficiency.
\newblock {\em Ann. Math. Statist.}, 22(1), 1951.

\bibitem{Kurach2018TheGL}
K.~Kurach, M.~Lucic, X.~Zhai, M.~Michalski, and S.~Gelly.
\newblock {A Large-Scale Study on Regularization and Normalization in GANs}.
\newblock {\em arXiv preprint - arXiv:1807.04720}, 2019.

\bibitem{kutta1901beitrag}
W.~Kutta.
\newblock {Beitrag zur n\"aherungsweisen Integration totaler
  Differentialgleichungen}.
\newblock {\em Z. Math. Phys.}, 46, 1901.

\bibitem{kyriienko2021generalizedQCDiff}
O.~Kyriienko and V.~E. Elfving.
\newblock Generalized quantum circuit differentiation rules.
\newblock {\em arXiv preprint - arXiv:2108.01218}, 2021.

\bibitem{LaRose_2020DataEncodings}
R.~LaRose and B.~Coyle.
\newblock Robust data encodings for quantum classifiers.
\newblock {\em Physical Review A}, 102(3), 2020.

\bibitem{LaRoseVarQDiag19}
R.~LaRose, A.~Tikku, {\'E}.~O'Neel-Judy, L.~Cincio, and P.~J. Coles.
\newblock Variational quantum state diagonalization.
\newblock {\em npj Quantum Information}, 5(1), 2019.

\bibitem{Ledoux2001TheCO}
M.~Ledoux.
\newblock The concentration of measure phenomenon.
\newblock {\em AMS Surveys and Monographs}, 89, 2001.

\bibitem{LeeuwenTheoreticalCS90}
J.~V. Leeuwen, Warwick, A.~R. Meyer, and M.~Nival.
\newblock {\em {Handbook of Theoretical Computer Science: Algorithms and
  Complexity}}.
\newblock MIT Press, Cambridge, MA, USA, 1990.

\bibitem{LEHTOKANGAS1998265}
M.~Lehtokangas and J.~Saarinen.
\newblock Weight initialization with reference patterns.
\newblock {\em Neurocomputing}, 20(1), 1998.

\bibitem{Li_2017QuantumOptimalControl}
J.~Li, X.~Yang, X.~Peng, and C.-P. Sun.
\newblock Hybrid quantum-classical approach to quantum optimal control.
\newblock {\em Physical Review Letters}, 118(15), 2017.

\bibitem{li2021resonantQPCA}
Z.~Li, Z.~Chai, Y.~Guo, W.~Ji, M.~Wang, F.~Shi, Y.~Wang, S.~Lloyd, and J.~Du.
\newblock Resonant quantum principal component analysis.
\newblock {\em Science Advances}, 7(34), 2021.

\bibitem{PhysRevB.91.241117}
Z.-X. Li, Y.-F. Jiang, and H.~Yao.
\newblock {Solving the fermion sign problem in quantum Monte Carlo simulations
  by Majorana representation}.
\newblock {\em Phys. Rev. B}, 91, 2015.

\bibitem{Li2016SignProblemFreeQMC}
Z.-X. Li, Y.-f. Jiang, and H.~Yao.
\newblock {Majorana-Time-Reversal Symmetries: A Fundamental Principle for
  Sign-Problem-Free Quantum Monte Carlo Simulations}.
\newblock {\em Physical Review Letters}, 117, 2016.

\bibitem{Liesen2015LinAlg}
J.~Liesen and V.~Mehrmann.
\newblock {\em Linear Algebra}, chapter Linear Systems of Equations, pages
  73--79.
\newblock Springer International Publishing, Cham, 2015.

\bibitem{Liu2010}
B.~Liu and G.~I. Webb.
\newblock {Generative and Discriminative Learning}.
\newblock In C.~Sammut and G.~I. Webb, editors, {\em Encyclopedia of Machine
  Learning}. Springer US, Boston, MA, 2010.

\bibitem{LiuRBMsSocialNetworks13}
F.~Liu, B.~Liu, C.~Sun, M.~Liu, and X.~Wang.
\newblock {\em {Deep Learning Approaches for Link Prediction in Social Network
  Services}}, pages 425--432.
\newblock Springer Berlin Heidelberg, 2013.

\bibitem{LiuDifferentiableLearning18}
J.-G. Liu and L.~Wang.
\newblock {Differentiable learning of quantum circuit Born machines}.
\newblock {\em Phys. Rev. A}, 98, 2018.

\bibitem{Liu_2021QuantumSpeedUpML}
Y.~Liu, S.~Arunachalam, and K.~Temme.
\newblock A rigorous and robust quantum speed-up in supervised machine
  learning.
\newblock {\em Nature Physics}, 2021.

\bibitem{Liu_2005Tomography}
Y.-x. Liu, L.~F. Wei, and F.~Nori.
\newblock Tomographic measurements on superconducting qubit states.
\newblock {\em Physical Review B}, 72(1), 2005.

\bibitem{Lloyd1073UniversalQuantumSim96}
S.~Lloyd.
\newblock Universal quantum simulators.
\newblock {\em Science}, 273(5278), 1996.

\bibitem{lloyd2013quantumClustering}
S.~Lloyd, M.~Mohseni, and P.~Rebentrost.
\newblock Quantum algorithms for supervised and unsupervised machine learning.
\newblock {\em arXiv preprint - arXiv:1307.0411}, 2013.

\bibitem{qPCAlloyd13}
S.~Lloyd, M.~Mohseni, and P.~Rebentrost.
\newblock Quantum principal component analysis.
\newblock {\em Nature Physics}, 10, 2013.

\bibitem{lloyd2qGANs18}
S.~Lloyd and C.~Weedbrook.
\newblock Quantum generative adversarial learning.
\newblock {\em Phys. Rev. Lett.}, 121, 2018.

\bibitem{lockwood2020qreinforcement}
O.~Lockwood and M.~Si.
\newblock {Reinforcement Learning with Quantum Variational Circuits}.
\newblock {\em arXiv preprint - arXiv:2008.07524}, 2020.

\bibitem{lopatnikova2021QNG_VarBayes}
A.~Lopatnikova and M.-N. Tran.
\newblock {Quantum Natural Gradient for Variational Bayes}.
\newblock {\em arXiv preprint - arXiv:2106.05807}, 2021.

\bibitem{low2019wellconditionedHSimulation}
G.~H. Low, V.~Kliuchnikov, and N.~Wiebe.
\newblock {Well-conditioned multiproduct Hamiltonian simulation}.
\newblock {\em arXiv preprint - arXiv:1907.11679}, 2019.

\bibitem{MaQuantumSimMaterials2020}
H.~Ma, M.~Govoni, and G.~Galli.
\newblock Quantum simulations of materials on near-term quantum computers.
\newblock {\em npj Computational Materials}, 6(1), 2020.

\bibitem{Mari_2021Gradients}
A.~Mari, T.~R. Bromley, and N.~Killoran.
\newblock Estimating the gradient and higher-order derivatives on quantum
  hardware.
\newblock {\em Physical Review A}, 103(1), 2021.

\bibitem{Wiebe2020Barren}
C.~O. Marrero, M.~Kieferov\'a, and N.~Wiebe.
\newblock {Entanglement Induced Barren Plateaus}.
\newblock {\em arXiv preprint - arXiv:arXiv:2010.15968}, 2020.

\bibitem{MartinazzoErrorVarQuantumDyn20}
R.~Martinazzo and I.~Burghardt.
\newblock {Local-in-Time Error in Variational Quantum Dynamics}.
\newblock {\em Phys. Rev. Lett.}, 124, 2020.

\bibitem{martyn2021grandUnificationQAs}
J.~M. Martyn, Z.~M. Rossi, A.~K. Tan, and I.~L. Chuang.
\newblock A grand unification of quantum algorithms.
\newblock {\em arXiv preprint - arXiv:2105.02859}, 2021.

\bibitem{VarSITEMcArdle19}
S.~McArdle, T.~Jones, S.~Endo, Y.~Li, S.~C. Benjamin, and X.~Yuan.
\newblock {Variational ansatz-based quantum simulation of imaginary time
  evolution}.
\newblock {\em npj Quantum Information}, 5(1), 2019.

\bibitem{Clean_2018_BarrenPlateaus}
J.~McClean, S.~Boixo, V.~N. Smelyanskiy, R.~Babbush, and H.~Neven.
\newblock Barren plateaus in quantum neural network training landscapes.
\newblock {\em Nature Communications}, 9, 2018.

\bibitem{McClean_2016VariationalQCAlgorithms}
J.~R. McClean, J.~Romero, R.~Babbush, and A.~Aspuru-Guzik.
\newblock The theory of variational hybrid quantum-classical algorithms.
\newblock {\em New Journal of Physics}, 18(2), 2016.

\bibitem{McLachlan64}
A.~McLachlan.
\newblock {A variational solution of the time-dependent Schr\"{o}dinger
  equation}.
\newblock {\em Molecular Physics}, 8(1), 1964.

\bibitem{Melko2019}
R.~G. Melko, G.~Carleo, J.~Carrasquilla, and J.~I. Cirac.
\newblock {Restricted Boltzmann machines in quantum physics}.
\newblock {\em Nature Physics}, 15(9), 2019.

\bibitem{metz2017unrolled}
L.~Metz, B.~Poole, D.~Pfau, and J.~Sohl-Dickstein.
\newblock Unrolled generative adversarial networks.
\newblock {\em arXiv preprint - arXiv:1611.02163}, 2017.

\bibitem{meyer2021fisher}
J.~J. Meyer.
\newblock Fisher information in noisy intermediate-scale quantum applications.
\newblock {\em arXiv preprint - arXiv:2103.15191}, 2021.

\bibitem{quantumCircuitLearnMitarai}
K.~Mitarai, M.~Negoro, M.~Kitagawa, and K.~Fujii.
\newblock Quantum circuit learning.
\newblock {\em Phys. Rev. A}, 98, 2018.

\bibitem{Mohamed10RBMSignal}
A.-r. Mohamed and G.~Hinton.
\newblock {Phone recognition using Restricted Boltzmann Machines}.
\newblock {\em IEEE International Conference on Acoustics, Speech and Signal
  Processing - Proceedings}, 2010.

\bibitem{montufar2015deep}
G.~Montufar.
\newblock {Deep Narrow Boltzmann Machines are Universal Approximators}.
\newblock {\em arXiv preprint - arXiv:1411.3784}, 2015.

\bibitem{RBM_Montufar_2018}
G.~Mont{\'u}far.
\newblock {Restricted Boltzmann Machines: Introduction and Review}.
\newblock In N.~Ay, P.~Gibilisco, and F.~Mat{\'u}{\v{s}}, editors, {\em
  {Information Geometry and Its Applications}}. Springer International
  Publishing, 2018.

\bibitem{MottaQITE20}
M.~Motta and et~al.
\newblock Determining eigenstates and thermal states on a quantum computer
  using quantum imaginary time evolution.
\newblock {\em Nature Physics}, 16(2), 2020.

\bibitem{murphy2012machinelearning}
K.~P. Murphy.
\newblock {\em Machine learning: a probabilistic perspective}.
\newblock MIT press, Cambridge, MA, 2012.

\bibitem{CosmoGan19Mustafa}
M.~Mustafa, D.~Bard, W.~Bhimji, Z.~Luki{\'c}, R.~Al-Rfou, and J.~M. Kratochvil.
\newblock {CosmoGAN: creating high-fidelity weak lensing convergence maps using
  Generative Adversarial Networks}.
\newblock {\em Computational Astrophysics and Cosmology}, 6(1), 2019.

\bibitem{NeuweilerCOVarQITE}
P.~Neuweiler.
\newblock {Combinatorial Optimization with Variational Quantum Imaginary Time
  Evolution}.
\newblock {\em Unpublished (ETH) master thesis supervised by C. Zoufal, S.
  Woerner and R. Renner}, 2020.

\bibitem{nielsen10}
M.~A. Nielsen and I.~L. Chuang.
\newblock {\em Quantum Computation and Quantum Information}.
\newblock Cambridge University Press, 2010.

\bibitem{YusukeRBM17}
Y.~Nomura, A.~Darmawan, Y.~Yamaji, and M.~Imada.
\newblock {Restricted-Boltzmann-Machine Learning for Solving Strongly
  Correlated Quantum Systems}.
\newblock {\em Physical Review B}, 96, 2017.

\bibitem{OkunishiSignProblem14}
K.~Okunishi and K.~Harada.
\newblock Symmetry-protected topological order and negative-sign problem for
  $\mathrm{SO(}n)$ bilinear-biquadratic chains.
\newblock {\em Phys. Rev. B}, 89, 2014.

\bibitem{Oliveira_2007GenericEntanglement06}
R.~Oliveira, O.~C.~O. Dahlsten, and M.~B. Plenio.
\newblock Generic entanglement can be generated efficiently.
\newblock {\em Physical Review Letters}, 98(13), 2007.

\bibitem{NonadiaaticOllitrault2020}
P.~J. Ollitrault, G.~Mazzola, and I.~Tavernelli.
\newblock Nonadiabatic molecular quantum dynamics with quantum computers.
\newblock {\em Phys. Rev. Lett.}, 125, 2020.

\bibitem{OrtizQAFermionic01}
G.~Ortiz, J.~Gubernatis, E.~Knill, and R.~Laflamme.
\newblock Quantum algorithms for fermionic simulations.
\newblock {\em Physical Review A}, 64, 2001.

\bibitem{GradientBasedNon-convexPaquette2018}
C.~Paquette, H.~Lin, D.~Drusvyatskiy, J.~Mairal, and Z.~Harchaoui.
\newblock {Catalyst for Gradient-based Nonconvex Optimization}.
\newblock In A.~Storkey and F.~Perez-Cruz, editors, {\em Proceedings of the
  Twenty-First International Conference on Artificial Intelligence and
  Statistics}, volume~84 of {\em Proceedings of Machine Learning Research},
  pages 613--622. PMLR, 2018.

\bibitem{pytorch}
A.~Paszke et~al.
\newblock Pytorch: An imperative style, high-performance deep learning library.
\newblock In {\em Advances in Neural Information Processing Systems 32}, pages
  8024--8035. Curran Associates, Inc., 2019.

\bibitem{Patti_2021EntanglementDevisedBPs}
T.~L. Patti, K.~Najafi, X.~Gao, and S.~F. Yelin.
\newblock Entanglement devised barren plateau mitigation.
\newblock {\em Physical Review Research}, 3(3), 2021.

\bibitem{Pauli1927}
W.~Pauli.
\newblock {{\"U}ber Gasentartung und Paramagnetismus}.
\newblock {\em Zeitschrift f{\"u}r Physik}, 41(2), 1927.

\bibitem{Pedamonti_2018_Non-linear_Activation_Functions_NN}
D.~Pedamonti.
\newblock {Comparison of non-linear activation functions for deep neural
  networks on MNIST classification task}.
\newblock {\em arXiv preprint - arXiv:1804.02763}, 2018.

\bibitem{scikit-learn2011}
F.~Pedregosa et~al.
\newblock Scikit-learn: Machine learning in {P}ython.
\newblock {\em Journal of Machine Learning Research}, 12, 2011.

\bibitem{peierls_1936}
R.~Peierls.
\newblock {On Ising's model of ferromagnetism}.
\newblock {\em Mathematical Proceedings of the Cambridge Philosophical
  Society}, 32(3), 1936.

\bibitem{CosmologiicalMassMapsLucchi21}
N.~Perraudin, S.~Marcon, A.~Lucchi, and T.~Kacprzak.
\newblock Emulation of cosmological mass maps with conditional generative
  adversarial networks.
\newblock {\em Frontiers in Artificial Intelligence}, 4, 2021.

\bibitem{pesah2020absenceQCNNs}
A.~Pesah, M.~Cerezo, S.~Wang, T.~Volkoff, A.~T. Sornborger, and P.~J. Coles.
\newblock {Absence of Barren Plateaus in Quantum Convolutional Neural
  Networks}.
\newblock {\em arXiv preprint - arXiv:2011.02966}, 2020.

\bibitem{piatkowski2019exponential}
N.~P. Piatkowski.
\newblock {\em Exponential families on resource-constrained systems}.
\newblock Gesellschaft f{\"u}r Informatik eV, 2019.

\bibitem{Plesch2010StatePrep}
M.~Plesch and {\v C}.~Brukner.
\newblock Quantum-state preparation with universal gate decompositions.
\newblock {\em Physical Review A}, 83, 2010.

\bibitem{PoulinThermalQGibbs09}
D.~Poulin and P.~Wocjan.
\newblock {Sampling from the Thermal Quantum Gibbs State and Evaluating
  Partition Functions with a Quantum Computer}.
\newblock {\em Phys. Rev. Lett.}, 103, 2009.

\bibitem{PutinGenerativeDrugDesign18}
E.~Putin, A.~Asadulaev, Q.~Vanhaelen, Y.~Ivanenkov, A.~V. Aladinskaya,
  A.~Aliper, and A.~Zhavoronkov.
\newblock {Adversarial Threshold Neural Computer for Molecular de Novo Design}.
\newblock {\em Molecular Pharmaceutics}, 15(10), 2018.
\newblock PMID: 29569445.

\bibitem{radford2016unsupervised}
A.~Radford, L.~Metz, and S.~Chintala.
\newblock Unsupervised representation learning with deep convolutional
  generative adversarial networks.
\newblock {\em arXiv preprint - arXiv:1511.06434}, 2016.

\bibitem{Rebentrost2018}
P.~Rebentrost, B.~Gupt, and T.~R. Bromley.
\newblock {Quantum computational finance: Monte Carlo pricing of financial
  derivatives}.
\newblock {\em Physical Review A}, 98(2), 2018.

\bibitem{Rebentrost_2014QSVM}
P.~Rebentrost, M.~Mohseni, and S.~Lloyd.
\newblock Quantum support vector machine for big data classification.
\newblock {\em Physical Review Letters}, 113(13), 2014.

\bibitem{amsgrad}
S.~J. Reddi, S.~Kale, and S.~Kumar.
\newblock {On the Convergence of Adam and Beyond}.
\newblock In {\em International Conference on Learning Representations}, 2018.

\bibitem{romero}
J.~Romero and A.~Aspuru-Guzik.
\newblock Variational quantum generators: Generative adversarial quantum
  machine learning for continuous distributions.
\newblock {\em Advanced Quantum Technologies}, 4(1), 2021.

\bibitem{Romero_2017QAE}
J.~Romero, J.~P. Olson, and A.~Aspuru-Guzik.
\newblock Quantum autoencoders for efficient compression of quantum data.
\newblock {\em Quantum Science and Technology}, 2(4), 2017.

\bibitem{Rosenblatt58theperceptron}
F.~Rosenblatt.
\newblock The perceptron: A probabilistic model for information storage and
  organization in the brain.
\newblock {\em Psychological Review}, 1958.

\bibitem{Roth2017StabilizingTO}
K.~Roth, A.~Lucchi, S.~Nowozin, and T.~Hofmann.
\newblock Stabilizing training of generative adversarial networks through
  regularization.
\newblock In {\em NIPS}, 2017.

\bibitem{Roux10}
N.~Roux and Y.~Bengio.
\newblock {Deep Belief Networks Are Compact Universal Approximators}.
\newblock {\em Neural computation}, 22, 2010.

\bibitem{Ruiz_Perez_2017QuantumArithmetic}
L.~Ruiz-Perez and J.~C. Garcia-Escartin.
\newblock {Quantum arithmetic with the quantum Fourier transform}.
\newblock {\em Quantum Information Processing}, 16(6), 2017.

\bibitem{Runge1895_RK}
C.~Runge.
\newblock \"uber die numerische aufl{\"o}sung von differentialgleichungen.
\newblock {\em Mathematische Annalen}, 46(2), 1895.

\bibitem{Sanders2019StatePrep}
Y.~Sanders et~al.
\newblock Black-box quantum state preparation without arithmetic.
\newblock {\em Physical Review Letters}, 122, 2019.

\bibitem{SchuldQuantumGradients19}
M.~Schuld, V.~Bergholm, C.~Gogolin, J.~Izaac, and N.~Killoran.
\newblock Evaluating analytic gradients on quantum hardware.
\newblock {\em Phys. Rev. A}, 99, 2019.

\bibitem{Schuld2018CircuitcentricQC}
M.~Schuld, A.~Bocharov, K.~M. Svore, and N.~Wiebe.
\newblock Circuit-centric quantum classifiers.
\newblock {\em Phys. Rev. A}, 101, 2020.

\bibitem{Schuld_2017DistanceClassifierQuantum}
M.~Schuld, M.~Fingerhuth, and F.~Petruccione.
\newblock Implementing a distance-based classifier with a quantum interference
  circuit.
\newblock {\em {EPL} (Europhysics Letters)}, 119(6), 2017.

\bibitem{Schuld_2019QMLHilbertSpace}
M.~Schuld and N.~Killoran.
\newblock {Quantum Machine Learning in Feature Hilbert Spaces}.
\newblock {\em Physical Review Letters}, 122(4), 2019.

\bibitem{Schuld2018SupervisedLearning}
M.~Schuld and F.~Petruccione.
\newblock {\em Supervised Learning with Quantum Computers}.
\newblock Springer International Publishing, Cham, 2018.

\bibitem{Schuld_2016QLinRegression}
M.~Schuld, I.~Sinayskiy, and F.~Petruccione.
\newblock Prediction by linear regression on a quantum computer.
\newblock {\em Physical Review A}, 94(2), 2016.

\bibitem{Schuld_2021EffectDataEncodingQML}
M.~Schuld, R.~Sweke, and J.~J. Meyer.
\newblock Effect of data encoding on the expressive power of variational
  quantum-machine-learning models.
\newblock {\em Physical Review A}, 103(3), 2021.

\bibitem{scikitClassifierComp}
{Classifier Comparison}.
\newblock Available online:
  \url{https://scikit-learn.org/stable/auto_examples/classification/plot_classifier_comparison.html}.

\bibitem{scikitRBMClassifier}
{Restricted Boltzmann Machine for classification}.
\newblock Available online:
  \url{https://scikit-learn.org/stable/auto_examples/neural_networks/plot_rbm_logistic_classification.html}.

\bibitem{sharma2020trainability}
K.~Sharma, M.~Cerezo, L.~Cincio, and P.~J. Coles.
\newblock Trainability of dissipative perceptron-based quantum neural networks.
\newblock {\em arXiv preprint - arXiv:2005.12458}, 2020.

\bibitem{Shende:2005:SQL:1120725.1120847}
V.~V. Shende, S.~S. Bullock, and I.~L. Markov.
\newblock Synthesis of quantum logic circuits.
\newblock In {\em Proceedings of the 2005 Asia and South Pacific Design
  Automation Conference}, pages 272--275, New York, NY, USA, 2005. ACM.

\bibitem{ShepherdTemporallyUnstructuredQC09}
D.~Shepherd and M.~J. Bremner.
\newblock Temporally unstructured quantum computation.
\newblock {\em Proceedings of the Royal Society A: Mathematical, Physical and
  Engineering Sciences}, 465(2105), 2009.

\bibitem{shingu2020boltzmann}
Y.~Shingu et~al.
\newblock Boltzmann machine learning with a variational quantum algorithm.
\newblock {\em arXiv preprint - arXiv:2007.00876}, 2020.

\bibitem{shorFactoring97}
P.~Shor.
\newblock Polynomial-time algorithms for prime factorization and discrete
  logarithms on a quantum computer.
\newblock {\em SIAM J.Sci.Statist.Comput.}, 26/1484, 1997.

\bibitem{ShorErrorCorrection95}
P.~W. Shor.
\newblock Scheme for reducing decoherence in quantum computer memory.
\newblock {\em Phys. Rev. A}, 52, 1995.

\bibitem{SukinExpressibilityqCircuits19}
S.~Sim, P.~D. Johnson, and A.~Aspuru-Guzik.
\newblock {Expressibility and Entangling Capability of Parameterized Quantum
  Circuits for Hybrid Quantum-Classical Algorithms}.
\newblock {\em Advanced Quantum Technologies}, 2(12), 2019.

\bibitem{SITU2020193qGANs}
H.~Situ, Z.~He, Y.~Wang, L.~Li, and S.~Zheng.
\newblock Quantum generative adversarial network for generating discrete
  distribution.
\newblock {\em Information Sciences}, 538, 2020.

\bibitem{SkolikLayerwiseLearningQNNs2021}
A.~Skolik et~al.
\newblock Layerwise learning for quantum neural networks.
\newblock {\em Quantum Machine Intelligence}, 3(1), 2021.

\bibitem{skolik2021quantumAgents}
A.~Skolik, J.~R. McClean, M.~Mohseni, P.~van~der Smagt, and M.~Leib.
\newblock {Quantum agents in the Gym: a variational quantum algorithm for deep
  Q-learning}.
\newblock {\em arXiv preprint - arXiv:2103.15084}, 2021.

\bibitem{LaflammeSimulatingPhysPhenom02}
R.~Somma, G.~Ortiz, J.~E. Gubernatis, E.~Knill, and R.~Laflamme.
\newblock Simulating physical phenomena by quantum networks.
\newblock {\em Phys. Rev. A}, 65, 2002.

\bibitem{Stamatopoulos_2020}
N.~Stamatopoulos, D.~J. Egger, Y.~Sun, C.~Zoufal, R.~Iten, N.~Shen, and
  S.~Woerner.
\newblock Option pricing using quantum computers.
\newblock {\em Quantum}, 4, 2020.

\bibitem{Stokes_2020QNG}
J.~Stokes, J.~Izaac, N.~Killoran, and G.~Carleo.
\newblock {Quantum Natural Gradient}.
\newblock {\em Quantum}, 4, 2020.

\bibitem{Suba__2019_swap_test}
Y.~Suba{\c{s}}{\i}, L.~Cincio, and P.~J. Coles.
\newblock Entanglement spectroscopy with a depth-two quantum circuit.
\newblock {\em Journal of Physics A: Mathematical and Theoretical}, 52(4),
  2019.

\bibitem{Sutskever10}
I.~Sutskever and T.~Tieleman.
\newblock {On the Convergence Properties of Contrastive Divergence}.
\newblock {\em Proceedings of the Thirteenth International Conference on
  Artificial Intelligence and Statistics}, 9, 2010.

\bibitem{TacchinoQNeuron19}
F.~Tacchino, C.~Macchiavello, D.~Gerace, and D.~Bajoni.
\newblock An artificial neuron implemented on an actual quantum processor.
\newblock {\em npj Quantum Information}, 5(1), 2019.

\bibitem{Tahir-Kheli2018ODEs}
R.~Tahir-Kheli.
\newblock {\em Ordinary Differential Equations : Mathematical Tools for
  Physicists}.
\newblock Springer International Publishing, Cham, 2018.

\bibitem{takashina2018structureMRF}
Y.~Takashina, S.~Nakatani, and M.~Inoue.
\newblock {Structure Learning of Markov Random Fields through Grow-Shrink
  Maximum Pseudolikelihood Estimation}.
\newblock {\em arXiv preprint - arXiv:1807.00944}, 2018.

\bibitem{Tang_2019QInspiredRecommendation}
E.~Tang.
\newblock A quantum-inspired classical algorithm for recommendation systems.
\newblock {\em Proceedings of the 51st Annual ACM SIGACT Symposium on Theory of
  Computing}, 2019.

\bibitem{DataLoadingTang21}
E.~Tang.
\newblock Quantum principal component analysis only achieves an exponential
  speedup because of its state preparation assumptions.
\newblock {\em Phys. Rev. Lett.}, 127, 2021.

\bibitem{ErrorMitigationTemme17}
K.~Temme, S.~Bravyi, and J.~M. Gambetta.
\newblock Error mitigation for short-depth quantum circuits.
\newblock {\em Phys. Rev. Lett.}, 119, 2017.

\bibitem{Temme2011QuantumMS}
K.~Temme, T.~J. Osborne, K.~G. Vollbrecht, D.~Poulin, and F.~Verstraete.
\newblock {Quantum Metropolis Sampling}.
\newblock {\em Nature}, 471(7336), 2011.

\bibitem{Thimm_percept97}
G.~{Thimm} and E.~{Fiesler}.
\newblock High-order and multilayer perceptron initialization.
\newblock {\em IEEE Transactions on Neural Networks}, 8(2), 1997.

\bibitem{ThompsonInequality1965}
C.~J. Thompson.
\newblock {Inequality with Applications in Statistical Mechanics}.
\newblock {\em Journal of Mathematical Physics}, 6(11), 1965.

\bibitem{Tieleman08}
T.~Tieleman.
\newblock {Training Restricted Boltzmann Machines using Approximations to the
  Likelihood Gradient}.
\newblock {\em Proceedings of the 25th International Conference on Machine
  Learning}, 2008.

\bibitem{TorlaiNeuralNetwork18}
G.~Torlai, G.~Mazzola, J.~Carrasquilla, M.~Troyer, R.~Melko, and G.~Carleo.
\newblock Neural-network quantum state tomography.
\newblock {\em Nature Physics}, 14(5), 2018.

\bibitem{Troyer05}
M.~Troyer and U.-J. Wiese.
\newblock {Computational Complexity and Fundamental Limitations to Fermionic
  Quantum Monte Carlo Simulations}.
\newblock {\em Phys. Rev. Lett.}, 94, 2005.

\bibitem{Tubiana19RBMProteins}
J.~Tubiana, S.~Cocco, and R.~Monasson.
\newblock {Learning Compositional Representations of Interacting Systems with
  Restricted Boltzmann Machines: Comparative Study of Lattice Proteins}.
\newblock {\em Neural Computation}, 31, 2019.

\bibitem{vanApeldoorn2020quantumsdpsolvers}
J.~van Apeldoorn, A.~Gily{\'{e}}n, S.~Gribling, and R.~de~Wolf.
\newblock Quantum {SDP}-{S}olvers: {B}etter upper and lower bounds.
\newblock {\em {Quantum}}, 4, 2020.

\bibitem{Van_Meter_2005QuantumExponentiation}
R.~Van~Meter and K.~M. Itoh.
\newblock Fast quantum modular exponentiation.
\newblock {\em Physical Review A}, 71(5), 2005.

\bibitem{Vapnik1995StatLearningTheory}
V.~N. Vapnik.
\newblock {\em The Nature of Statistical Learning Theory}.
\newblock Springer New York, New York, NY, 1995.

\bibitem{Vedral_1996QuantumArithmetic}
V.~Vedral, A.~Barenco, and A.~Ekert.
\newblock Quantum networks for elementary arithmetic operations.
\newblock {\em Physical Review A}, 54(1), 1996.

\bibitem{verdon2019quantumHbasedModels}
G.~Verdon, J.~Marks, S.~Nanda, S.~Leichenauer, and J.~Hidary.
\newblock {Quantum Hamiltonian-Based Models and the Variational Quantum
  Thermalizer Algorithm}.
\newblock {\em arXiv preprint - arXiv:1910.02071}, 2019.

\bibitem{2020SciPy-NMeth}
P.~Virtanen et~al.
\newblock {{SciPy} 1.0: Fundamental Algorithms for Scientific Computing in
  Python}.
\newblock {\em Nature Methods}, 17, 2020.

\bibitem{Volkoff_2021BP_mitigation_correlation}
T.~Volkoff and P.~J. Coles.
\newblock Large gradients via correlation in random parameterized quantum
  circuits.
\newblock {\em Quantum Science and Technology}, 6(2), 2021.

\bibitem{vuffray2020efficientLearningGraphModels}
M.~Vuffray, S.~Misra, and A.~Lokhov.
\newblock {Efficient Learning of Discrete Graphical Models}.
\newblock In H.~Larochelle, M.~Ranzato, R.~Hadsell, M.~F. Balcan, and H.~Lin,
  editors, {\em Advances in Neural Information Processing Systems}, volume~33,
  pages 13575--13585. Curran Associates, Inc., 2020.

\bibitem{Wang_2017QuantumRegression}
G.~Wang.
\newblock Quantum algorithm for linear regression.
\newblock {\em Physical Review A}, 96(1), 2017.

\bibitem{wang2021noiseinducedbps}
S.~Wang, E.~Fontana, M.~Cerezo, K.~Sharma, A.~Sone, L.~Cincio, and P.~J. Coles.
\newblock Noise-induced barren plateaus in variational quantum algorithms.
\newblock {\em arXiv preprint - arXiv:2007.14384}, 2021.

\bibitem{wang2020variationalGibbs}
Y.~Wang, G.~Li, and X.~Wang.
\newblock {Variational quantum Gibbs state preparation with a truncated Taylor
  series}.
\newblock {\em arXiv preprint - arXiv:2005.08797}, 2020.

\bibitem{Watrous2012QuantumCompComplexity}
J.~Watrous.
\newblock {\em Computational Complexity: Theory, Techniques, and Applications},
  chapter Quantum Computational Complexity, pages 7174--7201.
\newblock Springer New York, New York, NY, 2012.

\bibitem{Wiebe_2012QuantumAlgorithmforDataFitting}
N.~Wiebe, D.~Braun, and S.~Lloyd.
\newblock Quantum algorithm for data fitting.
\newblock {\em Physical Review Letters}, 109(5), 2012.

\bibitem{Wiebe2019GenerativeTO}
N.~Wiebe and L.~Wossnig.
\newblock {Generative training of quantum Boltzmann machines with hidden
  units}.
\newblock {\em arXiv preprint - arXiv:1905.09902}, 2019.

\bibitem{2014QMLWittek}
P.~Wittek.
\newblock {\em Quantum Machine Learning}.
\newblock Academic Press, Boston, 2014.

\bibitem{worQuantumRiskAnalysis19}
S.~Woerner and D.~J.~Egger.
\newblock Quantum risk analysis.
\newblock {\em npj Quantum Information}, 5, 2019.

\bibitem{VarThermofieldDSJingxiang19}
J.~Wu and T.~H. Hsieh.
\newblock {Variational Thermal Quantum Simulation via Thermofield Double
  States}.
\newblock {\em Phys. Rev. Lett.}, 123, 2019.

\bibitem{yamamoto2019naturalVQE}
N.~Yamamoto.
\newblock On the natural gradient for variational quantum eigensolver.
\newblock {\em arXiv preprint - arXiv:1909.05074}, 2019.

\bibitem{AdaptiveVarQTEYao21}
Y.-X. Yao, N.~Gomes, F.~Zhang, C.-Z. Wang, K.-M. Ho, T.~Iadecola, and P.~P.
  Orth.
\newblock Adaptive variational quantum dynamics simulations.
\newblock {\em PRX Quantum}, 2, 2021.

\bibitem{YOUNES1996109}
L.~Younes.
\newblock {Synchronous Boltzmann Machines can be universal approximators}.
\newblock {\em Applied Mathematics Letters}, 9(3), 1996.

\bibitem{Simon18TheoryVarQSim}
X.~Yuan, S.~Endo, Q.~Zhao, Y.~Li, and S.~C. Benjamin.
\newblock Theory of variational quantum simulation.
\newblock {\em Quantum}, 3, 191, 2019.

\bibitem{YungQuantumMetropolis12}
M.-H. Yung and A.~Aspuru-Guzik.
\newblock {A quantum{\textendash}quantum Metropolis algorithm}.
\newblock {\em Proceedings of the National Academy of Sciences}, 109(3), 2012.

\bibitem{Zeng_2019LearningInferenceonqGANs}
J.~Zeng, Y.~Wu, J.-G. Liu, L.~Wang, and J.~Hu.
\newblock Learning and inference on generative adversarial quantum circuits.
\newblock {\em Physical Review A}, 99(5), 2019.

\bibitem{NaturalGradientDescentNon-Convex2019Zhang}
G.~Zhang, J.~Martens, and R.~B. Grosse.
\newblock Fast convergence of natural gradient descent for over-parameterized
  neural networks.
\newblock In {\em Advances in Neural Information Processing Systems},
  volume~32, pages 8082--8093. Curran Associates, Inc., 2019.

\bibitem{zhang2021lowdepth}
X.-M. Zhang, M.-H. Yung, and X.~Yuan.
\newblock Low-depth quantum state preparation.
\newblock {\em arXiv preprint - arXiv:2102.07533}, 2021.

\bibitem{Zhao_2019QReg}
Z.~Zhao, J.~K. Fitzsimons, and J.~F. Fitzsimons.
\newblock {Quantum-assisted Gaussian process regression}.
\newblock {\em Physical Review A}, 99(5), 2019.

\bibitem{ZhuImagetoImage17}
J.-Y. Zhu, T.~Park, P.~Isola, and A.~A. Efros.
\newblock {Unpaired Image-to-Image Translation Using Cycle-Consistent
  Adversarial Networks}.
\newblock In {\em {2017 IEEE International Conference on Computer Vision
  (ICCV)}}, pages 2242--2251, 2017.

\bibitem{Zoufal2019qGANs}
C.~Zoufal, A.~Lucchi, and S.~Woerner.
\newblock Quantum generative adversarial networks for learning and loading
  random distributions.
\newblock {\em npj Quantum Information}, 5(1), 2019.

\bibitem{VarQBMZoufal20}
C.~Zoufal, A.~Lucchi, and S.~Woerner.
\newblock {Variational quantum Boltzmann machines}.
\newblock {\em Quantum Machine Intelligence}, 3(1), 2021.

\bibitem{zoufal2021errorBounds}
C.~Zoufal, D.~Sutter, and S.~Woerner.
\newblock Error bounds for variational quantum time evolution.
\newblock {\em arXiv preprint - arXiv:2108.00022}, 2021.

\end{thebibliography}


\end{document}